\definecolor{darkgreen}{rgb}{10,117,28}
\newcommand{\CC}{\mathscr C}
\definecolor{blue}{rgb}{0.1,0.2,0.5}
\definecolor{brown}{rgb}{0.6,0.6,0.2}
\newtheorem{theorem}{Theorem}[section]
\newtheorem{lemma}[theorem]{Lemma}
\newcommand{\newtheoremwithcrefformat}[2]{%
  \newtheorem{#1}{#2}[section]%
  \crefformat{#1}{##2\MakeUppercase#1~##1##3}%
  \Crefformat{#1}{##2\MakeUppercase#1~##1##3}%
}
\newenvironment{claimproof}{%
\begin{proof}[Proof of the claim]}{%
	\end{proof}
    }
\def\ifenv#1{
   \def\@tempa{#1}%
   \ifx\@tempa\@currenvir
      \expandafter\@firstoftwo
    \else
      \expandafter\@secondoftwo
   \fi
}
\let\wfs@comment@comment\comment
\let\comment\@undefined
\newcommand{\untoto}{\let\toto\@undefined}
\let\wfs@changes@comment\comment
\let\comment\@undefined
\newcommand\comment{%
    \ifthenelse{\equal{\@currenvir}{comment}}
    {\wfs@comment@comment}
    {\wfs@changes@comment}%
}
\newcommand{\FO}{\ensuremath{\mathsf{FO}}}
\newcommand{\FOsep}{\ensuremath{\mathsf{FO}\hspace{0.6pt}\texttt{\normalfont{+}}\hspace{1pt}\conn}}
\newcommand{\MSO}{\ensuremath{\mathsf{MSO}}}
\newcommand{\CMSO}{\ensuremath{\mathsf{CMSO}}}
\newcommand{\FPT}{\ensuremath{\mathsf{FPT}}}
\newcommand{\AWstar}{\ensuremath{\mathsf{AW}[\star]}}
\newcommand{\FOMSO}{{\FO(\MSO(\preceq,A)\cup\Sigma)}}
\newcommand{\val}{\mathsf{val}}
\newcommand{\anc}{\mathsf{anc}}
\newcommand{\state}{\mathrm{state}}
\newcommand{\Cc}{\mathscr{C}}
\renewcommand{\phi}{\varphi}
\newcommand{\set}[1]{\ensuremath{\{#1\}}} 
\newcommand{\setof}[2]{\set{#1\mid#2}} 
\newcommand{\from}{\colon} 
\newcommand{\Lsf}{\mathsf{L}}
\newcommand{\Rsf}{\mathsf{R}}
\newcommand{\Ssf}{\mathsf{S}}
\newcommand{\Basic}{B}
\newcommand{\Ab}{\mathbb{A}}
\newcommand{\Bb}{\mathbb{B}}
\newcommand{\Cb}{\mathbb{C}}
\newcommand{\Gb}{\mathbb{G}}
\newcommand{\Hb}{\mathbb{H}}
\newcommand{\dom}[1]{\mathrm{dom}({#1})}
\newcommand{\lft}{\mathsf{left}}
\newcommand{\rgt}{\mathsf{right}}
\newcommand{\abstr}[1]{\llbracket #1\rrbracket}
\newcommand{\partto}{\rightharpoonup}
\newcommand{\str}[1]{\mathbf{#1}} 
\renewcommand{\subset}{\subseteq}
\renewcommand{\leq}{\leqslant}
\renewcommand{\geq}{\geqslant}
\renewcommand{\le}{\leqslant}
\renewcommand{\ge}{\geqslant}
\newcommand{\tup}[1]{\bar{#1}}
\newcommand{\N}{\mathbb{N}}
\renewcommand{\setminus}{-}
\newcommand{\Kb}{\mathbb{K}}
\newcommand{\Lb}{\mathbb{L}}
\newcommand{\Db}{\mathbb{D}}
\newcommand{\Oh}{\mathcal{O}}
\newcommand{\Tt}{\mathcal{T}}
\newcommand{\Sh}{\widehat{S}}
\newcommand{\conn}{\mathsf{conn}}
\newcommand{\lca}{\mathsf{lca}}
\newcommand{\dir}{\mathsf{dir}}
\newcommand{\bag}{\mathsf{bag}}
\newcommand{\cone}{\mathsf{cone}}
\newcommand{\cmp}{\mathsf{comp}}
\newcommand{\mrg}{\mathsf{mrg}}
\newcommand{\adh}{\mathsf{adh}}
\newcommand{\bgraph}{\mathsf{bgraph}}
\newcommand{\torso}{\mathsf{torso}}
\newcommand{\parent}{\mathsf{parent}}
\newcommand{\children}{\mathsf{children}}
\newcommand{\profile}{\mathsf{profile}}
\newcommand{\wh}[1]{\widehat{#1}}
\renewcommand{\preceq}{\preccurlyeq}
\renewcommand{\succeq}{\succcurlyeq}
\newcommand{\ERCagreement}{
\vspace{-6pt}

\noindent
{\begin{minipage}[t]{\textwidth}\small This paper is a part of project {\sc{BOBR}} that has received funding from the European Research Council (ERC) under the European Union's Horizon 2020 research and innovation programme (grant agreement No 948057) and part of the French-German Collaboration ANR/DFG Project UTMA supported by the German
Research Foundation (DFG) through grant agreement
No 446200270. \end{minipage}\hfill

\begin{minipage}{.32\textwidth}\includegraphics[width=\textwidth]{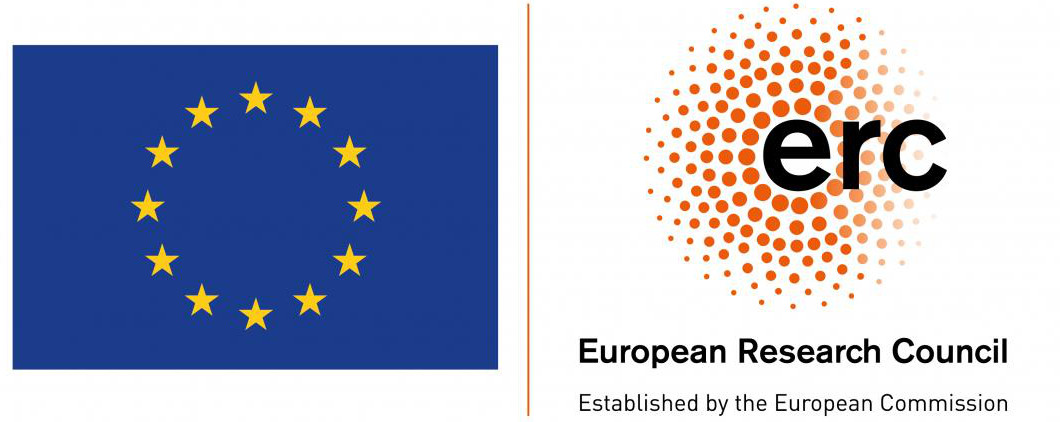}\includegraphics[width=0.8\textwidth]{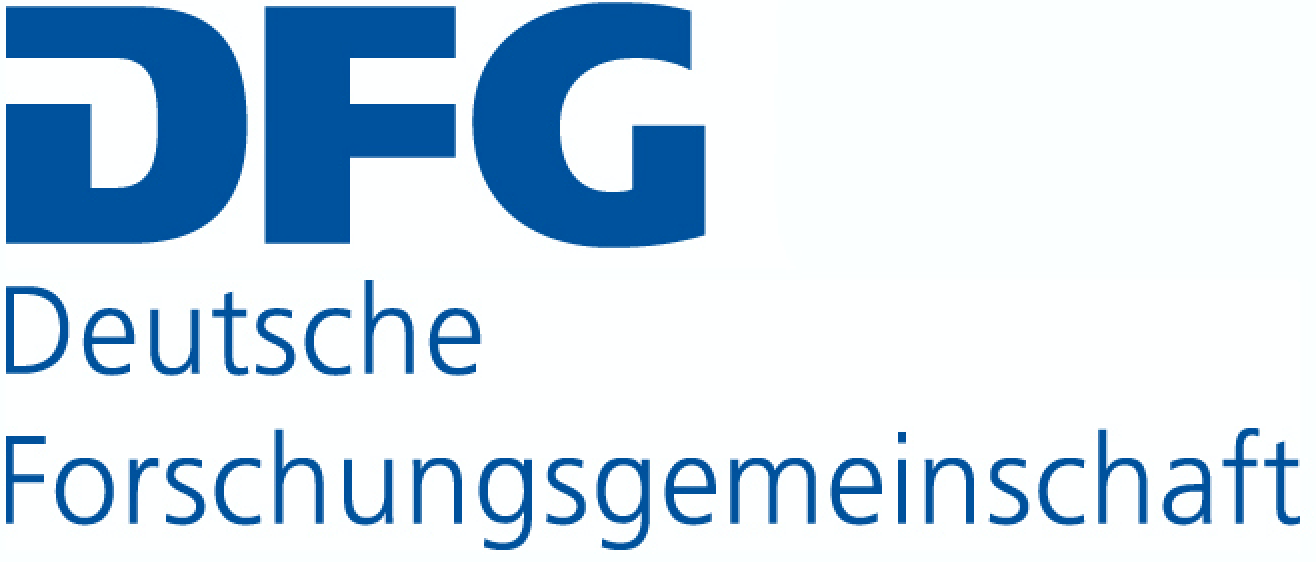}\end{minipage}\hfill}}
\newcommand{\@abbrev}[3]{
  \def\c@a@def##1{
      \if ##1.
        \relax
      \else
        \@ifdefinable{\@nameuse{#1##1}}{\@namedef{#1##1}{#2##1}}
        \expandafter\c@a@def
      \fi
    }
  \c@a@def #3.
}
\title{Algorithms and data structures for first-order logic with connectivity under vertex failures}
\author{
Micha\l{} Pilipczuk\thanks{University of Warsaw, Poland, \texttt{michal.pilipczuk@mimuw.edu.pl}}
\and
Nicole Schirrmacher \thanks{University of Bremen, Germany, \texttt{schirrmacher@uni-bremen.de}}
\and
Sebastian Siebertz\thanks{University of Bremen, Germany, \texttt{siebertz@uni-bremen.de}}
\and
Szymon Toru\'nczyk\thanks{University of Warsaw, Poland, \texttt{szymtor@mimuw.edu.pl}}
\and
Alexandre Vigny\thanks{University of Bremen, Germany, \texttt{vigny@uni-bremen.de}}
}
\date{}
\begin{document}
\maketitle


\thispagestyle{empty}

\begin{abstract}
  \noindent We introduce a new data structure for answering
  connectivity queries in undirected graphs subject to batched vertex
  failures. Precisely, given any graph $G$ and integer parameter $k$,
  we can in fixed-parameter time construct a data structure that can
   later be used to answer queries of the form: ``are vertices $s$
  and $t$ connected via a path that avoids vertices
  $u_1,\ldots, u_k$?'' in time $\smash{2^{2^{\Oh(k)}}}$. In the terminology of
  the literature on data structures, this gives the first
  deterministic data structure for
  connectivity under vertex failures where for every fixed number of
  failures, all operations can be performed in constant time.

  With the aim to understand the power and the limitations of our new
  techniques, we prove an algorithmic meta theorem for the recently
  introduced \emph{separator logic}, which extends first-order logic
  with atoms for connectivity under vertex failures. We prove that the
  model-checking problem for separator logic is fixed-parameter
  tractable on every class of graphs that exclude a fixed topological
  minor. We also show a weak converse.
  This implies that from the point of view of
  parameterized complexity, under standard complexity theoretical
  assumptions, the frontier of tractability of separator logic is
  almost exactly delimited by classes excluding a fixed topological
  minor.

  The backbone of our proof relies on a decomposition theorem of Cygan,
  Lokshtanov, Pilipczuk, Pilipczuk, and Saurabh~[SICOMP '19],
  which provides a tree decomposition of a given
  graph into bags that are unbreakable. Crucially, unbreakability
  allows to reduce separator logic to plain first-order logic within
  each bag individually. Guided by this observation, we design our
  model-checking algorithm using dynamic programming over the tree
  decomposition, where the transition at each bag amounts to running a
  suitable model-checking subprocedure for plain first-order
  logic. This approach is robust enough to provide also an extension
  to efficient enumeration of answers to a query expressed in
  separator logic.
\end{abstract}

\setcounter{page}{0}


\vfill
\ERCagreement

\pagebreak

\newcommand{\Ll}{{\cal L}}

\section{Introduction}\label{sec:stoc-intro}

\subsection{Connectivity under vertex failures}

In many applications, we do not need to answer a query once but rather
need to repeatedly answer queries over dynamically changing data. A
prime example are databases, where a database is repeatedly modified
and queried by the users.  In most cases, the modifications to the
database can be expected to be small compared to its size, which
suggests that in a preprocessing step, we can set up a data structure
that is efficiently updated after each modification and that allows
for efficient querying.  In 1997, Frigioni and Italiano introduced the
\emph{dynamic subgraph model}~\cite{frigioni1997dynamically}, which
allows to conveniently model such dynamic situations in the particular
case of connectivity in network infrastructures that are subject to
node or link failures.  This model, which has been studied intensively
in the data structures community under the name of
{\em{connectivity oracles under vertex failures}}, is as
follows. We are given a
graph $G$ and an integer $k$. We imagine that $G$ is a network in
which at every moment, at most $k$ vertices (or edges) are inactive
(are subject to {\em{failures}}). The goal is to construct a data
structure that supports the following two operations. First, one can
{\em{update}} $G$ by resetting the set of failed vertices to a given
set of size at most $k$. Second, one can {\em{query}} $G$ by asking,
for a given pair of vertices $s$ and~$t$, whether $s$ and~$t$ can be
connected by a path that avoids the failed vertices.

Following the introduction of the problem by Frigioni and
Italiano~\cite{frigioni1997dynamically} and the more general problem
of batched vertex or edge failures by P\u{a}tra\c{s}cu and
Thorup~\cite{PatrascuT07}\footnote{Strictly speaking,
  in~\cite{PatrascuT07} P\u{a}tra\c{s}cu and Thorup considered only
  edge failures. To the best of our knowledge, the first to
  systematically investigate (batched) vertex failures on general
  graphs were Duan and Pettie in~\cite{DuanP10}.}, there has been a long line of work on data
structures for connectivity oracles under vertex and edge failures.
We refer to a remarkably comprehensive literature overview in the work
of Duan and Pettie~\cite{DuanP20}, which also provides the currently
best bounds for the problem as far as combinatorial and deterministic
data structures are concerned. Their data structure can be initialized
in time $\Oh(|G|\|G\|\cdot \log |G|)$, takes $\Oh(k\|G\|\log |G|)$
space, and supports updates in $\Oh(k^3\log^3 |G|)$ time and queries
in $\Oh(k)$ time. Here~$\|G\|$ is the vertex count of $G$ and $\|G\|$ is the joint number of edges and
vertices in $G$.  These bounds can be slightly improved at the cost of
allowing randomization, however, the polylogarithmic dependency on the
size of the graph in the update time persists. As noted
in~\cite{DuanP20}, from known results it is possible to derive data
structures with constant time complexity of operations for $k\leq 3$
(or $k\leq 4$ for edge failures), but, citing their words, ``scaling
these solutions up, even to an arbitrarily large constant $k$, becomes
prohibitively complex, even in the simpler case of edge failures.''

Recently, van den Brand and Saranurak~\cite{BrandS19} proposed a very
different approach to the problem, using which they obtained a
randomized data structure that can be initialized in time
$\Oh(|G|^\omega)$, takes $\Oh(|G|^2\log |G|)$ space, and supports
updates in $\Oh(k^\omega)$ time and queries in $\Oh(k^2)$ time. In particular, the update and the query time are constant for constant $k$;
to the best of our knowledge, this is the only data structure that has
this property known so far. The approach is algebraic and, simplifying
it substantially, boils down to storing a matrix of counts of walks
between pairs of vertices and extracting answers to queries using
algebraic operations on those counts. To manipulate the counts
efficiently, one needs to work on their short hashes, for instance in
the form of elements from a fixed finite field. This introduces
randomization to the picture and avoiding it within this methodology
seems difficult. We remark that the data structure of van den Brand
and Saranurak~\cite{BrandS19} works even for the more general problem
of reachability in directed graphs under vertex failures, and can
handle arc insertions.

As the first main contribution of this paper, we prove the following
theorem.  Note that querying whether two vertices are connected by a
path avoiding failed vertices is equivalent to an update followed by a
connectivity query in the terminology of previous
works~\cite{BrandS19,DuanP10,DuanP20,PatrascuT07}.


\begin{theorem}\label{thm:general-queries}
  Given a graph $G$ and an integer $k$, one can in time
  $2^{2^{\Oh(k)}}\cdot |G|^2 \|G\|$ construct a data structure that
  may answer the following queries in time $2^{2^{\Oh(k)}}$: given
  $s,t\in V(G)$ and $k$ vertices $u_1,\ldots,u_k$ of~$G$, are $s$ and
  $t$ connected in $G$ by a path that avoids the vertices
  $u_1,\ldots, u_k$? The space usage of the data structure is
  $2^{2^{\Oh(k)}}\cdot \|G\|$.
\end{theorem}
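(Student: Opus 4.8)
The plan is to combine the unbreakable tree-decomposition theorem of Cygan, Lokshtanov, Pilipczuk, Pilipczuk and Saurabh with a dynamic-programming scheme over the decomposition that is \emph{pre-evaluated}, so that each new query triggers recomputation only on a part of the decomposition of size bounded by a function of $k$. First I would invoke that theorem with a breakability parameter chosen slightly above $k$ (large enough to absorb both the $k$ failed vertices and an adhesion that is ``cut away'' when one restricts to a subtree) to obtain, within the claimed preprocessing time $2^{2^{\Oh(k)}}\cdot|G|^2\|G\|$, a rooted tree decomposition $(T,\beta)$ of $G$ whose adhesions have size $a \le 2^{\Oh(k)}$ and all of whose bags are $(a,k)$-unbreakable. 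The structural consequence I would then record is a dichotomy: for every failure set $X$ with $|X|\le k$ and every bag $W$, the graph $G-X$ has at most one component meeting $W$ in more than $a$ vertices — its \emph{hub under $X$} — while all other components meet $W$ in a bounded set.

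Next I would phrase the connectivity query as a bottom-up DP over $(T,\beta)$: the message a node sends to its parent is the partition of $(\text{parent adhesion})-X$ into the classes that are mutually connected inside the subtree below the node minus $X$. There are at most $a^a=2^{2^{\Oh(k)}}$ possible messages per edge, and $s$ and $t$ are connected in $G-X$ iff the DP, read at the least common ancestor of a bag containing $s$ and a bag containing $t$, places them in the same class. The crucial observation is that only the at most $k+2$ bags containing a vertex of $X\cup\{s,t\}$ ever ``see'' the query; at every other node the message equals the \emph{static} message obtained by ignoring $X$, $s$ and $t$, which I would precompute once in a single bottom-up pass. Closing the special bags under pairwise lowest common ancestors yields a contracted tree $T^\ast$ with $\Oh(k)$ nodes, and a query amounts to re-running the DP on $T^\ast$ only, feeding each hanging subtree its precomputed static message; locating $T^\ast$ and retrieving those messages is a constant-per-operation task after standard linear-time LCA and level-ancestor preprocessing.

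The remaining ingredient — and the step I expect to be the main obstacle — is the local transition at a special bag $x$: at query time one must compute the outgoing message from the torso of $\beta_x$, the $\le k$ failures inside it, and the messages of its children, yet a bag and its number of children are unbounded, so this cannot be recomputed from scratch. This is exactly where unbreakability is essential: after deleting the failures the torso of $\beta_x$ has a single hub plus pieces of bounded size, so during preprocessing I would store, per bag, a bounded summary recording how the hub and the (bounded-size) non-hub pieces meet the parent adhesion and how each child adhesion attaches, with all children whose message is the static one collapsed into that summary. At query time only the $\Oh(k)$ special children are handled explicitly, the transition runs in time $2^{2^{\Oh(k)}}$, and multiplying by the $\Oh(k^2)$ transitions of $T^\ast$ keeps the overall query time $2^{2^{\Oh(k)}}$; each stored summary has size $2^{2^{\Oh(k)}}$, giving space $2^{2^{\Oh(k)}}\cdot\|G\|$. (This DP is the same one underlying the model-checking meta-theorem; for the present statement only its simplest instance is needed, which is why it applies to arbitrary graphs.)

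The delicate points to get right are therefore: verifying that the ``one hub plus bounded junk'' dichotomy after removing the failures really does let a bag's entire transmitted connectivity information be encoded by a bounded summary; checking that these summaries compose correctly across an unbounded number of children and up the tree, and that both the static precomputation and the query-time recomputation on $T^\ast$ respect this composition; and confirming that the auxiliary tree data structures ($T^\ast$-extraction, static-message retrieval, attachment lookups) can be built in linear space so that the query time stays independent of $|G|$. Once these are in place the time and space bounds in the statement follow by the accounting above.
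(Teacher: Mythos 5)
Your proposal identifies the right decomposition (a $(q,k)$-unbreakable tree decomposition with $q=2^{\Oh(k)}$ from Cygan et al.), the right DP object (a connectivity pattern on adhesions), the right reduction of a query to a bounded set of ``special'' nodes (the LCA closure of the $\le k+2$ nodes touched by $\Sh=\{s,t,u_1,\dots,u_k\}$), and the right use of unbreakability to make the transition at a special bag tractable despite bags and child sets being unbounded. All of this matches the paper's strategy.

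However, you have misidentified where the main difficulty lies and left a genuine gap at the step you treat as routine. You write that, after contracting to the $\Oh(k)$-node tree $T^\ast$, ``a query amounts to re-running the DP on $T^\ast$ only, feeding each hanging subtree its precomputed static message; locating $T^\ast$ and retrieving those messages is a constant-per-operation task after standard linear-time LCA and level-ancestor preprocessing.'' But between two consecutive $T^\ast$-nodes $z\prec y$ there is a path in $T$ of potentially $\Omega(|G|)$ nodes, and the DP message must be propagated along this entire path: you need, in $\Oh_k(1)$ time, the \emph{composition} of the static transitions of all nodes on that path, i.e.\ the connectivity pattern of $\adh(z)\cup\adh(y)$ induced by $G[\cone(z)\setminus\cmp(y)]$. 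Knowing each node's individual static message does not give this composition, and LCA/level-ancestor preprocessing cannot produce it — those data structures answer tree-navigation queries, not aggregate-along-a-path queries. This path composition (the ``torso'' $\torso(z,y)$) is exactly where the paper invests effort: it shows $\torso(\cdot,\cdot)$ factors as a product in a finite semigroup of bi-interface graphs of size $2^{\Oh(q^2)}$, and then builds a data structure for path-product queries over semigroup-labeled trees using Colcombet's deterministic Simon factorization (or a direct forward-decreasing-coloring argument). Without something equivalent to this path-product data structure, your query routine cannot run in time independent of $|G|$ and the claimed bounds do not follow. By contrast, the transition at a single special bag — which you flag as ``the main obstacle'' — is the comparatively straightforward part once unbreakability is in hand (a truncated BFS capped at $q+1$ vertices over a precomputed bag graph, with the affected children's profiles substituted for the clique on their adhesion).

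A secondary imprecision: you propose to precompute, per bag, a summary involving ``the hub under $X$,'' but the hub is defined relative to the failure set $X$ and thus cannot be fixed at preprocessing time. The paper's preprocessed object is instead the $\emptyset$-restricted bag graph $\bgraph_\emptyset(x)$ (children's adhesions turned into cliques), together with, per edge, whether it is an original or an adhesion edge and which children support it; the hub-vs-small-piece dichotomy is then recovered at query time by a BFS that never visits more than $q+1$ vertices. This detail can be fixed, but the missing path-product machinery is a substantive omission that your proposal would need to supply.
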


Note that in \cref{thm:general-queries}, for every fixed $k$, all
operations are supported in constant time (which is doubly-exponential
in $k$). Also, the data structure is entirely deterministic and purely
combinatorial. To the best of our knowledge, this is the first
deterministic data structure with constant time queries. Note
also that for a fixed $k$, the space usage of our data structure is
linear in the size of the graph, as opposed to the quadratic
dependency in the result of van den Brand and
Saranurak~\cite{BrandS19}.

Our approach to proving \cref{thm:general-queries} is completely new
compared to the previous approaches
\mbox{\cite{BrandS19,DuanP10,DuanP20,PatrascuT07}}. Our key
combinatorial observation is that connectivity queries over a constant
number~$k$ of vertex failures can be evaluated in constant time
provided the underlying graph~$G$ is sufficiently well-connected. The
requirement on the well-connectedness of $G$ depends on the number~$k$. Obviously, there is no guarantee that the input graph $G$
satisfies this property. We therefore use a decomposition theorem of
Cygan, Lokshtanov, Pilipczuk, Pilipczuk, and Saurabh~\cite{CyganLPPS19} that (roughly) states the following
(see \cref{thm:strong-unbreakability} for a precise statement). For
every graph $G$ and fixed number $k$, there is a tree decomposition
where every intersection of adjacent bags has bounded size and every bag
is well-connected for parameter~$k$. Moreover, such a tree
decomposition can be computed in time $2^{\Oh(k^2)}|G|^2\|G\|$. The
data structure of \cref{thm:general-queries} is constructed around the tree decomposition $\Tt$ provided by the theorem of Cygan et
al.~\cite{CyganLPPS19}. Intuitively speaking, whether $s$ and $t$ are connected via a path that
avoids the vertices $u_1,\ldots, u_k$ can be decided using dynamic programming over $\Tt$. We employ known
techniques developed for answering queries on trees in constant time, for instance a deterministic variant of Simon's
factorization~\cite{Colcombet07}, to be able to compute the outcome of this dynamic programming efficiently for any query given on input.
We provide a more detailed overview of the proof of
\cref{thm:general-queries} in \cref{sec:overview}.  The full proof is
provided in \cref{sec:general-queries}.

The time complexity guarantees provided by \cref{thm:general-queries}
can be improved by allowing larger space usage. We give a variant of
\cref{thm:general-queries} (\cref{thm:tradeoff}) that offers queries
in time $k^{\Oh(1)}$ after initialization in time
$2^{\Oh(k\log k)}\cdot |G|^{\Oh(1)}$, but assumes space usage
$k^{\Oh(1)}\cdot |G|^2$. The main idea here is to use a later variant
of the decomposition of Cygan et al.~\cite{CyganLPPS19}, proved
in~\cite{CyganKLPPSW21}, that provides quantitatively better bounds on
the adhesion and unbreakability of the decomposition, at the cost of
relaxing a technical condition on the unbreakability (see the
difference between strong and weak unbreakability, explained in
\cref{sec:prelims}). The need of dealing with this relaxation turns
out to be a nuisance that, nonetheless, can be overcome in our
approach.


\subsection{Algorithmic meta theorems}

Once an algorithmic technique has been applied to solve a specific
problem, it is very desirable to understand the power and the
limitations of that technique, that is, to understand which other
problems can be solved with that technique and which problems
cannot. An approach to this very general goal is to prove an
\emph{algorithmic meta theorem}: a theorem that establishes
tractability for a whole class of problems,
possibly on certain restricted input instances. As logic allows to
conveniently define classes of algorithmic problems, namely, the class
of all problems that can be expressed in the logic under
consideration, algorithmic meta theorems are often formulated as
model-checking problems for a logic. In the {\em{model-checking
    problem}} for a logic $\Ll$ on a class of graphs~$\CC$ we are
given a graph $G\in\CC$ and a sentence $\varphi\in \Ll$, and the task
is to decide whether $\varphi$ holds in $G$.  The archetypal result of
this form is a result of Courcelle stating that every formula of
monadic-second order logic ($\MSO_2$) can be evaluated in linear time
on all graphs whose treewidth is bounded by some fixed constant~\cite{Courcelle90}.
This
result has been extended to various other logics $\Ll$ and graph
classes $\CC$.

This approach to algorithmic meta theorems can be made precise in the
language of parameterized complexity, as follows.  Say that~$\Ll$ is
\emph{tractable} on~$\CC$ if the model-checking problem for~$\Cc$
and~$\Ll$ is {\em{fixed-parameter tractable (fpt)}}: it can be solved
in time $f(\varphi)\cdot \|G\|^c$, for some computable function $f$
and a constant $c$, both depending only on $\CC$.  Results about
tractability of a logic $\Ll$ on a class $\Cc$ imply the tractability
of a vast array of problems --- namely all problems that can be
expressed in the logic $\Ll$ --- on a given class of graphs.  For
instance, if model-checking \emph{first-order logic} ($\FO$) is fpt on
a class of graphs~$\CC$, then in particular the dominating set problem
is fpt on $\CC$, since the existence of a dominating set of size $k$
can be expressed by a first-order sentence with $k+1$ quantifiers that
range over the vertices of the graph.  Similarly, the independent set
problem is then also fpt on $\CC$.  On the other hand, if the more
powerful logic $\MSO_2$ is tractable on a class of graphs $\CC$, then
the $3$-colorability problem, or the hamiltonicity problem are
polynomial-time solvable on $\CC$, since both those problems can be
expressed using an $\MSO_2$ sentence, whose quantifiers that range
over \emph{sets} of vertices and edges of the graph.

There has been a long line of work on fixed-parameter tractability of
model-checking $\FO$, as it is arguably the most fundamental logic.
This culminated in the work of Grohe, Kreutzer and
Siebertz~\cite{GroheKS17}, who proved that this problem is
fixed-parameter tractable on every class that is {\em{nowhere
    dense}}. Those classes include for example the class of planar
graphs, or every class with bounded maximum degree, or of bounded
genus.  Without going into details, nowhere denseness is a general
notion of uniform sparseness in graphs, and it is broader than most
well-studied concepts of sparseness considered in structural graph
theory, such as excluding a fixed (topological) minor.  As observed by
Dvo\v{r}\'ak et al.~\cite{DvorakKT13}, the result of Grohe et al.\ is
tight in the following sense: whenever $\Cc$ is not nowhere dense and
is closed under taking subgraphs, then $\FO$ model-checking on $\Cc$
is as hard as on general graphs, that is, $\AWstar$-hard. This means
that as far as subgraph-closed classes are concerned, sparsity ---
described formally through the notion of nowhere denseness --- exactly
delimits the area of tractability of $\FO$ and the limits of the
locality method for $\FO$ model-checking.

The aforementioned classic theorem of Courcelle~\cite{Courcelle90} states
that the model-checking problem for $\MSO_2$ is fixed-parameter
tractable on a class $\Cc$ if $\Cc$ has bounded treewidth. The proof
translates the given sentence into a suitable tree automaton working
on a tree decomposition of the input graph; this approach brings a
wide range of automata-based tools to the study of $\MSO_2$ on
graphs. Again, as shown by Kreutzer and
Tazari~\cite{kreutzer2010lower} and Ganian et
al.~\cite{ganian2014lower}, under certain complexity assumptions this
is (almost) the most one could hope for: on subgraph-closed classes
whose treewidth is poly-logarithmically unbounded, model-checking of
$\MSO_2$ becomes intractable from the parameterized perspective. So
for $\MSO_2$, bounded treewidth (almost) delimits the frontier of
tractability, at least for subgraph-closed classes, and automata-based techniques exactly explain what can be done algorithmically.

With the aim to provide a meta theorem that captures the essence of
the techniques developed to answer connectivity under vertex failures,
we search for a logic that can express these queries. It is easy to
see that $\MSO_2$ is strong enough, but in fact, $\MSO_2$ is much
stronger and as explained above, it is intractable beyond graphs of
bounded treewidth. On the other hand, $\FO$ can only express local
properties and in particular, it cannot even express the very simple
query of whether two vertices are in the same connected component. This
naturally leads to an extension of $\FO$ that may use connectivity
under vertex failures as atomic formulas. Interestingly, such a logic
was only very recently introduced independently by
Boja\'nczyk~\cite{Bojanczyk21} and by Schirrmacher et
al.~\cite{SchraderSV21} under the name \emph{separator logic}, and
denoted $\FOsep$. This logic extends $\FO$ by predicates
$\conn_k(s,t,u_1,\ldots,u_k)$ (one for each $k\ge 0$) with the
following semantics: if $s,t,u_1,\ldots,u_k$ are vertices of a graph
$G$, then $\conn_k(s,t,u_1,\ldots,u_k)$ holds if and only if there is
a path that connects $s$ with $t$ and does not pass through any of the
vertices $u_1,\ldots,u_k$.  Thus, separator logic involves very basic
connectivity queries that have a global character, unlike plain~$\FO$,
which is local. On the other hand, while $\conn_k(s,t,u_1,\ldots,u_k)$
predicates are expressible in $\MSO_2$, it is no surprise that the
expressive power of $\FOsep$ is strictly weaker than that of
$\MSO_2$. For example, $\MSO_2$ can express bipartiteness, while
$\FOsep$ cannot~\cite[Theorem~3.11]{SchraderSV21}. Separator logic can
still express many interesting algorithmic properties, such as the
feedback vertex set problem and many recently studied elimination
distance problems~\cite{bulian2016graph}.

As the second main result of the paper, we prove an algorithmic meta
theorem for separator logic. We show that the frontier of tractability
of $\FOsep$ is almost exactly delimited by classes of graphs that
exclude a fixed topological minor. More precisely, we prove the
following complementary results.

\begin{restatable}{theorem}{mainub}\label{thm:main-ub}
  Let $\Cc$ be a class of graphs that exclude a fixed graph as a
  topological minor. Then given $G\in \Cc$ and an $\FOsep$ sentence
  $\varphi$, one can decide whether $\varphi$ holds in $G$ in time
  $f(\varphi)\cdot \|G\|^3$, where $f$ is a computable function
  depending on $\CC$.
\end{restatable}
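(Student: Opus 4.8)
The plan is to reduce the problem, via dynamic programming along the tree decomposition of Cygan et al.\ (\cref{thm:strong-unbreakability}), to a bounded number of instances of plain \FO{} model-checking on graphs that still exclude a fixed topological minor; the conceptual core is that inside an \emph{unbreakable} piece of the graph the connectivity predicates $\conn_k$ become first-order definable. Throughout, fix $k_0$ to be the largest arity of a $\conn$-predicate occurring in $\varphi$, and let $q$ be the (function of $k_0$) such that \cref{thm:strong-unbreakability} applied to $G$ and $k_0$ produces a tree decomposition $(T,\bag)$ of $G$ with adhesions of size bounded in terms of $k_0$ and with every torso $(q,k_0)$-unbreakable; such a decomposition is computable in time $2^{\Oh(k_0^2)}\cdot|G|^2\|G\|$, which for $G$ taken from a topological-minor-free class (hence with $\|G\|=\Theta(|G|)$) is $\Oh(\|G\|^3)$.

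First I would establish the key reduction: on every $(q,k_0)$-unbreakable graph $H$, each predicate $\conn_\ell$ with $\ell\le k_0$ is equivalent to an \FO{} formula of quantifier rank $\Oh(q+k_0)$, and consequently $\varphi$ is equivalent on such $H$ to an \FO{} sentence $\varphi'$ computable from $\varphi$. The point is that after deleting at most $k_0$ vertices from $H$, at most one connected component — the \emph{core} — can have more than $q$ vertices, since two such components would yield a separation of order $\le k_0$ with both sides larger than $q$. Now ``$x$ lies in a component of $H-\bar u$ with at most $q$ vertices'' is expressed by asserting that there exist $\le q$ vertices that contain $x$ and are closed under taking neighbours outside $\bar u$; its negation (together with $x\notin\bar u$) says ``$x$ lies in the core''. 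Hence $\conn_\ell(s,t,\bar u)$ holds iff $s,t\notin\bar u$ and either both lie in the core, or $s$ and $t$ are joined by a path of length less than $q$ avoiding $\bar u$ — all first-order of the claimed rank. Substituting these equivalents throughout $\varphi$ gives $\varphi'$; the same substitution turns any $\FOsep$ formula with free variables into an equivalent \FO{} formula on $(q,k_0)$-unbreakable structures.

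Next I would set up the dynamic programming. Root $T$ and process it bottom-up. For a node $t$ let $G_t$ be the subgraph induced by the union of the bags in the subtree at $t$, and let $A_t$ be the adhesion towards the parent. Define the \emph{profile} $\profile(t)$ of $t$ to be the $\FOsep$-type of $G_t$ of quantifier rank $\le\mathrm{qr}(\varphi)$ and connective arity $\le k_0$, taken relative to the tuple consisting of $A_t$ enriched by those free variables of $\varphi$ that the dynamic programming has placed inside $G_t$; this tuple always has size bounded in terms of $\varphi$ and $k_0$, so $\profile(t)$ is an object of bounded size, and in particular it records, via its $\conn$-atoms, the partition of (any subset of) this tuple induced by connectivity inside $G_t$ after deleting any subset of the tuple. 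A Feferman--Vaught argument, using that the adhesions of $(T,\bag)$ are bounded separators, then shows that $\profile(t)$ is determined by the profiles of the children of $t$ together with the torso of $\bag(t)$: the connectivity partitions compose by a union-find computation (this also lets one resolve every $\conn$-atom whose deleted set or endpoints straddle an adhesion, including detours through $G$ outside $G_t$, which can only re-enter through $A_t$), and what remains is to compute an $\FOsep$-type on the auxiliary structure obtained from the torso of $\bag(t)$ by attaching, at each child adhesion, a gadget of bounded size encoding that child's profile — a bounded \emph{number} of bounded gadgets per adhesion of size at least three (on a topological-minor-free class there are only boundedly many children per such adhesion), and merely a bounded vertex- or edge-colour for adhesions of size one or two (where many children may hang, but a threshold count of their profiles suffices). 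This auxiliary structure is again $(q,k_0)$-unbreakable, since the modifications are local and of bounded size, and, forgetting colours, its underlying graph still excludes a fixed topological minor, because torsos of bounded-adhesion decompositions of topological-minor-free graphs are topological-minor-free and attaching boundedly many bounded gadgets preserves this. Hence by the first step the required $\FOsep$-types on it are read off from \FO{}-types of bounded quantifier rank, which are computable in fpt time by the model-checking algorithm for \FO{} on classes excluding a fixed topological minor (a special case of the nowhere-dense result of Grohe, Kreutzer, and Siebertz~\cite{GroheKS17}). Summing the per-node work over the decomposition, the running time is dominated by the construction of the decomposition, giving $f(\varphi)\cdot\|G\|^3$; and $\profile(\text{root})$ determines whether $G\models\varphi$.

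The hard part is the composition step. Unlike \FO, separator logic is not local, so it is a priori unclear that the $\FOsep$-behaviour of an arbitrarily large subtree $G_t$, seen through its bounded adhesion, is captured by a bounded profile. The resolution — the conceptual heart of the argument — is that the only non-local feature of $\FOsep$ is connectivity, and connectivity through a bounded separator is a finite piece of data (a partition), while everything else obeys the \FO{} composition theorem; turning this into a precise Feferman--Vaught statement for $\FOsep$ that also behaves correctly when deleted vertices $\bar u$ are quantified inside $G_t$, and verifying that the auxiliary structure at each bag is simultaneously unbreakable and topological-minor-free, are the technical crux. A secondary subtlety, relevant for the sharper tradeoff theorem rather than for \cref{thm:main-ub} itself, is that the quantitatively better decomposition variant guarantees only \emph{weak} unbreakability, which forces a corresponding adaptation of the first step.
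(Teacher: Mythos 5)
Your high-level blueprint tracks the paper's: compute the strongly unbreakable decomposition of \cref{thm:strong-unbreakability}, exploit that $\conn_k$ is first-order definable inside unbreakable pieces (your characterization via components of size $\le q$ is the paper's \cref{lem:short-BFS-combinatorial} argument), and run a dynamic program over the tree whose transitions amount to local \FO{} model-checking on topological-minor-free structures. However, you deviate in the crucial composition step and there are concrete gaps there.

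First, the claim that ``on a topological-minor-free class there are only boundedly many children per adhesion of size $\ge 3$'' is false. The graph $K_3 + \bar K_m$ (a triangle $\{a,b,c\}$ joined to $m$ independent vertices) excludes $K_5$ as a topological minor, since any $K_5$-model needs five branch vertices of degree $\ge 4$ and only $a,b,c$ qualify; yet any regular strongly unbreakable decomposition of it must place each of the $m$ degree-$3$ vertices in its own bag with adhesion exactly $\{a,b,c\}$, giving unboundedly many children sharing an adhesion of size $3$. The paper circumvents this cleanly by \emph{not} attaching a gadget per child: the whorl at $x$ is $\bgraph(x)$, where each child is a single vertex whose adjacency to $\bag(x)$ is its adhesion, and \cref{lem:bag graph} shows $\bgraph(x)$ excludes $K_{t'}$ where $t'=\max(|H|,a+2)$, precisely because child-vertices have degree $\le a$ and so cannot be branch vertices of a large topological minor. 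Your gadget approach needs a different argument here (one could observe that attaching boundedly-sized gadgets of bounded degree to an adhesion preserves the property, similar in spirit to \cref{lem:bag graph}, but your stated reason is wrong).

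Second, the asserted Feferman--Vaught composition of $\FOsep$-types across a bounded adhesion --- the statement you identify as ``the conceptual heart of the argument'' --- is left as a claim, and it is exactly what the paper declines to prove. It is not enough to say ``connectivity through a bounded separator is a partition and everything else is \FO{}'': quantifiers inside $\varphi$ range over all of $G_t$, and the partition of $A_t$ induced by deleting a quantified $\bar u\subseteq V(G_t)$ is not directly recorded in the $\FOsep$-type of $G_t$ relative to $A_t$. The paper instead expresses $\conn_k$ directly as an $\FO(\MSO(\preceq,A),\Sigma)$-formula over the augmented tree (\cref{cl:conn} in \cref{lem:fo-sep-translation}), by guessing the connectivity profiles of the $\Oh(k)$ nodes in the LCA-closure of the query arguments and verifying the guess via precomputed torso semigroup elements ($R_{\Bb}$) and bounded BFS on bag graphs; composition is then handled by the automata formalism of \cref{sec:automata} for $\FO(\MSO,\Sigma)$, which \emph{does} compose, rather than for $\FOsep$, which need not. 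Third, the assertion that your auxiliary structure is $(q,k_0)$-unbreakable ``since the modifications are local and of bounded size'' does not hold: the modifications are not of bounded size (potentially $\Theta(|G|)$ gadgets at a single bag), and in any case the paper never needs the bag graph itself to be unbreakable; the property it uses (\cref{lem:short-BFS-combinatorial}) is that disconnected vertices of $\bgraph_S(x)$ have a small-side component, and this is derived from unbreakability of $\bag(x)$ \emph{inside $G[\cone(x)]$}, a relational property between bag and cone that your auxiliary structure does not inherit for free.
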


For the lower bound, we will need a technical assumption.  We say that
a class $\Cc$ admits \emph{efficient encoding of topological minors}
if for every graph $H$ there exists $G\in \Cc$ such that $H$ is a
topological minor of $G$, and, given $H$, such $G$ together with a
suitable topological minor model can be computed in time polynomial in
$|H|$. Note that in particular such $G$ can be only polynomially
larger than $H$.

\begin{restatable}{theorem}{mainlb}\label{thm:main-lb}
  Let $\Cc$ be a subgraph-closed class of graphs that admits efficient
  encoding of topological minors. Then the model-checking of\,
  $\FOsep$ on $\Cc$ is $\AWstar$-hard, even for the fragment that uses
  only $\conn_k$ predicates with $k\le 2$.
\end{restatable}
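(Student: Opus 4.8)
The plan is to reduce from $\FO$ model-checking on the class of all graphs, which is $\AWstar$-hard, even when restricted to formulas expressing parameterized problems like $k$-clique or $k$-dominating set. The key idea is that a fixed graph $H$ can be "simulated" inside a larger graph $G$ obtained by subdividing each edge of $H$ a prescribed number of times; if $\Cc$ admits efficient encoding of topological minors, then for any $H$ we can find such a $G \in \Cc$ (polynomially larger than $H$) together with a topological minor model, and by subgraph-closedness we may assume $G$ is exactly this subdivision (after passing to the relevant subgraph). So the first step is: given an instance $(H, \varphi)$ of $\FO$ model-checking with $\varphi$ a sentence of $\FO$ over graphs, compute in polynomial time a graph $G \in \Cc$ that is (a subgraph-closed-class member containing) a subdivision of $H$ where every edge is subdivided the same number $p$ of times, $p$ depending only on $\varphi$.

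The second step is to show how to translate $\varphi \in \FO$ into an $\FOsep$ sentence $\psi$, using only $\conn_k$ predicates with $k \le 2$, such that $H \models \varphi$ iff $G \models \psi$. The branch vertices of $H$ (the images of $V(H)$ in $G$) must be identifiable in $G$ by an $\FOsep$ formula: if $H$ has minimum degree at least $3$, branch vertices are exactly those of degree $\ge 3$ in $G$, which is $\FO$-definable; if not, we first modify $H$ (e.g.\ attach a small gadget, or assume w.l.o.g.\ that the hard instances have minimum degree $3$ — $k$-clique instances do after a trivial padding) so that branch vertices are recognizable. Adjacency of two branch vertices $x,y$ in $H$ corresponds in $G$ to the existence of a path between them through subdivision vertices only, and crucially avoiding all other branch vertices; since an internal subdivision path of a subdivided edge is the unique such path, adjacency "$x \sim_H y$" becomes expressible as: $x,y$ are branch vertices and there is a branch vertex $z$ on every \ldots — more simply, using $\conn_2$ we can say that $x$ and $y$ become disconnected once we delete a single well-chosen cut vertex on the subdivided edge. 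Concretely, $x \sim_H y$ iff $x \ne y$ are branch vertices, $\conn_1(x, y, z)$ holds for the (definable) midpoint-type vertex $z$ of the $x$–$y$ subdivision segment being false while local adjacency patterns hold; one packages this into a single $\FOsep$ formula $\mathrm{adj}(x,y)$ with at most $\conn_2$. Then $\psi$ is obtained from $\varphi$ by relativizing all quantifiers to branch vertices and replacing each atom $E(x,y)$ by $\mathrm{adj}(x,y)$.

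The third step is bookkeeping: verify that the quantifier rank and size of $\psi$ depend only on $\varphi$ (so the parameter is preserved, giving an fpt reduction), that $G$ is constructed in polynomial time and lies in $\Cc$ (by the efficient-encoding hypothesis and subgraph-closedness), and that the equivalence $H \models \varphi \iff G \models \psi$ holds by a routine induction on the structure of $\varphi$. Since $\FO$ model-checking on general graphs is $\AWstar$-hard, and subdivisions with a bounded number of subdivisions per edge suffice to encode any graph as a topological minor, this establishes $\AWstar$-hardness of $\FOsep$ model-checking on $\Cc$ using only $\conn_k$ with $k \le 2$.

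The main obstacle is the definability of adjacency, i.e.\ making $\mathrm{adj}(x,y)$ work with only $\conn_1$ and $\conn_2$ and no higher-arity connectivity predicates. The naive approach — "$x$ and $y$ are connected by a path avoiding all other branch vertices" — needs to delete unboundedly many vertices and is not expressible with bounded-arity $\conn$. The fix is to exploit that in a subdivision, deleting a single internal vertex of the $x$–$y$ segment is enough to certify that segment's role locally, combined with $\FO$-definable degree and distance conditions that pin down which segment we are looking at; one has to argue carefully that no "spurious" path between $x$ and $y$ through the rest of $G$ survives, which is where choosing the subdivision parameter $p$ large enough relative to the quantifier rank of $\varphi$ (so that distances in $G$ faithfully reflect the combinatorics of $H$) becomes essential. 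Getting this gadget exactly right, and ensuring it only ever needs to delete one or two vertices, is the crux of the argument.
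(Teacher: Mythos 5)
Your overall reduction skeleton matches the paper's: reduce from $\FO$ model-checking on general graphs, use the efficient-encoding hypothesis together with subgraph-closedness to produce a subdivision of (a degree-boosted version of) the input graph inside $\Cc$, relativize quantifiers to branch vertices, and replace the edge atom by an $\FOsep$ formula. Steps one and three of your plan are fine. But you explicitly leave the adjacency gadget as the unresolved crux, and the directions you sketch for filling that gap do not lead to the argument that works.

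Concretely, you propose either deleting a single ``midpoint-type'' vertex $z$ on the subdivided $x$--$y$ segment and testing $\neg\conn_1(x,y,z)$, or making the subdivision parameter $p$ large relative to the quantifier rank of $\varphi$ so that distances in $G$ reflect the combinatorics of $H$. Neither is needed, and the first does not even work as stated: deleting one internal vertex of the $x$--$y$ segment does not separate $x$ from $y$ when $H-\{x,y\}$ is connected, and there is no $\FO$-definable way to single out ``the midpoint of the $x$--$y$ segment'' before you already know that $x$ and $y$ are adjacent. The correct gadget deletes $x$ and $y$ themselves, not a subdivision vertex:
\[
\phi_E(x,y) \coloneqq E(x,y) \,\vee\, \exists z.\Big( d^{=2}(z)\wedge \conn(x,z)\wedge\conn(y,z)\wedge\forall w.\big(\conn(z,w,x,y)\rightarrow d^{=2}(w)\big)\Big),
\]
i.e.\ there is a degree-two vertex $z$, connected to both $x$ and $y$ in the whole graph, whose entire connected component in $G-\{x,y\}$ consists of degree-two vertices. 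If $x\sim_H y$ one can take $z$ to be any internal vertex of the corresponding subdivision path; conversely, a component all of whose vertices have degree two in $G$ is a bare path whose two pendant ends must attach to $\{x,y\}$, and since $z$ reaches both $x$ and $y$ in $G$ this path witnesses a subdivided edge between them. This works for any subdivision, so $p$ need not depend on $\varphi$ at all and no spurious-path or distance analysis is required. The only $\conn$ predicates used are $\conn_0$ and $\conn_2$, giving the ``$k\le 2$'' claim directly. The remaining technicality you gesture at --- ensuring branch vertices are exactly those of degree at least $3$ --- is handled in the paper by first attaching two private pendant neighbors to every vertex of the input graph before taking the subdivision, which is simpler and more uniform than your suggestion of restricting to hard instances of minimum degree $3$.
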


On the one hand, these results show that separator logic is a natural
logic whose expressive power lies strictly between $\FO$ and $\MSO_2$,
and which corresponds, in the sense described above, to a natural
property of classes of graphs that lies between bounded treewidth and
nowhere denseness, see~\cref{fig:corresp}.
%
%
\begin{figure}

  \begin{center}
    \begin{tikzcd}
      \FO \arrow[r, "\subsetneq"]
      \arrow[leftrightsquigarrow]{d}{\cite{GroheKS17}}
      & \FOsep \arrow[r, "\subsetneq"] \arrow[leftrightsquigarrow]{d}{(\ast)}
      & \MSO_2\arrow[leftrightsquigarrow]{d}{\cite{Courcelle90,kreutzer2010lower}}\\
      \parbox{2cm}{\centering
      \textit{nowhere dense}}

      &
      \parbox{2cm}{\centering
      \textit{excluded top.~minor}}
      \arrow[l, swap, "\supsetneq"]
      &
      \parbox{2cm}{\centering
      \textit{bounded treewidth}}
      \arrow[l, swap, "\supsetneq"]
    \end{tikzcd}
  \end{center}
  \caption{Correspondences between logics and properties of graph classes. An arrow $\cal L\leftrightsquigarrow \cal P$
  between a logic $\cal L$ and a property $\cal P$ of graph classes denotes a correspondence:
  the logic $\cal L$ is tractable on a graph class $\CC$ if and, under additional assumptions, only if, the class $\CC$ has the property $\cal P$.
  Our result is the central correspondence $(\ast)$.}
  \label{fig:corresp}
\end{figure}
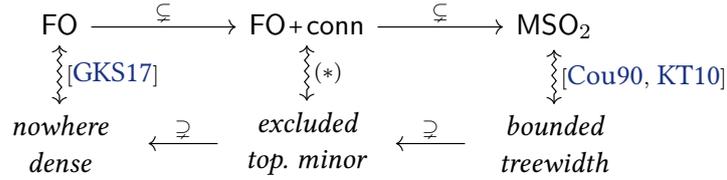
On the other hand, the proof of \cref{thm:main-ub} takes the technique
of dynamic programming over tree decompositions with unbreakable parts
to the limit. The tractability result generalizes several recent
algorithmic results for concrete problems 
expressible in
separator logic~\cite{agrawal2021fpt,bulian2017fixed,LindermayrSV20}.

The proof of \cref{thm:main-lb} is easy (see \cref{sec:overview-mc}),
so we focus on sketching the proof of \cref{thm:main-ub}, which
is the main technical contribution of the second part of the paper.
%
Let us fix
an $\FOsep$ sentence~$\varphi$, and let $k+2$ be the maximum arity of
connectivity predicates appearing in $\varphi$. Again, the key observation is
that $\conn_k$ predicates can be rewritten to
plain $\FO$ provided the graph is well-connected, and the reason for this is the same combinatorial observations that underlies the proof of \cref{thm:general-queries}.
Coming back to our dynamic programming on the tree decomposition,
we choose to present it using
a new framework based on automata, as this brings us closer
to the classic understanding of logic on tree-decomposable
graphs. Intuitively, the automaton processes a given tree in a
bottom-up manner, but we assume that on the children of every node
there is an additional structure of a graph. We say that such trees are \emph{augmented} with graphs.
When the automaton processes a node~$x$,
it chooses a transition based on an $\FO$ query executed on the
graph on the children of $x$, where each child is labeled with the
state computed for it before in the run. Then testing whether a
formula $\varphi\in \FOsep$ holds on a graph $G$ is reduced to deciding
whether an automaton constructed from $\varphi$ accepts such an
{\em{augmented tree}}, constructed from the tree decomposition
provided by the theorem of Cygan et al.~\cite{CyganLPPS19}.

The benefit of this approach is that other questions related to the
logic $\FOsep$ can be reduced in the same way to questions about tree
automata over augmented trees.  We showcase this by proving that given
an $\FOsep$ query $\varphi(\bar x)$ with free variables and a graph
$G$ from a fixed topological-minor-free class $\Cc$, the answers to
$\varphi(\bar x)$ on $G$ can be enumerated with constant delay after
fpt preprocessing (see \cref{thm:main-enumeration} for a precise
statement). Similarly, the formula $\varphi(\bar x)$ can be queried in
constant time after fpt preprocessing (see
\cref{thm:main-query-answering}).  In general, we believe that the
framework of automata over augmented trees may be of independent
interest, as it seems to be a convenient model for understanding
dynamic programming procedures over tree-decomposable structures.

Similar fpt query-answering and enumeration algorithms for
plain~\FO~have been obtained for classes with bounded expansion
\cite{DBLP:journals/lmcs/KazanaS19,10.1145/3375395.3387660} and
nowhere dense classes \cite{DBLP:conf/pods/SchweikardtSV18} and for
$\MSO_2$~on trees and graphs of bounded
treewidth~\cite{Bagan06,KazanaS13,AmarilliBMN19}.  Our general
approach based on augmented trees allows us to generalize those results
to trees that are augmented with graphs coming from a nowhere dense
class, and to a certain logic combining the power of \MSO~on trees and
\FO~on graphs.

\medskip
Let us comment on the novelty of using the decomposition theorem of
Cygan et al.~\cite{CyganLPPS19}.  The result of Cygan et
al.~\cite{CyganLPPS19} has been used several times for various graph
problems~\cite{CyganKLPPSW21,CyganLPPS19,Lokshtanov0S20,LoksthanovPPS21,SaurabhZ18}. Our
application is the most similar to (and in fact, draws inspiration
from) the work of Lokshtanov et al.~\cite{LokshtanovR0Z18}, who proved
the following statement: for every $\CMSO_2$ sentence~$\varphi$ and
$k\in \N$, model-checking $\varphi$ on general graphs can be
reduced to model-checking $\varphi$ on $(q,k)$-unbreakable graphs,
where $q$ is a constant depending on $\varphi$ and $k$. As $\FOsep$ is
subsumed by $\CMSO_2$, this result can be almost applied to establish
\cref{thm:main-ub}. The caveat is that the unbreakable graph output by
the reduction needs to admit efficient $\FO$ model-checking so that
the considered $\FOsep$ sentence can be decided in fpt time after
rewriting it to plain $\FO$. In essence, we show that it is possible
to guarantee this provided the input graph excludes a fixed
topological minor. We remark that the work of Lokshtanov et
al.~\cite{LokshtanovR0Z18} does not use the decomposition theorem of
Cygan et al.~\cite{CyganLPPS19} directly, but relies on its conceptual
predecessor, the {\em{recursive understanding}}
technique~\cite{ChitnisCHPP16,KawarabayashiT11}.

\paragraph*{Organization.}
In \cref{sec:overview} we give an overview of the proofs of
our main results: \cref{thm:general-queries},
\cref{thm:main-ub}, and \cref{thm:main-lb}.  In the main body of the paper, after formal
preliminaries, we present the proof of \cref{thm:general-queries} in
\cref{sec:general-queries} and
\cref{sec:torso-queries}. \cref{sec:tradeoff} presents
\cref{thm:tradeoff}, a variant of \cref{thm:general-queries} that
offers faster queries.  The proof of \cref{thm:main-ub} is presented
in \cref{sec:automata} and~\ref{sec:model-checking}, devoted
respectively to introducing the framework of automata over augmented
trees and applying them to $\FOsep$ on classes excluding a fixed
topological minor. \cref{thm:main-lb} is proved in \cref{sec:lbs}.


\section{Overview}\label{sec:overview}

In this section we provide an overview of the proofs of our main results: \cref{thm:general-queries}, \cref{thm:main-ub}, and \cref{thm:main-lb}.
We start with \cref{thm:general-queries}, as this proof exposes all main ideas in a purely algorithmic setting. Then we discuss how
the same methodology can be used for \cref{thm:main-ub}. In this
introduction we assume familiarity with basic terminology of tree
decompositions; see \cref{sec:prelims} for details.

\subsection[Data structure for connectivity under vertex failures:
proof of Theorem~\ref{thm:general-queries}]{Data structure for
  connectivity under vertex failures: proof of
  \cref{thm:general-queries}}

\paragraph*{Unbreakability.}
As mentioned in \cref{sec:stoc-intro}, the central idea of this work
is to tackle problems involving connectivity predicates using a decomposition into well-connected --- or, more formally, {\em{unbreakable}} --- parts. We
first need to recall a few definitions.

A \emph{separation} in a graph $G$ is a pair $(A,B)$ of vertex subsets
such that $A\cup B=V(G)$ and there are no edges in $G$ between
$A\setminus B$ and $B\setminus A$. The {\em{order}} of the separation
is the cardinality of the {\em{separator}}~$A\cap B$. A vertex subset
$X$ is {\em{$(q,k)$-unbreakable}} in a graph $G$ if for every
separation $(A,B)$ in~$G$ of order at most $k$ in~$G$, either
$|A\cap X|\leq q$ or $|B\cap X|\leq q$. So intuitively speaking, a
separation of order $k$ cannot break $X$ in a balanced way: one of the
sides must contain at most $q$ vertices of $X$.

The notion of unbreakability has been implicitly introduced in the
context of parameterized algorithms by Kawarabayashi and Thorup in
their work on the {\sc{$k$-Way Cut}} problem~\cite{KawarabayashiT11}.
This work has brought about the method of {\em{recursive
    understanding}}, which was then made explicit by Chitnis, Cygan,
Hajiaghayi, Pilipczuk, and Pilipczuk in the technique of
{\em{randomized contractions}}~\cite{ChitnisCHPP16}. Intuitively,
recursive understanding is a Divide\&Conquer scheme using which one
can reduce problems on general graphs to problems on suitably
unbreakable graphs, provided certain technical conditions are
satisfied. The technique was also applied in the context of
model-checking: Lokshtanov, Ramanujan, Saurabh, and
Zehavi~\cite{LokshtanovR0Z18} proved that the problem of deciding a
property expressed in $\CMSO_2$ in fpt time on general graphs can be
reduced to the same problem on suitably unbreakable graphs. This
result was used to provide algorithms for computing elimination
distance to certain graph
classes~\cite{DBLP:conf/stacs/AgrawalKP0021,DBLP:conf/lics/FominGT21,JansenK21},
which is very much related to separator logic
(see~\cite[Example~3.5]{SchraderSV21}).

In this work, we will not use recursive understanding or randomized
contractions {\em{per se}}, but their conceptual successor: the
decomposition into unbreakable parts. Precisely, the following theorem
was proved by Cygan, Lokshtanov, Pilipczuk, Pilipczuk, and
Saurabh~\cite{CyganLPPS19}.

\begin{theorem}[\cite{CyganLPPS19}]\label{thm:ov:strong-unbreakability}
  There is a function $q(k)\in 2^{\Oh(k)}$ such that given a graph $G$
  and $k\in \N$,~one can in time $2^{\Oh(k^2)}\cdot |G|^2\|G\|$
  construct a (rooted) tree decomposition $\Tt=(T,\bag)$ of $G$
  satisfying the~following.
  \begin{itemize}[nosep]
  \item All adhesions in $\Tt$ are of size at most $q(k)$; and
  \item every bag of $\Tt$, say at node $x$, is
    $(q(k),k)$-unbreakable in the subgraph of $G$ induced by the
    union of bags at all descendants\footnote{We follow the
      convention that every node is its own ancestor and descendant.}
    of $x$.
  \end{itemize}
\end{theorem}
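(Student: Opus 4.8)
The plan is to build $\Tt$ top--down by a divide-and-conquer recursion, which is the structured incarnation of the \emph{recursive understanding} technique~\cite{ChitnisCHPP16,KawarabayashiT11}. I would design a recursive procedure $\textsf{Decompose}(H,\partial)$ that takes an induced subgraph $H$ of $G$ together with a \emph{boundary} $\partial\subseteq V(H)$ with $|\partial|\le q(k)$ --- to be read as the adhesion towards the eventual parent --- and returns a rooted tree decomposition of $H$ whose root bag contains $\partial$, whose adhesions all have size at most $q(k)$, and in which the bag at every node $x$ is $(q(k),k)$-unbreakable in $H[\gamma(x)]$, where $\gamma(x)$ is the union of the bags in the subtree rooted at $x$. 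The decomposition promised by the theorem is then $\textsf{Decompose}(G,\emptyset)$. The point of phrasing the recursion with induced subgraphs is that an induced subgraph of an induced subgraph is again induced, so the unbreakability guarantee produced inside a recursive call on $H$ upgrades verbatim to the required guarantee with respect to $G$, and the adhesion and unbreakability parameters are exactly the ones we want.

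The algorithmic heart of the recursive step is a subroutine that, given $(H,\partial)$, runs in fixed-parameter time and either certifies that $V(H)$ is $(q(k),k)$-unbreakable in $H$ --- in which case $\textsf{Decompose}$ returns the single node with bag $V(H)$ --- or outputs a separation $(A,B)$ of $H$ of order at most $k$ with $\min(|A|,|B|)>q(k)$, i.e.\ a \emph{witnessing} (balanced) separation. This is where cut machinery in the style of randomized contractions~\cite{ChitnisCHPP16}, and its derandomization via splitters/perfect hash families, is used: one branches over $2^{\Oh(k)}$ guesses describing how a hypothetical order-$\le k$ separator interacts with small ``witness'' subsets of its two sides, and in each branch recovers a candidate separation by a single minimum vertex cut (max-flow) computation; carrying out a bounded number of such rounds costs $2^{\Oh(k^2)}\cdot|G|^{\Oh(1)}$, which dictates the shape of the stated running time. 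The value $q(k)=2^{\Oh(k)}$ is precisely the threshold at which this search is still fixed-parameter tractable while at the same time ``no witnessing separation'' genuinely forces one side of every order-$\le k$ separation to be small. Given a witnessing separation one then splits $H$ along $(A,B)$, recursing into (essentially) $H[A]$ and $H[B]$ with $A\setminus B$'s complementary separator folded into the boundaries passed down and into the current bag, and glues the two returned decompositions under a common root; since $|A\setminus B|=|A|-|A\cap B|>q(k)-k\ge 1$ and symmetrically, each child has strictly fewer vertices than $H$, so with a suitable lexicographic tie-breaking the recursion terminates and returns a decomposition with $\Oh(|V(G)|)$ nodes.

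The genuine obstacle is \emph{bounded adhesion}. The adhesions in the decomposition are exactly the neighborhoods $N_H(C)$ of the connected components $C$ of $H$ minus the current bag, and these can only grow as one refines: folding the order-$\le k$ separators of the witnessing separations into the bags naively would push some $N_H(C)$ past $q(k)$ within a couple of steps. The real content of the proof is therefore a careful choice of \emph{which} witnessing separations to use, and in what order to absorb them, so that the component-neighborhood structure stays under control and every adhesion remains bounded by $q(k)=2^{\Oh(k)}$. The main technical device here is the uncrossing argument for separations based on submodularity of the order function: any inconvenient separation can be replaced by a \emph{lean} one that crosses the already-built structure minimally, and only then absorbed. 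Making all of this work simultaneously --- witnessing, bounded-adhesion, progress-making, and computable within $2^{\Oh(k^2)}\cdot|G|^2\|G\|$ time --- is exactly what Cygan, Lokshtanov, Pilipczuk, Pilipczuk, and Saurabh~\cite{CyganLPPS19} carry out, and is where I would expect essentially all of the difficulty to lie.
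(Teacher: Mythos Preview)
The paper does not prove this theorem at all: it is stated with a citation to~\cite{CyganLPPS19} and used as a black box throughout (it reappears verbatim as \cref{thm:strong-unbreakability} in the preliminaries, again without proof). So there is nothing to compare your proposal against in this paper; you are sketching the argument of the cited work, not of the present one, and you acknowledge as much in your final paragraph.

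That said, as a high-level outline of what~\cite{CyganLPPS19} does, your description is broadly accurate: a top--down recursion that either certifies unbreakability of the current piece or finds a balanced order-$\le k$ separation and splits along it, with $q(k)=2^{\Oh(k)}$ arising from the randomized-contractions/important-separators machinery, and with the main technical difficulty being to keep adhesions bounded while making progress. One point where your sketch is slightly off in emphasis: the recursion in~\cite{CyganLPPS19} does not simply split into $H[A]$ and $H[B]$ and glue under a common root. Rather, the smaller side (of size $\le q(k)$ intersected with the bag) is absorbed into the current bag, and only the components hanging off it become children; this is what keeps the bag $(q(k),k)$-unbreakable in its own cone (the ``strong'' version of unbreakability) rather than just in $G$. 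Your uncrossing/leanness remark is in the right spirit but is not the mechanism actually used there; the adhesion bound comes more directly from the way boundaries are threaded through the recursion and from the $2^{\Oh(k)}$ slack between $k$ and $q(k)$.
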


We remark that a different variant of
\cref{thm:ov:strong-unbreakability} was later proved by Cygan, Komosa,
Lokshtanov, Pilipczuk, Pilipczuk, and
Wahlstr\"om~\cite{CyganKLPPSW21}. This variant provides much stronger
unbreakability guarantees --- $q(k)=k$ --- at the cost of relaxing the
unbreakability to hold only in the whole graph $G$. See
\cref{sec:prelims} for a discussion. The variant
of~\cite{CyganKLPPSW21} can be also used in our context and this leads
to improving some quantitative bounds, however for simplicity, we focus
on \cref{thm:ov:strong-unbreakability} in this overview.

A typical usage of \cref{thm:ov:strong-unbreakability} is to
apply bottom-up dynamic programming on the obtained tree
decomposition $\Tt=(T,\bag)$. The subproblem for a node $x$ of $T$
corresponds to finding partial solutions in the subgraph of $G$
induced by the union of bags at descendants of $x$, for every possible
behavior of such a partial solution on the adhesion connecting $x$
with its parent. That this adhesion has size at most $q(k)$ gives an
upper bound on the number of different behaviors. To solve such a
subproblem one needs to aggregate the solutions computed for the
children of $x$ by solving an auxiliary problem on the subgraph
induced by the bag of $x$. In various problems of interest, the
unbreakability of the bag becomes helpful in solving the auxiliary
problem. This general methodology has been successfully used to give
multiple fpt algorithms for cut
problems~\cite{CyganLPPS19,CyganKLPPSW21,SaurabhZ18}, most prominently
for {\sc{Minimum Bisection}}~\cite{CyganLPPS19}. More recent uses
include a parameterized approximation scheme for the {\sc{Min
    $k$-Cut}} problem~\cite{Lokshtanov0S20} and a fixed-parameter
algorithm for {\sc{Graph Isomorphism}} parameterized by the size of
the excluded minor~\cite{LoksthanovPPS21}. On a high level, we apply
the same methodology here.

\paragraph*{The case of totally unbreakable graphs.} Let us come back
to the problem of evaluating connectivity queries: we are given a
graph $G$ and after some preprocessing of $G$, we would like to be
able to quickly evaluate for given vertices $s,t,u_1,\ldots,u_k$
whether $\conn_k(s,t,u_1,\ldots,u_k)$ holds. Consider first the
following special case: the whole $G$ is $(q,k)$-unbreakable for some
parameter $q$ (or more formally, $V(G)$ is $(q,k)$-unbreakable in
$G$). The key observation is that then, whether
$\conn_k(s,t,u_1,\ldots,u_k)$ holds can be easily decided in time
polynomial in $q$ and $k$ as follows.

Run a breadth-first search from $s$ in $G-\{u_1,\ldots,u_k\}$, but
terminate the search once $q+1$ different vertices have been reached.
If the search reached $t$, then for sure $\conn_k(s,t,u_1,\ldots,u_k)$
holds. If the search finished before reaching $q+1$ different vertices
and it did not reach $t$, then for sure $\conn_k(s,t,u_1,\ldots,u_k)$
does not hold. In the remaining case, perform a symmetric procedure
starting from $t$. The key observation is that if both breadth-first
searches --- from $s$ and from $t$ --- got terminated after reaching
$q+1$ different vertices, then the $(q,k)$-unbreakability of $G$
implies that $\{u_1,\ldots,u_k\}$ do not separate $s$ from $t$. So
then $\conn_k(s,t,u_1,\ldots,u_k)$ holds. It is straightforward to
implement this procedure in time polynomial in $q$ and $k$ given the
standard encoding of $G$ through adjacency lists.

In the general case, we cannot expect the given graph $G$ to be
$(q,k)$-unbreakable for any parameter~$q$ bounded in terms of $k$.
However, we can use \cref{thm:ov:strong-unbreakability} to compute a
tree decomposition $\Tt$ of $G$ into parts that are
$(q,k)$-unbreakable for $q\in 2^{\Oh(k)}$. The idea is that then, a
query $\conn_k(s,t,u_1,\ldots,u_k)$ can be evaluated by bottom-up
dynamic programming on $\Tt$. By suitably precomputing enough
auxiliary information about $\Tt$, this evaluation can be performed in
time depending only on $q$ and $k$.

\paragraph*{Evaluating a query on the decomposition.} Consider then
the following setting: we are given the decomposition $\Tt=(T,\bag)$
provided by \cref{thm:ov:strong-unbreakability}, and for given
$s,t,u_1,\ldots,u_k$ we would like to decide whether
$\conn_k(s,t,u_1,\ldots,u_k)$ holds. So far let us not optimize the
complexity: the goal is only to design a general algorithmic
mechanism which will be later implemented efficiently using
appropriate data structures.


For every node $x$ of $T$, let $\adh(x)$ be the adhesion between $x$
and its parent (or $\emptyset$ if $x$ is the root), let $\cone(x)$ be
the union of the bags at the descendants of $x$, and let
$\cmp(x)=\cone(x)\setminus \adh(x)$. Call a node $x$ {\em{affected}}
if $\cmp(x)$ contains a vertex of $\{s,t,u_1,\ldots,u_k\}$, and
{\em{unaffected}} otherwise. Further, let
$D(x)=\adh(x)\cup (\cmp(x)\cap \{s,t\})$; note that $|D(x)|\leq q+2$.
Our goal is to compute the following {\em{(connectivity) profile}} for
every node $x$ of $T$:
$$\profile(x)=\left\{\{a,b\}\in\binom{D(x)}{2}\quad\Big|\quad a\textrm{ and }b\textrm{ are connected in }G[\cone(x)\setminus \{u_1,\ldots,u_k\}]\right\}.$$
We remark that the tree decomposition $\Tt$ can be chosen so that it
satisfies the following basic connectivity property: for every node
$x$ and vertices $a,b\in \adh(x)$, there is a path connecting~$a$
and~$b$ whose all internal vertices belong to $\cmp(x)$. Thus, for
every unaffected node $x$, we have that
$\profile(x)=\binom{D(x)\setminus \{u_1,\ldots,u_k\}}{2}$.

It is easy to see that the profile of $x$ is uniquely determined by
the profiles of the children of $x$ and the subgraph induced by the
bag of $x$. The following lemma shows that this can be done
efficiently.

\begin{lemma}[\cref{lem:aggregate}, informal statement]\label{lem:ov:aggregate}
  Let $x$ be a node of $T$. Suppose that for each affected child~$z$
  of $x$ we are given $\profile(z)$. Then, subject to suitable
  preprocessing of $G$, one can compute $\profile(x)$ in time
  polynomial in $q$. \end{lemma}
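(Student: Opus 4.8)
The plan is to reduce the computation of $\profile(x)$ to a connected-components computation in a bounded-size auxiliary graph $H$ that encodes $\cone(x)$ together with the child summaries, and then to perform that computation by a breadth-first search that is cut short using the unbreakability of $\bag(x)$. Write $U=\{u_1,\ldots,u_k\}$. Let $H$ be the graph on vertex set $\big(\bag(x)\cup(\{s,t\}\cap\cmp(x))\big)\setminus U$ (the vertices here outside $\bag(x)$ are among $s,t$, and each lies in $\cmp(z)$ for a uniquely determined child $z$, which is then affected) whose edges are: every edge of $G$ with both endpoints in $\bag(x)\setminus U$; for every unaffected child $z$, a clique on $\adh(z)\setminus U$ --- sound because the basic connectivity property of $\Tt$ supplies, for any two vertices of $\adh(z)$, a path with interior in $\cmp(z)$, and $\cmp(z)$ avoids $U$ as $z$ is unaffected; and for every affected child $z$, the pairs of $\profile(z)$ that lie in $\adh(z)$, together with the pairs $\{v,a\}\in\profile(z)$ with $v\in\{s,t\}\cap\cmp(z)$ and $a\in\adh(z)$. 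The first claim to establish is that $H$ and $G[\cone(x)\setminus U]$ induce the same partition into connected components on $V(H)$, which yields $\profile(x)=\big\{\{a,b\}\in\binom{D(x)}{2}:a\text{ and }b\text{ lie in one component of }H\big\}$. One inclusion is immediate, since every edge of $H$ certifies a connection inside $G[\cone(x)\setminus U]$; for the converse, I take a path in $G[\cone(x)\setminus U]$ between two vertices of $V(H)$ and cut it at its visits to $\bag(x)$ --- using that a vertex of $\cmp(z)$ has all its neighbours in $\cone(z)$, each maximal piece lying outside $\bag(x)$ is contained in a single $\cmp(z)$ and has both endpoints in $\adh(z)\cup(\{s,t\}\cap\cmp(z))\subseteq D(z)$, hence is certified by an edge of $H$, while the pieces inside $\bag(x)$ consist of edges of $H$.

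The key step is then to prove, using \cref{thm:ov:strong-unbreakability}, that at most one connected component of $H$ contains more than $q$ vertices of $\bag(x)\setminus U$; call it the \emph{giant} component. If two components $C_1,C_2$ did, then by the previous step they would lie in two distinct components $C_1',C_2'$ of $G[\cone(x)\setminus U]$, each meeting $\bag(x)$ in more than $q$ vertices. Writing $U'=U\cap\cone(x)$, the pair $(C_1'\cup U',\ \cone(x)\setminus C_1')$ is a separation of $G[\cone(x)]$ whose separator is exactly $U'$, hence of order at most $k$; since $C_2'\subseteq\cone(x)\setminus(C_1'\cup U')$, the two sides of this separation meet $\bag(x)$ in more than $q$ vertices each, contradicting that $\bag(x)$ is $(q,k)$-unbreakable in $G[\cone(x)]$.

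For the algorithm, I would precompute, for every node $x$, the partition of $\bag(x)$ into the connected components of the torso of $G$ at $x$ (the graph obtained from $G[\bag(x)]$ by turning each child adhesion into a clique); its total size over all nodes is linear in $\sum_x|\bag(x)|$, which is $\Oh(q\cdot|G|)$. Given the query, one first determines the at most $k+2$ affected children of $x$ and the at most $2k+2$ torso-components that are \emph{touched}, i.e.\ contain a vertex of $U$ or an adhesion vertex of an affected child (one per affected child, since such an adhesion is a clique in the torso, and one per vertex of $U\cap\bag(x)$). Then, for each of the at most $q+2$ vertices $d\in D(x)\setminus U$, one runs a breadth-first search in $H$ from $d$, aborting as soon as more than $q$ vertices of $\bag(x)$ have been reached. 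Whenever the search reaches a vertex in an untouched torso-component $B$, all of $B$ --- and nothing outside $B$ --- is reachable in $H$, so $B$ is resolved in one step via the precomputed partition; the genuine exploration therefore stays within the boundedly many touched components and affected-child summaries, and by the previous step it either aborts (so $d$ lies in the giant component) or terminates, having discovered $d$'s entire component and in particular all vertices of $D(x)$ in it. Reading off, for every $d$, whether it is giant or which small component it lies in then produces $\profile(x)$.

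The main obstacle I anticipate is precisely the running-time analysis of this search: even though the relevant torso-components and affected children are boundedly many, a single vertex $w\in\bag(x)$ may still have unbounded degree in $G[\bag(x)]$ and may lie in the adhesions of unboundedly many children, so one must argue that the search never inspects a neighbour or a child whose processing is vacuous. This is what the ``suitable preprocessing of $G$'' in the statement is for: in addition to the torso partition, one stores incidence data about $\bag(x)$ versus child adhesions (and, for instance, collapses children of $x$ with equal adhesion) so that exploring from $w$ inspects only neighbours and children that actually enlarge the reached set --- which, by the giant-component bound, can happen at most $q+\Oh(1)$ times. With this bookkeeping in place, the two structural claims above carry the correctness, and the $\poly(q)$ time bound follows.
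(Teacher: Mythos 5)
Your overall structure matches the paper's: you build a bounded-size ``bag graph with child summaries'' (your $H$ is essentially the paper's $\bgraph_S(x)$, with $s,t$ folded in), argue it preserves connectivity among $D(x)$, prove via strong unbreakability that at most one component can be large (this is exactly the paper's \cref{lem:short-BFS-combinatorial}), and then run a BFS that aborts after $q+1$ reached vertices. The structural claims and the unbreakability argument are correct.

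The gap is the one you flag yourself, and your proposed repair does not close it. You write that ``one stores incidence data about $\bag(x)$ versus child adhesions (and, for instance, collapses children of $x$ with equal adhesion) so that exploring from $w$ inspects only neighbours and children that actually enlarge the reached set.'' But no such oracle falls out of collapsing duplicate children or storing a torso partition: a vertex $w\in\bag(x)$ may have $\Theta(|G|)$ neighbours in $\bgraph_\emptyset(x)$, and a BFS that aborts after reaching $q+1$ vertices can still spend $\Theta(|G|)$ time just scanning $w$'s adjacency list to discover that almost all neighbours are already reached. Your torso-partition shortcut only dispatches \emph{untouched} components; it says nothing about a high-degree vertex sitting inside a touched component (or inside the eventual giant component), which is exactly where the BFS actually explores. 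The paper's fix, which you are missing, is a \emph{degree threshold}: it stores $\bgraph_\emptyset(x)$ with each adhesion edge annotated by the list of children supporting it, and observes that passing from $\bgraph_\emptyset(x)$ to $\bgraph_S(x)$ deletes at most $k + (k+2)\binom{q}{2}$ edges incident to any fixed $w$ (at most $k$ neighbours lie in $S$, and each of the $\le k+2$ affected children can erase at most $\binom{q}{2}$ adhesion edges). Hence if $\deg_{\bgraph_\emptyset(x)}(w)$ exceeds $q + k + (k+2)\binom{q}{2}$, then $\deg_{\bgraph_S(x)}(w)>q$, the component is automatically giant, and the BFS can terminate after reading only $q^{\Oh(1)}$ entries of $w$'s list; otherwise the list is short and each entry can be tested in $q^{\Oh(1)}$ time using the stored support lists. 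This is the content of the paper's Claim~\ref{cl:emulation}, and it is what actually makes the per-vertex work $q^{\Oh(1)}$; without some argument of this kind the $\poly(q)$ bound does not follow.
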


Note that every node has at most $k+2$ affected children, hence the
input to the algorithm of \cref{lem:ov:aggregate} is of size
polynomial in $q$.

Let us sketch the proof of \cref{lem:ov:aggregate}. In essence, we
apply the same strategy as the one from the case when the whole
graph $G$ is unbreakable. Let $\bgraph(x)$ be the
graph obtained from $G[\bag(x)]$ by turning the adhesion $\adh(z)$
into a clique, for every child $z$ of $x$. Note that $\bgraph(x)$ does
not depend on the query, and thus can be precomputed upon
initialization. As $\bag(x)$ is
$(q,k)$-unbreakable in $G[\cone(x)]$, to test whether two vertices
$a,b\in \bag(x)$ are connected in
$G[\cone(x)\setminus \{u_1,\ldots,u_k\}]$ it suffices to apply
breadth-first searches from $a$ and $b$ in $\bgraph(x)$ that are
terminated after reaching $q+1$ different vertices, except that these
searches are forbidden to use vertices from
$\{u_1,\ldots,u_k\}\cap \bag(x)$ as well as those edges originating
from turning adhesions of affected children into cliques that are not
in respective profiles. Provided $\bgraph(x)$ is precomputed and a
list of affected children together with their profiles is given on
input, it is easy to implement this algorithm so that it runs in time
polynomial in $q$. Thus we can decide for every pair
$\{a,b\}\in \binom{\adh(x)}{2}$ whether it should be included in
$\profile(x)$. A similar reasoning can be applied to pairs in
$\binom{D(x)}{2}$ that include $s$ or $t$.

By applying \cref{lem:ov:aggregate} bottom-up, we can compute all the
profiles in time $q^{\Oh(1)}\cdot |T|\leq q^{\Oh(1)}\cdot |G|$. Then
we can read whether $\conn_k(s,t,u_1,\ldots,u_k)$ holds by checking
whether $\{s,t\}$ belongs to the profile of the root of $T$. Hence,
the query $\conn_k(s,t,u_1,\ldots,u_k)$ can be evaluated in time
$q^{\Oh(1)}\cdot |G|$.

\newcommand{\hm}{\mathsf{top}}

\paragraph*{Data structure.} Our final goal is to enrich the
decomposition $\Tt$ with some additional information so that the
mechanism presented above can be executed in time depending only on
$q$.

For a vertex $v$ of $G$, let $\hm(v)$ be the unique top-most node
of~$T$ whose bag contains $v$. Let
$\Sh\coloneqq \{s,t,u_1,\ldots,u_k\}$ and let
$X=\{\hm(v)\colon v\in \Sh\}$. 

\begin{figure}[h]
  \centering
  \includegraphics[scale=0.5,page=9]{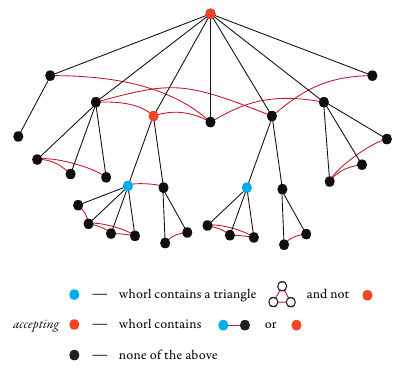}
  \caption{The sets $X,Y$ and $Y'$. Only the nodes from $Y\cup Y'$ are 
  marked in the figure, wheras 
  the shaded areas indicate subtrees consisting of 
  nodes which are not marked in the figure. Every affected node lies on a path from some $y\in Y$ to $f(y)\in Y'$.
  }
  \label{fig:lca-closure0}
 \end{figure}

Note that $|X|\leq |\Sh|\leq k+2$ and
that a node of~$T$ is affected if and only if it is an ancestor of a
node from $X$. Let now $Y$ be the {\em{lowest common ancestor
    closure}} of $X$: the set comprising $X$ and all lowest common
ancestors of pairs of vertices of~$X$. It is well-known that then
$|Y|\leq 2|X|-1\leq 2k+3$. We also add the root of $T$ to $Y$ in case
it is not already present; thus $|Y|\leq 2k+4$. Observe that if for
every vertex $v$ we store a pointer to $\hm(v)$, and we set up the
data structure for lowest common ancestor queries of Harel and
Tarjan~\cite{HarelT84} on $T$, then for given $s,t,u_1,\ldots,u_k$ the
set $Y$ can be computed in time polynomial in $k$.
For every node $y$ in $Y$ let $f(y)$ denote the closest ancestor of $y$
whose parent belongs to $Y$, or $y$ if no such ancestor exists
(see \cref{fig:lca-closure0}).
Note that every affected node lies on a path joining $y$ with $f(y)$,
for some $y\in Y$.
Let $Y'=\setof{f(y)}{y\in Y}$.
Then $Y'$ is the set of all nodes that are affected children of 
a node in $Y$. Note that every element in $Y'$ has exactly one 
preimage under $f$, that is, there is an function $f^{-1}\from Y'\to Y$
such that $f(f^{-1}(y'))=y'$ for all $y'\in Y'$.
In particular, $|Y'|\le |Y|$.
Using a
slight modification of the data structure of Harel and
Tarjan~\cite{HarelT84}, we can compute $Y'$ as well as the function $f^{-1}\from Y'\to Y$ in time polynomial in~$k$.

The idea is that when evaluating query $\conn_k(s,t,u_1,\ldots,u_k)$,
we can compute the profiles only for the nodes of $Y\cup Y'$, instead of all
affected nodes of $T$. Note that the root of $T$ was explicitly added to
$Y$, so we will still be able to read the answer to
$\conn_k(s,t,u_1,\ldots,u_k)$ from the profile of the root.

We process the nodes of $Y\cup Y'$ in a bottom-up manner. Consider then any element 
 $x\in Y\cup Y'$; our task is to compute its profile based on the profiles
of its strict descendants belonging to $Y\cup Y'$. 
We distinguish two cases. 
In the first case, $x\in Y$, so we may compute $\profile(x)$ using the algorithm of
\cref{lem:ov:aggregate}. 
The second case is of an element  $x\in Y'$; then $x=f(y)$ where $y=f^{-1}(x)$.
Observe that the path from $y$ to $x$ consists of affected nodes,
whose siblings are unaffected nodes. 
We now show how $\profile(x)$ can be
 computed from $\profile(y)$.

 For two nodes
$c,d$ of $T$, where $c$ is an ancestor of $d$, let $\torso(c,d)$ be
the set of all pairs $\{a,b\}\in \binom{\adh(c)\cup \adh(d)}{2}$ such
that in $G$ there is a path from $a$ to $b$ whose all internal
vertices belong to $\cmp(c)\setminus \cone(d)$. Then to compute
$\profile(x)$ from $\profile(y)$, construct an
auxiliary graph with vertices 
$(\adh(x)\cup \adh(y)$ and edges $\profile(y)\cup \torso(x,y)$.
Now
observe that $\profile(x)$ is the reachability relation in this graph
restricted to $\adh(x)\setminus \{u_1,\ldots,u_k\}$. The key point
implying the correctness of this observation is that there are no
vertices from $X$ in $\cmp(x)\setminus \cmp(y)$.

Observe now that the values of $\torso(c,d)$ for $(c,d)$ ranging over
ancestor/descendants pairs in $T$ are independent of the query.
Therefore, upon initialization we can compute a data structure that
can be queried for those valued efficiently. The statement below
describes two possible implementations.

\begin{lemma}[\cref{lem:torso-ds-trivial} and \cref{lem:torso-ds},
  combined and simplified]\label{lem:ov:torso-ds}
  Given $\Tt$, one can
  set up data structures that can answer $\torso(c,d)$ queries with
  the following specifications:
  \begin{enumerate}[nosep]
  \item initialization time $q^{\Oh(1)}\cdot |G|^2\|G\|$, memory usage
    $q^{\Oh(1)}\cdot |G|^2$, query time $q^{\Oh(1)}$; or
  \item initialization time $2^{\Oh(q^2)}\cdot |G|$, memory usage
    $2^{\Oh(q^2)}\cdot |G|$, query time $2^{\Oh(q^2)}$.
  \end{enumerate}
\end{lemma}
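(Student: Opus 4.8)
The plan is to prove the two items separately, as they correspond to two genuinely different trade-offs between preprocessing and query time.

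\emph{Item (2): small memory, constant query (in terms of $q$).} The key observation is that $\torso(c,d)$ only depends on the pair $(c,d)$ through the subgraph $G[\cmp(c)\setminus\cone(d)]$ together with its attachment to the two adhesions $\adh(c)$ and $\adh(d)$, both of which have size at most $q$. I would first reduce the computation of a single $\torso(c,d)$ to a reachability computation in the graph obtained from $G[\cmp(c)\setminus\cone(d)]$ by adding $\adh(c)$ and $\adh(d)$ as apex-like boundary sets; since $\bag(c)$ is $(q,k)$-unbreakable in $G[\cone(c)]$ and adhesions have size at most $q$, one can bound the ``information'' crossing any level of the subtree between $c$ and $d$. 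Concretely, I would process the tree bottom-up and, for each node $x$ together with each relevant ancestor height, store a partial torso relation on the boundary pair; because a relation on a set of size $\le 2q$ is a subset of $\binom{2q}{2}$, it can be encoded in $2^{\Oh(q^2)}$ bits. The subtlety is that $c$ can be any ancestor of $d$, so naively there are $|G|^2$ pairs. This is resolved by the standard trick used for ancestor queries on trees: precompute, for each node $d$ and each $i$, the torso relation $\torso(\anc_{2^i}(d), d)$ where $\anc_{2^i}(d)$ is the $2^i$-th ancestor; then $\torso(c,d)$ for arbitrary $c$ is obtained by composing $\Oh(\log|G|)$ such jump-relations (relational composition restricted to the appropriate boundary). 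Each composition is an operation on sets of size $2^{\Oh(q^2)}$, so a single query costs $2^{\Oh(q^2)}$ after $2^{\Oh(q^2)}\cdot|G|$ preprocessing and $2^{\Oh(q^2)}\cdot|G|$ memory. One has to be slightly careful that jump-relations compose correctly, i.e.\ that the intermediate adhesion is always a separator — this follows because adhesions in a tree decomposition are separators, so a path in $G[\cmp(c)\setminus\cone(d)]$ between boundary vertices decomposes at the intermediate adhesion exactly as relational composition demands.

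\emph{Item (1): fast query, but quadratic memory.} Here I would not try to be clever about the tree structure and instead precompute, essentially explicitly, enough pairwise reachability data. Fix the decomposition $\Tt$. For each node $d$, consider the graph $G_d$ obtained by contracting, in $G$, the whole subtree below $d$ into the single set $\adh(d)$ (equivalently, delete $\cmp(d)$ and keep $\adh(d)$ as a boundary clique or independent set as appropriate). For each such $d$, run from each of the $\le q$ vertices of $\adh(d)$ a search ``upward'' in $\Tt$ recording, for every ancestor $c$ of $d$, which vertices of $\adh(c)$ are reachable using only vertices of $\cmp(c)\setminus\cone(d)$. Doing this for all $d$ and maintaining it incrementally along root-to-$d$ paths can be organized so that the total work is $q^{\Oh(1)}\cdot|G|\cdot\|G\|$ — one does a BFS-like sweep of cost $\Oh(\|G\|)$ per node $d$ and there are $|G|$ nodes — while the stored table has one entry (a subset of $\binom{2q}{2}$, i.e.\ $q^{\Oh(1)}$ words) for each ancestor/descendant pair, giving $q^{\Oh(1)}\cdot|G|^2$ memory and $q^{\Oh(1)}$ query time by a table lookup. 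A cleaner way to phrase the preprocessing: for each pair $(c,d)$ with $c$ an ancestor of $d$, $\torso(c,d)$ is the restriction to $\adh(c)\cup\adh(d)$ of the reachability relation in $G - (\cone(d)\setminus\adh(d)) - (V(G)\setminus\cone(c))$; batching these over all $d$ below a fixed $c$ costs one multi-source reachability computation of size $\Oh(\|G\|)$ inside $G[\cone(c)]$, and summing over $c$ gives the stated bound since $\sum_c |\cone(c)|\le |G|^2$ in the worst case.

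\emph{Main obstacle.} The genuinely delicate point is item (1)'s time bound $q^{\Oh(1)}\cdot|G|^2\|G\|$: a naive implementation that, for every ancestor/descendant pair $(c,d)$, launches an independent search inside $G[\cmp(c)\setminus\cone(d)]$ costs $\Oh(|G|^2\cdot\|G\|)$ only if each such search is $\Oh(\|G\|)$, which requires being able to restrict a BFS to $\cmp(c)\setminus\cone(d)$ in time proportional to the edges actually scanned rather than to $\|G\|$ each time. I would handle this by precomputing, once, for each node $x$ a pointer structure giving the edges incident to $\cmp(x)$, and by observing that along a fixed root-leaf branch the sets $\cone(d)$ are nested, so the searches for all $(c,d)$ with a fixed $c$ can be amortized into a single incremental computation as $d$ moves down the branch. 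Verifying that this amortization indeed yields $q^{\Oh(1)}\cdot|G|^2\|G\|$ rather than an extra $|G|$ factor is where the bookkeeping has to be done carefully; everything else is a routine combination of the unbreakable-BFS argument already used for \cref{lem:ov:aggregate} with standard tree-ancestor data-structuring.
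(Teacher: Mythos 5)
For item (1) your approach is correct but needlessly cautious: the paper's proof of \cref{lem:torso-ds-trivial} is the unadorned brute force you worry is not achievable. There are at most $\binom{|T|}{2}\le\binom{|G|}{2}$ ancestor/descendant pairs $(c,d)$, and for each a single pass that marks $\cone(c)\setminus\cmp(d)$ and then runs $q^{\Oh(1)}$ BFSes from $\adh(c)\cup\adh(d)$ costs $q^{\Oh(1)}\cdot\|G\|$; no amortization across $d$ is needed, since $|G|^2\cdot\|G\|$ is exactly the target bound. Your final paragraph thus invents an obstacle that is not there.

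For item (2) there is a genuine gap. Binary lifting, i.e.\ precomputing $\torso(\anc_{2^i}(d),d)$ and composing $\Oh(\log|G|)$ jump-relations at query time, does \emph{not} achieve the claimed bounds: it gives query time $2^{\Oh(q^2)}\cdot\log|G|$, initialization time $2^{\Oh(q^2)}\cdot|G|\log|G|$, and memory $2^{\Oh(q^2)}\cdot|G|\log|G|$, each carrying an extra $\log|G|$ factor. The $\log|G|$ in the query time is not a cosmetic loss: the entire point of item (2), and ultimately of \cref{thm:general-queries}, is that for fixed $k$ every operation runs in time \emph{independent} of $|G|$, which is what distinguishes this data structure from prior work. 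A query cost with a $\log|G|$ factor forfeits the headline ``constant time per query'' property. What is missing from your sketch is the key technical tool the paper builds for this: the problem is first recast as a path-product query in a tree edge-labeled by a finite semigroup $B_{2q}$ of bi-interface graphs, of size $2^{\Oh(q^2)}$ (\cref{lem:translation}), and then \cref{thm:colcombet-ds} supplies a data structure with query time $|S|^{\Oh(1)}$, with no dependence on $|T|$, using either a coloring argument in a functional semigroup (\cref{lem:functional-warp}) or Colcombet's deterministic Simon's factorization. Both of those methods exploit that the semigroup is fixed and finite to bound the number of ``jumps'' a query needs by $|S|$ rather than by the depth of the tree; that is precisely the structural insight binary lifting lacks. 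Your parenthetical concern about whether jump-relations compose correctly is, by contrast, fine — adhesions are separators, so relational composition through the intermediate adhesion is sound — but it is the wrong place to look for the difficulty.
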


The first point of \cref{lem:ov:torso-ds} is actually straightforward:
just compute and store all the $\Oh(|T|^2)=\Oh(|G|^2)$ answers to the
torso queries, each in time $q^{\Oh(1)}\cdot \|G\|$ using $q^{\Oh(1)}$
applications of breadth-first search. The second point, which trades
exponential dependency on $q$ for obtaining linear memory usage and
initialization time, is more interesting. Before we discuss it, let us
observe that we may now complete the proof of
\cref{thm:general-queries}. Indeed, using the data structure from the
second point of \cref{lem:ov:torso-ds}, we can compute $\profile(x)$
from $\profile(y)$ for each $x\in Y'$ and $y=f^{-1}(x)$. By
performing this procedure for all the at most $2(2k+4)$ nodes of $Y\cup Y'$ in a
bottom-up manner, we eventually compute the profile of the root of
$T$, from which the answer to the query can be read. The running time
is dominated by $q^{\Oh(1)}$ calls to the data structure of the second
point of \cref{lem:ov:aggregate}, each taking
$2^{\Oh(q^2)}=2^{2^{\Oh(k)}}$ time.

Finally, let us discuss the proof of the second point of
\cref{lem:ov:aggregate}. We reduce this problem to the problem of
evaluating {\em{product queries}} in a semigroup-labeled tree,
defined as follows. Suppose $T$ is a rooted tree whose edges are
labeled with elements of a finite semigroup $S$. The task is to set
up a data structure that for given nodes $x,y$, where $x$ is an
ancestor of $y$, outputs the product of the elements of~$S$ on the
path in $T$ from $x$ to $y$. The problem of evaluating torso queries
can be reduced to this abstract setting by considering a suitable
semigroup of {\em{bi-interface graphs}}, which represent connectivity
between corresponding adhesions; see~\cite{BojanczykP16} for the
origin of this concept. This semigroup has size~$2^{\Oh(q^2)}$. So it
then suffices to use the following result.

\begin{theorem}[\cref{thm:colcombet-ds}, simplified]\label{thm:ov:colcombet-ds}
  There is a data structure for the problem of evaluating product
  queries in a tree $T$ labeled with elements of a finite semigroup
  $S$ that achieves query time $|S|^{\Oh(1)}$, memory usage
  $|S|^{\Oh(1)}\cdot |T|$, and initialization time
  $|S|^{\Oh(1)}\cdot |T|$.
\end{theorem}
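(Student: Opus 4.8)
The proof follows the now-standard route through \emph{Simon's factorization forest theorem}, in the deterministic and algorithmic form due to Colcombet~\cite{Colcombet07}, adapted to trees. Recall that a \emph{factorization forest} of a word $w=a_1\cdots a_n\in S^*$ is a forest whose leaves are $a_1,\dots,a_n$ in order, in which every internal node has at least two children, and such that every internal node $u$ is either \emph{binary} --- it has exactly two children and its value $\val(u)\in S$ (the product of the $a_i$'s below it) equals the product of the values of its two children --- or \emph{idempotent} --- all of its children have the same value $e\in S$, this $e$ is an idempotent, and $\val(u)=e$. Simon's theorem asserts that $w$ admits such a forest of height at most $3|S|$, and Colcombet's refinement shows that one can be computed deterministically in time $|S|^{\Oh(1)}\cdot n$ given the multiplication table of $S$. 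Colcombet~\cite{Colcombet07} also establishes the analogue for trees: the edges of a rooted tree $T$ labeled by $S$ can be organized into a factorization structure $\Ff$ of height $|S|^{\Oh(1)}$ (again computable in time $|S|^{\Oh(1)}\cdot|T|$) that simultaneously refines every root-to-node branch, so that each branch word carries a bounded-height factorization forest; in this way all vertical paths of $T$ are covered by a single structure. (A reduction via heavy-path decomposition to the word case would also work, but would lose a $\Oh(\log|T|)$ factor in the query time, which is why the genuine tree version is used.)

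The data structure then consists of: the structure $\Ff$ together with the value $\val(u)\in S$ stored at every node $u$ of $\Ff$; and auxiliary data structures on $\Ff$ and on $T$ supporting lowest-common-ancestor queries~\cite{HarelT84} and standard level-ancestor queries in constant time after linear preprocessing. Since $\Ff$ has $\Oh(|T|)$ nodes (each internal node has at least two children), the total memory is $|S|^{\Oh(1)}\cdot|T|$ and the initialization time is $|S|^{\Oh(1)}\cdot|T|$, as required.

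To answer a query for the path from an ancestor $x$ to a descendant $y$ of $T$, locate the corresponding factor of the branch word of $y$ (the positions strictly below $x$ down to $y$), let $p$ be the lowest common ancestor inside $\Ff$ of the two leaves delimiting this factor, and walk up from the first leaf to $p$ and down from $p$ to the second leaf. Going up, whenever the walk leaves a binary node through its left child we append the value of the right child; whenever it leaves an idempotent node $u$ of value $e$, all children of $u$ strictly between the walk's position and the end of $u$'s span have value $e$, so their combined contribution is $e\cdot e\cdots e=e$ by idempotency --- a single factor. The descent from $p$ is symmetric, and at $p$ itself one similarly collects at most one factor for the idempotent "middle". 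Because $\Ff$ has height $|S|^{\Oh(1)}$, this yields a list of $|S|^{\Oh(1)}$ elements of $S$ whose product, taken left to right, is exactly the product of the labels on the path from $x$ to $y$; each step uses $\Oh(1)$ lowest-common-ancestor and level-ancestor lookups, so the query runs in time $|S|^{\Oh(1)}$.

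The main obstacle is entirely encapsulated in the cited factorization machinery: Simon's theorem is originally only existential, and producing a bounded-height factorization forest --- and its tree analogue --- in deterministic linear time is precisely the content of Colcombet's work~\cite{Colcombet07}, which we invoke as a black box. The only point requiring care on our side is that idempotent nodes are treated so that the "middle" of such a node contributes a single semigroup element, which keeps the per-query cost proportional to the height of $\Ff$ rather than to its size; this is immediate from $e^2=e$ but must be built explicitly into the walk.
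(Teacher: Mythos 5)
Your proposal takes the same route as one of the paper's two proofs of \cref{thm:colcombet-ds} (the paper actually supplies two: a self-contained ``direct'' proof via functional semigroups and colorings, and a second one via Colcombet's deterministic Simon factorization, which is the one your argument mirrors). The high-level idea --- preprocess a factorization structure of height $|S|^{\Oh(1)}$, then answer a query by walking through $\Oh(\text{height})$ levels and collapsing the ``idempotent middle'' of each level to a single semigroup element --- is exactly what that second proof does. However, there is an imprecision in how you invoke Colcombet for trees that, as written, leaves a real gap in the query algorithm.

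You assert that the edges of $T$ ``can be organized into a factorization structure $\Ff$ \ldots that simultaneously refines every root-to-node branch,'' and your query algorithm then treats $\Ff$ as a single rooted forest in which you can take the lowest common ancestor of two leaves. But a literal factorization forest over the edges of a \emph{tree} does not exist as a single forest: Colcombet's construction is deterministic and left-to-right on each branch, so the per-branch forests agree on shared prefixes, but a factorization node that is still ``open'' at a branching vertex will close at different positions (and hence be a \emph{different} node) on different branches. The per-branch forests glue into a DAG, not a tree, and ``the LCA of the two delimiting leaves inside $\Ff$'' is not well-defined without first fixing a branch and carefully explaining how one navigates nodes that have not yet closed at $y$. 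This is precisely the subtlety that Colcombet's \emph{forward Ramsey split} formulation avoids: rather than an explicit factorization forest, it gives a labeling $\lambda\from V(T)\to[|S|]$ of nodes (not a tree of factorization nodes) with the property that for $i$-neighbors $x<y$ and $x'<y'$ on a common branch one has $\sigma(x,y)\cdot\sigma(x',y')=\sigma(x,y)$. The paper's proof works directly with this split: it precomputes, for each node $y$ and level $i$, the nearest ancestor $\anc_i(y)$ in $L_i=\lambda^{-1}(i)$ and the product $\sigma(\anc_i(y),y)$, and then answers a query recursively, using the forward Ramsey identity (rather than idempotency inside a forest) as the collapse mechanism and decreasing the level $i$ at each recursion step. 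So: right approach, correct intuition, but your query procedure as phrased presupposes a single shared forest that has to be replaced by the split (or by a more careful per-branch DAG with explicit handling of open nodes) before the ``walk up to $p$'' step is well-defined. You would also benefit from knowing that the paper's other proof avoids Colcombet entirely, reducing to the functional semigroup $F_S$ via right multiplication and using a purely elementary coloring argument.
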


\cref{thm:ov:colcombet-ds} can be easily derived from the
deterministic variant of Simon's factorization, due to
Colcombet~\cite{Colcombet07}. However, in \cref{sec:torso-queries} we
also present a direct and arguably simpler proof that may be of
independent interest.

\medskip

This finishes the proof of \cref{thm:general-queries}. Note that in
the argumentation there are two modules that can be replaced with
other solutions:
\begin{itemize}[nosep]
\item The decomposition of \cref{thm:ov:strong-unbreakability} can be
  replaced with the more recent variant from~\cite{CyganKLPPSW21}. This allows us to have $q=k$, implying that
  all the $\smash{2^{2^{\Oh(k)}}}$ factors are replaced with $\smash{2^{\Oh(k^2)}}$.
  The cost is increasing the polynomial factor of the initialization
  time to an unspecified term $\smash{|G|^{\Oh(1)}}$ and several technical
  complications in the analysis, which nevertheless can be overcome.
\item Instead of using the data structure of the second point of
  \cref{lem:ov:torso-ds}, we can use the first point. This allows us
  to have polynomial dependency on $q$ in the query time at the
  cost of quadratic dependence on $|G|$ in the memory usage.
\end{itemize}
In particular, if both these replacements are performed, we obtain a data structure with initialization time $2^{\Oh(k\log k)}\cdot |G|^{\Oh(1)}$, memory usage $k^{\Oh(1)}\cdot |G|^2$, and query time $k^{\Oh(1)}$. See \cref{thm:tradeoff}.

\subsection{Model-checking $\FOsep$:
\Cref{thm:main-ub,thm:main-lb}
}\label{sec:overview-mc}
We now turn to the proofs of \cref{thm:main-ub} and \cref{thm:main-lb}.
In those theorems, we fix a class $\CC$
and consider the model-checking problem for the logic $\FOsep$,
that is, the problem of deciding whether a given $\FOsep$ sentence holds in a given graph $G\in\CC$.

\paragraph*{Lower bound.}
First let us explain why we require $\CC$ to exclude a fixed topological minor
in \cref{thm:main-ub}.
Clearly, the model-checking problem for {\FOsep} generalizes the model-checking problem for {\FO}, so $\CC$ should be in particular a class for which it is known that
$\FO$ model-checking is $\FPT$.
This is the case not only for classes that exclude a topological minor but also for all nowhere dense classes~\cite{GroheKS17}. So let us see why it is not enough to merely assume that $\CC$ is  nowhere dense.

For a graph $G$ and $k\in \N$, the {\em $k$-subdivision} $G^{(k)}$ of $G$ is obtained from $G$ by replacing each edge by a path of length $k+1$.
It follows from the definition of nowhere denseness that the class $\CC$ of all graphs of the form $G^{(n)}$, where $G$ is an arbitrary graph on $n$ vertices, is nowhere dense.
However, a formula of $\FOsep$ can easily recover $G$ from $G^{(n)}$, at least when $G$ has no vertices of degree $2$. Indeed, the original vertices of $G$ are precisely the vertices of $G^{(n)}$ that have degree at least $3$, which can be expressed by an $\FO$ formula $\phi_V(x)$. Furthermore, two such vertices $u,v$ are adjacent in $G$ if and only if there is some vertex $z$ of degree $2$ in $G^{(n)}$ such that
for every vertex $w$ of $G^{(n)}$, $\conn(z,w,u,v)$ holds only if $w$ has degree $2$ in $G^{(n)}$. This condition can be written using an $\FOsep$ formula $\phi_E(u,v)$.
Using this observation, any $\FO$ sentence $\phi$ can be replaced by an $\FOsep$ sentence $\phi'$, so that for every graph $G$ as above, $\phi$ holds in $G$ if and only if $\phi'$ holds in $G^{(n)}$. Essentially, $\phi'$ is obtained from $\phi$ by replacing every atomic formula $E(x,y)$ (denoting adjacency) by $\phi_E(x,y)$, and guarding all quantifiers by $\phi_V(x)$.
Therefore, if we could efficiently model-check $\phi'$ on the nowhere dense class $\CC$, we could as well model-check $\phi$ on the class of all graphs (of size at least $3$ and minimum degree at least $3$), as follows:
given an arbitrary graph $G$, first construct the graph $G^{(n)}\in \CC$ (in time polynomial in $n=|V(G)|$)
and then test $\phi'$ on $G^{(n)}$. Hence, if $\FOsep$ model-checking was fpt on $\CC$, then $\FO$ model-checking would be fpt on the class of all graphs, implying $\FPT=\AWstar$. So nowhere dense graph classes are too general for the statement of \cref{thm:main-ub} to hold. Intuitively, the notion of nowhere denseness is not preserved under contracting long paths, which can be simulated in the logic \FOsep.

The same simple argument proves \cref{thm:main-lb}.
Indeed, the argument works not only for the graph~$G^{(n)}$ obtained as the $n$-subdivision of $G$ but also for any subdivision $G'$ (obtained by replacing each edge independently by any number of vertices) of $G$,
as long as $G'$ can be computed from $G$ in polynomial time. The condition that $\CC$ is subgraph-closed and admits efficient encoding of topological minors guarantees precisely that for any given graph $G$ we can construct, in polynomial time, a graph~$G'$ which is a subdivision of $G$ and belongs to $\CC$. This yields \cref{thm:main-lb}
(see \cref{sec:lbs} for details).

\paragraph*{Upper bound.}
We now give some details concerning the proof of \cref{thm:main-ub}, which is the second main contribution of the paper.

The starting observation is that for a
$(q,k)$-unbreakable graph $G$, the query $\conn_k(u,v,x_1,\ldots,x_k)$ can be expressed in plain \FO. Indeed, this query fails if and only if there is a set $A$ of at most $q$ vertices which contains exactly one of the vertices $u$ and $v$, and such that all neighbors of vertices of $A$ outside of $A$ are contained in $\set{x_1,\ldots,x_k}$. The existence of such a set of $q$ vertices can be expressed using $q$ existential quantifiers followed by a universal quantifier.
So $\FOsep$ with connectivity queries of arity at most $k+2$
 is equally expressive as plain $\FO$ on $(q,k)$-unbreakable graphs, as long as $q$ is a constant.

Now, if the graph $G$ is not $(q,k)$-unbreakable, we work with the
tree decomposition into $(q,k)$-unbreakable parts given by \cref{thm:strong-unbreakability}, where $q=q(k)$ is a constant depending on $k$.
The next observation is that the constant-time querying algorithm used in the proof of \cref{thm:general-queries} and explained above, can be seen as a formula on a suitably defined logical structure. Indeed,
if we look into the data structure answering the queries $\conn_k(u,v,x_1,\ldots,x_k)$, we notice that each such query triggers a constant number
of operations asking about the least common ancestor of two nodes in the tree decomposition, or asking about the membership of a vertex to a bag.
Additionally, the algorithm may ask, for two vertices $u,v$ belonging to a bag $x$,
whether $u$ and $v$ are adjacent in the original graph, or whether they are adjacent in the bag graph $\bgraph(x)$. Finally, the algorithm also performs torso queries.
We may therefore construct a logical structure, corresponding to the data structure, such that this basic functionality
is expressible by a first-order formula in this structure. Then the query
$\conn_k(u,v,x_1,\ldots,x_k)$ becomes expressible using a plain {\FO} formula in the logical structure. It will then remain to show that we can model check {\FO} formulas in fpt on the resulting logical structure.


It is convenient to abstract away the details of our logical structure, and represent it in the following form:
It is a tree $T$ (possibly vertex-labeled), together with additional edges (possibly labeled) between some nodes which are siblings in the tree; see \cref{fig:augmented}. We call such a structure an \emph{augmented tree}.
In such a tree, the set of children of any node $v$ carries a structure of a graph.
In the particular case of our construction, those graphs will precisely correspond to bag graphs.
Recall that a bag graph is obtained from the subgraph induced by a bag by turning all adhesions towards children into cliques. It is not difficult to show (see \cref{lem:bag graph}) that if the original graph $G$ is $H$-topological-minor free, then the resulting bag graphs
are $H'$-topological-minor-free for some $H'$ depending on $H$ and on $k$.
Hence, we obtain a tree augmented with $H'$-topological-minor-free graphs.
And the original query $\conn_k(u,v,x_1,\ldots,x_k)$ can be now represented as an
{\FO} formula in the augmented tree, where the {\FO} formula may use the ancestor relation~$\le$ on the tree nodes, as well as the adjacency relation between the siblings.
Showing this basically amounts to revisiting the proof of \cref{lem:ov:aggregate} and arguing that the atomic algorithmic operations can now be simulated by {\FO} formulas in the augmented tree.

Let us note here that the remainder of the argument would go through even if we assumed that the resulting tree is augmented with graphs from a fixed nowhere dense class $\CC'$
which is not necessarily topological-minor-free.
However, if the original graph $G$ merely belongs to a nowhere dense class $\CC$,
then the bag graphs, and hence the graphs in the augmented tree, no longer need to belong to any fixed nowhere dense class. In fact, they may contain arbitrarily large cliques.

Continuing with the proof, now it remains to show that we can solve {\FO} model-checking on trees augmented with $H'$-topological-minor-free graphs in fixed-parameter time,
where the {\FO} formulas may use the ancestor relation in the tree as well as the edge relation among the siblings.
 For this, we lift the usual, automata-based techniques for model-checking on trees, to the case of augmented trees. We define automata that process augmented trees in a bottom-up fashion, labeling the nodes of the tree with states from a finite set of states, starting from the leaves and proceeding towards the root.
 At any node~$x$, to determine the state of the automaton at $x$, we consider the graph induced on the children of $x$ in the augmented tree, vertex-labeled by the states computed earlier. Then the state at $x$ is determined by evaluating a fixed collection of first-order sentences on this labeled graph. In this way, all nodes of the tree are labeled by states, and the automaton accepts the augmented tree depending on the state at the root.

 By construction, it can be decided in fpt time whether or not
 a given such automaton accepts a given  tree that is augmented with graphs
 that come from, say, a nowhere dense class. Additionally, using standard techniques,
 we show that for every first-order sentence $\phi$ on augmented trees there is such an automaton that determines whether $\phi$ holds in a given augmented tree.
 Therefore, {\FO} model-checking is fpt on trees augmented with graphs from a nowhere dense class; this in particular applies to classes with excluded topological minors.

 To summarize, to  model-check a sentence $\phi$ of {\FOsep} on a $H$-topological-minor-free graph $G$, we perform the following reasoning:
 \begin{enumerate}[nosep]
   \item Using \cref{thm:ov:strong-unbreakability}, compute a tree decomposition $\cal T$ of $G$ with $(q,k)$-unbreakable bags, where $k+2$ is the maximal arity of the connectivity predicates in $\phi$, and $q=q(k)$.
   \item Turn 
   $\cal T$ into a tree $\cal T'$ augmented with $H'$-topological-minor free graphs, where $H'$ depends on $H$ and $k$.
   \item In the augmented tree $\cal T'$, the connectivity predicate $\conn_k(u,v,x_1,\ldots,x_k)$ can be expressed using a plain {\FO} formula. This follows by analyzing how  connectivity predicates are evaluated in constant time, and observing that the atomic operations can be performed using {\FO} formulas in the augmented trees.
   \item Therefore, also $\phi$ can be expressed as a plain {\FO} formula $\phi'$, which is evaluated in 
   $\cal T'$.
   \item Convert $\phi'$ into an automaton $\cal A$ processing augmented trees,
   using standard automata-based techniques.
   \item
   Run $\cal A$ on $\cal T'$ in a bottom-up fashion. This involves performing {\FO} queries on the graphs induced on the children of any given node, and uses one of the known algorithms for {\FO} model-checking on $H'$-topological-minor-free graphs.
 \end{enumerate}

\cref{sec:automata} describes the basic toolbox of automata on augmented trees,
explaining in detail the last two steps above, abstracting away from the specific application, and providing further results on augmented trees, e.g. concerning query enumeration.
In fact, with the same methods we can show that not only $\FO$ model-checking is fpt on trees augmented with graphs from a nowhere dense class,
but also a more general logic $\FOMSO$ combining the power of $\FO$ on the augmenting graphs with $\MSO$ on trees can be solved in fpt on augmented trees.
We then show in \cref{sec:model-checking} how to express the predicates $\conn_k$ on augmented trees using this more general logic $\FOMSO$.


\section{Preliminaries}\label{sec:prelims}

Throughout this paper we assume the standard word RAM model with
machine words of length $\Oh(\log n)$, where $n$ is the total size of
the input. Recall that in this model we allow using arrays indexed by
integers in the range $\{1,\ldots,n^{\Oh(1)}\}$, which is sufficient
for the existence of data structures for least common ancestor queries
in trees in constant time~\cite{HarelT84}.

\subsection{Graphs}

\paragraph*{Basic notation.} We consider finite, undirected graphs
without loops. For a graph $G$, we write $|G|$ for~$|V(G)|$ and
$\|G\|$ for $|V(G)|+|E(G)|$. For a pair of vertices $u,v\in V(G)$, a
{\em{$u$-$v$ path}} is a path with endpoints~$u$ and~$v$. We say that
$u$ and $v$ are {\em{connected}} in $G$ if there is a $u$-$v$ path in
$G$; otherwise~$u$ and~$v$ are {\em{disconnected}}. The graph $G$ is
connected if all pairs of its vertices are connected. The
{\em{interior}} of a path consists of all its vertices apart from the
endpoints, and two paths are {\em{internally vertex-disjoint}} if each
of them is disjoint with the interior of the other. For a vertex
subset $X\subseteq V(G)$ we write $G[X]$ for the subgraph of $G$
induced by $X$.

\paragraph*{Trees and ancestors.}
A (rooted) tree is an acyclic and connected graph $T$ with a
distinguished root vertex~$r$. We write $\parent(x)$ for the parent of
a node $x$ of $T$, and $\children(x)$ is the set of children of $x$
in~$T$. By convention we set $\parent(r)=\bot$. We call a vertex
$x\in V(T)$ an ancestor of a vertex $y\in V(T)$, written
$x\preceq_T y$, or simply~$x\preceq y$ if $T$ is clear from the
context, if $x$ lies on the unique path between $y$ and the root $r$.
Note that we consider every node to be an ancestor of itself.  For
nodes $x,y\in V(T)$, we write $\lca(x,y)$ for the least common
ancestor of $x$ and $y$ in $T$. Note that $x$ is an ancestor of $y$ if
and only if $\lca(x,y)=x$. If $x$ is a strict ancestor of $y$ (that
is, $x\preceq y$ and $x\neq y$), then we define the {\em{$y$-directed
    child of $x$}} to be the unique child $y'$ of $x$ that is an
ancestor of $y$; we denote it by~$\dir(x,y)$.

\paragraph*{Topological minors.} A {\em{topological minor model}} of a
graph $H$ in a graph $G$ is an injective mapping $\eta$ that maps
vertices of $H$ to vertices of $G$ and edges of $H$ to pairwise
internally vertex-disjoint paths in $G$ so that for every
$uv\in E(H)$, the endpoints of $\eta(uv)$ are $\eta(u)$ and
$\eta(v)$. We say that $G$ is {\em{$H$-topological-minor-free}} if
there is no topological minor model of $H$ in $G$. A class of graphs
$\Cc$ is {\em{topological-minor-free}} if there exists a graph $H$
such that every member of $\Cc$ is $H$-topological-minor-free.

\paragraph{Tree decompositions.} A {\em{tree decomposition}} of a
graph $G$ is a pair $\Tt=(T,\bag)$, where $T$ is a rooted tree and
$\bag\colon V(T)\to 2^{V(G)}$ is a mapping that assigns to each node
$x$ of $T$ its {\em{bag}} $\bag(x)$, which is a subset of vertices of
$G$. The following conditions have to be satisfied:
\begin{enumerate}[(T1)]
\item For every $u\in V(G)$, the set of nodes $x\in V(T)$ satisfying
  $u\in \bag(x)$ induces a connected and nonempty subtree of $T$.
\item For every edge $uv\in E(G)$, there exists a node $x\in V(T)$
  such that $\{u,v\}\subseteq \bag(x)$.
\end{enumerate}

Recall that if $r$ is the root of $T$, then by convention
$\parent(r)=\bot$ and $\bag(\bot)=\emptyset$.

With a tree decomposition $\Tt=(T,\bag)$ we associate a few auxiliary
mappings. For a node $x\in V(T)$, we define the following:
\begin{itemize}
\item The {\em{adhesion}} of $x$ is
 $$\adh(x)\coloneqq \bag(\parent(x))\cap \bag(x).$$
 \item The {\em{margin}} of $x$ is
 $$\mrg(x)\coloneqq \bag(x)\setminus \adh(x).$$
\item The {\em{cone at $x$}} is
 $$\cone(x)\coloneqq \bigcup_{y\succeq_T x} \bag(y).$$
\item The {\em{component at $x$}} is
 $$\cmp(x)\coloneqq \cone(x)\setminus \adh(x)= \bigcup_{y\succeq_T x} \mrg(y).$$
\end{itemize}

It is easy to see that the margins $\{\mrg(x)\colon x\in V(T)\}$ are
pairwise disjoint and cover the whole vertex set of $G$. Whenever we
use the notation introduced above, the tree decomposition $\Tt$ will
be always clear from the context.

The {\em{adhesion}} of a tree decomposition $\Tt=(T,\bag)$ is defined
as the largest size of an adhesion, that is,
$\max_{x\in V(T)}|\adh(x)|$.

We will commonly assume (without loss of generality) that the tree
decompositions that we work with satisfy some basic connectivity
conditions, subsumed in the following definition. A tree decomposition
$\Tt=(T,\bag)$ is {\em{regular}} if for every non-root node
$x\in V(T)$ we have that
\begin{enumerate}[(R1)]
\item\label{reg:mrg} the margin $\mrg(x)$ is nonempty;
\item\label{reg:con} the graph $G[\cmp(x)]$ is connected; and
\item\label{reg:nei} every vertex of $\adh(x)$ has a neighbor in $\cmp(x)$.
\end{enumerate}

Note that the last two conditions imply that for every node $x$ and
every pair of distinct vertices $u,v\in \adh(x)$ there is a $u$-$v$
path in $G[\cone(x)]$ whose interior is contained in $\cmp(x)$.

\paragraph*{Unbreakability.} A {\em{separation}} in a graph $G$ is a
pair of vertex subsets $A,B\subseteq V(G)$ such that $A\cup B=V(G)$
and there is no edge with one endpoint in $A\setminus B$ and the other
in $B\setminus A$. The {\em{order}} of the separation $(A,B)$ is the
size of its {\em{separator}} $A\cap B$.

For $q,k\in \N$, a vertex subset $X$ in a graph $G$ is
{\em{$(q,k)$-unbreakable}} if for every separation $(A,B)$ of~$G$ of
order at most $k$, we have
\[|A\cap X|\leq q\qquad\textrm{or}\qquad |B\cap X|\leq q.\]

The notion of unbreakability can be lifted to tree decompositions by
requiring it from every individual bag. There are, however, two
different ways of understanding this.

\begin{definition}
  Fix $q,k\in \N$.  A tree decomposition $\Tt=(T,\bag)$ of a graph $G$
  is
\begin{itemize}
\item {\em{weakly $(q,k)$-unbreakable}} if for every $x\in V(T)$,
  $\bag(x)$ is $(q,k)$-unbreakable in $G$; and
\item {\em{strongly $(q,k)$-unbreakable}} if for every $x\in V(T)$,
  $\bag(x)$ is $(q,k)$-unbreakable in $G[\cone(x)]$.
\end{itemize}
\end{definition}

It is straightforward to see that strong $(q,k)$-unbreakability
implies weak $(q,k)$-unbreakability, but the converse is not true.

We recall known results about the existence and computability of
unbreakable tree decompositions (\cref{thm:strong-unbreakability} was already stated as 

\begin{theorem}[\cite{CyganKLPPSW21}]\label{thm:weak-unbreakability}
  For every graph $G$ and $k\in \N$ there exists a weakly
  $(k,k)$-unbreakable tree decomposition of $G$ of adhesion at most
  $k$. Moreover, given $G$ and~$k$, such a tree decomposition can be
  computed in time $2^{\Oh(k\log k)}\cdot |G|^{\Oh(1)}$.
\end{theorem}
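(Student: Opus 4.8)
The plan is to obtain \cref{thm:weak-unbreakability} from an extremal argument in the spirit of Thomas's proof that every graph has a lean tree decomposition of width equal to its treewidth, combined with the ``randomized contractions'' toolbox for the algorithmic part. For the existence part, consider the family $\mathcal{D}$ of all \emph{reduced} tree decompositions of $G$ of adhesion at most $k$, where reduced means that no bag is contained in the bag of an adjacent node; every member of $\mathcal{D}$ then has pairwise distinct bags and finitely many nodes, and $\mathcal{D}$ is nonempty as it contains the one-node decomposition with bag $V(G)$. Attach to $\Tt=(T,\bag)\in\mathcal{D}$ the potential $\Psi(\Tt)$, the number of unordered pairs $\{u,v\}$ of vertices of $G$ lying together in some bag of $\Tt$; this is an integer in $\{0,\dots,\binom{|V(G)|}{2}\}$, so $\mathcal{D}$ contains a member $\Tt$ minimizing $\Psi$. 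The adhesion bound holds by construction, so it remains to show that every bag of this minimizer is $(k,k)$-unbreakable in $G$.

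Suppose some bag $X=\bag(x)$ is not $(k,k)$-unbreakable, witnessed by a separation $(A,B)$ of $G$ of order at most $k$ with $|A\cap X|>k$ and $|B\cap X|>k$; I will refine $\Tt$ to strictly decrease $\Psi$, a contradiction. The refinement is clean when $(A,B)$ is \emph{nested} with every adhesion of $\Tt$ incident to $x$, meaning each such adhesion lies entirely in $A$ or entirely in $B$. In that case, replace $x$ by two adjacent nodes $x_A,x_B$ with bags $X\cap A$ and $X\cap B$, and reattach every former neighbour of $x$ to whichever of $x_A,x_B$ contains its adhesion towards $x$. One checks this is again a tree decomposition of $G$, its only new adhesion $X\cap A\cap B$ has size at most $k$, and no bag other than $X$ changed. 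Pick $u\in X\cap(A\setminus B)$ and $v\in X\cap(B\setminus A)$, which exist since $|A\cap X|>k\ge|A\cap B\cap X|$ and symmetrically. The pair $\{u,v\}$ lay together in $\bag(x)$ but in no other bag: if it lay in $\bag(t)$ with $t\ne x$, then by the connectivity axiom the adhesion incident to $x$ on the $t$-side would contain both $u$ and $v$, contradicting that this adhesion is nested with $(A,B)$. So after the refinement $u$ and $v$ no longer share a bag, $\Psi$ drops by at least one, and deleting the bags that have become redundant neither increases $\Psi$ nor leaves the decomposition non-reduced --- contradicting minimality.

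The crux, and the step I expect to be the main obstacle, is to arrange that a witnessing separation can be taken nested with $\Tt$. Given an arbitrary violating $(A,B)$ one would like to uncross it with an adhesion-separation $(P,Q)$ by passing to one of the four corner separations $(A\cap P,B\cup Q)$, $(A\cup P,B\cap Q)$, $(A\cap Q,B\cup P)$, $(A\cup Q,B\cap P)$; submodularity of the order ensures some corner has order at most $\max(|A\cap B|,|P\cap Q|)\le k$, but it does not ensure the corner stays heavy on both sides of $X$, so one cannot uncross blindly. The route I would pursue is to choose $(A,B)$ extremally among all violating separations --- of minimum order and, subject to that, crossing as few adhesions of $\Tt$ as possible --- and then show, using basic connectivity properties that we may assume of $\Tt$ (every adhesion sends an edge into the adjacent component, forcing small separators to ``follow'' the decomposition), that any remaining crossing can be removed by moving to a corner that is again a violating separation but crosses strictly fewer adhesions. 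A more hands-on alternative is an invasive refinement that pushes $(A,B)$ recursively through the subtrees hanging off $x$, splitting every bag it genuinely separates; the $\Psi$-bookkeeping still works since splitting never creates co-occurring pairs, but one must then verify that a valid tree decomposition results without having to insert all of $A\cap B$ into bags along long paths, which would create new co-occurrences and could spoil the potential. Making either mechanism go through is the technically heaviest part.

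For the algorithmic statement one cannot take an extremal object, so instead one runs the refinement as a loop: starting from the one-node decomposition with bag $V(G)$, while some bag $X$ is not $(k,k)$-unbreakable --- as certified by a separation of order at most $k$ produced by the important-separators/randomized-contractions machinery in time $2^{\Oh(k\log k)}\cdot|G|^{\Oh(1)}$, derandomized via splitters --- uncross that certificate to a nested one, refine as above, and delete redundant nodes. Each refinement strictly decreases the integer $\Psi\le\binom{|V(G)|}{2}$, so only polynomially many iterations occur, and as each costs $2^{\Oh(k\log k)}\cdot|G|^{\Oh(1)}$, so does the entire construction.
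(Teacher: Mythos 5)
The paper does not prove this theorem; it is stated as a verbatim citation to~\cite{CyganKLPPSW21}, so there is no in-paper argument to compare against.

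Your reconstruction contains a genuine gap that you yourself flag as the ``technically heaviest part'': obtaining a witnessing separation nested with the adhesions at $x$. Under that nesting assumption your refinement is sound --- the new adhesion $X\cap A\cap B$ has size at most $k$, a pair $\{u,v\}$ with $u\in X\cap(A\setminus B)$, $v\in X\cap(B\setminus A)$ can co-occur only in $\bag(x)$, and no new co-occurring pairs arise since the new bags are subsets of $X$ --- so $\Psi$ strictly drops. But the nesting assumption is not free, and the proposal does not close it. Submodularity only guarantees that, of the corner pair $(A\cap P, B\cup Q)$, $(A\cup P, B\cap Q)$, at least one has order at most $\max(|A\cap B|,|P\cap Q|)\le k$; however a corner such as $(A\cap P,B\cup Q)$ shrinks the $A$-side, and $|(A\cap P)\cap X|$ may fall to $k$ or below, so the corner need not remain a violating separation for $X$. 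Your extremal choice (minimum order, then fewest crossed adhesions) is the right \emph{kind} of move, but you neither prove that some corner inherits both heaviness constraints nor that the crossing count strictly drops while staying violating, and the alternative ``invasive'' refinement is also left unverified; this uncrossing is exactly where the argument of~\cite{CyganKLPPSW21} does real work, with a potential and refinement lemma engineered around it. Finally, the algorithmic step --- finding, in deterministic $2^{\Oh(k\log k)}\cdot|G|^{\Oh(1)}$ time, a separation certifying that a given set $X$ is not $(k,k)$-unbreakable --- is not an off-the-shelf consequence of ``important separators / randomized contractions''; it needs its own nontrivial argument, which your sketch does not supply.
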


\begin{theorem}[\cite{CyganLPPS19}]\label{thm:strong-unbreakability}
  There is a function $q(k)\in 2^{\Oh(k)}$ such that for every graph
  $G$ and $k\in \N$ there exists a strongly $(q(k),k)$-unbreakable
  tree decomposition of $G$ of adhesion at most $q(k)$.  Moreover,
  given~$G$ and~$k$, such a tree decomposition can be computed in time
  $2^{\Oh(k^2)}\cdot |G|^2\cdot \|G\|$.
\end{theorem}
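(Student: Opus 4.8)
The plan is to build the decomposition by \emph{iterative refinement}: start from the trivial one-bag decomposition and, whenever some bag violates the required unbreakability, split that bag along a witnessing separation, keeping the adhesion bounded and a well-founded potential strictly decreasing. Equivalently, fixing $k$ and a bound $q=q(k)\in 2^{\Oh(k)}$ to be pinned down during the analysis, I want to show that among all \emph{admissible} tree decompositions of $G$ — those that are regular (satisfy (R1)--(R3)) and have adhesion at most $q$ — any one that is minimal with respect to a suitable potential is automatically strongly $(q,k)$-unbreakable. The one-bag decomposition with bag $V(G)$ is admissible, so a minimal one exists.

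\textbf{The refinement step.} Suppose an admissible $\Tt=(T,\bag)$ has a node $x$ with $\bag(x)$ not $(q,k)$-unbreakable in $G[\cone(x)]$, witnessed by a separation $(A,B)$ of $G[\cone(x)]$ of order at most $k$ with $|A\cap\bag(x)|,|B\cap\bag(x)|>q$; put $S=A\cap B$. The part of $\Tt$ rooted at $x$ is a tree decomposition of $G[\cone(x)]$ with $\adh(x)$ in its root bag; I would replace it by another one, still with $\adh(x)$ in the root bag but now \emph{split along $(A,B)$}: a new root bag $\adh(x)\cup S$ with two child subtrees obtained by restricting the old subtrees below $x$ to the $A$-side and to the $B$-side, adding $S$ to the bags that need it. The connectivity conditions (R2)--(R3) are exactly what make this surgery well-defined — each child's component lies essentially on one side of $S$ — after a routine cleanup suppressing empty margins and re-rooting. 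Since $|A\cap\bag(x)|,|B\cap\bag(x)|>q$, the two new sides are strictly smaller than $\bag(x)$ was, so this is genuine progress.

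\textbf{Termination.} To see refinement cannot continue forever, I track a potential of the form $\Phi(\Tt)=\sum_{x\in V(T)}g(|\mrg(x)|)$ for a strictly convex $g$ with $g(0)=0$. As the margins partition $V(G)$, a split of a bag of margin size $m$ into sides of sizes $m_1,m_2$ (both exceeding a threshold comparable to $q$) with $m_1+m_2\le m$ replaces the summand $g(m)$ by $g(m_1)+g(m_2)$ plus the contribution $g(|S|)\le g(k)$ of the new separator bag; for $q$ large enough in terms of $k$, convexity of $g$ makes this a strict decrease. Hence $\Phi$ strictly drops at each step, and — taking e.g.\ $g(m)=m^2$, so that $\Phi$ is bounded by $|G|^2$ — there are only polynomially many refinements, whose end result is admissible and strongly $(q,k)$-unbreakable.

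\textbf{The main obstacle: bounding the adhesion, and why $q(k)=2^{\Oh(k)}$.} The delicate point is that each refinement injects the new separator $S$ (of size $\le k$) into a fresh bag above the inherited adhesion $\adh(x)$ (of size $\le q$), so a priori adhesions could grow by $k$ per step and exceed $q$. The remedy is to never take an arbitrary breaking separation, but a \emph{canonical} one: among all order-$\le k$ separations of $G[\cone(x)]$ that break $\bag(x)$, one of minimum order pushed as far as possible towards $\adh(x)$, so that $S$ is an \emph{important} separator. Submodularity (uncrossing) of the cut function then shows that the separators used along a root-to-leaf path are near-laminar, and — using that there are at most $4^k$ important separators of order $\le k$ — that at most $2^{\Oh(k)}$ of them can ever accumulate in one adhesion; one sets $q(k)$ to this common value, which then serves both as the adhesion bound and the unbreakability threshold. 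Verifying that this canonical choice survives the surgery of the refinement step is the technically heaviest part of the whole proof. Finally, the running time: a single refinement can be carried out in $2^{\Oh(k^2)}\cdot\|G\|$ time for each candidate bag ($2^{\Oh(k)}$ guesses of how the $\le q$ relevant vertices of $\bag(x)$ are distributed, then a max-flow of value $\le k$), and organizing the refinements as a top-down recursion that descends into the two new sides makes the total cost telescope to the claimed $2^{\Oh(k^2)}\cdot|G|^2\cdot\|G\|$.
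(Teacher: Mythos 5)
First, note that the paper does not prove \cref{thm:strong-unbreakability}: it is quoted from~\cite{CyganLPPS19} and used as a black box, so there is no in-paper proof to compare against. Your high-level plan --- iterative refinement by splitting a bad bag along a low-order separation, with a convex potential $\sum_x g(|\mrg(x)|)$ to force termination --- is the right kind of scaffolding and is in the spirit of the actual argument. The trouble is that the two genuinely hard steps are asserted rather than proved, and the arguments you gesture at would not close them.

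The fatal gap is the adhesion bound, which you yourself flag as ``the technically heaviest part of the whole proof'' and then do not carry out; moreover, the sketch you give for it is not sound. The $4^k$ bound on important separators is for a \emph{fixed} terminal pair $(X,Y)$, whereas the separators $S$ injected at successive refinement steps separate different pairs in different induced subgraphs $G[\cone(\cdot)]$, so nothing limits how many of them ``accumulate'' in a single adhesion; nor does submodular uncrossing of separations living in different graphs $G[\cone(x)]$ produce a laminar family that controls an intersection $\bag(\parent(x))\cap\bag(x)$ assembled from pieces of many separators. ``Pushed as far as possible towards $\adh(x)$'' is also ill-defined, since $\adh(x)$ is generally split by $(A,B)$. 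Without this bound the admissibility invariant (adhesion $\le q$) is not preserved, and the potential-minimality framing collapses: a potential-minimal admissible decomposition need not be strongly unbreakable because your refinement may leave the admissible class. Two further problems: the surgery is ill-defined when $S\not\subseteq\bag(x)$, since $S$ can then pass through several descendant cones $\cmp(z)$, which subsequently straddle both sides of $(A,B)$ and cannot simply be ``restricted to one side''; and the running time claim is off --- with $q\in 2^{\Oh(k)}$, guessing a two-sided partition of $\le q$ vertices of $\bag(x)$ costs $2^q=2^{2^{\Oh(k)}}$, which is doubly exponential, not the claimed $2^{\Oh(k^2)}$.
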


In this paper we will mostly use \cref{thm:strong-unbreakability}, as
working with strong unbreakability yields a cleaner presentation of
arguments. However, in \cref{sec:tradeoff} we show that weak
unbreakability can be used in the context of
\cref{thm:general-queries}, yielding a variant of the theorem that
offers better time complexity for queries.

Given any weakly or strongly $(q,k)$-unbreakable decomposition we can,
if necessary, refine it so that it becomes regular.  Hence, we may
assume that the tree decompositions constructed by the algorithms of
\cref{thm:weak-unbreakability} and \cref{thm:strong-unbreakability}
are regular. More precisely, this refinement can be achieved by
applying a folklore refinement procedure for instance presented in the
proof of~\cite[Lemma~2.8]{BojanczykP16}, which takes linear
time. This refinement does not increase the sizes of adhesions nor
spoils unbreakability, because every new bag is a subset of an old
bag. Note here that~\cite{BojanczykP16} allows tree decompositions to
be rooted forests rather than rooted trees. However, a rooted forest
can be turned into a rooted tree by adding a fresh root with an empty
bag, whose children are all the former roots. Note here that
properties mentioned in the definition of regularity are not required
from this new root.

%

\paragraph*{Bag graphs.}
Let $\Tt=(T,\bag)$ be a regular tree decomposition of a graph $G$.
With a node $x$ of $G$ we associate the {\em{bag graph}} $\bgraph(x)$,
which is the graph obtained from $G[\cone(x)]$ by contracting, for
each $y\in \children(x)$, the set $\cmp(y)$ into a single vertex,
identified with $y$. More precisely, $\bgraph(x)$ is a graph defined
as follows:
\begin{itemize}
\item The vertex set of $\bgraph(x)$ is the disjoint union of
  $\bag(x)$ and the set of children of $x$ in $T$.
\item For $u,v\in \bag(x)$, $u$ and $v$ are adjacent in $\bgraph(x)$
  if and only if they are adjacent in $G$.
\item For $u\in \bag(x)$ and $y\in \children(x)$, $u$ and $y$ are
  adjacent in $\bgraph(x)$ if and only if there exists $w\in \cmp(y)$
  that is adjacent to $u$ in $G$.
 \item No two distinct $y,y'\in \children(x)$ are adjacent in $\bgraph(x)$.
\end{itemize}
Since $\Tt$ is regular, for every $y\in \children(x)$, the
neighborhood of $y$ in $\bgraph(x)$ is equal to $\adh(y)$.

We observe that the sizes of all bag graphs roughly sum up to the size
of the graph $G$.

\begin{lemma}\label{lem:total-size}
  Let $\Tt=(T,\bag)$ be a regular tree decomposition of a graph $G$,
  say of adhesion $a$. Then
 $$\sum_{x\in V(T)} |\bgraph(x)|\leq (a+2)|G|\qquad\textrm{and}\qquad \sum_{x\in V(T)} \|\bgraph(x)\|\leq \|G\|+(a+2)^2|G|.$$
\end{lemma}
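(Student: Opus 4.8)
The plan is to expand $|\bgraph(x)|$ and $|E(\bgraph(x))|$ using the explicit description of the bag graph, sum over all nodes $x$, and then exchange the order of summation so that each contribution is charged either to a vertex or edge of $G$, or to an edge of the tree $T$ (equivalently, to an adhesion). Regularity enters only through condition (R1), which bounds the number of nodes, and through the already-recorded fact that the neighbourhood of a child $y$ in $\bgraph(x)$ is exactly $\adh(y)$.

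First I would handle $\sum_x |\bgraph(x)|$. Since the vertex set of $\bgraph(x)$ is the disjoint union of $\bag(x)$ and $\children(x)$, we have $|\bgraph(x)| = |\bag(x)| + |\children(x)|$. Summing the second term gives $\sum_x |\children(x)| = |V(T)|-1$, and by (R1) the margins of the non-root nodes are nonempty, pairwise disjoint, and contained in $V(G)$, so $|V(T)|-1 \le |G|$. For $\sum_x |\bag(x)|$ I exchange summation to get $\sum_{v\in V(G)} |\{x : v\in\bag(x)\}|$; by (T1) each $T_v := \{x : v\in\bag(x)\}$ is a nonempty subtree of $T$, hence $|T_v| = |E(T_v)| + 1$, and a tree edge $\{x,\parent(x)\}$ lies in $T_v$ exactly when $v\in\adh(x)$, so it is used by at most $|\adh(x)|\le a$ vertices. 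This gives $\sum_x |\bag(x)| \le |G| + a|G|$, so $\sum_x |\bgraph(x)| \le (a+2)|G|$.

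For the edge bound I would write $\|\bgraph(x)\| = |\bgraph(x)| + |E(\bgraph(x))|$ and, using the first part, reduce to showing $\sum_x |E(\bgraph(x))| \le |E(G)| + \binom{a+1}{2}|G|$, which combines (via $a+1+\binom{a+1}{2} \le (a+1)(a+2) \le (a+2)^2$) to the claimed inequality. I split $E(\bgraph(x))$ into the edges inside $\bag(x)$, which are precisely the edges of $G[\bag(x)]$, and the edges from $\bag(x)$ to a child $y$; since the neighbourhood of $y$ in $\bgraph(x)$ equals $\adh(y)$ and there are no child-child edges, the latter contribute $\sum_{y\in\children(x)} |\adh(y)|$, which sums over all $x$ to $\sum_{y\ne\text{root}} |\adh(y)| \le a|G|$. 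For the edges inside bags I again exchange summation: $\sum_x |E(G[\bag(x)])| = \sum_{uv\in E(G)} |T_{uv}|$, where $T_{uv} := \{x : u,v\in\bag(x)\} = T_u\cap T_v$ is a nonempty subtree (by (T2)), so $|T_{uv}| = |E(T_{uv})| + 1$, and a tree edge $\{x,\parent(x)\}$ lies in $T_{uv}$ iff $u,v\in\adh(x)$, hence is used by at most $\binom{|\adh(x)|}{2} \le \binom{a}{2}$ edges. Summing everything yields $\sum_x |E(\bgraph(x))| \le |E(G)| + \bigl(\binom{a}{2}+a\bigr)|G| = |E(G)| + \binom{a+1}{2}|G|$.

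The argument is essentially bookkeeping, so there is no deep obstacle; the one point that genuinely requires care is that $\sum_x |E(G[\bag(x)])|$ is \emph{not} bounded by $O(a^2)|G|$ on its own — a single bag can already carry $\binom{|G|}{2}$ edges — but only by $|E(G)| + O(a^2)|G|$, and the surplus $|E(G)|$ is exactly what the $\|G\|$ term on the right absorbs. Getting this right is precisely why one charges the \emph{edges} of the subtrees $T_v$ and $T_{uv}$ to tree edges (i.e.\ to adhesions) and keeps the ``$+1$'' base terms, contributing $|G|$ and $|E(G)|$ respectively, untouched; a cruder count would introduce a spurious factor of $|V(T)|$.
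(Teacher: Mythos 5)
Your proof is correct, and the arithmetic checks out: $(a+2)+\binom{a+1}{2}\le 1+(a+2)^2$. The route, however, is genuinely different from the paper's, most visibly in the edge count. The paper decomposes $V(\bgraph(x))$ as the disjoint union $\adh(x)\cup\mrg(x)\cup\children(x)$ and bounds per node; for edges, it calls an edge of $\bgraph(x)$ \emph{vital} when both endpoints lie in $\mrg(x)$, observes each edge of $G$ is vital in at most one bag graph (giving the $\|G\|$ term), and bounds the non-vital edges crudely by $a\cdot|\bgraph(x)|+a\cdot|\children(x)|$. You instead swap the order of summation and count along $G$: $\sum_x|\bag(x)|=\sum_{v\in V(G)}|T_v|$ and $\sum_x|E(G[\bag(x)])|=\sum_{uv\in E(G)}|T_{uv}|$, where $T_v$ and $T_{uv}=T_u\cap T_v$ are nonempty subtrees by (T1)/(T2); you then charge the tree edges of $T_v$ (resp.\ $T_{uv}$) to adhesions, while the $+1$ base terms of the subtree sizes produce $|G|$ (resp.\ $|E(G)|$). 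Both arguments hinge on the same subtlety you flag explicitly, namely that $\sum_x|E(G[\bag(x)])|$ must absorb $|E(G)|$ and is not $O(a^2)|G|$ on its own. What your version buys: exact rather than crude intermediate counts (bag-to-child edges are exactly $\sum_{y\ne r}|\adh(y)|$, and the quadratic cross-term comes out as $\binom{a+1}{2}$ rather than the paper's $a(a+2)$), and a cleaner attribution of each term to a feature of $G$ or of $T$. What the paper's version buys: it is shorter and sidesteps any appeal to the subtree-intersection property of tree decompositions.
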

\begin{proof}
  For the bound on $\sum_{x\in V(T)} |\bgraph(x)|$, observe the vertex
  set of $\bgraph(x)$ is the disjoint union of $\adh(x)$, $\mrg(x)$,
  and $\children(x)$. Every adhesion has size at most $a$, every node
  of $T$ has at most one parent, and every vertex of $G$ is in exactly
  one margin. Finally, $|T|\leq |G|$, hence
 $$\sum_{x\in V(T)} |\bgraph(x)|\leq a|T|+|G|+|T|\leq (a+2)|G|.$$

 For the bound on $\sum_{x\in V(T)} \|\bgraph(x)\|$, call an edge
 $uv\in E(\bgraph(x))$ {\em{vital}} in the graph $\bgraph(x)$ if
 $u,v\in \mrg(x)$.  Note that every edge of $G$ is vital in at most
 one bag graph $\bgraph(x)$, hence the sum over all $x\in V(T)$ of the
 number of vital edges in $\bgraph(x)$ is at most $|E(G)|$. On the
 other hand, the non-vital edges of $\bgraph(x)$ are those incident to
 the vertices of $\adh(x)$ and those incident to the children of
 $x$. The total number of non-vital edges of the first kind is at most
 $a\cdot |\bgraph(x)|$, as $|\adh(x)|\leq a$, and the total number of
 edges of the second kind is at most $a\cdot |\children(x)|$, as every
 child of $x$ has degree at most $a$ in $\bgraph(x)$. We conclude that
 \begin{eqnarray*}
   \sum_{x\in V(T)} \|\bgraph(x)\| & \leq & \sum_{x\in V(T)} |\bgraph(x)|+|E(G)|+a\cdot \sum_{x\in V(T)} \left(|\bgraph(x)|+|\children(x)|\right) \\
                                   & \leq & (a+2)|G|+|E(G)|+a(a+2)|G| +a|T|\\
                                   & \leq & \|G\|+(a+2)^2|G|.
 \end{eqnarray*}
 This concludes the proof.
\end{proof}

We observe that topological-minor-freeness of $G$ carries over to bag
graphs.

\begin{lemma}\label{lem:bag graph}
  Let $\Tt=(T,\bag)$ be a regular tree decomposition of a graph $G$,
  say of adhesion $a$. Suppose further that $G$ is
  $K_t$-topological-minor-free. Then for each $x\in V(T)$, the graph
  $\bgraph(x)$ is \mbox{$K_{t'}$-topological}-minor-free, where
  $t'=\max(t,a+2)$.
\end{lemma}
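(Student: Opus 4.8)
The plan is to show that a topological minor model of $K_{t'}$ in $\bgraph(x)$ can be transformed into a topological minor model of $K_t$ in $G$, contradicting the assumed $K_t$-topological-minor-freeness of $G$. Suppose $\eta$ is such a model of $K_{t'}$ in $\bgraph(x)$, with $t' = \max(t, a+2)$. The branch vertices of $\eta$ are $t'$ vertices of $\bgraph(x)$, each of which is either a ``real'' vertex of $\bag(x)$ or a contracted child $y \in \children(x)$. Since $\Tt$ is regular, each child vertex $y$ has degree exactly $|\adh(y)| \le a$ in $\bgraph(x)$, so any branch vertex that is a child $y$ has degree at most $a$ in the model; as $\eta$ realizes a $K_{t'}$, each branch vertex needs degree at least $t'-1 \ge a+1$. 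Hence \emph{no} branch vertex of $\eta$ can be a child of $x$; all $t'$ branch vertices lie in $\bag(x) \subseteq V(G)$.

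Next I would clean up the subdivision paths. Each edge of $K_{t'}$ is mapped by $\eta$ to an internally-vertex-disjoint path in $\bgraph(x)$; these paths may pass through children $y \in \children(x)$. The key step is to locally reroute each path inside $G[\cone(x)]$: whenever such a path enters a child $y$ (necessarily via two neighbors $a_1, a_2 \in \adh(y)$, since the neighborhood of $y$ in $\bgraph(x)$ is exactly $\adh(y)$, and a path uses $y$ as an internal vertex so it has a predecessor and successor both in $\adh(y)$), replace the length-two subpath $a_1 - y - a_2$ by an $a_1$-$a_2$ path in $G[\cone(y)]$ with interior contained in $\cmp(y)$; such a path exists by regularity (conditions (R2) and (R3), as noted right after the definition of regularity). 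For the at most one child visited as an endpoint situation does not arise since branch vertices are all in $\bag(x)$. Since the sets $\cmp(y)$ for distinct children $y$ are pairwise disjoint and disjoint from $\bag(x)$, these rerouted paths remain pairwise internally vertex-disjoint and internally disjoint from the branch vertices. Edges of $\bgraph(x)$ between two vertices of $\bag(x)$ correspond to genuine edges of $G$, so subpaths not touching children are already valid in $G$. This yields a topological minor model of $K_{t'}$, hence of $K_t$ (since $t' \ge t$, just restrict to $t$ of the branch vertices), in $G$ — a contradiction.

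The main obstacle is bookkeeping the disjointness during rerouting: one must check that a single path of $\eta$ does not visit the same child $y$ twice, and that different paths visiting the same $y$ do not clash. The first is automatic because $\eta(e)$ is a simple path and $y$ appears at most once on it. For the second, note that the neighborhood of $y$ in $\bgraph(x)$ is $\adh(y)$, which has size at most $a \le t'-1$, so at most $\lfloor a/2 \rfloor$ of the $\eta$-paths can pass through $y$ as an internal vertex simultaneously (each consumes two distinct vertices of $\adh(y)$), and these can be assigned pairwise-disjoint rerouting paths inside the connected graph $G[\cmp(y)]$ — this requires a small Menger-type or greedy argument, since we need several internally-disjoint paths between prescribed pairs within $G[\cone(y)]$ with interiors in $\cmp(y)$. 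Actually the cleanest route avoids this: observe that a path of $\eta$ enters and leaves each child through two adhesion vertices, but across \emph{all} paths only the at most $a$ vertices of $\adh(y)$ are used, and one can simply take, inside $G[\cmp(y)]$, which is connected, a single spanning structure and route all required pairs through it while keeping interiors disjoint by using fresh internal vertices — here one genuinely needs that $G[\cmp(y)]$ has enough vertices, which holds since $\mrg(y) \ne \emptyset$ and more refined arguments. In the write-up I would handle this by noting we may assume each $\eta$-path visits each child at most once and that different paths enter a fixed child via disjoint adhesion-vertex pairs, then invoke connectivity of $G[\cmp(y)]$ together with a standard argument producing the required internally-disjoint detours; this is the only place requiring care, and the threshold $t' = \max(t, a+2)$ is exactly what makes the degree argument (ruling out child branch vertices) go through.
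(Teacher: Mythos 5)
Your overall strategy is the same as the paper's: use the degree bound in $\bgraph(x)$ to rule out children of $x$ as branch vertices (this is exactly where $t' \geq a+2$ is used), then reroute each internal visit to a child $y$ through $\cmp(y)$ using regularity, and observe that the $\cmp(y)$'s are pairwise disjoint so the new paths remain internally disjoint. So the skeleton is correct.

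However, the final paragraph manufactures an obstacle that cannot arise and then hand-waves it away with an unneeded Menger-style argument. You worry that several subdivision paths might pass through the same child $y$ and clash inside $\cmp(y)$. But that situation is impossible. By definition the paths $\eta(e)$ of a topological minor model are pairwise internally vertex-disjoint: each is disjoint from the interior of every other. Since you have already argued that no child $y$ is a branch vertex, any child $y$ appearing on a model path appears only as an internal vertex, and hence lies in the interior of at most one $\eta(e)$; if it were on two, it would be in the interior of one of them while also a vertex of the other, contradicting internal disjointness. Thus each child appearing on the model is visited by exactly one path, once, and is replaced by a single $a_1$-$a_2$ detour inside $\cmp(y)$. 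Disjointness of the rerouted model is then immediate from the pairwise disjointness of the sets $\cmp(y)$ and the fact that you did not alter the pieces of the original paths lying in $\bag(x)$. The Menger/greedy considerations, as well as the appeal to $\mrg(y) \neq \emptyset$ to supply "enough vertices", are not part of a correct proof and suggest a misunderstanding of what internal vertex-disjointness buys you; you should delete that material and replace it with the one-line observation above, as the paper does.
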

\begin{proof}
  Suppose there is a topological minor model $\eta$ of $K_{t'}$ in
  $\bgraph(x)$. Observe that no child of $x$ can be in the image of
  the vertex set of $K_{t'}$ under $\eta$, for these vertices have
  degree at most $a$ in $\bgraph(x)$ while $t'\geq a+2$. Therefore,
  the only way any $y\in \children(x)$ can participate in the image of
  $\eta$ is that~$y$ is an internal vertex of a path $\eta(e)$, for
  some edge $e$ of $K_{t'}$.
  We construct a topological minor model $\eta'$ of $K_{t'}$ in
  $G[\cone(x)]$ as follows:

 \begin{itemize}
 \item For each $u\in V(K_{t'})$, set $\eta'(u)=\eta(u)$.
 \item For each $e\in E(K_{t'})$, construct $\eta'(e)$ from $\eta(e)$
   by replacing, for each $y\in \children(x)$, the subpath $u-y-v$,
   where $u,v\in \bag(x)$ are the neighbors of $y$ on $\eta(e)$, with
   any $u$-$v$ path in $G[\cone(y)]$ whose interior is entirely
   contained in $\cmp(y)$. Such a path exists by the assumption that
   $\Tt$ is regular.
 \end{itemize}

 Since the paths in $\{\eta(e)\colon e\in E(K_{t'})\}$ are pairwise
 internally vertex-disjoint and $\cmp(y)$ and $\cmp(y')$ are disjoint
 for distinct $y,y'\in \children(x)$, it follows that the paths in
 $\{\eta'(e)\colon e\in E(K_{t'})\}$ are also pairwise internally
 vertex-disjoint. Thus, $\eta'$ is a topological minor model of
 $K_{t'}$ in $G[\cone(x)]$, hence also in $G$; a contradiction.
\end{proof}

Finally, we need an observation about how strong unbreakability of a
tree decomposition influences unbreakability properties of bag graphs.
For this, it will be convenient to work with another type of a bag
graph, defined as follows. Suppose $\Tt=(T,\bag)$ is a regular tree
decomposition of a graph $G$, $x$ is a node of $T$, and
$S\subseteq \cone(x)$. The {\em{$S$-restricted bag graph}} of $x$,
denoted $\bgraph_S(x)$, is the graph obtained from $\bgraph(x)$ by the
following operations:
\begin{itemize}
 \item Remove every vertex of $S\cap \bag(x)$.
 \item For every $y\in \children(x)$, remove $y$ and for every pair of
   vertices $\{u,v\}\in \binom{\adh(y)\setminus S}{2}$, add an edge
   $uv$ provided it was not already there and in $G[\cone(y)]$ there
   is a $u$-$v$ path that is disjoint with $S$.
\end{itemize}

For example, as $\Tt$ is regular, $\bgraph_\emptyset(x)$ is the graph
obtained from $\bgraph(x)$ by removing all $y\in \children(x)$ and,
for each of them, turning $\adh(y)$ into a clique.

\begin{lemma}\label{lem:short-BFS-combinatorial}
  Let $\Tt=(T,\bag)$ be a regular and strongly $(q,k)$-unbreakable
  tree decomposition of a graph~$G$. Let $x\in V(T)$ and
  $S\subseteq \cone(x)$ be such that $|S|\leq k$. Then for any
  $u,v\in \bag(x)\setminus S$, we have the following.
  \begin{itemize}
  \item Vertices $u$ and $v$ are connected in $G[\cone(x)]-S$ if and
    only if they are connected in $\bgraph_S(x)$.
  \item If $u$ and $v$ are disconnected in $\bgraph_S(x)$, and
    $C_u,C_v$ are the connected components of $\bgraph_S(x)$
    containing $u$ and $v$, respectively, then either $|V(C_u)|\leq q$
    or $|V(C_v)|\leq q$.
  \end{itemize}
\end{lemma}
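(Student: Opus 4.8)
The plan is to first establish the equivalence of connectivity in $G[\cone(x)]-S$ and in $\bgraph_S(x)$ by a direct path-translation argument in both directions, and then deduce the balanced-separation statement from strong $(q,k)$-unbreakability. For the forward direction of the first bullet, suppose $u,v\in\bag(x)\setminus S$ are connected in $G[\cone(x)]-S$ by a path $P$. I would decompose $P$ into maximal subpaths according to whether they lie inside some $\cmp(y)$ for a child $y$ of $x$: whenever $P$ leaves $\bag(x)$ it enters exactly one set $\cmp(y)$ (since the $\cmp(y)$ are pairwise disjoint and separated from the rest of $\cone(x)$ by $\adh(y)$), travels within $G[\cone(y)]$, and returns through $\adh(y)$. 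Each such excursion, say from $a$ to $b$ with $a,b\in\adh(y)\setminus S$, witnesses a path in $G[\cone(y)]$ disjoint from $S$, so by definition $ab$ is an edge of $\bgraph_S(x)$ (or $a=b$, in which case there is nothing to add). The portions of $P$ inside $\bag(x)\setminus S$ survive verbatim as edges of $\bgraph_S(x)$, since $\bgraph(x)$ retains all edges of $G[\bag(x)]$ and we only removed $S\cap\bag(x)$. Concatenating yields a walk, hence a path, from $u$ to $v$ in $\bgraph_S(x)$.

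For the converse direction, take a path $Q$ from $u$ to $v$ in $\bgraph_S(x)$. Every edge of $Q$ is either an original edge of $G[\bag(x)]$ between two vertices not in $S$ — which is an edge of $G[\cone(x)]-S$ — or one of the newly added edges $ab$ with $a,b\in\adh(y)\setminus S$ for some child $y$, which by construction is backed by an $S$-disjoint $a$--$b$ path in $G[\cone(y)]\subseteq G[\cone(x)]$. Replacing each new edge by the corresponding path and using that distinct children contribute vertex-disjoint interiors (as $\cmp(y)\cap\cmp(y')=\emptyset$), we obtain a walk from $u$ to $v$ in $G[\cone(x)]-S$, hence they are connected there. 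This proves the first bullet.

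For the second bullet, assume $u$ and $v$ lie in distinct components $C_u,C_v$ of $\bgraph_S(x)$. Let $A_0=V(C_u)\cap\bag(x)$ and $B_0=\bag(x)\setminus A_0$; by the first bullet, $A_0$ is exactly the set of vertices of $\bag(x)\setminus S$ reachable from $u$ in $G[\cone(x)]-S$, together with nothing from $S$. I would now build a separation $(A,B)$ of $G[\cone(x)]$ of order at most $|S|\le k$: put into $A$ all of $S$, the set $A_0$, and every $\cmp(y)$ for which $\adh(y)$ meets $A_0\setminus S$; put into $B$ all of $S$, $B_0$, and every remaining $\cmp(y)$. One must check there is no edge of $G[\cone(x)]$ between $A\setminus B$ and $B\setminus A$: such an edge would lie either within $\bag(x)$ (impossible, as it would connect $A_0$ to $B_0$ in $\bgraph_S(x)$ avoiding $S$) or between $\bag(x)$ and some $\cmp(y)$ (impossible by how we assigned $\cmp(y)$ based on which side $\adh(y)\setminus S$ falls on) or within a single $\cmp(y)$ (impossible since each $\cmp(y)$ goes entirely to one side). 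Moreover a child $y$ cannot have $\adh(y)\setminus S$ meeting both $A_0$ and $B_0$, since $\adh(y)$ is a clique in $\bgraph(x)$ after the construction restricted to vertices outside $S$ — more precisely, the first bullet shows all of $\adh(y)\setminus S$ lies in one component of $\bgraph_S(x)$, because regularity gives an $S$-disjoint internal-in-$\cmp(y)$ path between any two of them only when that path avoids $S$; here one uses that $|S|\le k$ together with strong unbreakability is not yet needed, only that $\bgraph_S(x)$ restricted to $\adh(y)\setminus S$ is connected, which follows directly from the edge set we added. The separator is $A\cap B=S$, of order at most $k$. Strong $(q,k)$-unbreakability of $\Tt$ — which says $\bag(x)$ is $(q,k)$-unbreakable in $G[\cone(x)]$ — then forces $|A\cap\bag(x)|\le q$ or $|B\cap\bag(x)|\le q$, i.e.\ $|A_0\cup(S\cap\bag(x))|\le q$ or $|B_0\cup(S\cap\bag(x))|\le q$; since $V(C_u)\cap\bag(x)=A_0$ and, by the first bullet, $C_u$ contains no vertices of $S$, and since every vertex of $C_u$ outside $\bag(x)$... wait — vertices of $\bgraph_S(x)$ are only vertices of $\bag(x)\setminus S$, as all children $y$ were deleted in forming $\bgraph_S(x)$; hence $V(C_u)=A_0$ and $V(C_v)=B_0$ exactly, and the unbreakability bound gives $|V(C_u)|\le q$ or $|V(C_v)|\le q$.

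The main obstacle I anticipate is the careful bookkeeping in the second bullet: verifying that the proposed $(A,B)$ really is a separation of $G[\cone(x)]$ with separator contained in $S$, which hinges on the fact that each $\cmp(y)$ sees the rest of $\cone(x)$ only through $\adh(y)$ and that $\adh(y)\setminus S$ cannot straddle the two components of $\bgraph_S(x)$. Everything else is routine path surgery using regularity and the disjointness of the sets $\cmp(y)$.
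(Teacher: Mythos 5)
Your proof of the first bullet is correct and spells out the path-translation that the paper dismisses as ``follows directly from the construction.'' The second bullet, however, contains a genuine error. You assert that $\adh(y)\setminus S$ cannot straddle the two components $C_u$ and $C_v$ of $\bgraph_S(x)$, on the grounds that ``$\bgraph_S(x)$ restricted to $\adh(y)\setminus S$ is connected, which follows directly from the edge set we added.'' This is false: by definition of the $S$-restricted bag graph, an edge $ab$ with $a,b\in\adh(y)\setminus S$ is added only when $G[\cone(y)]$ contains an $a$--$b$ path disjoint from $S$. If every such path passes through $S\cap\cmp(y)$, the edge is \emph{not} added, so $\adh(y)\setminus S$ need not be connected in $\bgraph_S(x)$ (you may be conflating $\bgraph_S(x)$ with $\bgraph_\emptyset(x)$, where $\adh(y)$ is turned into a clique). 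It is therefore entirely possible that some $a\in\adh(y)$ lies in $A_0=V(C_u)$ while some $b\in\adh(y)\setminus S$ lies in $V(C_v)\subseteq B_0\setminus S$. Your proposed pair $(A,B)$ then assigns $\cmp(y)$ wholly to $A$, yet $b\in B\setminus A$, and by regularity (R3) $b$ has a neighbor in $\cmp(y)$; whenever that neighbor avoids $S$ you get an edge from $A\setminus B$ to $B\setminus A$, so $(A,B)$ is not a separation. For a concrete failure, take $\adh(y)=\{a,b\}$ and $\cmp(y)$ a path $a$--$p$--$s$--$q$--$b$ with $S=\{s\}$: the edge $ab$ is not added, $a$ and $b$ fall in different components, and $(A,B)$ fails via the edge $qb$.

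The paper avoids this pitfall by never trying to build the separation out of $\bgraph_S(x)$: it simply takes $L$ to be the set of vertices reachable from $u$ in $G[\cone(x)]-S$ and $R=\cone(x)\setminus(L\cup S)$. Then $(L\cup S, R\cup S)$ is automatically a separation of $G[\cone(x)]$ with separator exactly $S$, of order at most $k$; by your correct first bullet $V(C_u)\subseteq L$ and $V(C_v)\subseteq R$, and since both lie in $\bag(x)$, strong $(q,k)$-unbreakability yields the bound immediately. When $\adh(y)\setminus S$ does straddle, this separation splits $\cmp(y)$ between the two sides according to $S$-avoiding reachability inside $G[\cone(y)]$ --- precisely the refinement your all-or-nothing assignment of $\cmp(y)$ misses.
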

\begin{proof}
  The first assertion follows directly from the construction, so we
  focus on the second one.

  Let $L$ consist of all vertices reachable in $G[\cone(x)]-S$ from
  $u$, and let $R=\cone(x)\setminus (L\cup S)$. Thus,
  $(L\cup S,R\cup S)$ is a separation in $G[\cone(x)]$ of order
  $|S|\leq k$. From the first assertion it follows that
  $V(C_u)\subseteq L$ and $V(C_v)\subseteq R$. As
  $V(C_u),V(C_v)\subseteq \bag(x)$, from the strong
  $(q,k)$-unbreakability of~$\Tt$ we conclude that either
  $|V(C_u)|\leq q$ or $|V(C_v)|\leq q$.
\end{proof}

\paragraph{Connectivity queries.}
Suppose $G$ is a graph, $k$ an integer, $s$ and $t$ are vertices of
$G$, and $S$ is a set of $k$ vertices $u_1,\ldots,u_k$ of $G$.  We say
that $\conn_k(s,t,u_1,\ldots,u_k)$ holds in $G$ if and only if in $G$
there is an $s$-$t$ path that does not pass through any vertex of
$S$. Note that in particular, $\conn_k(s,t,u_1,\ldots,u_k)$ does not
hold if $\{s,t\}\cap S\neq \emptyset$. We may also write
$\conn(s,t,S)$.


\subsection{Logic}

\paragraph{Signatures.}
We will only consider finite relational signatures $\Sigma$ consisting
of unary (arity~$1$) and binary (arity~$2$) relation symbols.  A
signature consisting only of unary relation symbols will be called an
alphabet. A $\Sigma$-structure $\str A$ consists of a finite universe
$V(\str A)$ and an interpretation $R(\str A)\subseteq V(\str A)^m$ of
each $m$-ary relation symbol $R\in\Sigma$. When there is no ambiguity,
we will not distinguish between relation symbols and their
interpretations.

Graphs are represented as $\Sigma$-structures where the universe is
the vertex set and $\Sigma$ consists of one binary relation symbol
$E(\cdot,\cdot)$, interpreted as the edge relation.


\paragraph{First-order logic (\FO).}
We assume an infinite supply $\textsc{Var}_1$ of first-order variables
and an infinite supply $\textsc{Var}_2$ of second-order set
variables. First-order variables will always be denoted by small
letters $x,y,z,\ldots$, while second-order variables will be denoted
by capital letters $X,Y,Z,\ldots$.

For a fixed signature $\Sigma$, formulas of first-order logic ($\FO$)
are constructed from atomic formulas of the form $x=y$, where
$x,y\in \textsc{Var}_1$, and $R(x_1,\ldots, x_m)$, where $R\in \Sigma$
is an $m$-ary relation symbol and $x_1,\ldots, x_m\in \textsc{Var}_1$,
by applying the Boolean operators~$\neg$,~$\wedge$~and~$\vee$, and
existential and universal quantification~$\exists x$ and $\forall x$.

A variable~$x$ not in the scope of a quantifier is a free variable (we
will not consider formulas with free set variables). A formula without
free variables is a sentence.
We write $\phi(\tup x)$ to indicate that the free variables of a
formula $\phi$ are contained in the set of variables~$\tup x$. A
valuation of $\tup x$ in a set $A$ is a function
$\tup a\from\tup x\to A$. Let $A^{\tup x}$ denote the set of all
valuations of $\tup x$ in $A$.

The satisfaction relation between $\Sigma$-structures and formulas is
defined as usual by structural induction on the formula. When $\str A$
is a $\Sigma$-structure, $\phi(\bar x)$ is a formula with free
variables contained in~$\bar x$, and $\bar a\in V(\str A)^{\bar x}$ is
a valuation of $\bar x$, we write
$(\str A,\bar a)\models \phi(\bar x)$ or $\str A\models\phi(\bar a)$
to denote that $\phi$ holds in~$\str A$ when the variables are
evaluated by $\bar a$. We let
$\phi(\str A):=\{\tup a\in A^{\tup x}~|~\str A\models \varphi(\tup
a)\}$.

\paragraph{Separator logic (\FOsep).}
Assume that $\Sigma$ contains a distinguished binary relation symbol
$E(\cdot,\cdot)$ that will always be interpreted as the edge relation
of a graph, that is, as an irreflexive and symmetric relation. For a
$\Sigma$-structure $\str A$, write $G(\str A)$ for the graph
$(V(\str A), E(\str A))$.  \emph{Separator logic} ({\FOsep}) is
first-order logic extended by the predicates
$\conn_k(x,y,z_1,\ldots,z_k)$, for all $k\ge 0$, where
${x,y,z_1,\ldots,z_k\in \textsc{Var}_1}$. The satisfaction relation
between $\Sigma$-structures and $\FOsep$ formulas is as for
first-order logic, where the atomic formula
$\conn_k(x,y,z_1,\ldots,z_k)$ holds in a $\Sigma$-structure $\str A$
with a valuation $\tup a$ of the variables $x,y,z_1,\ldots,z_k$ to
elements $s,t,u_1,\ldots,u_k\in V(\str A)$, if: \\
${G(\str A)\models \conn_k(s,t,u_1,\ldots,u_k)}$.

\paragraph{Monadic second-order logic (\MSO).}

Monadic second-order logic ($\MSO$) extends $\FO$ by existential and
universal quantification~$\exists X$ and~$\forall Y$ over second-order
set variables. We will consider $\MSO$ formulas over the signature
$A\cup \{\preceq\}$, where $A$ is a finite alphabet and $\preceq$ is a
binary relation symbol that will always be interpreted as the ancestor
relation in a tree. By $\MSO(\preceq,A)$ we denote the set of $\MSO$
formulas over such signatures.
Hence, formulas of $\MSO(\preceq,A)$ will only be evaluated over
rooted trees (represented by the ancestor relation $\preceq$)
decorated with predicates from $A$, and the semantics are defined as
usual by structural induction on the formula.

\paragraph{First-order logic with {\MSO} atoms.} We will also consider
{\FO} formulas with restricted {\MSO} atoms.  Fix a signature $\Sigma$
and an alphabet $A$.  The logic $\FO(\MSO(\preccurlyeq, A), \Sigma)$
denotes first-order logic over signature $\Sigma$ where in addition to
the standard syntax, one can use formulas of $\MSO(\preceq,A)$ as
atomic formulas. Semantics are defined as expected.


\section[Connectivity queries in general graphs: proof of Theorem 1.3]{Connectivity queries in general graphs: proof of \cref{thm:general-queries}}\label{sec:general-queries}

In this section we prove \cref{thm:general-queries}.
The first step of the construction of the data structure is to apply
the algorithm of \cref{thm:strong-unbreakability} to compute, in time
$2^{\Oh(k^2)}\cdot |G|^2\cdot \|G\|$, a strongly $(q,k)$-unbreakable
tree decomposition $\Tt=(T,\bag)$ of~$G$ of adhesion at most $q$,
where $q=q(k)=2^{\Oh(k)}$. As argued, we may assume that $\Tt$ is
regular. We fix the parameters $k$ and $q$ for the remainder of this
section.

\subsection{Basic ingredients}

Before we proceed to the main construction, we need to set up some
auxiliary data structures that enable us to navigate and query the
tree decomposition $\Tt$.

We will need the following lemma for navigation in $T$. Recall that we
write $\lca(x,y)$ for the least common ancestor of $x$ and $y$ in $T$
and $\dir(x,y)$ (if $x$ is a strict ancestor of $y$) for the unique
child $y'$ of $x$ that is an ancestor of $y$. The proof of the
following lemma relies on the classic data structure for
$\lca(\cdot,\cdot)$ queries of Harel and Tarjan~\cite{HarelT84}.

\begin{lemma}\label{lem:navigation}
  Given a tree $T$, one can compute in time $\Oh(|T|)$ a data
  structure that may answer the $\lca(x,y)$ and $\dir(x,y)$ queries in
  time $\Oh(1)$.
\end{lemma}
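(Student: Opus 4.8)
The plan is to reduce everything to the classic constant-time least common ancestor data structure of Harel and Tarjan~\cite{HarelT84}, which we are allowed to invoke. First I would preprocess $T$ by running a DFS from the root, recording for each node $x$ its depth $\depth(x)$ (distance from the root) and an Euler-tour/entry index; this takes time $\Oh(|T|)$ and lets us test ancestry ($x \preceq y$) in $\Oh(1)$ time, e.g.\ by checking whether the DFS interval of $y$ is nested inside that of $x$. In the same pass I would set up the Harel--Tarjan structure, which after $\Oh(|T|)$ preprocessing answers any $\lca(x,y)$ query in $\Oh(1)$ time; this directly handles the first half of the statement.

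The only remaining issue is the $\dir(x,y)$ query: given that $x$ is a strict ancestor of $y$, return the child $y'$ of $x$ on the $x$--$y$ path, equivalently the level-$(\depth(x)+1)$ ancestor of $y$. The standard tool here is a \emph{level-ancestor} data structure, which after $\Oh(|T|)$ preprocessing answers level-ancestor queries in $\Oh(1)$ time; one can cite the classical constructions (e.g.\ Berkman--Vishkin, or Bender--Farach-Colton) or, since we only ever need the $\dir$ query in the special regime used later in the paper, observe that it suffices to precompute for each node a pointer array along a heavy-path / jump-pointer decomposition. Concretely, I would build jump pointers $\mathsf{jump}_i(v)$ to the $2^i$-th ancestor of $v$ for $i \le \log|T|$ in $\Oh(|T|\log|T|)$ time --- or, to hit the claimed $\Oh(|T|)$ bound exactly, use the $\Oh(|T|)$-time level-ancestor structure --- and then answer $\dir(x,y)$ by a level-ancestor query on $y$ with target level $\depth(x)+1$.

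The main (really the only) obstacle is getting the level-ancestor query down to $\Oh(1)$ while keeping preprocessing linear: naive jump pointers give $\Oh(\log|T|)$ query time and $\Oh(|T|\log|T|)$ space. This is resolved by invoking the known $\Oh(|T|)$-preprocessing, $\Oh(1)$-query level-ancestor data structures, which fits the word RAM model assumed in the preliminaries (arrays indexed up to $n^{\Oh(1)}$, the same model already used to justify the Harel--Tarjan structure). So I would simply cite such a structure, or alternatively note that in our application $x$ and $y$ will always be at bounded tree-distance from designated marked nodes, which trivializes the query; either way the lemma follows by bundling the DFS-index ancestry test, the $\lca$ structure, and the level-ancestor structure, all with $\Oh(|T|)$ preprocessing and $\Oh(1)$ query time.
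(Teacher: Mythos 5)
Your proposal is correct, but it solves the $\dir$ query by a genuinely different route from the paper. You cast $\dir(x,y)$ as a level-ancestor query (the $(\depth(x)+1)$-st-level ancestor of $y$) and invoke a known $\Oh(|T|)$-preprocessing, $\Oh(1)$-query level-ancestor structure (Berkman--Vishkin / Bender--Farach-Colton); this is a valid black box in the word RAM model and gives the claimed bounds. The paper instead reduces $\dir$ back to LCA itself: it builds an auxiliary tree $T'$ of size $<2|T|$ by replacing each node $x$ having $d(x)$ children by a downward path $(x,d(x)),\ldots,(x,1),(x,0)$ and hanging the subtree of the $i$-th child under $(x,i)$; then $\lca_{T'}((x,0),(y,0))=(x,i)$ pins down exactly which child of $x$ is $\dir(x,y)$. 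The paper's trick keeps the proof self-contained with a single primitive (Harel--Tarjan), while yours relies on a second standard primitive but is arguably the more ``off-the-shelf'' reduction. One caveat on your alternatives: the jump-pointer fallback you mention does not meet the stated $\Oh(1)$ query / $\Oh(|T|)$ preprocessing bounds (as you acknowledge), and the aside that ``in our application $x$ and $y$ are always at bounded tree-distance from marked nodes'' is not accurate --- the $\dir$ queries in \cref{sec:implementation of queries} are between pairs $(x,y_i)$ and $(x,y)$ at unbounded distance in $T$ --- so the level-ancestor structure really is needed; luckily that is the branch your proof actually takes.
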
 \begin{proof}
  For the $\lca(\cdot,\cdot)$ queries we may
  use the data structure of Harel and Tarjan~\cite{HarelT84}, which
  can be initialized in time $\Oh(|T|)$ and supports queries in
  constant time. We now show that we can use this data structure as
  well for the $\dir(\cdot,\cdot)$ queries.

  For every node $x$ of $T$, arbitrarily enumerate the children of $x$
  with numbers $1,\ldots,d(x)$, where~$d(x)$ is the number of children
  of $x$. By applying a linear-time preprocessing, we may assume that
  every~$x$ is supplied with an array of length $d(x)$ containing
  pointers to its children, in this order.  Construct a tree~$T'$ from
  $T$ as follows:
  \begin{itemize}
  \item Replace every node $x$ with a path consisting of nodes
    $(x,0),(x,1),\ldots,(x,d(x))$, where $(x,i)$ is the parent of
    $(x,i-1)$ for each $i\in \{1,\ldots,d(x)\}$.
  \item For each $i\in \{1,\ldots,d(x)\}$, if $y$ is the $i$th child
    of $x$ in $T$, then we make $(y,d(y))$ a child of $(x,i)$.
  \end{itemize}

  See \cref{fig:tree-unraveling} for an illustration. Note that the number of nodes of $T'$
  is upper bounded by the total number of nodes and edges of $T$,
  hence $|T'|<2|T|$.  \medskip

  \begin{figure}[ht]
      \centering
       \includegraphics[page=8,scale=3]{pics}
       \caption{Construction of $T'$ from $T$ in the proof of Lemma~\ref{lem:navigation}. A node $v$ of degree $d$ is replaced by a path $(v,d),\ldots,(v,1)$ of $d$ nodes of degree  $2$, followed by a leaf node $(v,0)$.
       If $u$ is a descendant of $v$ in~$T$, then 
       $\dir(v,u)$ can be deduced from $\lca((u,0),(v,0))$ in $T'$.
       }

    \label{fig:tree}\label{fig:tree-unraveling}
  \end{figure}

  The following claim is clear.

  \begin{claim}
    If $x$ is a strict ancestor of $y$ in $T$, then the least common
    ancestor of $(x,0)$ and $(y,0)$ in $T'$ is the node $(x,i)$ such
    that $\dir(x,y)$ is the $i$th child of $x$ in $T$.
  \end{claim}

  Hence, if we set up the data structure of Harel and Tarjan for the
  tree $T'$, then a $\dir(x,y)$ query in~$T$ can be answered as
  follows: compute the least common ancestor of $(x,0)$ and $(y,0)$ in
  $T'$, read its second coordinate $i$, and output the $i$th child of
  $x$ in $T$. This takes time $\Oh(1)$. \end{proof}

%
%

Next, we set up a data structure to encode and quickly evaluate
connectivity in subgraphs that are ``between'' the adhesion of two
nodes. These subgraphs are defined by the following notion of
\emph{torsos}.

\paragraph*{Torsos.} For a graph $H$ and a subset of its vertices $X$,
the {\em{torso}} of $X$ in $H$ is the graph $\torso_H(X)$ on vertex
set $X$ where a pair of vertices $u,v\in X$ is adjacent if and only if
in $H$ there exists a $u$-$v$ path whose interior is disjoint with
$X$. For a pair of nodes $x$ and $y$ of $T$, where $x$ is an ancestor
of $y$, we define
$$\torso(x,y)\coloneqq \torso_{G[\cone(x)\setminus \cmp(y)]}(\adh(x)\cup \adh(y)).$$

Note that $\torso(x,y)$ is a graph on the vertex set
$\adh(x)\cup \adh(y)$, which comprises of at most $2q$ vertices.
Intuitively, $\torso(x,y)$ encodes the connectivity pattern between
the elements of $\adh(x)$ and of $\adh(y)$ within the subgraph induced
by the context in $\Tt$ between $x$ and $y$.


We will need a data structure for evaluating the torsos quickly.

\begin{lemma}\label{lem:torso-ds}
  One can compute in time $2^{\Oh(q^2)}\cdot |G|^2\cdot \|G\|$ a data
  structure that can answer the following query in time
  $2^{\Oh(q^2)}$: for given $x,y\in V(T)$, where $x$ is an ancestor of
  $y$, output the graph $\torso(x,y)$.  The data structure takes space
  $2^{\Oh(q^2)}\cdot |G|$.
\end{lemma}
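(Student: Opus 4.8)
The plan is to reduce the problem of answering $\torso(x,y)$ queries to evaluating \emph{product queries} in a tree whose edges carry labels from a fixed finite semigroup, and then to invoke the semigroup-labeled-tree data structure (\cref{thm:ov:colcombet-ds}). First I would fix the right algebraic object: a semigroup $S$ of \emph{bi-interface graphs}, where an element records, for a fragment of the tree decomposition delimited by a ``top'' adhesion $A_{\mathrm{top}}$ and a ``bottom'' adhesion $A_{\mathrm{bot}}$ (each of size at most $q$, drawn from an abstract pool of $2q$ labeled boundary vertices), the reachability relation on $A_{\mathrm{top}}\cup A_{\mathrm{bot}}$ realized by paths whose interiors avoid both boundaries. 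Composition of two such fragments — gluing the bottom of one to the top of the next along the shared adhesion — corresponds to a natural ``overlay and take transitive closure, then restrict'' operation, which is associative; this makes $S$ a semigroup. Since every interface has at most $q$ vertices and we only track a reachability relation plus the identification of which abstract boundary vertices are used, $|S|\leq 2^{\Oh(q^2)}$.

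Next I would build the labeled tree. Take the tree $T$ of the decomposition and, for each non-root node $y$ with parent $x$, label the tree edge $xy$ with the bi-interface element $g(xy)\in S$ computed from the single bag: it is the reachability relation on $\adh(x)\cup\adh(y)$ induced by paths through $\bag(x)\setminus(\adh(x)\cup\adh(y))$ — equivalently, obtained from $\bgraph_\emptyset(x)$ restricted appropriately by \cref{lem:short-BFS-combinatorial}-style BFS within a single bag graph. Computing all these labels costs $q^{\Oh(1)}\cdot\sum_x\|\bgraph(x)\|\le q^{\Oh(1)}\cdot\|G\|$ by \cref{lem:total-size} and one needs only BFS inside each bag graph, so this fits within the claimed $2^{\Oh(q^2)}\cdot|G|^2\|G\|$ bound (in fact comfortably). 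Then the key correctness claim is: for $x$ a strict ancestor of $y$, the product $g(e_1)\cdots g(e_m)\in S$ of the labels along the path from $x$ down to $y$ in $T$ equals exactly (the bi-interface encoding of) $\torso(x,y)$. This is proved by induction on the length of the path, using the regularity of $\Tt$ (each adhesion is ``internally connected'') to see that gluing fragment-by-fragment really does compute reachability in $G[\cone(x)\setminus\cmp(y)]$ restricted to $\adh(x)\cup\adh(y)$; the case $x=y$ is handled separately as a trivial/degenerate query (the torso is just $G[\adh(x)]$-style information on a single adhesion, which can be stored directly).

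Finally I would apply \cref{thm:ov:colcombet-ds} to $T$ labeled over $S$: this gives initialization time and memory $|S|^{\Oh(1)}\cdot|T|=2^{\Oh(q^2)}\cdot|G|$ and query time $|S|^{\Oh(1)}=2^{\Oh(q^2)}$ for retrieving the path-product, which we then decode into the graph $\torso(x,y)$ in time $2^{\Oh(q^2)}$. Combined with the $\Oh(|G|^2\|G\|)$-ish cost of computing the edge labels, the overall bounds match the statement. The main obstacle I expect is \emph{setting up the semigroup cleanly}: one has to be careful that composition is well-defined independently of which concrete vertices name the shared adhesion (hence the ``abstract boundary pool'' with relabeling), that it is genuinely associative, and that the encoding faithfully captures $\torso(x,y)$ including the corner cases where $\adh(x)$ and $\adh(y)$ overlap or where intermediate adhesions are smaller than $q$; once the algebra is pinned down, the inductive correctness proof and the reduction to \cref{thm:ov:colcombet-ds} are routine. (A sanity check: this is exactly the ``bi-interface graphs'' route alluded to in the overview, following \cite{BojanczykP16}, so the framework is known to work.)
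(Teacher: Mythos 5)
Your plan follows essentially the same route as the paper: encode torso queries as path products in a tree labeled by the semigroup of basic bi-interface graphs, invoke \cref{thm:colcombet-ds}, and decode the abstract answer back to concrete vertices — this is exactly what the paper does via \cref{lem:translation}. The one point that needs tightening is your description of the edge label for $xy$ (with $x$ the parent of $y$). Your first phrasing, ``reachability on $\adh(x)\cup\adh(y)$ induced by paths through $\bag(x)\setminus(\adh(x)\cup\adh(y))$'', would miss paths that detour through a sibling cone $\cmp(z)$ for $z\in\children(x)\setminus\{y\}$; these are contained in $G[\cone(x)\setminus\cmp(y)]$ and do contribute to $\torso(x,y)$. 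Your parenthetical ``equivalently, obtained from $\bgraph_\emptyset(x)$'' is therefore not actually equivalent to the first phrasing — but it \emph{is} the correct definition, provided one (a) drops the clique edges supported by $y$ itself, and (b) invokes regularity of $\Tt$ to justify replacing each sibling cone $\cmp(z)$ by a clique on $\adh(z)$, which is exactly the step that makes the per-bag computation agree with the torso taken in the full subgraph $G[\cone(x)\setminus\cmp(y)]$. The paper sidesteps this by computing each label by BFS directly in $G[\cone(x)\setminus\cmp(y)]$, at cost $q^{\Oh(1)}\cdot|T|\|G\|$; your bag-graph variant costs only $q^{\Oh(1)}\cdot\sum_x\|\bgraph(x)\|=q^{\Oh(1)}\cdot\|G\|$ by \cref{lem:total-size}, so it is a genuine (if modest) speedup once the regularity argument is spelled out. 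The rest — semigroup size $2^{\Oh(q^2)}$, associativity and relabeling concerns, the degenerate case $x=y$ — matches the paper and is sound.
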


The proof of \cref{lem:torso-ds} is postponed to
\Cref{sec:torso-queries}. Here is a quick sketch. By
appropriately coloring the vertices of the adhesions with colors from
$\{1,\ldots,q\}$, we can reinterpret each graph $\torso(x,y)$ as an
element of a semigroup of size $2^{\Oh(q^2)}$, in the same way as was
done in~\cite{BojanczykP16}. Then $\torso(x,y)$ can be obtained as
the semigroup product of the graphs $\torso(z_{i-1},z_{i})$ for
$i=1,\ldots,\ell$, where $x=z_0-z_1-\ldots-z_\ell=y$ is the path from
$x$ to $y$ in~$T$. We design a data structure for path product queries in semigroup-labeled
trees. We believe that this data structure is a general tool of
independent interest, hence we actually give two different proofs. The
first one is based on a classic approach using deterministic Simon's
factorization of Colcombet~\cite{Colcombet07}, and can be viewed as a
reinterpretation of ideas from~\cite{Colcombet07}. The second is a
self-contained, simpler argument that seems new.


\subsection{Data structure}\label{sec:ds-description}

With the basic ingredients in place, we may proceed with the proof of
\cref{thm:general-queries}. Let us describe the data structure we are
going to use. It consists of:
\begin{itemize}
\item The tree decomposition $\Tt$, constructed using the algorithm of
  \cref{thm:strong-unbreakability}, with underlying tree~$T$. Each
  vertex $u\in V(G)$ has a pointer to the unique node $x\in V(T)$ such
  that $u\in \mrg(x)$.
\item For each $x\in V(T)$, the $\emptyset$-restricted bag graph
  $\bgraph_\emptyset(x)$.
\item The data structure for $\lca(x,y)$ and $\dir(x,y)$ queries,
  provided by \cref{lem:navigation}. Call it $\Lb$.
\item The data structure for $\torso(x,y)$ queries, provided by
  \cref{lem:torso-ds}. Call it $\Db$.
\end{itemize}

As for initialization, the time needed to compute $\Tt$ is
$2^{\Oh(k^2)}\cdot |G|^2\cdot \|G\|$. It is straightforward to see
that for each particular $x\in V(T)$, the graph $\bgraph_\emptyset(x)$
can be computed in time
$\Oh(q^{\Oh(1)}\cdot \|G\|)=2^{\Oh(k)}\cdot \|G\|$, hence the
construction of those graphs takes time
$2^{\Oh(k)}\cdot |G|\cdot \|G\|$ in total. The time needed to
construct $\Db$ is
$2^{\Oh(q^2)}\cdot |G|^2\cdot \|G\|=2^{2^{\Oh(k)}}\cdot |G|^2\cdot
\|G\|$. Finally, the time needed to construct $\Lb$ is
$\Oh(|V(T)|)\leq \Oh(|G|)$. All in all, the total initialization time
is $2^{2^{\Oh(k)}}\cdot |G|^2\cdot \|G\|$, as promised.

We now estimate the space usage of the data structure. The data
structures $\Lb$ and $\Db$ take space $\Oh(|T|)\leq \Oh(|G|)$ and
$2^{\Oh(q^2)}\cdot |T|\leq 2^{2^{\Oh(k)}}\cdot |G|$, respectively, by
\cref{lem:navigation} and \cref{lem:torso-ds}. By
\cref{lem:total-size}, we have
$\sum_{x\in V(T)} |\bag(x)|\leq \Oh(q|G|)\leq 2^{\Oh(k)}\cdot |G|$,
hence the total size of the description of the tree decomposition
$\Tt$ is $2^{\Oh(k)}\cdot |G|$ as well. Finally, by
\cref{lem:total-size} we have
$\sum_{x\in V(T)} \|\bgraph(x)\|\leq \Oh(q^2|G|)$. Recall that
$\bgraph_\emptyset(x)$ is obtained from $\bgraph(x)$ by removing all
the children of $x$ and turning their neighborhoods into cliques. As
each of these neighborhoods has size at most $q$ and
$\sum_{x\in V(T)}|\children(x)|\leq |T|\leq |G|$, we conclude that
these operations add at most $\binom{q}{2}\cdot |G|$ edges to graphs
$\bgraph_\emptyset(x)$ in total. It follows that
$\sum_{x\in V(T)} \|\bgraph(x)\|\leq \Oh(q^2|G|)$, so storing the
graphs $\bgraph_\emptyset(x)$ takes $\Oh(q^2|G|)=2^{\Oh(k)}\cdot |G|$
space. This concludes the analysis of the space complexity.

Before continuing, we need to clarify how exactly each graph
$\bgraph_\emptyset(x)$, for $x\in V(T)$, is represented in the data
structure.
The edges of $G$ can be partitioned into {\em{original edges}} ---
those existing in $G[\bag(x)]$ --- and {\em{adhesion edges}} --- edges
$uv$ that are not present in $G[\bag(x)]$, but for which there exists
$y\in \children(x)$ with $\{u,v\}\subseteq \adh(y)$. In the latter
case we say that $y$ {\em{supports}} the adhesion edge~$uv$. The edge
set of $\bgraph_\emptyset(x)$ is stored as usual using neighborhood
lists: every vertex is supplied with a list of edges incident to
it. However, for every edge on this list we remember whether it is an
original or an adhesion edge, and in the latter case we also remember
the list of all children of $x$ supporting the edge. It is easy to see
that this representation of $\bgraph_\emptyset(x)$ can be computed and
stored within the time and space complexity claimed above.

\subsection{Implementation of queries}
\label{sec:implementation of queries}
We are left with implementing connectivity queries using the data
structure described above. Suppose we are given a query $\conn(u,v,S)$
to answer, where $u,v\in V(G)$ and $S$ is a vertex subset of size at
most $k$.  Let $\Sh\coloneqq S\cup \{u,v\}$. Note that
$|\Sh|\leq k+2$.

For $x\in V(T)$, let
$$D(x)\coloneqq \adh(x)\cup (\{u,v\}\cap \cone(x)).$$

The {\em{profile}} of $x$ is the set
$\profile(x)\subseteq\binom{D(x)}{2}$ consisting of all pairs
$\{a,b\}\in \binom{D(x)}{2}$ satisfying the following condition: in
$G[\cone(x)]$ there is an $a$-$b$ path that does not pass through any
vertex of $S$.

Let $r$ be the root of $T$. Then $D(r)=\{u,v\}$ and $\profile(r)$ is
non-empty if and only if \mbox{$G\models \conn(u,v,S)$}. Therefore, it
suffices to compute $\profile(r)$. We will do this by computing the
profiles of a selection of~$\Oh(k)$ nodes of $T$ in a bottom-up
manner.

Precisely, let
$$X\coloneqq \set{r}\cup \setof{x\in V(T)}{\mrg(x)\cap \Sh\neq \emptyset}.$$

Note that $|X|\leq k+3$ and $X$ can be computed in time $\Oh(k)$,
because every vertex of $G$ has a pointer to the node of $T$ whose
margin contains it. Further, let
$$Y\coloneqq \textrm{least common ancestor closure of }X;$$
that is, $Y$ comprises of $X$ and all least common ancestors of pairs of nodes from $X$. Then
\[|Y|\leq 2|X|-1\leq 2k+5\] and $Y$ is closed under taking least
common ancestors. Note that $Y$ can be computed in time $\Oh(k^2)$
using the data structure $\Lb$. Our goal is to compute
\[\profile(x)\quad \textrm{for each}\quad x\in Y\]
in a bottom-up manner. Each set $\profile(x)$ is stored as a list;
note that its size is at most $\binom{q+2}{2}=\Oh(q^2)$.

The next two lemmas provide the main ingredients for the
computation. Here, we say that a node $x$ of $T$ is {\em{affected}} if
$\cmp(x)$ contains a vertex of $\Sh$, and $x$ is {\em{unaffected}}
otherwise.


\begin{lemma}\label{lem:warp}
  Suppose $z$ is an ancestor of $y$ in $T$, and $\Sh$ is disjoint with
  $\cmp(z)\setminus \cone(y)$. Then given $\profile(y)$, one can
  compute $\profile(z)$ using one query to the data structure $\Db$
  and $q^{\Oh(1)}$ additional time.
\end{lemma}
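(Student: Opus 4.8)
\textbf{Proof plan for Lemma~\ref{lem:warp}.}
The plan is to reduce the computation of $\profile(z)$ to a reachability computation in a small auxiliary graph built from $\profile(y)$ and the single torso $\torso(z,y)$. First I would recall that $\torso(z,y)$ is a graph on the vertex set $\adh(z)\cup\adh(y)$ (hence at most $2q$ vertices) that encodes, via an edge $\{a,b\}$, the existence of an $a$--$b$ path in $G[\cone(z)\setminus\cmp(y)]$ whose interior avoids $\adh(z)\cup\adh(y)$. The hypothesis that $\Sh$ is disjoint with $\cmp(z)\setminus\cone(y)$ is the crucial point: it says that no vertex of $S$ (nor of $\{u,v\}$) lies strictly between $z$ and $y$ in $\Tt$, so all the ``forbidden'' vertices that a path counted by $\profile(z)$ could possibly hit are either inside $\cone(y)$ or inside $\adh(z)$. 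Therefore the part of $G[\cone(z)]$ lying outside $\cone(y)$ is fully usable, and its connectivity behavior between the relevant boundary vertices is exactly captured by $\torso(z,y)$ — with the caveat that edges of $\torso(z,y)$ incident to a vertex of $\adh(z)\cap S$ must be discarded, since such a vertex is itself forbidden.

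The key steps, in order: (1) Query $\Db$ once to obtain $\torso(z,y)$; this costs $2^{\Oh(q^2)}$ time by \cref{lem:torso-ds}. (2) Form the auxiliary graph $A$ on vertex set $\adh(z)\cup\adh(y)\cup(\{u,v\}\cap\cone(z))$ whose edge set is $\profile(y)\,\cup\,E(\torso(z,y))$, then delete every vertex lying in $S$. Here $\profile(y)$ already records, as edges, which pairs of $D(y)=\adh(y)\cup(\{u,v\}\cap\cone(y))$ are joined by an $S$-avoiding path inside $G[\cone(y)]$, so placing those edges in $A$ ``summarizes'' the lower part of the decomposition; the torso edges summarize everything between $z$ and $y$. (3) Compute the reachability relation of $A$ (a trivial BFS from each of the $\Oh(q)$ vertices) and output $\profile(z)$ as the set of pairs $\{a,b\}\in\binom{D(z)}{2}$ that are connected in $A$. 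All of step (2)--(3) takes $q^{\Oh(1)}$ time since $A$ has $\Oh(q)$ vertices.

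The correctness argument is where the real content lies, and I expect it to be the main obstacle. One direction is routine: any path witnessing an edge of $A$ can be realized in $G[\cone(z)]\setminus S$ — path segments recorded in $\profile(y)$ live in $G[\cone(y)]$ and avoid $S$ by definition, path segments recorded in $\torso(z,y)$ live in $G[\cone(z)\setminus\cmp(y)]$ and avoid $S$ because $\Sh$ is disjoint from $\cmp(z)\setminus\cone(y)$ and we have already deleted $\adh(z)\cap S$; concatenating witnesses along a walk in $A$ and short-cutting repeated vertices gives a genuine $a$--$b$ path in $G[\cone(z)]\setminus S$. The converse is the delicate step: given an $S$-avoiding $a$--$b$ path $P$ in $G[\cone(z)]$, I need to cut $P$ at its intersections with the ``interface'' $\adh(y)$ and argue that each maximal subpath is either entirely inside $\cone(y)$ — hence, being $S$-avoiding, certified by $\profile(y)$ after restricting its endpoints to $D(y)$ — or else meets $\cmp(y)$ only in a way that can be rerouted, so that it reduces to a walk in $G[\cone(z)\setminus\cmp(y)]$ between boundary vertices and is certified by $\torso(z,y)$. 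Making this decomposition clean requires care about paths that pass through $\adh(y)$ several times and about the endpoints $a,b$ themselves possibly lying in $\cmp(y)$ (in which case $a$ or $b\in\{u,v\}$ and one uses $D(y)$); since $\Tt$ is regular, any entry/exit of $\cmp(y)$ happens through $\adh(y)$, which is what licenses the cutting. I would organize this as a short claim that every such $P$ is ``composable'' from a $\torso(z,y)$-walk and finitely many $\profile(y)$-certified hops, and then conclude.
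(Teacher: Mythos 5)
Your plan is essentially the paper's proof: query $\Db$ once for $\torso(z,y)$, build a small auxiliary graph on (essentially) $D(y)\cup\adh(z)$ whose edges are $\profile(y)\cup E(\torso(z,y))$, and read off $\profile(z)$ from connectivity in that graph restricted away from $S$. The paper states the correctness claim (\cref{cl:warp-J}) without elaboration; your sketch of the two directions (including the cut-and-certify argument for the converse) is a correct unpacking of the same idea, and your minor variants — deleting $S$-vertices outright rather than avoiding them in the BFS, and taking $\{u,v\}\cap\cone(z)$ rather than $\{u,v\}\cap\cone(y)$, which coincide here since any such vertex of $\{u,v\}$ outside $\cone(y)$ must lie in $\adh(z)$ by the disjointness hypothesis — change nothing.
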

\begin{proof}
  Using a query to $\Db$, compute $\torso(z,y)$. Construct a graph $J$
  on vertex set $D(y)\cup \adh(z)$ by defining the edge set of $J$ to
  be the union of $\profile(y)$ and the edge set of
  $\torso(z,y)$. Note that $|V(J)|\leq 2q+2$ and $D(z)\subseteq
  V(J)$. The following claim is clear from the construction and the
  assumption that $\Sh$ is disjoint with $\cmp(z)\setminus \cone(y)$.

 \begin{claim}\label{cl:warp-J}
   For every pair of vertices $a,b\in V(J)$, there is an $a$-\,$b$
   path in $G[\cone(z)]$ that does not pass through any vertex of $S$
   if and only if there is such a path in the graph $J$.
 \end{claim}

 Therefore, for every pair $\{a,b\}\in \binom{D(z)}{2}$, we can find
 out whether $\{a,b\}\in \profile(z)$ using a breadth-first search in
 $J$.  Since $J$ had $\Oh(q)$ vertices and hence $\Oh(q^2)$ edges,
 this takes $\Oh(q^2)$ time per pair, hence $\Oh(q^4)$ time in total.
\end{proof}

\begin{lemma}\label{lem:aggregate}
  Let $x$ be a node of $T$. Suppose we are given the set $Z$
  comprising of all affected children of~$x$ and, for each $z\in Z$,
  we are given $\profile(z)$. Then one can compute $\profile(x)$ in
  time $q^{\Oh(1)}$.
\end{lemma}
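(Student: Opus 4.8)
\medskip
\noindent\emph{Proof plan.} The plan is to mimic the argument for totally unbreakable graphs: reduce the computation of $\profile(x)$ to a connectivity computation in a small auxiliary graph, and then invoke \cref{lem:short-BFS-combinatorial} so that this computation runs in time $q^{\Oh(1)}$ even though $\bag(x)$ itself may be arbitrarily large. Recall that the affected children of $x$ are exactly the members of $Z$, and since the sets $\cmp(z)$ over $z\in\children(x)$ are pairwise disjoint, $|Z|\le|\Sh|\le k+2$.

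First I would form an auxiliary graph $H$ on vertex set $W\coloneqq\big(\bag(x)\cup(\{u,v\}\cap\cone(x))\big)\setminus S$, whose edge set is the union of: (a) all edges of $G$ with both endpoints in $\bag(x)\setminus S$; (b) for every \emph{unaffected} child $z$ of $x$, all pairs from $\binom{\adh(z)\setminus S}{2}$; and (c) for every \emph{affected} child $z\in Z$, all pairs in $\profile(z)$. Since $\adh(z)\subseteq\bag(x)$ and $D(z)\cap\bag(x)=\adh(z)$, every type-(c) edge has its endpoints in $W$, so $H$ is well-defined; moreover $D(x)\setminus S\subseteq W$ and $H[\bag(x)\setminus S]=\bgraph_S(x)$. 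The correctness statement to prove is
\[
\profile(x)=\Big\{\{a,b\}\in\binom{D(x)}{2}\ \Big|\ a\text{ and }b\text{ lie in the same connected component of }H\Big\}.
\]
For the easy inclusion, every edge of $H$ expands to a walk in $G[\cone(x)]-S$: type-(a) edges are edges there, type-(b) edges route through the connected set $\cmp(z)$ using regularity, and type-(c) edges route inside $G[\cone(z)]-S$ by the definition of $\profile(z)$; concatenation yields a walk. For the reverse inclusion, I would take an $a$–$b$ path $P$ in $G[\cone(x)]-S$ and cut it at its visits to $\bag(x)$. Each maximal subpath of $P$ whose interior avoids $\bag(x)$ has its interior inside a single $\cmp(z)$ (distinct sets $\cmp(z)$ are non-adjacent), and its endpoints lie in $D(z)$ — in $\adh(z)$ if they lie in $\bag(x)$, or equal to $u$ or $v$ if they lie in $\cmp(z)\setminus\bag(x)$ — so the subpath certifies a type-(b) or type-(c) edge of $H$; the remaining unit-length pieces of $P$ stay in $\bag(x)\setminus S$ and are type-(a) edges. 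Assembling these gives an $a$–$b$ walk in $H$.

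It then remains to extract, in time $q^{\Oh(1)}$, the partition of $D(x)\setminus S$ into components of $H$, and this is the one real obstacle since $H$ may be huge. The leverage is \cref{lem:short-BFS-combinatorial} (applied with $S\cap\cone(x)$): the graph $\bgraph_S(x)=H[\bag(x)\setminus S]$ has at most one component with more than $q$ vertices. Accordingly, from each $a\in D(x)\setminus S$ I would run a breadth-first search in $H$ that aborts the moment more than $q+2$ distinct vertices have been reached. Each such search costs $q^{\Oh(1)}$: there are at most $|\Sh|\le k+2$ affected children, and the stored representation of $\bgraph_\emptyset(x)$ records for every adhesion edge the list of children supporting it, so the incidences of $H$ at a visited vertex (which adhesion edges survive — namely those supported by some unaffected child — and which edges come from the profiles $\profile(z)$) can be computed locally; while scanning the adjacency list of a visited vertex one may stop as soon as the running count of visited vertices exceeds $q+2$, and since at most $q+2$ vertices are ever visited, at most $\Oh(q)$ list entries are examined per vertex before the scan finishes or the search aborts. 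Finally, if the search from $a$ aborts then, deleting $u$ and $v$, its reached set still contains more than $q$ vertices of $\bag(x)\setminus S$ that are pairwise connected in $\bgraph_S(x)$ — a vertex of $H$ adjacent to two members $w_1,w_2$ of some $\adh(z)$ forces $\{w_1,w_2\}\in\profile(z)$, so detours through $u$, $v$, or through $a$ itself create no new connections among $\bag(x)$-vertices — hence $a$ lies in the unique large component of $\bgraph_S(x)$. Therefore, if the searches from both $a$ and $b$ abort then $\{a,b\}\in\profile(x)$; otherwise at least one search terminated having explored the whole ($\le q+2$-vertex) component of its source, and we test whether the other endpoint lies in it. Running this for all $a\in D(x)\setminus S$ and all $\le\binom{q+2}{2}$ pairs produces $\profile(x)$ in total time $q^{\Oh(1)}$; the delicate point throughout is the interplay between truncating each search and the unbreakability bound, everything else being bookkeeping on the stored bag graph.
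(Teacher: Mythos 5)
Your proposal is correct and follows essentially the same strategy as the paper: truncated breadth-first search justified by \cref{lem:short-BFS-combinatorial}, with neighborhood lists emulated from the stored representation of $\bgraph_\emptyset(x)$. The only organizational difference is that you fold the special vertices $u,v$ (when they lie in some $\cmp(z)$ below the bag) into a single auxiliary graph $H$ and run the BFS there, whereas the paper first decides connectivity inside $\bgraph_S(x)$ and then handles pairs involving $u$ or $v$ by explicit case analysis through $\profile(z)$; this is a harmless stylistic variant. One small imprecision: the claim that ``at most $\Oh(q)$ list entries are examined per vertex'' is too strong — a single vertex can be incident to up to $k+(k+2)\binom{q}{2}=q^{\Oh(1)}$ adhesion edges that get cut — but the correct bound $q^{\Oh(1)}$ per vertex (and in total) still holds, exactly as in the paper's Claim~5.5, so the stated overall running time is unaffected.
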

Note that this statement holds in particular when $x$ has no affected
children.
\begin{proof}
  Note that since the components at the children of $Z$ are pairwise
  disjoint, we have $$|Z|\leq |\Sh|\leq k+2.$$

  Recall that the data structure contains the $\emptyset$-restricted
  bag graph $\bgraph_\emptyset(x)$, represented as described at the
  end of \cref{sec:ds-description}. We now show that we can emulate
  the neighborhood lists of the $S$-restricted bag graph
  $\bgraph_S(x)$ using this representation.

  \begin{claim}\label{cl:emulation}
    Given $a\in \bag(x)\setminus S$, one can in time $q^{\Oh(1)}$
    either conclude that the degree of $a$ in $\bgraph_S(x)$ is larger
    than $q$, or list all the neighbors of $a$ in $\bgraph_S(x)$.
  \end{claim}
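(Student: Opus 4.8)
The plan is to show that the neighbourhood lists of $\bgraph_S(x)$ at a vertex $a\in\bag(x)\setminus S$ can be emulated from the stored representation of $\bgraph_\emptyset(x)$ together with the profiles of the affected children of $x$. First I would record the easy structural fact that every edge $ab$ of $\bgraph_S(x)$ is also an edge of $\bgraph_\emptyset(x)$: original edges of $G$ survive in both graphs, and an edge created from an adhesion $\adh(y)$ is present in $\bgraph_\emptyset(x)$ because there $\adh(y)$ was turned into a clique. Hence the $\bgraph_S(x)$-neighbourhood of $a$ is contained in $a$'s stored $\bgraph_\emptyset(x)$-neighbourhood list, and it suffices to filter that list. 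I would then spell out the filtering criterion: for a list entry $b$, the pair $ab$ is an edge of $\bgraph_S(x)$ if and only if $b\notin S$ and either the stored tag of $ab$ is ``original'', or the tag is ``adhesion'' and some child $y$ supporting $ab$ (i.e.\ with $a,b\in\adh(y)$, recorded with the edge) admits an $a$-$b$ path in $G[\cone(y)]$ avoiding $S$. Using regularity of $\Tt$ — precisely, that for distinct $u,v\in\adh(y)$ there is a $u$-$v$ path with interior in $\cmp(y)$ — together with the fact that an unaffected child $y$ satisfies $\cmp(y)\cap\Sh=\emptyset$ and hence meets $\Sh$ only inside $\adh(y)$, this last condition becomes equivalent to: $ab$ has an unaffected supporting child, or it has a supporting child $y\in Z$ with $\{a,b\}\in\profile(y)$.

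Next I would argue that this criterion can be tested in time $q^{\Oh(1)}$ per list entry $b$. Reading the tag and checking $b\notin S$ is immediate; for the ``affected'' alternative one iterates over the at most $k+2$ nodes $z\in Z$, tests whether $a,b\in\adh(z)$ and, if so, whether $\{a,b\}\in\profile(z)$ (each profile has $\Oh(q^2)$ pairs); for the ``unaffected'' alternative one inspects the stored list of supporting children of $ab$ — if it has more than $k+2$ entries then, as $|Z|\le k+2$, one of them is necessarily unaffected, and otherwise one tests each of its $\le k+2$ entries for membership in $Z$. The key observation making this exhaustive is that if all supporting children of $ab$ lie in $Z$ and none of them was found to contain $\{a,b\}$ in its profile, then $ab$ is \emph{not} an edge of $\bgraph_S(x)$.

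Finally, to bound the total running time I need to control how far down the list of $a$ the scan must go. Call a list entry $b$ \emph{wasted} if it is not a $\bgraph_S(x)$-neighbour of $a$; by the criterion, a wasted $b$ either lies in $S$ (at most $|S|\le k$ of these), or $ab$ is an adhesion edge all of whose supporting children lie in $Z$, so that $a,b\in\adh(z)$ for some $z\in Z$, which bounds the number of such $b$ by $\sum_{z\in Z,\,a\in\adh(z)}(|\adh(z)|-1)\le(k+2)(q-1)$. Hence the number of wasted entries is at most $W:=k+(k+2)(q-1)=q^{\Oh(1)}$ (using $k\le q$). So the procedure scans $a$'s $\bgraph_\emptyset(x)$-neighbourhood list, applies the per-entry test, and halts as soon as either $q+1$ genuine neighbours have been collected — in which case the degree of $a$ in $\bgraph_S(x)$ exceeds $q$ — or $q+1+W$ entries have been processed (or the list is exhausted earlier), in which case, since at most $W$ of the processed entries are wasted, all genuine neighbours have been found and there are at most $q$ of them. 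The stopping rule is correct precisely because among any $q+1+W$ entries at most $W$ are wasted, and the overall time is $(q+1+W)\cdot q^{\Oh(1)}=q^{\Oh(1)}$. I expect the only delicate point to be the reduction of the ``$a$-$b$ path in $G[\cone(y)]$ avoiding $S$'' test to unaffectedness of $y$ or a profile lookup, which rests on the regularity of $\Tt$ and on unaffected children meeting $S$ only inside their adhesions.
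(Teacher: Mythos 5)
Your proof is correct and follows essentially the same strategy as the paper's: scan $a$'s neighbourhood list in $\bgraph_\emptyset(x)$, decide membership of each edge in $\bgraph_S(x)$ via the tag plus the supporting-children/profile test, and bound the number of ``wasted'' entries to show that either $q+1$ genuine neighbours appear quickly or the whole list is short. The only differences are cosmetic: you count wasted entries as $(k+2)(q-1)$ (only the pairs incident to $a$ in each $\adh(z)$) instead of the paper's looser $(k+2)\binom{q}{2}$, and you phrase the halting rule as a running budget rather than a one-shot degree threshold; both yield the same $q^{\Oh(1)}$ bound. One small point of phrasing: the ``$q+1+W$ entries processed'' alternative in your stopping rule cannot actually trigger, since among any $q+1+W$ scanned entries at most $W$ are wasted and hence at least $q+1$ are genuine, so the first alternative would have fired already; the scan therefore terminates either by finding $q+1$ genuine neighbours or by exhausting the list, and in the latter case it has processed fewer than $q+1+W$ entries. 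The mathematics is fine — this is just a matter of stating the two exit conditions cleanly.
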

  \begin{claimproof}
    Note that $\bgraph_S(x)$ is a subgraph of
    $\bgraph_\emptyset(x)$. Therefore, it suffices to iterate over the
    neighborhood list of $a$ in $\bgraph_\emptyset(x)$ and for each
    edge $ab$ on this list, decide whether it is present in
    $\bgraph_S(x)$.

    Note that the edge $ab$ is not present in $\bgraph_S(x)$ if and
    only if at least one of the following conditions holds:
    \begin{itemize}
    \item $b\in S$; or
    \item $ab$ is an adhesion edge and for every node $z$ supporting
      $ab$, we have $z\in Z$ and $\{a,b\}\notin \profile(z)$.
    \end{itemize}

    We observe that the second condition can be checked in time
    $q^{\Oh(1)}$. Indeed, since $|Z|\leq k+2$, if there are more than
    $k+2$ nodes supporting $ab$, then the condition must be
    false. Otherwise we can iterate through the list of all nodes $z$
    supporting $ab$ (stored along with $ab$ in the representation of
    $\bgraph_\emptyset(x)$), and for each of them check whether it
    belongs to $Z$ and satisfies $\{a,b\}\notin
    \profile(z)$. Implemented in a straightforward way, this takes
    time polynomial in $q$.

    Hence, for every particular edge $ab$ present in
    $\bgraph_\emptyset(x)$ we can decide whether it is present also in
    $\bgraph_S(x)$ in time $q^{\Oh(1)}$. The next observation is that
    only for at most $(k+2)\binom{q}{2}$ edges this check will yield a
    negative answer. Indeed, at most $|S|\leq k$ edges $ab$ may not be
    present in $\bgraph_S(x)$ due to $b\in S$, and every $z\in Z$
    supports at most $\binom{|\adh(z)|}{2}\leq \binom{q}{2}$ edges,
    resulting in at most
    $|Z|\cdot \binom{q}{2}\leq k+(k+2)\binom{q}{2}$ edges $ab$ not
    being present in $\bgraph_S(x)$ due to the second
    condition. Therefore, if the degree of $a$ in
    $\bgraph_\emptyset(G)$ is larger than $q+k+(k+2)\binom{q}{2}$,
    then its degree in $\bgraph_S(G)$ must be larger than $q$, and we
    may report this outcome without analyzing the neighborhood list of
    $a$. Otherwise, we process all the $\Oh(q^2k)$ edges incident to
    $a$ in $\bgraph_\emptyset(G)$, and for each of them we decide
    whether it is present in $\bgraph_S(G)$ in time $q^{\Oh(1)}$. This
    results in listing all edges incident to $a$ in $\bgraph_S(G)$ in
    time $q^{\Oh(1)}$.
  \end{claimproof}

  The next claim is the key observation: connectivity in
  $\bgraph_S(x)$ can be decided in time $q^{\Oh(1)}$, and the crux is
  to use \cref{lem:short-BFS-combinatorial}.

 \begin{claim}\label{cl:short-bfs}
   Given $a,b\in \bag(x)\setminus S$, one can decide whether $a$ and
   $b$ are connected in $\bgraph_S(x)$ in time $q^{\Oh(1)}$.
 \end{claim}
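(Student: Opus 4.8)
The plan is to prove \cref{cl:short-bfs} by running two truncated breadth-first searches in $\bgraph_S(x)$ — one from $a$ and one from $b$ — each allowed to discover at most $q+1$ distinct vertices, and then to settle the one remaining case using the unbreakability guarantee of \cref{lem:short-BFS-combinatorial}.

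First I would specify a single truncated search, say from $a$. We never materialize $\bgraph_S(x)$ in full; instead, whenever the search pops a vertex $c$ and needs its neighborhood, we invoke \cref{cl:emulation}. That call either returns, in time $q^{\Oh(1)}$, the list of neighbors of $c$ in $\bgraph_S(x)$, or it reports that $c$ has degree larger than $q$ there. In the latter case the component of $a$ has more than $q$ vertices, so we stop the search and record the outcome ``component large''. Otherwise we continue the usual BFS, enqueueing undiscovered neighbors; we also stop (with outcome ``component large'') as soon as the number of discovered vertices exceeds $q$, and we stop with outcome ``$b$ found'' if $b$ is discovered. Since at most $q+1$ vertices are ever discovered, at most $q+1$ calls to \cref{cl:emulation} are made and the search runs in time $q^{\Oh(1)}$; an entirely symmetric search is run from $b$.

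Next I would combine the outcomes. Run the search from $a$. If it reports ``$b$ found'', output that $a$ and $b$ are connected. If it terminates without ``$b$ found'' and without ``component large'', then it has explored the entire connected component $C_a$ of $a$ in $\bgraph_S(x)$; this component has at most $q$ vertices and does not contain $b$, so output ``disconnected''. Otherwise we know only that $|V(C_a)|>q$, so run the search from $b$: if it reports ``$a$ found'', output ``connected''; if it terminates having explored all of $C_b$ with $|V(C_b)|\le q$ and $a\notin C_b$, output ``disconnected''. The single remaining case is that $|V(C_a)|>q$ and $|V(C_b)|>q$; then by the second bullet of \cref{lem:short-BFS-combinatorial} the vertices $a$ and $b$ cannot lie in different components of $\bgraph_S(x)$, so output ``connected''. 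The total running time is dominated by the two truncated searches, hence $q^{\Oh(1)}$.

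The delicate point will be the case analysis in the combination step: one must check that every ``disconnected'' verdict is backed by having fully explored a component of size at most $q$ that misses the target, and that every ``connected'' verdict is backed either by an actual path found during a search or by both components being certified to have more than $q$ vertices. The latter situation is precisely where strong unbreakability of $\Tt$ is invoked through \cref{lem:short-BFS-combinatorial}, and it is the only place in this proof where that hypothesis is used.
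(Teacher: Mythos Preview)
Your proposal is correct and takes essentially the same approach as the paper: two truncated breadth-first searches from $a$ and $b$, each capped at $q+1$ discovered vertices and driven by \cref{cl:emulation}, with the residual ``both components large'' case resolved via \cref{lem:short-BFS-combinatorial}. The paper's version is organized slightly differently (it always runs both searches and declares ``connected'' unless one of them exhausts its component without hitting the target), but the logic and running-time analysis are the same.
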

 \pagebreak
 \begin{claimproof}
   By \cref{lem:short-BFS-combinatorial}, the following algorithm
   solves this problem:
   \begin{itemize}
   \item Run a breadth-first search from $a$ in $\bgraph_S(x)$,
     terminating the search whenever any of the following conditions
     is met (by {\em{reaching}} a vertex we mean that it is added to
     the bfs queue):
     \begin{itemize}
     \item $b$ has been reached;
     \item a vertex of degree larger than $q$ in $\bgraph_S(x)$ is
       popped from the queue;
     \item in total, the search has reached $q+1$ vertices.
     \end{itemize}
     If the search terminated due to the queue being empty before any
     of the conditions above is met, conclude that $a$ and $b$ are not
     connected in $\bgraph_S(x)$.
   \item Do the same with the roles of $a$ and $b$ swapped.
   \item If neither of the steps above reached the conclusion that $a$
     and $b$ are not connected in $\bgraph_S(x)$, then conclude that
     $a$ and $b$ are connected in $\bgraph_S(x)$.
   \end{itemize}
   Note here that the second termination condition stated above ---
   when a vertex of degree larger than $q$ is popped from the queue
   --- is also correct for the following reason: such a vertex
   witnesses that the connected component of $\bgraph_S(x)$ on which
   the search is employed has more than $q$ vertices.

   Now it is straightforward to see that the algorithm presented above
   can be implemented in time~$q^{\Oh(1)}$ with the help of
   \cref{cl:emulation}. Namely, when in a breadth-first search a
   vertex $w$ is popped from the queue, we need to iterate through all
   the neighbors of $w$ and add them to the queue provided they were
   not reached before. By \cref{cl:emulation}, we can either list
   those neighbors in time $q^{\Oh(1)}$ or conclude that the search
   can be immediately terminated. Since the search will not reach more
   than $q+1$ vertices in total, for each of those neighbors we can
   check whether it was already reached in time $\Oh(q)$. At most
   $q+1$ vertices will be processed in this way, resulting in the
   total time complexity of $q^{\Oh(1)}$.
 \end{claimproof}

 Now observe that for every pair
 $\{a,b\}\in \binom{D(x)\cap \bag(x)}{2}$, $\{a,b\}$ should be
 included in $\profile(x)$ if and only if $a$ and $b$ are connected in
 $\bgraph_S(x)$; this follows from
 \cref{lem:short-BFS-combinatorial}. We can use \cref{cl:short-bfs} to
 decide this in time $q^{\Oh(1)}$ per pair, so also also in time
 $q^{\Oh(1)}$ for all such pairs in total.

 We are left with making the same decision for pairs
 $\{a,b\}\in \binom{D(x)}{2}$ where either $a$ or $b$ is contained in
 $D(x)\setminus \bag(x)$. Noting that
 $D(x)\setminus \bag(x)\subseteq \{u,v\}$, by symmetry suppose that
 $a=u$ and $u\notin \bag(x)$. Let $z$ be the child of $x$ satisfying
 $u\in \cmp(z)$. Note that $z=\dir(x,y)$ where $y$ is such that
 $u\in \mrg(y)$, hence $z$ can be computed in time $\Oh(1)$ using a
 query to $\Lb$.

 Suppose first that $b\in \bag(x)$. Then $\{u,b\}$ should be included
 in $\profile(x)$ if and only if there exists $w\in D(z)$ such that
 $\{u,w\}\in \profile(z)$ and $w,b$ are connected in
 $\bgraph_S(x)$. This condition can be checked in time $q^{\Oh(1)}$
 for each particular $w\in \adh(z)$ using the algorithm of
 \cref{cl:short-bfs}, hence every pair $\{u,b\}$ as above can be
 decided in time $q^{\Oh(1)}$. There are $\Oh(q)$ vertices
 $b\in D(x)\cap \bag(x)$ to check, hence these pairs can be decided in
 total time $q^{\Oh(1)}$.

 Finally, we are left with the case when $b=v$ and $v\notin
 \bag(x)$. Let $z'$ be the child of $x$ such that $v\in \cmp(z')$;
 similarly as before, we can find $z'$ in time $\Oh(1)$ using
 $\Lb$. Then $\{u,v\}$ should be included in $\profile(x)$ if and only
 if at least one of the following conditions is satisfied:
 \begin{itemize}
 \item $z=z'$ and $\{u,v\}\in \profile(z)$.
 \item There exist $w\in \adh(z)$ and $w'\in \adh(z')$ such that
   $\{u,w\}\in \profile(z)$, $\{w',v\}\in \profile(z')$, and $w,w'$
   are connected in $\bgraph_S(x)$.
 \end{itemize}

 The first condition can be checked in time $\Oh(q^2)$, while the
 second can be checked in time $q^{\Oh(1)}$ by applying the algorithm
 of \cref{cl:short-bfs} for every pair
 $(w,w')\in \adh(z)\times \adh(z')$.

 Altogether, we have decided which pairs of $\binom{D(x)}{2}$ should be
 included in $\profile(x)$ in time $q^{\Oh(1)}$ and the proof is
 complete.
\end{proof}

We may now implement the $\conn(u,v,S)$ query using \cref{lem:warp}
and~\ref{lem:aggregate}. Let $T_Y$ be the tree on node set $Y$ with
the ancestor order inherited from $T$; that is, $x$ is the parent of
$y$ in $T_Y$ if $x$ is a strict ancestor of $y$ in $T$ and the
interior of the $x$-$y$ path in $T$ is disjoint with $Y$. Note that
$T_Y$ can be constructed in time $\Oh(k^2)$ using the data structure
$\Lb$.

Now, we compute $\profile(x)$ for all $x\in Y$ in a bottom-up manner
over $T_Y$. Let $y_1,\ldots,y_\ell$ be the children of $x$ in $T_Y$
(possibly $\ell=0$ if $x$ is a leaf of $T_Y$). For each
$i\in \{1,\ldots,\ell\}$, let
$$z_i\coloneqq \dir(x,y_i).$$

Note that each $z_i$ can be found in time $\Oh(1)$ using a query to
$\Lb$, hence computing $z_1,\ldots,z_\ell$ takes time~$\Oh(k)$. The
following claim follows immediately from taking $Y$ to be the least
common ancestor closure of~$X$.

\begin{claim}
  For each $i\in \{1,\ldots,\ell\}$, $\Sh$ is disjoint with
  $\cmp(z_i)\setminus \cone(y_i)$.
\end{claim}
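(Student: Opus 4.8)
The plan is to argue by contradiction, using the two defining features of $Y$: it contains $X$, hence every node whose margin meets $\Sh$, and it is closed under taking least common ancestors. Fix $i\in\{1,\dots,\ell\}$ and suppose, towards a contradiction, that some vertex $u\in\Sh$ lies in $\cmp(z_i)\setminus\cone(y_i)$.

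First I would locate $u$ in the decomposition. Since the margins partition $V(G)$ and $\cmp(z_i)=\bigcup_{w\succeq_T z_i}\mrg(w)$, there is a unique node $w$ with $z_i\preceq_T w$ and $u\in\mrg(w)$. As $u\in\Sh$, we have $\mrg(w)\cap\Sh\neq\emptyset$, hence $w\in X\subseteq Y$. On the other hand, $u\notin\cone(y_i)=\bigcup_{w'\succeq_T y_i}\bag(w')$, while $\mrg(w)\subseteq\bag(w)$; therefore $y_i$ is \emph{not} an ancestor of $w$, since otherwise $u\in\bag(w)\subseteq\cone(y_i)$.

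Then I would bring in the least common ancestor $v\coloneqq\lca(w,y_i)$. Both $w$ and $y_i$ are descendants of $z_i$ (for $y_i$ this is by the definition of $z_i=\dir(x,y_i)$), so $z_i\preceq_T v$; and since $x$ is the parent of $z_i$ in $T$, this yields $x\prec_T v$. Moreover $v\preceq_T y_i$, and $v\neq y_i$ because $v=y_i$ would mean $y_i\preceq_T w$, i.e.\ that $y_i$ is an ancestor of $w$; hence $v\prec_T y_i$. Thus $v$ is an internal node of the $x$-$y_i$ path in $T$. But $w,y_i\in Y$ and $Y$ is closed under taking least common ancestors, so $v\in Y$. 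This contradicts the fact that $y_i$ is a child of $x$ in $T_Y$, which by definition means that the interior of the $x$-$y_i$ path in $T$ is disjoint from $Y$.

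The only delicate point is the placement of $w$ relative to $y_i$: the hypothesis $u\notin\cone(y_i)$ is precisely what excludes $y_i\preceq_T w$, and with that in hand it is routine to check that $v=\lca(w,y_i)$ lies strictly between $x$ and $y_i$ rather than coinciding with an endpoint. Everything else is just unwinding the definitions of $\cone$, $\cmp$, $\mrg$ and of the lowest common ancestor closure $Y$, so I do not anticipate any genuine obstacle here.
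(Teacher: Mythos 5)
Your proof is correct, and it takes the same route the paper intends — the paper merely asserts the claim "follows immediately from taking $Y$ to be the least common ancestor closure of $X$," and your argument is the standard unwinding of that remark: any $u\in\Sh\cap\cmp(z_i)$ lands in the margin of some $w\in X$ below $z_i$, the hypothesis $u\notin\cone(y_i)$ forces $y_i\not\preceq_T w$, and then $\lca(w,y_i)$ sits strictly between $x$ and $y_i$ while lying in $Y$ (using that an lca closure is closed under lca), contradicting that $y_i$ is a child of $x$ in $T_Y$.
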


Note that the profiles $\profile(y_i)$ for $i\in \{1,\ldots,\ell\}$ have
already been computed before. Therefore, we may apply \cref{lem:warp}
$\ell$ times to compute the profiles $\profile(z_i)$ for
$i\in \{1,\ldots,\ell\}$, each using one query to $\Db$ and
$q^{\Oh(1)}$ additional time.

Finally, observe that $z_1,\ldots,z_\ell$ are precisely the affected
children of $x$. Therefore, we may apply \cref{lem:aggregate} to
compute $\profile(x)$ in time $q^{\Oh(1)}$. Remember that this works in
particular when $x$ is a leaf of $T_Y$ and has no affected children.

At the end we compute $\profile(r)$ for $r$ being the root of $T$. As
we argued, $G\models\conn(u,v,S)$ if and only if $\profile(r)$ is
non-empty, hence we may now answer the query.

As for the time complexity, the whole procedure uses $\Oh(k)$ queries
to $\Db$ and $q^{\Oh(1)}$ additional time. Since every query to $\Db$
takes time $2^{\Oh(q^2)}$, by \cref{lem:torso-ds}, and $q=2^{\Oh(k)}$,
the total running time is bounded by a doubly-exponential function of
$k$, as promised. This concludes the proof of
\cref{thm:general-queries}.


\section{Torso queries}\label{sec:torso-queries}

In this section we give a proof of \cref{lem:torso-ds}. As mentioned,
the main idea is to reinterpret torso queries as queries about
products along paths in a tree labeled with elements of a fixed finite
semigroup, and then design a data structure for this more general kind
of queries. We start by introducing the setting of semigroup-labeled
trees and showing how torso queries fit into this setting. Then we
design a data structure for path product queries in semigroup-labeled
trees. We believe that this data structure is a general tool of
independent interest, hence we actually give two different proofs. The
first one is based on a classic approach using deterministic Simon's
factorization of Colcombet~\cite{Colcombet07}, and can be viewed as a
reinterpretation of ideas from~\cite{Colcombet07}. The second is a
self-contained, simpler argument that seems new.

\paragraph*{Semigroup-labeled trees and path product queries.}
Recall that a semigroup $S$ consists of a set of elements, called the
{\em{support}} and also denoted by $S$, and a binary associative
operation on the support called the {\em{product}} and denoted
$\cdot$. A semigroup is finite if it has finite support.

For a semigroup $S$, an {\em{$S$-labeled tree}} is a rooted tree $T$ together with a map that with each edge~$e$ of~$T$ associates an element $\sigma(e)\in S$. Suppose $x,y$ are nodes of $T$, where $x$ is a strict ancestor of $y$. Then we define
\[\sigma(x,y)\coloneqq \sigma(e_1)\cdot \sigma(e_2)\cdot\ldots\cdot\sigma(e_\ell),\]
where $e_1,\ldots,e_\ell$ are the consecutive edges of the path from
$x$ to $y$ in $T$ (that is, $e_1$ is incident with $x$ and $e_\ell$ is
incident with $y$).

The following theorem is the main result of this section: it provides
an efficient data structure for path product queries, parameterized by
the size of the semigroup.

\begin{theorem}\label{thm:colcombet-ds}
  Fix a finite semigroup $S$ and suppose we are given an $S$-labeled
  tree $T$. Then one can in time $|S|^{\Oh(1)}\cdot |T|$ construct a
  data structure that can answer the following query in time
  $|S|^{\Oh(1)}$ and using $|S|^{\Oh(1)}$ operations in $S$: given
  nodes $x,y\in V(T)$, where $x$ is a strict ancestor of $y$, output
  $\sigma(x,y)$. The data structure uses $|S|^{\Oh(1)}\cdot |T|$
  memory and stores $|S|^{\Oh(1)}\cdot |T|$ elements of $S$.
\end{theorem}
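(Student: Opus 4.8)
The plan is to reduce path-product queries to a combination of two simpler primitives: (i) fast computation of products along root-to-node paths, and (ii) a ``jump pointer'' / ancestor-decomposition scheme that lets us express an arbitrary ancestor-to-descendant path as a bounded-length concatenation of precomputed path products. The most natural route is to imitate the heavy-path (or more precisely, a balanced separator / centroid-path) decomposition of the tree, exactly as in the deterministic Simon's factorization of Colcombet~\cite{Colcombet07}, but here I will sketch the self-contained argument, which seems cleaner in this setting.

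First I would set up a recursive decomposition of $T$ into \emph{spines}. Pick a path $P$ from some node $x_0$ down to a leaf such that removing the internal structure leaves each hanging subtree with at most a constant fraction of the nodes of the current subtree; repeat recursively inside every hanging subtree. This produces, for every node $y$, a sequence of at most $\Oh(\log |T|)$ spines $P_1,\dots,P_m$ visited on the way from the root to $y$, together with the precise entry and exit nodes on each spine. For a query $(x,y)$ with $x$ a strict ancestor of $y$, the $x$--$y$ path decomposes into: a suffix of the spine containing $x$ (from $x$ down to where the path leaves that spine), then full or partial traversals of the spines between them, then a prefix of the spine containing $y$. On each individual spine, a sub-path product is a product of a contiguous block of edge labels; I precompute, for every spine, prefix products from the spine's top, and then any contiguous block product is recovered as... well, \emph{not} as a quotient, since semigroups need not have inverses --- so instead I precompute block products over a hierarchy of dyadic intervals on each spine (a sparse-table / segment-tree style structure), so that any contiguous block is the product of $\Oh(\log |T|)$ precomputed blocks. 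Combined with the $\Oh(\log |T|)$ spines crossed, a query costs $\Oh(\log^2 |T|)$ semigroup operations; to get it down to $|S|^{\Oh(1)}$ (constant, for fixed $S$), I would apply the standard idempotent-power trick from Simon's factorization: on long monochromatic stretches the product stabilizes after $|S|$ steps, so one can contract each dyadic level using the fact that $S$ has boundedly many ``types'', collapsing the logarithmic factors into a constant depending only on $|S|$ --- this is precisely the content one borrows from~\cite{Colcombet07}, and for the self-contained version one argues directly that the recursion depth can be taken to be $\Oh(|S|)$ rather than $\Oh(\log|T|)$ by contracting along idempotents.

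The bookkeeping steps are routine: building the spine decomposition takes $\Oh(|T|)$ time; the prefix/dyadic-block products on all spines take $\Oh(|T|)$ semigroup multiplications in total (each edge contributes to $\Oh(\log |T|)$ dyadic blocks, or $\Oh(|S|)$ after contraction), hence $|S|^{\Oh(1)}\cdot |T|$ time and $|S|^{\Oh(1)}\cdot |T|$ stored elements of $S$; and for each node $y$ we store its list of (spine, entry, exit) triples, which is $\Oh(\log|T|)$ words, again within the claimed bounds. Navigation within the query --- identifying the common spine, the entry/exit points, and assembling the $\Oh(1)$ dyadic blocks --- is done with the usual ancestor/depth arrays and takes $|S|^{\Oh(1)}$ time.

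The main obstacle, and the only genuinely non-formulaic part, is getting from the easy $\Oh(\log^2|T|)$-operation query bound to a \emph{constant} (i.e.\ $|S|^{\Oh(1)}$) number of operations, since a semigroup admits no cancellation and one cannot simply take differences of prefix products. This is exactly where Simon's factorization enters: one must argue that a sufficiently deep product over a ``uniform'' block is governed by an idempotent $e=e\cdot e$ in $S$, so that arbitrarily long such blocks can be replaced by a single stored element, collapsing the depth of the decomposition to a function of $|S|$ alone. I would either cite the deterministic construction of Colcombet~\cite{Colcombet07} directly to obtain a factorization forest of height $\Oh(|S|)$ computable in linear time, and then note that path products in a bounded-height factorization forest are trivially answerable in $\Oh(|S|)$ operations, or give the direct inductive argument promised in the introduction. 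Everything else is standard tree-navigation plumbing.
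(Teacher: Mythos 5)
Your proposal contains a genuine gap, and it also starts from a decomposition that does not actually lead to the result.

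You open with a heavy-path/centroid-spine decomposition plus dyadic (sparse-table) interval products on each spine, arriving at an $\Oh(\log^2|T|)$-operation query, and then claim the logarithmic factors can be ``collapsed'' via ``the standard idempotent-power trick from Simon's factorization.'' This collapse is not a postprocessing step that can be grafted onto a spine decomposition: the dyadic blocks on a spine are determined purely by tree geometry, while Simon's factorization requires that the blocks be chosen \emph{according to the semigroup values themselves} so that products between consecutive blocks at the same level are governed by an idempotent. These two block structures are in general incompatible, so ``contract each dyadic level using the fact that $S$ has boundedly many types'' is not a valid inference; you would have to discard the dyadic layering entirely and replace it with a factorization tailored to the edge labels. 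Similarly, the claim that once one has a factorization forest of height $\Oh(|S|)$, path products are ``trivially answerable in $\Oh(|S|)$ operations'' hides the actual content: one must identify, for each level of the forest, the topmost and bottommost nodes of that level on the query path, exploit forward-Ramseyanity to absorb the middle stretch into a single precomputed value, and recurse; this is exactly the nontrivial query algorithm, and you do not give it. You do correctly flag the gap and name Colcombet's theorem as the ingredient that would fill it, but naming the tool is not supplying the argument.

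For comparison, the paper gives two proofs, neither of which resembles your spine setup. The proof you gesture at is the second one: it precomputes a forward Ramsey split of height $\le |S|$, sets up least-common-ancestor and topmost-node-of-given-color structures, and answers a query by locating the topmost node $z$ of minimum split value on the path, the next node $z'$ of the same value, and the stored ancestor $z''$, then uses forward-Ramseyanity to write $\sigma(x,y)=\sigma(x,z)\cdot\sigma(z,z')\cdot\sigma(z',z'')\cdot\sigma(z'',y)$ with a recursion of depth $\le|S|$. The first proof is genuinely different from anything you propose: it embeds $S$ into the functional semigroup $F_S$ by right multiplication, constructs top-down colorings $c_v\colon X\to[|X|]$ that can only decrease along parent-to-child edges, and answers a query by following a single color and jumping only at the $\le|X|$ positions where that color drops; this avoids Simon's factorization entirely. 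Your plan would need to be rebuilt on one of these two foundations rather than on heavy paths and dyadic intervals.
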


Later we give two proofs of \cref{thm:colcombet-ds}, but let us see
now how torso queries can be understood through path product queries
in a semigroup-labeled tree. That is, we are going to prove
\cref{lem:torso-ds} assuming \cref{thm:colcombet-ds}.

\paragraph*{Torso queries as path product queries.} Assume the setting
from the statement of \cref{lem:torso-ds}. We interpret torso queries
as path product queries over trees labeled with a finite semigroup of
size $2^{\Oh(q^2)}$. For this, we shall use the semigroup of
{\em{bi-interface graphs}}, introduced for a similar purpose
in~\cite{BojanczykP16}.

Let $r\in \N$. A {\em{bi-interface graph}} of arity $r$ consists of a
graph $G$ and two partial mappings $\lft,\rgt\colon V(G)\partto [r]$
that are injective and satisfy the following property: if
$u\in \dom{\lft}\cap \dom{\rgt}$, then $\lft(u)=\rgt(u)$. The domains
of the mappings $\dom{\lft}$ and $\dom{\rgt}$ are called the {\em{left
    interface}} and the {\em{right interface}}, respectively.

A bi-interface graph is {\em{basic}} if its vertex set is a subset of
$[r]\times \{\Lsf,\Rsf,\Ssf\}$,
$\dom{\lft}\setminus \dom{\rgt}\subseteq [r]\times \{\Lsf\}$,
$\dom{\rgt}\setminus \dom{\lft}\subseteq [r]\times \{\Rsf\}$,
$\dom{\lft}\cap \dom{\rgt}\subseteq [r]\times \{\Ssf\}$, and
$V(G)=\dom{\lft} \cup \dom{\rgt}$. By $\Basic_r$ we denote the set of
all basic bi-interface graphs of arity $r$. Note that
$$|\Basic_r|\leq 2^{\Oh(r^2)},$$
as this is an upper bound on the total number of different graphs on
vertex set $[r]\times \{\Lsf,\Rsf,\Ssf\}$.

Next, we define the {\em{abstraction operation}}, which takes a
bi-interface graph $\Gb=(G,\lft,\rgt)$ and outputs the basic
bi-interface graph $\abstr{\Gb}$, of same arity, defined as follows:
\begin{itemize}
\item First, for a vertex $u\in \dom{\lft}\cup \dom{\rgt}$, define
  $\iota(u)$ as
 $$ \iota(u) :=
 \begin{cases}
  (\lft(u),\Lsf) &\text{if } u\in \dom\lft\setminus \dom{\rgt},\\
  (\rgt(u),\Rsf) &\text{if } u\in \dom\rgt\setminus \dom{\lft}, \text{and}\\
  (\lft(u),\Ssf)=(\rgt(u),\Ssf) &\text{if } u\in \dom\lft\cap \dom{\rgt}.
 \end{cases}
 $$

\item The vertex set of $\abstr{\Gb}$ is
  $\{\iota(u)\colon u\in \dom\lft \cup \dom{\rgt}\}$.
\item In $\abstr{\Gb}$ there is an edge between $\iota(u)$ and
  $\iota(v)$ if and only if in $G$ there is a path from $u$ to $v$
  that does not pass through any other vertex of
  $\dom{\lft}\cup \dom{\rgt}$.
\item The interface mappings $\lft',\rgt'$ of $\abstr{\Gb}$ are
  naturally inherited from $\Gb$: $\lft'(\iota(u))=\lft(u)$ for
  $u\in \dom{\lft}$ and $\rgt'(\iota(u))=\rgt(u)$ for
  $u\in \dom{\rgt}$. \end{itemize}

Thus, intuitively speaking, $\abstr{\Gb}$ is obtained by applying the
torso operation in $\Gb$ with respect to the union of the interfaces,
and reindexing the vertices according to the labels assigned to the
interfaces.

Let $\Gb=(G,\lft_G,\rgt_G)$ and $\Hb=(H,\lft_H,\rgt_H)$ be two
bi-interface graphs of arity $r$. The {\em{composition}} of $\Gb$ and
$\Hb$ is the bi-interface graph $\Gb\cdot \Hb=(J,\lft_G,\rgt_H)$,
where $J$ is obtained from the disjoint union of $G$ and $H$ by fusing
a vertex $u\in \dom{\rgt_G}$ with a vertex $v\in \dom{\lft_H}$
whenever $\rgt_G(u)=\lft_H(v)$. Thus, bi-interface graphs of arity $r$
with the composition operation form an (infinite) semigroup; call it
$S_r$. We can now endow the basic bi-interface graphs of arity $r$
with composition operation $\odot$ defined as follows: for basic
bi-interface graphs $\Ab$ and $\Bb$, we set
\[\Ab\odot\Bb\coloneqq \abstr{\Ab\cdot \Bb}.\]

Thus, basic bi-interface graphs of arity $r$ endowed with $\odot$ form
a semigroup of size $2^{\Oh(r^2)}$; call this semigroup $B_r$. It is
straightforward to check that the abstraction operation
$\abstr{\cdot}$ is a semigroup homomorphism from $S_r$ to $B_r$, for
every $r\in \N$.

With the definitional layer introduced, we can state a lemma that
encapsulates the translation from torso queries to queries over
$B_q$-labeled trees. This lemma will be used not only in this section,
but also later on for the purpose of viewing computation of torsos in
algebraic terms.

\begin{lemma}\label{lem:translation}
  Suppose we are given a tree decomposition $\Tt=(T,\bag)$ of a graph
  $G$ of adhesion $q$. Then one can in time $q^{\Oh(1)}\cdot |T|\|G\|$
  compute injective colorings
  $\{\lambda_x\colon \adh(x)\to [2q]\colon x\in V(T)\}$ and a mapping
  $\sigma\colon E(T)\to B_{2q}$ such that the following properties are
  satisfied:
 \begin{itemize}
 \item If $x$ is the parent of $y$ in $T$, $u\in \adh(x)$, and
   $v\in \adh(y)$, then
  $$\lambda_x(u)=\lambda_y(v)\qquad\textrm{if and only if}\qquad u=v.$$
  \item If $x$ is a strict ancestor of $y$ in $T$, and we denote
  $$\Gb(x,y)=(G[\cone(x)\setminus \cmp(y)],\lambda_x,\lambda_y),$$
  then $\Gb(x,y)$ is a correctly defined bi-interface graph and $$\abstr{\Gb(x,y)}=\sigma(x,y).$$
 \end{itemize}
\end{lemma}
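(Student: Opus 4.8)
\emph{Overview.} The plan is to build the colorings $\lambda_x$ by a single top-down pass over $T$, define $\sigma$ on the edges of $T$ via the abstraction operation applied to the local bi-interface graphs, and then verify the second bullet by showing that the family $\Gb(\cdot,\cdot)$ is multiplicative along root-to-leaf paths in the semigroup $S_{2q}$ and transporting this through the homomorphism $\abstr{\cdot}\colon S_{2q}\to B_{2q}$.

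\emph{Constructing the colorings.} Root $T$ and process its nodes top-down. For the root $r$ we have $\adh(r)=\emptyset$, so $\lambda_r$ is the empty map. When we reach a node $y$ whose parent $x$ has already been handled, we set $\lambda_y(v):=\lambda_x(v)$ for $v\in\adh(y)\cap\adh(x)$ (this is well defined since $\adh(y)\cap\adh(x)\subseteq\bag(x)$), and to the remaining vertices $v\in\adh(y)\setminus\adh(x)$ we give pairwise distinct colors from $[2q]\setminus\lambda_x(\adh(x))$; this is possible because $|\adh(x)|\le q$ and $|\adh(y)\setminus\adh(x)|\le q$. Then $\lambda_y$ is injective, and for $u\in\adh(x)$, $v\in\adh(y)$ one has $\lambda_x(u)=\lambda_y(v)$ iff $u=v$: ``$\Leftarrow$'' is immediate; for ``$\Rightarrow$'', if $v\in\adh(x)$ then injectivity of $\lambda_x$ forces $u=v$, while if $v\notin\adh(x)$ then $\lambda_y(v)\notin\lambda_x(\adh(x))\ni\lambda_x(u)$, a contradiction. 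This is the first bullet, and the pass runs in time $q^{\Oh(1)}\cdot|T|\cdot\|G\|$ with routine bookkeeping.

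\emph{The bi-interface graphs $\Gb(x,y)$ and the definition of $\sigma$.} First I would observe that the parent-child compatibility propagates: if $x$ is a strict ancestor of $y$ and $u\in\adh(x)\cap\adh(y)$, then $\lambda_x(u)=\lambda_y(u)$. Indeed, by (T1) the vertex $u$ lies in every bag on the $T$-path from $\parent(x)$ to $y$, hence $u\in\adh(z)$ for each node $z$ strictly between $x$ and $y$ on this path (and also $u\in\adh(x)$, $u\in\adh(y)$); chaining the first bullet over the consecutive parent-child pairs of the path gives the equality. Combined with injectivity of the $\lambda_z$ and the inclusions $\adh(x),\adh(y)\subseteq\cone(x)\setminus\cmp(y)$ (since $\adh(x)\subseteq\bag(\parent(x))$ and $\adh(y)\subseteq\bag(y)$, neither of which meets $\cmp(y)$), this shows that $\Gb(x,y)=(G[\cone(x)\setminus\cmp(y)],\lambda_x,\lambda_y)$ is a well-defined bi-interface graph of arity $2q$ with left interface $\adh(x)$ and right interface $\adh(y)$. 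For an edge $e=(\parent(x),x)$ of $T$ I then set $\sigma(e):=\abstr{\Gb(\parent(x),x)}\in B_{2q}$. Each such value is computed in time $q^{\Oh(1)}\cdot\|G\|$: for every interface vertex $w\in\adh(\parent(x))\cup\adh(x)$ run a breadth-first search from $w$ inside $G[\cone(\parent(x))\setminus\cmp(x)]$ that does not expand out of an interface vertex once one is reached; this yields exactly the reachability pattern through non-interface vertices that determines $\abstr{\Gb(\parent(x),x)}$. Identifying the relevant vertex sets in advance (using, e.g., the top-most bag of each vertex together with DFS-interval labels of $T$) and summing over edges gives the claimed running time.

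\emph{Multiplicativity and conclusion.} The crux is the claim: if $x$ is a strict ancestor of $y$ and $y$ a strict ancestor of $z$, then $\Gb(x,z)=\Gb(x,y)\cdot\Gb(y,z)$ in $S_{2q}$. To prove it I would check the set identities $V(\Gb(x,y))\cap V(\Gb(y,z))=\adh(y)$ and $V(\Gb(x,y))\cup V(\Gb(y,z))=\cone(x)\setminus\cmp(z)$, and that $G$ has no edge between $V(\Gb(x,y))\setminus\adh(y)$ and $V(\Gb(y,z))\setminus\adh(y)$ — using that every edge incident to a vertex of $\cmp(y)$ stays inside $\cone(y)$ while $(\cone(x)\setminus\cmp(y))\cap\cone(y)=\adh(y)$. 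Hence $G[\cone(x)\setminus\cmp(z)]$ is precisely $G[V(\Gb(x,y))]$ and $G[V(\Gb(y,z))]$ glued along $G[\adh(y)]$, and since the semigroup product fuses the right interface of $\Gb(x,y)$ with the left interface of $\Gb(y,z)$ — both equal to $\adh(y)$ carrying the injective labelling $\lambda_y$ — the fused graph is exactly $\Gb(x,z)$. Given this, and since $\abstr{\cdot}$ is a semigroup homomorphism $S_{2q}\to B_{2q}$, induction along the $x$-to-$y$ path $x=z_0,\dots,z_\ell=y$ yields $\Gb(x,y)=\Gb(z_0,z_1)\cdots\Gb(z_{\ell-1},z_\ell)$ in $S_{2q}$, so that
\[\abstr{\Gb(x,y)}=\sigma(z_0,z_1)\odot\cdots\odot\sigma(z_{\ell-1},z_\ell)=\sigma(x,y),\]
the last equality being the definition of $\sigma(x,y)$ as the product of edge labels along that path. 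This is the second bullet. I expect the multiplicativity claim to be the main obstacle: it rests on several interlocking identities about cones, components and adhesions in a tree decomposition, and one must be careful that the product's fusion operation really reflects the gluing along $G[\adh(y)]$ — in particular that no edge crosses between the two sides and that the $\lambda_y$-labelling makes the fusion identify exactly the vertices of $\adh(y)$.
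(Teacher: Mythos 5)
Your proof is correct and follows essentially the same approach as the paper: constructing the colorings by a single top-down pass that extends $\lambda_x$ to each child $y$ while reusing colors on $\adh(x)\cap\adh(y)$, defining $\sigma$ on edges by the abstraction of the local bi-interface graphs, and concluding via multiplicativity of $\Gb(\cdot,\cdot)$ along tree paths together with the fact that $\abstr{\cdot}\colon S_{2q}\to B_{2q}$ is a semigroup homomorphism. You also fill in two points the paper leaves implicit — that parent-child color compatibility propagates along paths (which is exactly what makes $\Gb(x,y)$ a well-defined bi-interface graph) and the verification that $\Gb(x,z)=\Gb(x,y)\cdot\Gb(y,z)$ via the cone/component set identities — and those are indeed the spots that deserve the care you gave them.
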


\begin{proof}
  Compute the colorings $\lambda_x$ in a top-down manner as follows:
 \begin{itemize}
 \item If $r$ is the root of $T$, then $\adh(r)$ is empty, hence set
   $\lambda_r$ to be the empty coloring.
 \item If $y$ is a child of $x$ and $\lambda_x$ has already been
   defined, then construct $\lambda_y$ as follows: for each
   $u\in \adh(x)\cap \adh(y)$ set
   $\lambda_y(u)\coloneqq \lambda_x(u)$, and then color the vertices
   of $\adh(y)\setminus \adh(x)$ using colors from
   $[2q]\setminus \lambda_x(\adh(x))$ in an arbitrary injective
   manner. Note that since $|\adh(x)|\leq q$, we have
   $|[2q]\setminus \lambda_x(\adh(x))|\geq q$, so such an injective
   coloring always exists.
 \end{itemize}
 
 That the colorings $\{\lambda_x\colon x\in V(T)\}$ satisfy the first
 point from the lemma statement follows immediately from the
 construction. Clearly, they can be computed in time $\Oh(q|T|)$ by
 performing a depth-first search in $T$.

 Next, for an edge $xy\in E(T)$, where $x$ is the parent of $y$, we construct
 \[\sigma(xy)\coloneqq \abstr{\Gb(x,y)},\]
 where $\Gb(x,y)$ is defined as in the lemma statement. Computing
 $\sigma(xy)$ boils down to running a breadth-first search in
 $\Gb(x,y)$ from each vertex of $\adh(x)\cup \adh(y)$; this can be
 done in time $q^{\Oh(1)}\cdot \|G\|$ per node $y$, so in time
 $q^{\Oh(1)}\cdot |T|\|G\|$ in total.

 It remains to note that if $x$ is a strict ancestor of $y$ and
 $x=z_0-z_1-\ldots-z_\ell=y$ is the path from~$x$ to~$y$ in $T$, then
 $$\Gb(x,y)=\Gb(z_0,z_1)\cdot \Gb(z_1,z_2)\cdot\ldots\cdot \Gb(z_{\ell-1},z_\ell),$$
 from which it follows that
 \[\abstr{\Gb(x,y)}=\sigma(z_0z_1)\odot \sigma(z_1z_2)\odot \ldots
   \odot \sigma(z_{\ell-1}z_\ell)=\sigma(x,y).\hfill\qedhere\]
\end{proof}

Now \cref{lem:torso-ds} follows directly from combining
\cref{lem:translation} and~\cref{thm:colcombet-ds}. More precisely,
using \cref{lem:translation} we compute the mapping $\sigma$ and
colorings $\{\lambda_x\colon x\in V(T)\}$, and then we set up the data
structure for $\sigma(x,y)$ queries provided by
\cref{thm:colcombet-ds}. To answer a query about $\torso(x,y)$, we
first query this data structure to obtain $\sigma(x,y)$, and then we
translate the names of vertices from $[q]\times \{\Lsf,\Rsf,\Ssf\}$
back to $\adh(x)\cup \adh(y)$ by reverting the colorings $\lambda_x$
and $\lambda_y$.

\bigskip

We now proceed to \cref{thm:colcombet-ds}. In both proofs we use the
following easy corollary of \cref{lem:navigation}.

\begin{lemma}\label{lem:partition navigation}
  Given a tree $T$ and a set of nodes $L\subseteq V(T)$, one can in
  time $\Oh(|T|)$ construct a data structure that may
  answer the following query in time $\Oh(1)$: given nodes $x$ and
  $y$, where $x$ is an ancestor of $y$ in $T$, find the topmost node
  on the $x$-$y$ path in $T$ that belongs to~$L$, or report that no
  such node exists. \end{lemma}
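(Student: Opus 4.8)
The plan is to reduce the "topmost node on the $x$--$y$ path belonging to $L$" query to a single least common ancestor query in an auxiliary tree, in the same spirit as the $\dir$-query reduction in the proof of \cref{lem:navigation}. First I would handle the trivial base case: if $x$ itself lies in $L$, then $x$ is the answer, so assume $x\notin L$ is dealt with uniformly below. The key idea is to build a forest structure recording, for each node $v$ of $T$, the closest strict ancestor of $v$ that lies in $L$ (if any). Concretely, I would perform a single depth-first traversal of $T$ from the root, maintaining a stack of the $L$-nodes currently on the root-to-current path; when entering a node $v$, its "$L$-parent pointer" $\ell(v)$ is set to the top of the stack (or $\bot$ if the stack is empty), and if $v\in L$ we push $v$ before recursing into the children and pop it afterwards. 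This takes time $\Oh(|T|)$ and gives, for every node, a pointer $\ell(v)$ to its nearest strict $L$-ancestor.

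Next I would observe how to answer a query using these pointers together with the $\lca$/ancestor machinery of \cref{lem:navigation}. Given $x$ an ancestor of $y$: if $x\in L$, return $x$. Otherwise, we want the topmost node of $L$ strictly between $x$ (exclusive, since $x\notin L$) and $y$ (inclusive) on the path. This is the $L$-node $z$ with $x\prec z\preceq y$ minimizing depth. Consider the sequence $y, \ell(y), \ell(\ell(y)), \ldots$ of successive $L$-ancestors of $y$; these are exactly the $L$-nodes that are ancestors of $y$, listed from deepest to shallowest. The desired answer is the last element of this sequence that is still a strict descendant of $x$ --- equivalently, the shallowest $L$-ancestor $z$ of $y$ with $x\prec z$. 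To find it in constant time, I would build a second tree $T_L$ whose node set is $L\cup\{\hat r\}$ for a fresh root $\hat r$, where the parent of $v\in L$ in $T_L$ is $\ell(v)$ if $\ell(v)\ne\bot$ and $\hat r$ otherwise. Then the chain of $L$-ancestors of $y$ in $T$ corresponds to the path from the nearest $L$-ancestor of $y$ (call it $p(y)$, which is $\ell(y)$ if $y\notin L$ and $y$ itself if $y\in L$) up to $\hat r$ in $T_L$. Set up the $\lca$ data structure of \cref{lem:navigation} on $T_L$; for a query, compute $p(y)$, and if $p(y)=\bot$ or $p(y)$ is not a strict descendant of $x$ in $T$ (checkable in $\Oh(1)$ using ancestor tests from \cref{lem:navigation}) report "no such node". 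Otherwise the answer is $\dir_{T_L}(a, p(y))$ where $a$ is the $T_L$-node equal to the nearest $L$-ancestor of $x$ (again $\ell(x)$ or $x$), or $p(y)$ itself if $a$ is not an ancestor of $p(y)$ in $T_L$; here $\dir$ is the child-towards query of \cref{lem:navigation} applied in $T_L$. One should double-check the boundary case where $a$ already equals $p(y)$ or is not comparable: then the whole $L$-ancestor chain of $y$ lies strictly below $x$, so $p(y)$ itself is the topmost such node.

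The correctness argument is a routine verification that the $L$-ancestors of $y$ in $T$ are precisely the ancestors of $p(y)$ in $T_L$ (ordered consistently with depth in $T$), so that a $\dir$ query in $T_L$ pinpoints the shallowest one that is still strictly below $x$. The running time is $\Oh(|T|)$ for the DFS computing $\ell(\cdot)$ and constructing $T_L$, plus $\Oh(|T_L|)=\Oh(|T|)$ for the preprocessing of \cref{lem:navigation} on $T_L$; each query issues $\Oh(1)$ ancestor/$\dir$/$\lca$ queries, hence runs in $\Oh(1)$ time. I expect the only mildly delicate point to be the careful case analysis at the endpoints --- distinguishing whether $x$ or $y$ themselves lie in $L$, and handling the situation where no $L$-node occurs strictly above $x$ on the path --- but these are all constant-time ancestor comparisons once the auxiliary tree $T_L$ and the pointers $\ell(\cdot)$ are in place. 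There is no serious obstacle; the lemma is essentially a bookkeeping exercise layered on top of \cref{lem:navigation}.
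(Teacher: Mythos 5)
Your plan follows essentially the same route as the paper's proof: compute nearest-strict-$L$-ancestor pointers $\ell(\cdot)$ by a DFS, build the auxiliary tree $T_L$ on $L$ (plus a fresh root), and reduce each query to a single $\dir$ query on $T_L$ via \cref{lem:navigation}. That is the right idea, and the preprocessing and query bounds go through.

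There is, however, a concrete error at the very point you flagged as "mildly delicate." In the case $\ell(x)=\bot$ with $p(y)$ a strict descendant of $x$, you conclude that "the whole $L$-ancestor chain of $y$ lies strictly below $x$, so $p(y)$ itself is the topmost such node." But by your own definition $p(y)$ is the \emph{deepest} $L$-ancestor of $y$, not the shallowest. For a counterexample, take $T$ a rooted path $r\!-\!v_1\!-\!v_2\!-\!v_3\!-\!v_4$ with root $r$, $L=\{v_2,v_4\}$, $x=v_1$, $y=v_4$. Then $\ell(x)=\bot$ and $p(y)=v_4$, but the topmost $L$-node on the $x$-$y$ path is $v_2$, not $v_4$. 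The remedy is exactly the structure you already built: when $\ell(x)=\bot$, set $a:=\hat r$ (the fresh root of $T_L$), so that the answer is uniformly $\dir_{T_L}(a,p(y))$; in the example this gives $\dir_{T_L}(\hat r,v_4)=v_2$. The paper handles this cleanly by making $\bot$ an actual node of $T_L$ (its role is your $\hat r$) and using the single formula $\dir_{\Lb}(\anc(x),\anc(y))$ with only the degenerate check $\anc(x)=\anc(y)$, so no "fallback to $p(y)$" clause is ever needed. Once you replace your fallback clause by the $a:=\hat r$ convention, your argument matches the paper's.
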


\begin{proof}
  For every $x\in V(T)$, we store whether $x\in L$. Further, let
  $\anc(x)$ be the closest (deepest) strict ancestor of $x$ that
  belongs to $L$, or $\bot$ if no such ancestor exists. Note that the
  values of $\anc$ can be computed in time $\Oh(|T|)$ using a
  depth-first search in $T$.

  Next, let $T_L$ be the rooted tree on node set $L\cup \{\bot\}$
  constructed from $T$ as follows:
 \begin{itemize}
 \item Restrict the node set to $L$ keeping the ancestor relation;
   that is, for $x,y\in L$ we have that $x$ is an ancestor of $y$ in
   $T_L$ if and only if $x$ is an ancestor of $y$ in $T$. Thus, $T_L$
   is now a rooted forest.
 \item Add a fresh root $\bot$ to $T_L$ and make all former roots of
   $T_L$ into children of $\bot$.
 \end{itemize}
 Again, $T_L$ can be computed (through its parent relation) in time
 $\Oh(|T|)$ by applying a depth-first search.

 Apply \cref{lem:navigation} to the tree $T_L$, thus constructing a
 suitable data structure $\Lb$ in time $\Oh(|T|)$. It can be now
 easily seen that whenever $x$ is an ancestor of $y$ in $T$, the
 answer to the query from the lemma statement is equal to
 $$\dir_{\Lb}(\anc(x),\anc(y)),$$
 where $\dir_{\Lb}(\cdot,\cdot)$ is a query to the data structure
 $\Lb$. In case $\anc(x)=\anc(y)$, we report $y$ provided $y\in L$,
 and otherwise we report that there is no node on the $x$-$y$ path
 that belongs to $L$. \end{proof}


\paragraph*{Direct proof of \cref{thm:colcombet-ds}.} We start with a
direct proof of \cref{thm:colcombet-ds}. We will reuse the ideas underlying this proof in \cref{sec:automata}, when considering query enumeration and answering on graphs that exclude a topological minor.
The following proof generalizes an idea from~\cite[Sec. 2.2]{10.1007/978-3-642-02737-6_1} in the context of paths, to the case of trees.

For a finite set $X$, the
{\em{functional semigroup}} $F_X$ is the semigroup of all functions
$f\colon X\to X$ with composition as the semigroup operation. Here,
the composition of $f,g\in F_X$ is the function
$f;g\coloneqq \lambda a.g(f(a))$.
We chose to compose functions in this order for consistency with the
alternative proof of \cref{thm:colcombet-ds}.

We first prove \cref{thm:colcombet-ds} for functional semigroups. This is
captured by the following statement.

\begin{lemma}\label{lem:functional-warp}
  Fix a finite set $X$ and suppose we are given an $F_X$-labeled tree
  $T$. Then one can in time $|X|^{\Oh(1)}\cdot |T|$ construct a data
  structure that can answer the following query in time
  $|X|^{\Oh(1)}$: given nodes $x,y\in V(T)$, where $x$ is a strict
  ancestor of $y$, output $\sigma(x,y)$. The data structure uses
  $|X|^{\Oh(1)}\cdot |T|$ memory.
\end{lemma}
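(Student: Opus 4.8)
The idea is to set up a hierarchy of "jump pointers" along the tree that let us assemble the product $\sigma(x,y)$ from a bounded number of precomputed composite labels. Concretely, I would associate with every node $v$ and every element $f\in F_X$ a pointer $\mathsf{up}[v,f]$ defined as follows: among the strict ancestors $w$ of $v$, consider those for which the product $\sigma(w,v)$ of labels on the path from $w$ down to $v$ equals $f$; then $\mathsf{up}[v,f]$ is the \emph{topmost} such ancestor, or $\bot$ if no such ancestor exists. There are $|X|^{|X|}=|X|^{\Oh(1)}$ choices of $f$, so this is a table of size $|X|^{\Oh(1)}\cdot |T|$. The key structural fact is that the set of values $\{\sigma(w,v) : w\text{ a strict ancestor of }v\}$, read from $v$ upward, forms a chain in the following sense: as $w$ moves from $\parent(v)$ up to the root, the value $\sigma(w,v)$ takes at most $|X|$ distinct values in total, because the image of $X$ under the partial product can only shrink (the function $\sigma(w,v)$ is obtained from $\sigma(\parent(w),v)$ by precomposition with one more label $\sigma(\parent(w)\parent(w'))$, hence $|\mathrm{im}(\sigma(\parent(w),v))|\le|\mathrm{im}(\sigma(w,v))|$), and a strictly decreasing chain of image sizes has length at most $|X|$. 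Within each "plateau" where the value is constant, $\mathsf{up}[v,f]$ points exactly to the top of that plateau. This is precisely the property exploited in~\cite[Sec.~2.2]{10.1007/978-3-642-02737-6_1} for paths, and it lifts verbatim to trees since the relevant paths in a tree are exactly the root-to-node paths.

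Next I would describe the computation of the tables in a top-down pass. For the root $r$ there is nothing to store. For a child $y$ of $x$ with label $f_0=\sigma(xy)$ on the edge: a strict ancestor $w$ of $y$ is either $x$ itself (giving $\sigma(x,y)=f_0$) or a strict ancestor of $x$, in which case $\sigma(w,y)=\sigma(w,x)\mathbin{;}f_0$ where $;$ denotes composition in $F_X$. So the multiset of path-products into $y$ is obtained from that into $x$ by post-composing every value with $f_0$ and adding the new value $f_0$ for $w=x$. To get the topmost ancestor achieving a given $g$, I iterate: $\mathsf{up}[y,g]$ is $\mathsf{up}[x,g']$ for the topmost $g'$ with $g'\mathbin{;}f_0=g$ that is actually realized above $x$, unless $f_0=g$ and no higher ancestor realizes $g$, in which case $\mathsf{up}[y,g]=x$. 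Since along the upward chain from $x$ there are only $\le|X|$ distinct realized values, each stored together with a pointer to the top of its plateau, this update is done in $|X|^{\Oh(1)}$ time per node by walking that chain of plateaus (which I also store explicitly as a linked list of length $\le|X|$ attached to each node). Total preprocessing time is therefore $|X|^{\Oh(1)}\cdot|T|$, and the memory is $|X|^{\Oh(1)}\cdot|T|$ as claimed.

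Finally, answering a query $\sigma(x,y)$ for $x$ a strict ancestor of $y$. I walk the plateau-chain attached to $y$: its entries are pairs $(f,\mathsf{up}[y,f])$ with $\mathsf{up}[y,f]$ ranging over the distinct "plateau tops" on the path from $\parent(y)$ to the root, listed in order of depth. I locate the plateau whose top-most node is an ancestor of $x$ but whose \emph{parent's} plateau top is strictly above $x$ (equivalently, the deepest plateau top $w$ with $w\preceq x$); the corresponding stored value $f$ equals $\sigma(w,y)$ for the top $w$ of that plateau, but since $x$ lies on the same plateau, $\sigma(x,y)=\sigma(w,y)=f$ as well, so I just output $f$. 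The ancestor tests $w\preceq x$ can be answered in $\Oh(1)$ each via the data structure of \cref{lem:navigation} applied to $T$, and the chain has length $\le|X|$, so the query runs in $|X|^{\Oh(1)}$ time. (One could even replace the linear scan by a pointer jump using \cref{lem:partition navigation} with $L$ the set of plateau tops along the relevant branch, but the linear scan already meets the bound.)

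The main obstacle I anticipate is getting the plateau/chain bookkeeping exactly right under the top-down update: one must be careful that post-composing all realized values by $f_0$ can \emph{merge} several adjacent plateaus of $x$ into one plateau of $y$ (when $g_1\mathbin{;}f_0=g_2\mathbin{;}f_0$ for distinct realized $g_1,g_2$), so when rebuilding the chain for $y$ one keeps, among merged plateaus, the one with the top-most node; and one must separately account for the new length-one plateau contributed by $w=x$ with value $f_0$, which may or may not merge with the (post-composed) old top plateau. Verifying that the resulting chain still has length $\le|X|$ is immediate from the image-size-monotonicity argument, but the casework in the merge step is where the proof needs to be written out carefully. Once the functional-semigroup case is established, \cref{thm:colcombet-ds} for a general finite semigroup $S$ follows by the standard trick of embedding $S$ into the functional semigroup $F_{S^1}$ via $s\mapsto(t\mapsto t\cdot s)$ on $S^1=S\cup\{1\}$, which is a semigroup homomorphism, has $|S^1|=|S|+1$, and lets us read $\sigma(x,y)$ back off the stored function by evaluating it at the identity; this blows up all bounds only polynomially in $|S|$.
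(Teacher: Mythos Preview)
Your central structural claim is false, and the argument collapses without it. You assert that as $w$ ranges over the strict ancestors of $v$ (from $\parent(v)$ up to the root), the function $\sigma(w,v)$ passes through at most $|X|$ ``plateaus'' of constant value, justified by the observation that $|\mathrm{im}(\sigma(w,v))|$ is non-increasing in $w$. But image-size monotonicity tells you nothing about value-constancy within an image-size plateau: two distinct bijections on $X$ have the same image. Concretely, take $X=\{1,2\}$ and a path $v_0,v_1,\ldots,v_n$ with every edge labeled by the swap $\tau$. Then $\sigma(v_i,v_n)=\tau^{\,n-i}$ alternates between $\mathrm{id}$ and $\tau$, giving $n$ plateaus while $|X|=2$. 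In this example $\mathsf{up}[v_5,\mathrm{id}]=v_1$ and $\mathsf{up}[v_5,\tau]=v_0$; your query procedure for $\sigma(v_2,v_5)$ picks the deepest ``plateau top'' $\preceq v_2$, namely $v_1$ with stored value $\mathrm{id}$, and outputs $\mathrm{id}$, whereas $\sigma(v_2,v_5)=\tau^3=\tau$. So the algorithm is incorrect. (There is also the side issue that $|F_X|=|X|^{|X|}$ is not $|X|^{\Oh(1)}$, so storing $\mathsf{up}[v,f]$ for all $f\in F_X$ already violates the space bound; but the real problem is the plateau claim.)

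The paper avoids this by working \emph{element-by-element} rather than tracking the whole function. It chooses, top-down, bijections $c_v\colon X\to[|X|]$ so that for every edge $vw$ (with $w$ a child of $v$) and every $a\in X$ one has $c_w(f_{vw}(a))\le c_v(a)$. Then, for a fixed element $a$, the colour of its image along any downward path is genuinely monotone, so it can drop at most $|X|-1$ times; using \cref{lem:partition navigation} on the set of ``$i$-decreasing'' nodes one jumps to the next drop in $\Oh(1)$ time, evaluating $f_{vw}(a)$ in at most $|X|$ jumps. Doing this for each $a\in X$ yields the full function $\sigma(x,y)$ in time $|X|^{\Oh(1)}$. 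The per-element colouring is precisely what supplies the monotonicity your whole-function approach lacks.
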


Note that $|F_X|=|X|^{|X|}$, so the complexity guarantees offered by
\cref{lem:functional-warp} for the special case of functional
semigroups are exponentially better than that promised by
\cref{thm:colcombet-ds}. Unfortunately, the semigroup of basic
bi-interface graphs is not a functional semigroup, so we cannot use
\cref{lem:functional-warp} directly in the proof of
\cref{lem:torso-ds}.

\begin{proof}[Proof of \cref{lem:functional-warp}]
  For a pair of nodes $v,w$ of $T$, where $v$ is an ancestor of $w$,
  write $f_{vw}\from X\to X$
  for
   the composition
  \[\sigma(v_1v_2);\sigma({v_1v_2});\ldots;\sigma({v_{m-1}v_m}),\]
  where $v_1\ldots v_m$ is the path starting at $v_1=v$ and ending at $v_m=w$
  (in particular if $v=w$ then $f_{vw}$ is the identity on $X$). Then $f_{vw}=\sigma(v,w)$ and $f_{vw}=\sigma(vw)$ when $v$ is the parent of $w$.
  Our goal is to compute in time $|X|^{\Oh(1)}\cdot |T|$
  a data structure that  given two nodes $v$ and $w$ of $T$, where $v$ is an ancestor of $w$, outputs $f_{vw}$ in time $|X|^{\Oh(1)}$.

  Let $s\coloneqq |X|$.  We first construct bijections \renewcommand{\xi}{c}
  $\xi_v\colon X\to [s]$ for all $v\in V(T)$ so that the following
  property is satisfied: whenever $v$ is the parent of $w$ in $T$, we
  have
 \begin{equation}\label{eq:beaver}
   c_w(f_{vw}(x)) \le \xi_v(x) \qquad\textrm{for all }x\in X.
 \end{equation}
We will say that $x\in X$ \emph{has color} $i$ at a node $v$ if $c_v(x)=i$.
Therefore, \eqref{eq:beaver} says that when going from a node $v$ to its child $w$,
the color of an element $x$ may either remain unchanged, or decrease,
when applying the function $f_{vw}$.
 This can be achieved in a top-down manner as described below, and depicted in \cref{fig:paths}.
 \begin{figure}[h!]
  \centering
   \includegraphics[page=2]{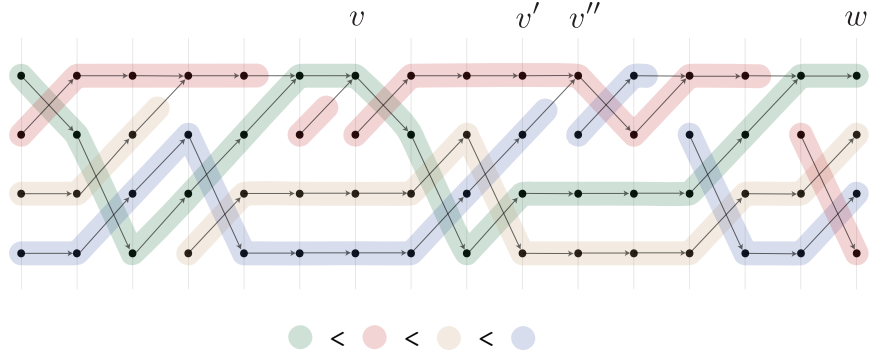}
   \caption{\textit{Construction of the colorings.} The
   figure depicts the colorings $\xi_v\from X\to[s]$ along a single branch $\pi$ in the tree $T$. Here $|X|=s=4$, and $[s]=\set{1,2,3,4}$ is represented by the four colors depicted above. The nodes along the branch $\pi$ correspond to the columns in the figure,
   with the left column being the root node and the right column being a leaf in $T$.
   The elements of $X$ correspond to the rows. The functions $f\from X\to X$ labeling the edges along the path $\pi$ are represented by the arrows between adjacent columns.
   The colorings $\xi_v$ are constructed from left to right, and are represented as colorings of each of the columns with the four colors.
The key property is that a path of any given color has a discontinuity
only when it hits a path with a smaller color.
   \textit{The querying phase.}
   Let $v$ and~$w$ be two columns with $v$ preceding $w$ (depicted in the figure),
   and suppose we want to compute the composition~$f_{vw}$ of the functions between $v$ and $w$, applied to the element that is blue at $v$, using the color coding from the figure.
   Let $v'$-$v''$ be the first pair of
   of consecutive nodes  $v'$-$v''$ following $v$ at which the blue path has a discontinuity (so $v''$ is \emph{blue-decreasing}).
   The blue element (at $v$) is mapped by~$f_{vw}$ to the blue element (at $w$)
   if $w$ is before $v''$. Otherwise, we jump from $v$ to the node $v'$, then
  map the blue element at $v'$ by the function labeling the edge $v'$-$v''$, obtaining a color that is smaller than blue (in this case, pink), and continue tracking the pink color from $v''$ to $w$. As each jump decreases the color, the number of jumps is bounded by $s-1=3$.
   }
   \label{fig:paths}
 \end{figure}

 At the root $r$,
 pick an arbitrary bijection $\xi_r\from X\to [s]$.  Next, if $\xi_v$
 is already defined for a node $v$, and $w$ is a child of $v$, then we
 define $\xi_w$ as follows:
\begin{itemize}
\item First, for each $y\in f_{vw}(X)$, set
 $$\xi_w(y)\coloneqq \min \left(\xi_v(f_{vw}^{-1}(y))\right),$$
 that is, the color of $y$ at $w$ is equal to the smallest color at $v$ of any element $x$ that is mapped to $y$ by $f_{vw}$, if it exists.
\item Next, extend $\xi_w$ to elements $x\notin f_{vw}(X)$
  arbitrarily, but so that $\xi_w$ becomes a bijection.
\end{itemize}

Property~\eqref{eq:beaver} follows immediately from the
construction. Clearly, the construction can be performed in time
$|X|^{\Oh(1)}\cdot |T|$.

For an index $i\in [s]$, we say that a node $w$ of $T$ is
{\em{$i$-decreasing}} if $w$ has a parent, say $v$, and there exists
$x\in X$ such that
 $$\xi_v(x)=i\qquad\textrm{and}\qquad \xi_w(f_{vw}(x))<i.$$
 That is, the function $f_{vw}$ maps the element that has color $i$ at $v$ to an element whose color becomes smaller at $w$.
 Let $L_i\subseteq V(T)$ be the set of $i$-decreasing nodes. Clearly,
 the sets $L_i$ can be computed in time $|X|^{\Oh(1)}\cdot |T|$.

 For each of the sets $L_i$, $i\in [s]$, we construct the data
 structure provided by \cref{lem:partition navigation}; call it
 $\Kb_i$. Also, we construct the data structure $\Lb$ provided by
 \cref{lem:navigation}.  This finishes the initialization of our data
 structure.  We now describe how it can be used to answer queries in
 time $|X|^{\Oh(1)}$.

 Suppose we are queried about the value of $f_{vw}$, where $v$ is
 an ancestor of $w$ in $T$. Clearly, it suffices to show how to
 compute $f_{vw}(x)$ for each $x\in X$ in time $|X|^{\Oh(1)}$.
 Fix $x\in X$, we compute $f_{vw}(x)$ as follows
 (see also the caption in \cref{fig:paths}).

 Let $i\coloneqq \xi_v(x)$ be the color of $x$ at $v$, and let $R(v,w)$ be the set of nodes
 lying on the $v$-$w$ path in $T$, excluding $v$ but including $w$.
 By a repeated use of~\eqref{eq:beaver} we have
 $c_w(f_{vw}(x))\leq i$, and the equality holds if and only if
 no node of $R(v,w)$ is $i$-decreasing, i.e., belongs to $L_i$.
 That is, the color of $f_{vw}(x)$ may only decrease or remain unchanged as we proceed along the $v$-$w$ path.
 We use the data structure $\Kb_i$ to find the topmost node
 of $L_i\cap R(v,w)$; this amounts to applying the query to $w$ and
 $\dir(v,w)$. If there is no such node, then we have
 $f_{vw}(x)=\xi_w^{-1}(i)$, that is, $f_{vw}(x)$ is the unique element
 that has color~$i$ at~$w$.
 This concludes the computation of
 $f_{vw}(x)$. Otherwise, let $v''$ be the node returned by
 $\Kb_i$ and let~$v'$ be its parent.
 Then
$$f_{vw}=f_{vv'};f_{v'v''};f_{v''w}.$$
We will first compute $x'\coloneqq f_{vv'}(x)$,
then  $x''\coloneqq f_{v'v''}(x')$, and finally
$y\coloneqq f_{v''w}(x'')$, which is equal to~$f_{vw}(x)$.
By definition of $v''$
we have that $c_{v'}(f_{vv'}(x))=i$, that is, $x'=f_{vv'}(x)$ has color $i$
at node $v'$.
Therefore, $x'$ can be computed in time $X^{\Oh(1)}$,
knowing $v''$ and hence $v'$, as well as the colorings.
Next, $x''\coloneqq f_{v'v''}(x')$ can be computed in time $X^{\Oh(1)}$,
since the function $f_{v'v''}$ is the label of the edge $v'v''$.
Finally, note that \[c_{v''}(x'')<i\] by definition of $v''$.
Hence, to compute $y= f_{v''w}(x'')=f_{vw}(x)$
it suffices to apply the procedure recursively to the nodes $v''$ and $w$, computing $y=f_{v''w}(x'')$. As
$\xi_{v''}(x'')<i,$
we conclude that with each recursive call, the value of $i$ strictly
decreases. Hence, the recursion depth is bounded by $s=|X|$, and the
whole algorithm runs in time $|X|^{\Oh(1)}$.
\end{proof}

We now use \cref{lem:functional-warp} to prove \cref{thm:colcombet-ds}
in full generality. This uses the fact that any semigroup~$S$ is a subsemigroup of some functional semigroup, namely $F_S$.

\begin{proof}[Proof of \cref{thm:colcombet-ds}]
  Let $S$ be the given fixed semigroup. We may assume that $S$ is a
  monoid, that is, there exists an element $\mathbf 1\in S$ such that
  $\mathbf 1\cdot m=m\cdot \mathbf 1=m$ for all $m\in S$. Indeed,
  every semigroup can be turned into a monoid by adding a fresh
  element $\mathbf 1$ satisfying the equations above.

  Consider the semigroup homomorphism $\eta\colon S\to F_S$ defined
  through right multiplication:
  \[\eta(a)\coloneqq \lambda b.(b\cdot a)\qquad \textrm{for }a\in S.\]
Then $\eta$ is injective, as $\eta(a)(\mathbf 1)=a$ for $a\in S$.

  Let $T'$ be the $F_S$-labeled tree obtained from $T$ by mapping
  each label through $h$. Clearly, $T'$ can be computed from $S$ in
  time $|S|^{\Oh(1)}\cdot |T|$ and using that many operations in
  $S$. Apply \cref{lem:functional-warp} to~$T'$, obtaining in time
  $|S|^{\Oh(1)}\cdot |T|$ a suitable data structure $\Db'$. Now, given
  nodes $x,y$ of $T$ where $x$ is a strict ancestor of $y$, to compute
  $\sigma(x,y)$ it suffices to use one query to $\Db'$ to get
  $\eta(\sigma(x,y))$, and then note that
  $\sigma(x,y)=\eta(\sigma(x,y))(\mathbf{1})$.
\end{proof}

\paragraph*{Proof of \cref{thm:colcombet-ds} using deterministic Simon's factorization.}
We now give a proof that follows closely the combinatorial idea
presented by Colcombet in~\cite[Lemma~3]{Colcombet07}. See also the
work of Kazana and Segoufin~\cite{KazanaS13} that applies this idea in
the context of enumeration of $\MSO$ queries on trees. We need a few
definitions, which are taken directly from~\cite{Colcombet07}.

Fix a finite semigroup $S$. Let $P$ be an oriented path whose edges
are labeled by elements of $S$; let $\sigma\colon E(P)\to S$ be this
labeling. As $P$ is oriented, there is a natural order $<$ on $V(P)$
signifying which vertex is earlier. For two nodes $x<y$ of $P$, we
denote $\sigma(x,y)=\sigma(e_1)\cdot \ldots \cdot \sigma(e_\ell)$,
where $e_1,\ldots,e_\ell$ are the consecutive edges of the subpath of
$P$ from $x$ to $y$.

For $h\in \N$, a {\em{split of height $h$}} of $P$ is a labeling
$\lambda\colon V(P)\to [h]$. In such a split $\lambda$, two nodes
$x<y$ of~$P$ are called {\em{$i$-neighbors}} if
$\lambda(x)=\lambda(y)=i$ and $\lambda(z)\geq i$ for all $x<z<y$.
Clearly, being $i$-neighbors is an equivalence relation on
$\lambda^{-1}(i)$. Then the split $h$ is called {\em{forward Ramsey}}
if for all vertices $x,y,x',y'$ that are pairwise $i$-neighbors and
satisfy $x<y$ and $x'<y'$, we have
$$\sigma(x,y)\cdot \sigma(x',y')=\sigma(x,y).$$

Now let $T$ be a rooted tree with edges labeled by elements of $S$
through a labeling $\sigma\colon E(T)\to S$. Again, a split of height
$h$ of $T$ is a labeling $\lambda\colon V(T)\to [h]$. We say that such
a split is {\em{forward Ramsey}} if it is forward Ramsey when
restricting it to any root-to-leaf path in $T$. Here, the path is
oriented from the root to the leaf.

Colcombet's theorem states that in every tree there is always a
forward Ramsey split of height bounded by $|S|$. The following
statement follows easily from the results presented
in~\cite{Colcombet07}.

\begin{theorem}[\cite{Colcombet07}]\label{thm:colcombet-split}
  Let $T$ be a rooted tree with edges labeled by elements of a finite
  semigroup $S$. Then there is a forward Ramsey split of $T$ of height
  at most $|S|$. Moreover, such a split can be computed in time
  $|S|^{\Oh(1)}\cdot |T|$.
\end{theorem}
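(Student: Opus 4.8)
The statement is, up to reformulation, the tree-shaped deterministic version of Simon's factorization forest theorem, and I would obtain it by unwinding the corresponding results of Colcombet~\cite{Colcombet07}. The plan has two parts: first settle the case in which $T$ is a single oriented path (the \emph{word case}), then lift the construction to trees. For the word case, recall that Colcombet's deterministic factorization forest theorem produces, for every word $a_1\cdots a_n$ over $S$, a factorization forest of height $\Oh(|S|)$ that is computed by a single left-to-right sweep maintaining $|S|^{\Oh(1)}$ bits of state. From such a forest one reads off a split $\lambda$ by letting $\lambda(x)$ record the depth of the deepest factorization node whose left boundary is at $x$; the forward Ramsey condition on $\lambda$ then corresponds exactly to the ``idempotent node'' condition in the forest. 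Since the notion of forward Ramsey demanded here is one-sided, one does not need the full two-sided forest, and a direct argument (sketched below) should give the sharper bound $h\le|S|$.

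To lift to trees, the key observation is that the word-case construction is \emph{prefix-determined}: both $\lambda(v)$ and the internal state reached after processing $v$ depend only on the sequence of edge labels on the root-to-$v$ path, not on anything below $v$. Hence one can run the deterministic left-to-right automaton top-down over $T$ in a single depth-first traversal, threading each node's state to its children; this assigns a well-defined value $\lambda(v)\in[h]$ to every $v\in V(T)$, visits each node exactly once, and spends $|S|^{\Oh(1)}$ time per node, for total time $|S|^{\Oh(1)}\cdot|T|$. By construction, the restriction of $\lambda$ to any root-to-leaf path of $T$, oriented from the root, coincides with the word-case split of that path; hence $\lambda$ is forward Ramsey on each such path, i.e.\ it is a forward Ramsey split of $T$, of height $h\le|S|$.

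The real content, and the main obstacle, lies entirely in the word case: (re)deriving the deterministic factorization together with the bound $h\le|S|$ and the $|S|^{\Oh(1)}$ state complexity. The plan here is an induction on $|S|$ organized by Green's $\mathcal{R}$-order. As we scan $a_1\cdots a_n$ left to right, the prefix product $p_j=a_1\cdots a_j$ satisfies $p_{j+1}\le_{\mathcal{R}}p_j$, so the $\mathcal{R}$-class of $p_j$ is non-increasing; it strictly descends at most as many times as the length of the longest strictly $\mathcal R$-descending chain in $S$, which is at most $|S|$. Between two consecutive descents the prefix products stay in one $\mathcal R$-class, and there one is effectively working inside a proper subsemigroup (the local group of an $\mathcal{H}$-class), to which the induction applies; the forward Ramsey condition between $i$-neighbours at each level then follows from the idempotent structure of $\mathcal H$-classes via Green's lemma. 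Assembling the levels contributed by the $\mathcal R$-descents and by the recursive calls inside $\mathcal H$-classes should yield height at most $|S|$. Finally one must check that all of this is realizable by a left-to-right pass: the recursion depth is at most $|S|$, each level needs only a bounded amount of remembered data (the current $\mathcal{R}$-class, a running product, a flag), so the whole procedure is a product automaton of size $|S|^{\Oh(1)}$. The fiddly points to get right are precisely the exact value of the height bound (the one-sided refinement down to $|S|$ rather than the two-sided $3|S|$) and the compatibility of orientation conventions --- ``forward'' meaning ``towards the leaves'' --- between the definition used above and the ones in \cite{Colcombet07}; these are routine but need to be stated carefully.
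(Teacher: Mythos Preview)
Your proposal is correct and matches the paper's treatment almost exactly: the paper does not reprove the word case at all but simply cites Colcombet's deterministic forward Ramsey split on paths, and then observes (just as you do) that because the split is computed by a left-to-right deterministic automaton, one can run that automaton top-down along every root-to-leaf path via a single depth-first search to obtain the tree version in time $|S|^{\Oh(1)}\cdot|T|$. Your additional sketch of the word case via Green's $\mathcal R$-order is extra detail the paper omits (since the result is cited), but it is the right shape of argument; just be aware that pinning the height to exactly $|S|$ rather than $\Oh(|S|)$ is delicate and is precisely what \cite{Colcombet07} establishes, so in the write-up you may as well defer that bound to the citation rather than rederive it.
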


To be more precise, Colcombet states his result in terms of the
existence of a forward Ramsey split on a path that can be computed by
a deterministic automaton that processes the path in the given order.
As in~\cite{Colcombet07} and multiple other applications, the tree
variant stated above can be obtained by running the automaton on every
root-to-leaf path of the tree. For computing the split within the
promised time complexity, it suffices to construct the automaton in
time $|S|^{\Oh(1)}$ and run it on the root-to-leaf paths in~$T$ using
depth-first search.

With \cref{thm:colcombet-split} recalled, we can proceed to the proof
of \cref{lem:torso-ds}. Let $\lambda$ be a split of height
$h\coloneqq |S|$ for the input tree $T$, computed using the algorithm
of \cref{thm:colcombet-split}. With every $x\in V(T)$ we will store
the value $\lambda(x)$. Further, letting
$L_i\coloneqq \lambda^{-1}(i)$ for each $i\in [h]$, we apply
\cref{lem:partition navigation} to $T$ and the node subset $L_i$, thus
obtaining a suitable data structure $\Kb_i$. Finally, we apply
\cref{lem:navigation} to compute a suitable data structure $\Lb$ for
$\lca(\cdot,\cdot)$ and $\dir(\cdot,\cdot)$ queries. All of these take
time $|S|^{\Oh(1)}\cdot |T|$ in total.

Recall that within $\Kb_i$, with every $x\in V(T)$ we store a pointer
$\anc_i(x)$ to the closest (deepest) strict ancestor of $x$ that
belongs to $L_i$; if there is no such strict ancestor, $\bot$ is
stored instead. In case $\anc_i(x)$ is well-defined, alongside it we
also store the value \[\val_i(x)\coloneqq \sigma(\anc_i(x),x)\in S.\]

Again, using a depth-first search it is straightforward to compute all
these values in time $|S|^{\Oh(1)}\cdot |T|$ and using
$|S|^{\Oh(1)}\cdot |T|$ operations in $S$.

%

This concludes the definition of the data structure. Clearly, the time
complexity of the construction algorithm, as well as the space
complexity, are as promised. We are left with implementing the
$\sigma(x,y)$ queries within the promised time complexity.

Let $x$ and $y$ be a pair of nodes of $T$, where $x$ is a strict
ancestor of $y$. Let $R(x,y)$ be the set of all nodes that lie on the
unique path from $x$ to $y$ in $T$, including $x$ but excluding $y$.
Further, let
\[r(x,y)\coloneqq \min \{\lambda(z)\colon z\in R(x,y)\}.\]

We are going to give an algorithm that computes $\sigma(x,y)$ in time
$|S|^{\Oh(1)}$ and using $|S|^{\Oh(1)}$ operations in~$S$, and makes
at most one recursive call to compute $\sigma(x,z)$ for some $z$ that
is a strict ancestor of $y$ and a strict descendant of $x$. We will
have a guarantee that if such a call is invoked, then $r(x,z)>r(x,y)$.
Since the co-domain of $r(\cdot,\cdot)$ is $[h]$, this means that the
depth of the recursion is at most $h=|S|$, so the whole algorithm will
run within the promised complexity.

First, for every $i\in [h]$ we would like to test whether $R(x,y)$
contains a node $z$ with $\lambda(z)=i$, and if so, find the topmost
such node. This can be done using one call to the data structure
$\Kb_i$, applied to~$x$ and the parent of $y$. Using this procedure
for each $i\in [h]$, we can determine the value of $r(x,y)$ in
time~$\Oh(|S|)$, and also find the suitable node $z$: the topmost node
satisfying $\lambda(z)=r(x,y)$.

Next, we would like to determine whether there exists a node
$z'\in R(z,y)$, $z'\neq z$, such that $\lambda(z')=r(x,y)$, and if so,
to find the topmost such $z'$. Again, this can be done using one call
to the data structure $\Kb_{r(x,y)}$, applied to $z$ and the parent of
$y$.

Finally, let
\[z''\coloneqq \anc_{r(x,y)}(y).\]

With these definitions in place, it can be readily seen that
$\sigma(x,y)$ can be expressed as
\begin{equation}\label{eq:split}
  \sigma(x,y)=\sigma(x,z)\cdot \sigma(z,z')\cdot \sigma(z',z'')\cdot
  \sigma(z'',y), \end{equation} where the terms
\begin{itemize}[nosep]
 \item $\sigma(x,z)$ ought to be omitted if $x=z$;
 \item $\sigma(z,z')\cdot\sigma(z',z'')$ ought to be omitted if $z'$ does not exist (note that this implies that $z=z''$); and
 \item $\sigma(z',z'')$ ought to be omitted if $z'=z''$.
\end{itemize}

\smallskip Now observe that we can compute the consecutive factors on
the right hand side of~\eqref{eq:split} as follows:
\begin{itemize}
\item We have $\sigma(z'',y)=\val_i(y)$, and this value is stored
  directly in the data structure.
\item Similarly, provided $z'$ exists, $\sigma(z,z')=\val_i(z')$.
  Furthermore, if $z'\neq z''$, then $z,z',z''$ are all
  $r(x,y)$-neighbors on the root-to-$y$ path in $T$, hence
  $\sigma(z,z')\cdot \sigma(z',z'') = \sigma(z,z')$ by the
  forward-Ramseyanity of the split.
 \item Provided $x\neq z$, $\sigma(x,z)$ can be computed by a recursive call.
   Here, observe that by the definition of $z$, no node of $R(x,z)$ is
   assigned value $r(x,y)$ under $\lambda$, hence $r(x,z)>r(x,y)$, as
   was promised. \end{itemize}

 Therefore, using~\eqref{eq:split} we can compute $\sigma(x,y)$ using
 one call to $\sigma(x,z)$ with $r(x,z)>r(x,y)$ and, additionally,
 $|S|^{\Oh(1)}$ time and $|S|^{\Oh(1)}$ operations in $S$. This
 concludes the proof of \cref{thm:colcombet-ds}.


\section{Trading space usage for query complexity}\label{sec:tradeoff}

In this section we show how to modify the approach presented in
\cref{sec:general-queries} to construct a data structure that offers
queries in time $k^{\Oh(1)}$, at the cost of increasing the space
usage to quadratic in the size of the graph. The outcome is captured
in the following statement.

\begin{theorem}\label{thm:tradeoff}
  Given a graph $G$ and an integer $k$, one can in time
  $2^{\Oh(k\log k)}\cdot |G|^{\Oh(1)}$ construct a data structure
  that may answer the following queries in time $k^{\Oh(1)}$:
  given $u,v\in V(G)$ and a set $S$ consisting of at most $k$ vertices
  of~$G$, does $G\models \conn(u,v,S)$? The space usage of the data
  structure is $k^{\Oh(1)}\cdot |G|^2$.
\end{theorem}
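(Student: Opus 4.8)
The plan is to follow the proof of \cref{thm:general-queries} from \cref{sec:general-queries} almost verbatim, making two substitutions and then repairing the single place where the argument genuinely used \emph{strong} unbreakability. For the first substitution, instead of \cref{thm:strong-unbreakability} we invoke \cref{thm:weak-unbreakability} with parameter $2k$ in place of $k$; this produces, in time $2^{\Oh(k\log k)}\cdot|G|^{\Oh(1)}$, a regular tree decomposition $\Tt=(T,\bag)$ that is weakly $(q,q)$-unbreakable and of adhesion at most $q$, where $q\coloneqq 2k=\Oh(k)$. For the second substitution, instead of the data structure of \cref{lem:torso-ds} we use the elementary one of \cref{lem:torso-ds-trivial}: simply precompute and store all $\Oh(|T|^2)=\Oh(|G|^2)$ values $\torso(c,d)$, each obtained by $q^{\Oh(1)}$ breadth-first searches in time $q^{\Oh(1)}\cdot\|G\|$. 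This costs $q^{\Oh(1)}\cdot|G|^2\|G\|=k^{\Oh(1)}\cdot|G|^{\Oh(1)}$ time, $q^{\Oh(1)}\cdot|G|^2=k^{\Oh(1)}\cdot|G|^2$ space, and answers $\torso(c,d)$ queries in time $q^{\Oh(1)}=k^{\Oh(1)}$. Everything else in \cref{sec:general-queries} is kept: the navigation structure $\Lb$ of \cref{lem:navigation}, the stored graphs $\bgraph_\emptyset(x)$, the least-common-ancestor closure $Y$ (of size $\Oh(k)$), and the bottom-up computation of $\profile(x)$ for $x\in Y$ via \cref{lem:warp} and \cref{lem:aggregate}. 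Note that \cref{lem:warp} uses no unbreakability at all, and that $|D(x)|\le q+2=\Oh(k)$, so a query issues $\Oh(k)$ torso queries and $\Oh(k)$ aggregation/warp steps.

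The one step that relied on strong unbreakability is the constant-time connectivity test inside the proof of \cref{lem:aggregate}, which went through \cref{lem:short-BFS-combinatorial}: strong unbreakability guaranteed that when two bag vertices are disconnected in $\bgraph_S(x)$, one of their components is small, so a pair of size-bounded searches suffices. Under weak unbreakability this can fail for components touching $\adh(x)$. I would replace \cref{lem:short-BFS-combinatorial} by the statement: \emph{for a regular, weakly $(q,q)$-unbreakable $\Tt$, a node $x$, and $S\subseteq\cone(x)$ with $|S|\le k$, at most one connected component $C$ of $G[\cone(x)]-S$ has $|V(C)\cap\bag(x)|>3k$.} Granting this, the bounded search in $\bgraph_S(x)$ works exactly as before with the threshold $q$ replaced by $3k{+}1$: for $a,b\in\bag(x)\setminus S$, run bounded breadth-first searches from $a$ and from $b$ in $\bgraph_S(x)$ (whose adjacency lists are emulated from $\bgraph_\emptyset(x)$ just as in the proof of \cref{lem:aggregate}, and which has the same connectivity as $G[\cone(x)]-S$ on $\bag(x)\setminus S$), terminating each search once it reaches $3k{+}1$ distinct vertices; if a search empties its queue it has enumerated the whole component of its source (of size $\le 3k$), and if both searches reach $3k{+}1$ vertices then both sources lie in the unique big component, hence are connected. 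The treatment of the pairs of $\binom{D(x)}{2}$ involving $u$ or $v$, and the bottom-up sweep over $Y$, are then identical to \cref{sec:general-queries}, with all bounds now polynomial in $k$.

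The one-big-component claim is the heart of the matter. Fix $x$ and $S$ with $|S|\le k$, and let $P=\adh(x)\setminus S$. Call a component $C$ of $G[\cone(x)]-S$ \emph{portal-light} if $|V(C)\cap\adh(x)|\le k$ and \emph{portal-heavy} otherwise. The portal-heavy components are pairwise disjoint and each meets more than $k$ of the at most $|\adh(x)|\le q=2k$ portals, so there is at most one portal-heavy component. Now suppose $C$ is portal-light with $|V(C)\cap\bag(x)|>3k$; set $L\coloneqq V(C)\setminus\adh(x)$ and $Z\coloneqq S\cup(V(C)\cap\adh(x))$, so $|Z|\le k+k=2k$. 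Every vertex of $L\subseteq\cmp(x)$ has all its neighbours inside $\cone(x)$, and $C$ is a connected component of $G[\cone(x)]-S$, so $(L\cup Z,\,(V(G)\setminus L\setminus Z)\cup Z)$ is a separation of $G$ of order at most $2k$. Since $|L\cap\bag(x)|=|V(C)\cap\bag(x)|-|V(C)\cap\adh(x)|>3k-k=2k$, weak $(2k,2k)$-unbreakability forces $|(V(G)\setminus L)\cap\bag(x)|\le 2k$; hence every component of $G[\cone(x)]-S$ other than $C$ meets $\bag(x)$ in at most $2k<3k$ vertices. So there is at most one portal-light component that is big, and if it exists then no other component — portal-heavy or not — is big; together with the uniqueness of the portal-heavy component, at most one component of $G[\cone(x)]-S$ is big.

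For the final accounting: initialization is dominated by computing $\Tt$, costing $2^{\Oh(k\log k)}\cdot|G|^{\Oh(1)}$, and by the trivial torso data structure, costing $k^{\Oh(1)}\cdot|G|^{\Oh(1)}$; the graphs $\bgraph_\emptyset(x)$ and $\Lb$ add only $k^{\Oh(1)}\cdot|G|^{\Oh(1)}$ in total by \cref{lem:total-size}. Memory is dominated by the torso data structure, $k^{\Oh(1)}\cdot|G|^2$. A query performs $\Oh(k)$ calls to $\Lb$, $\Oh(k)$ torso queries of cost $k^{\Oh(1)}$ each, and $\Oh(k)$ applications of the (unchanged) \cref{lem:warp} and the repaired \cref{lem:aggregate}, each running in $k^{\Oh(1)}$ time, for a total of $k^{\Oh(1)}$. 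I expect the main obstacle to be exactly the one-big-component property above and the bookkeeping it forces — tracking how the altered constant $3k{+}1$ propagates through every estimate of \cref{sec:general-queries}, in particular through the emulation argument for $\bgraph_S(x)$ and the handling of the $D(x)\setminus\bag(x)$ cases; the remainder is a transcription of the proof of \cref{thm:general-queries}.
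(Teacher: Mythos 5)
Your proposal is correct, but it takes a genuinely different route from the paper's. The paper also switches to the weakly unbreakable decomposition of \cref{thm:weak-unbreakability} (with parameter $k$, not $2k$) and to the trivial torso data structure of \cref{lem:torso-ds-trivial}, and then runs the algorithm of \cref{sec:general-queries} \emph{verbatim} with the original threshold $q$. The correctness argument, however, is not a repair of \cref{lem:short-BFS-combinatorial} but an entirely different invariant: the computed set $\Pi_x$ at each node may strictly contain $\profile(x)$, but it is always a \emph{semi-correct profile}, meaning every pair it omits is genuinely disconnected in $G[\cone(x)]-S$ and every pair it includes is connected in $G-S$ (\cref{lem:warp-weak,lem:aggregate-weak}). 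The crux is that when the bounded BFS declares two vertices ``connected'' after both searches blow up past $q$, this conclusion may be wrong inside $G[\cone(x)]-S$, but weak $(k,k)$-unbreakability of $\bag(x)$ \emph{in $G$} (applied to the separation $G-S$) guarantees it is correct in $G-S$; and at the root the two notions coincide, so the final answer is right. Your approach instead insists on computing the exact profiles, by proving the ``at most one component of $G[\cone(x)]-S$ has more than $3k$ vertices of $\bag(x)$'' lemma, which cleanly restores the second assertion of \cref{lem:short-BFS-combinatorial} (with a new threshold). I checked your separation construction $(L\cup Z,\, V(G)\setminus L)$ for a portal-light component $C$: it does have order at most $2k$, and $(2k,2k)$-unbreakability does force every other component to be small, so the lemma holds. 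The trade-offs: the paper's argument needs only $(k,k)$-unbreakability and leaves the algorithm untouched, at the price of a more delicate inductive invariant; your argument needs $(2k,2k)$-unbreakability and some constant bookkeeping (propagating $3k{+}1$ through the emulation claim, degree thresholds, and the $D(x)\setminus\bag(x)$ cases, all of which you correctly flag), but buys you exact profiles at every node. Either way the asymptotic bounds in the theorem statement come out the same.
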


Compared to \cref{thm:general-queries}, \cref{thm:tradeoff} offers a
much faster implementation of queries: the complexity is polynomial in
$k$, instead of doubly-exponential. The drawback is that while the
space usage is also polynomial in $k$, the dependency on the total
number of vertices becomes quadratic. Note that the initialization
time still depends exponentially on $k$.

The remainder of this section is devoted to the proof of
\cref{thm:tradeoff}. We will heavily rely on the approach presented in
\cref{sec:general-queries}, so we only discuss the modifications and
this section needs to be read in conjunction with
\cref{sec:general-queries}.

\paragraph*{The tree decomposition.} The main idea of the proof is to
rely only on weak unbreakability, instead of strong unbreakability as
in \cref{sec:general-queries}. This reduces the sizes of adhesions
from exponential in $k$ to just $k$. More precisely, given the input
graph $G$ and the parameter $k$, we apply
\cref{thm:weak-unbreakability} to compute, in time
$2^{\Oh(k\log k)}\cdot |G|^{\Oh(1)}$, a tree decomposition $\Tt$ with
the following properties:

\begin{itemize}
\item the adhesion of $\Tt$ is at most $k$; and
\item every bag of $\Tt$ is $(k,k)$-unbreakable in $G$.
\end{itemize}

Recall from the earlier discussion that we may assume that the
decomposition is regular. Obviously, the fact that $\Tt$ is only
weakly unbreakable renders the bulk of correctness argumentation
presented in \cref{sec:general-queries} invalid. Fixing this issue is
the main technical task of this section.

\paragraph*{Torso queries.} Note that even though now the adhesion of
$\Tt$ is bounded by $q=k$, if we just used \cref{lem:torso-ds} for
torso queries, we would still end up with exponential dependence on
$k$ in the query time. This can be mitigated by replacing the data
structure of \cref{lem:torso-ds} with a trivial one that just
memorizes all the answers. The drawback is that the space usage
becomes quadratic in $|G|$, as this is the total number of different
torso queries that can be asked.

\begin{lemma}\label{lem:torso-ds-trivial}
  One can compute in time $q^{\Oh(1)}\cdot |G|^{2}\|G\|$ a data
  structure that can answer the following query in time
  $q^{\Oh(1)}$: for given $x,y\in V(T)$, where $x$ is an ancestor of
  $y$, output the graph $\torso(x,y)$.
 The data structure takes space $q^{\Oh(1)}\cdot |T|^2$.
\end{lemma}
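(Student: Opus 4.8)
The plan is to prove \cref{lem:torso-ds-trivial} by the most direct approach possible: precompute and store the answer to \emph{every} torso query. Recall that a torso query is parameterized by a pair $(x,y)$ of nodes of $T$ where $x$ is an ancestor of $y$, and there are at most $|T|^2 \le |G|^2$ such pairs. The output $\torso(x,y)$ is a graph on the vertex set $\adh(x)\cup \adh(y)$, which has at most $2q$ vertices, so each answer can be encoded in $\Oh(q^2)$ bits; storing all of them costs $q^{\Oh(1)}\cdot |T|^2$ space, as claimed, and once the table is built a query is answered in $q^{\Oh(1)}$ time by a single table lookup (using the data structure $\Lb$ of \cref{lem:navigation} to recover, if needed, the relevant adhesions, though really all we need is a two-dimensional array indexed by node identifiers).

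The only real content is the initialization time bound $q^{\Oh(1)}\cdot |G|^2\|G\|$. First I would fix an arbitrary enumeration of $V(T)$ and, by a depth-first traversal, record for each node its set of ancestors, so that all valid pairs $(x,y)$ can be listed. Then, for each such pair, I would compute $\torso(x,y)=\torso_{G[\cone(x)\setminus \cmp(y)]}(\adh(x)\cup\adh(y))$ directly from its definition: for each ordered pair $(u,v)$ with $u,v\in \adh(x)\cup\adh(y)$, run a breadth-first search in the graph $G[\cone(x)\setminus\cmp(y)]$ restricted so that it may not pass through interior vertices of $\adh(x)\cup\adh(y)$ — concretely, delete all vertices of $(\adh(x)\cup\adh(y))\setminus\{u,v\}$ and test whether $u$ reaches $v$. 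This decides adjacency of $u$ and $v$ in $\torso(x,y)$. There are $\Oh(q^2)$ such BFS runs per pair, each taking $\Oh(\|G\|)$ time, so computing one torso costs $q^{\Oh(1)}\cdot\|G\|$ and over all $\Oh(|G|^2)$ pairs we get $q^{\Oh(1)}\cdot |G|^2\|G\|$ total, as promised. One should note that $G[\cone(x)\setminus\cmp(y)]$ can be materialized in $\Oh(\|G\|)$ time given $\Tt$, since membership in $\cone(x)$ and $\cmp(y)$ can be tested using the tree structure (again via $\Lb$ or via precomputed interval labels for the subtree rooted at each node).

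There is essentially no obstacle here: the lemma deliberately trades the sophisticated semigroup machinery of \cref{lem:torso-ds} for a brute-force table, so correctness is immediate from the definition of $\torso(x,y)$ and the running-time and space bounds follow from counting. The one point worth stating carefully is that the BFS used to compute adjacency in $\torso(x,y)$ must forbid \emph{interior} use of the other terminals in $\adh(x)\cup\adh(y)$ but allow the two endpoints $u,v$; this matches exactly the definition of torso, in which $u$ and $v$ are adjacent iff there is a $u$--$v$ path whose interior is disjoint from the vertex set $\adh(x)\cup\adh(y)$. I would therefore structure the proof as: (i) describe the precomputed array; (ii) give the per-pair computation of $\torso(x,y)$ via guarded BFS and bound it by $q^{\Oh(1)}\|G\|$; (iii) multiply by the number of pairs for the initialization bound, and bound the array size by $q^{\Oh(1)}|T|^2$; (iv) observe queries are $\Oh(1)$ array accesses, hence $q^{\Oh(1)}$ time.
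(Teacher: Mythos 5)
Your proof is correct and matches the paper's own argument essentially verbatim: precompute all $\Oh(|T|^2)$ torsos by BFS in $q^{\Oh(1)}\cdot\|G\|$ time each, store them in a two-dimensional table, and answer queries by lookup. The only cosmetic difference is that the paper runs one guarded BFS per terminal vertex in $\adh(x)\cup\adh(y)$ ($\Oh(q)$ searches) rather than one per ordered pair ($\Oh(q^2)$ searches), but both stay within $q^{\Oh(1)}\cdot\|G\|$.
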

\begin{proof}
  Note that there are at most $\binom{|T|}{2}$ pairs $(x,y)\in V(T)^2$
  such that $x$ is a strict ancestor of $y$ in $T$, and for each of
  them $\torso(x,y)$ can be computed in time $q^{\Oh(1)}\cdot \|G\|$
  by running a breadth-first search from each vertex of
  $\adh(x)\cup \adh(y)$. Hence, all those graphs can be computed in
  time $q^{\Oh(1)}\cdot |G|^{2}\|G\|$ and stored in a two-dimensional
  table indexed by nodes of $T$, which takes space
  $q^{\Oh(1)}\cdot |T|^2$. Upon query, the data structure just returns
  the relevant graph $\torso(x,y)$.
\end{proof}

We remark that in what follows, we could still use \cref{lem:torso-ds}
instead of \cref{lem:torso-ds-trivial}. This would result in a variant
of \cref{thm:tradeoff} where the query time complexity would be
$2^{\Oh(k^2)}$ and the space usage would be $2^{\Oh(k^2)}\cdot \|G\|$.

\paragraph*{Dealing with weak unbreakability.} The data structure for
\cref{thm:tradeoff} is now the same as the one presented in
\cref{sec:general-queries}, except for the two modifications presented
above: we use \cref{thm:weak-unbreakability} instead of
\cref{thm:strong-unbreakability} for the computation of a tree
decomposition, and we use \cref{lem:torso-ds-trivial} instead of
\cref{lem:torso-ds} for torso queries.

It remains to show how, given this data structure, to implement a
$\conn(u,v,S)$ query, where $u,v\in V(G)$ and $S\subseteq V(G)$ is
such that $|S|\leq k$. The algorithm is actually {\em{exactly}} the
same as the one presented in \cref{sec:general-queries}. As argued
through \cref{lem:warp} and \cref{lem:aggregate}, this algorithm takes
time~$q^{\Oh(1)}$ and uses $\Oh(k)$ queries to a data structure
supporting torso queries. Since now the torso queries
take~$q^{\Oh(1)}$~time and $q=k$, it follows that the whole algorithm
runs in $k^{\Oh(1)}$ time. So we are left with arguing that the
outcome is still correct.

As in \cref{sec:general-queries}, let $X$ be the set of nodes of $T$
whose margins intersect $\wh{S}\coloneqq S\cup \{u,v\}$ (plus the root
of $T$), and let $Y$ be the least common ancestor closure of $X$.
Recall that the algorithm presented in \cref{sec:general-queries}
computes, for each $x\in Y$, the set
$\profile(x)\subseteq \binom{D(x)}{2}$, where
$D(x)=\adh(x)\cup (\{u,v\}\cap \cone(x))$, consisting of all pairs of
vertices of $D(x)$ that are connected in $G[\cone(x)]-S$.

The idea is that we relax the condition on what the algorithm computes
to the following over-approximation of $\profile(x)$. For a node
$x\in V(T)$, we say that $\Pi_x\subseteq \binom{D(x)}{2}$ is a
{\em{semi-correct profile}} at~$x$ if the following two conditions
hold for all $\{a,b\}\in \binom{D(x)}{2}$:

\begin{itemize}[nosep]
\item If $a$ and $b$ are connected in $G[\cone(x)]-S$, then
  $\{a,b\}\in \Pi_x$.
\item If $\{a,b\}\in \Pi_x$, then $a$ and $b$ are connected in $G-S$.
\end{itemize}

Thus, a semi-correct profile at $x$ necessarily contains all the pairs
that belong to $\profile(x)$, but it can contain some more pairs.
These additional pairs are not connected in $G[\cone(x)]-S$, but we
are certain that they are connected in the whole graph $G-S$.

The remainder of this section is devoted to arguing the following
claim.

\begin{claim}\label{cl:semi-correct-overall}
  Assuming $\Tt$ is weakly $(q,k)$-unbreakable and the algorithm of
  \cref{sec:general-queries} is run to answer a query $\conn(u,v,S)$,
  the following holds: whenever the algorithm computes the profile at
  any node $x$, this profile is semi-correct. \end{claim}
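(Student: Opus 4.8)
The plan is to prove the claim by bottom-up induction over $T_Y$, mirroring the correctness argument of \cref{sec:general-queries} but weakening "$\profile(x)$ is computed exactly" to "the computed set is semi-correct." There are two ingredients to re-examine: the aggregation step (\cref{lem:aggregate}) and the warp/torso step (\cref{lem:warp}). For the base case we handle leaves of $T_Y$; for the inductive step we assume that the sets computed at the children of $x$ in $T_Y$ are semi-correct and argue that the set produced at $x$ is also semi-correct.

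First I would re-analyze \cref{lem:aggregate} under weak unbreakability. The upper containment (if $\{a,b\}$ is put in the profile then $a,b$ are connected in $G-S$) is the easy direction: every edge of the auxiliary graph $\bgraph_S(x)$, including the adhesion edges that survive, witnesses a genuine path in $G-S$ — here we use that the profiles of the affected children are semi-correct in the weaker "connected in $G-S$" sense, and that original edges of the bag are edges of $G$. So any path found in $\bgraph_S(x)$ translates to a walk in $G-S$, giving connectivity in $G-S$. The lower containment (if $a,b$ are connected in $G[\cone(x)]-S$ then $\{a,b\}$ is added) is where strong unbreakability was used in \cref{lem:short-BFS-combinatorial}: the bounded-BFS shortcut relied on the fact that one side of any small separation inside $G[\cone(x)]$ has at most $q$ bag-vertices. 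With only weak unbreakability this need not hold, so the bounded BFS might wrongly declare $a$ and $b$ disconnected. The key observation to rescue the argument is that weak $(q,k)$-unbreakability still says $\bag(x)$ is $(q,k)$-unbreakable \emph{in the whole graph $G$}; hence if the bounded BFS from both $a$ and $b$ gets truncated, then $a$ and $b$ are connected in $G-S$ (not necessarily in $G[\cone(x)]-S$), which is exactly what semi-correctness permits. If $a$ and $b$ genuinely are connected in $G[\cone(x)]-S$ then they are in particular connected in $\bgraph_S(x)$ (first bullet of \cref{lem:short-BFS-combinatorial}, whose proof only uses the construction, not unbreakability), so an untruncated BFS still finds them; and a truncated BFS that reaches the other vertex also finds them. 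The only failure mode is "both truncated, neither reached the other," and in that case the separation argument run in $G$ (rather than in $G[\cone(x)]$) shows the pair is connected in $G-S$, so adding it keeps semi-correctness. The extensions to pairs involving $u$ or $v$ go through identically.

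Next I would re-examine \cref{lem:warp}. Its proof builds the graph $J$ with edge set $\profile(y)\cup \torso(z,y)$ and reads off reachability, using the disjointness of $\Sh$ from $\cmp(z)\setminus\cone(y)$. The torso edges $\torso(z,y)$ are exact (they are computed directly, not via unbreakability), and they always witness genuine $S$-avoiding paths in $G[\cone(z)]$. So: if the input $\Pi_y$ is semi-correct, then every edge of $J$ witnesses connectivity in $G-S$ (torso edges in fact witness connectivity in $G[\cone(z)]-S\subseteq G-S$), giving the upper containment of semi-correctness for the output at $z$. For the lower containment, if $a,b\in D(z)$ are connected in $G[\cone(z)]-S$ then, because $\Sh$ avoids $\cmp(z)\setminus\cone(y)$, any such path decomposes into torso-pieces (recorded exactly in $\torso(z,y)$) and pieces inside $\cone(y)$ connecting adhesion/terminal vertices of $y$ — and those pieces are connected in $G[\cone(y)]-S$, hence recorded in $\profile(y)\subseteq\Pi_y$ by the first bullet of semi-correctness. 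So $a$ and $b$ are connected in $J$, and the warp step adds $\{a,b\}$. Thus \cref{lem:warp} preserves semi-correctness.

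Finally, assembling the induction: at a leaf of $T_Y$ the node has no affected children (every affected node is an ancestor of a node of $X\subseteq Y$), so \cref{lem:aggregate} is invoked with $Z=\emptyset$ and the semi-correctness analysis above applies with the trivial induction hypothesis; at an internal node $x$ of $T_Y$ with children $y_1,\dots,y_\ell$, each $\profile(y_i)$ is semi-correct by induction, each $\profile(z_i)$ with $z_i=\dir(x,y_i)$ is semi-correct by the warp step (the side condition $\Sh\cap(\cmp(z_i)\setminus\cone(y_i))=\emptyset$ holds because $Y$ is lca-closed, exactly as in \cref{sec:general-queries}), and then $\profile(x)$ is semi-correct by the aggregation step. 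The induction reaches the root $r$; since $D(r)=\{u,v\}$ and a semi-correct profile at $r$ both contains $\{u,v\}$ whenever $u,v$ are connected in $G[\cone(r)]-S=G-S$ and contains $\{u,v\}$ only if they are connected in $G-S$, the set computed at $r$ is nonempty iff $G\models\conn(u,v,S)$ — this is the extra payoff that will be recorded afterwards. I expect the main obstacle to be the lower-containment part of the aggregation step: one must carefully check that the only way the bounded-BFS procedure of \cref{cl:short-bfs} can err is the "both sides truncated" case, and then invoke weak unbreakability \emph{in $G$} (not in $G[\cone(x)]$) to certify that the pair is nonetheless connected in $G-S$, so that adding it is consistent with semi-correctness; everything else is a routine adaptation of the arguments already in \cref{sec:general-queries}.
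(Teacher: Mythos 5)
Your proposal matches the paper's argument essentially step by step: you identify that the aggregation step (\cref{lem:aggregate}) and the warp step (\cref{lem:warp}) are the two places to re-examine, you observe that the torso edges are exact and the easy containments translate edges into walks in $G-S$, and you correctly locate the crux at the ``both BFS truncated'' case, rescued by weak $(q,k)$-unbreakability of $\bag(x)$ in $G$ rather than in $G[\cone(x)]$ --- which is precisely the content of the paper's \cref{lem:aggregate-weak} and \cref{lem:warp-weak}, assembled by the same bottom-up induction. This is the paper's proof.
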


Note that for the root $r$ of $T$, we have $G[\cone(r)]=G$, so any
semi-correct profile at $r$ is equal to $\profile(r)$. Therefore,
\cref{cl:semi-correct-overall} implies that the outcome of the
algorithm is correct.
To prove \cref{cl:semi-correct-overall} it suffices to argue that the
assumption of semi-correctness can be pushed through the computation
presented in the proofs of \cref{lem:warp} and \cref{lem:aggregate}.
This we do in the next two lemmas.

\begin{lemma}\label{lem:warp-weak}
  Suppose $z$ is an ancestor of $y$ in $T$ such that $\wh{S}$ is
  disjoint with $\cmp(z)\setminus \cone(y)$, and $\Pi_y$ is a
  semi-correct profile for $y$. Let $J$ be a graph on vertex set
  $D(y)\cup \adh(z)$ whose edge set is the union of\,~$\Pi_y$ and the
  edge set of $\torso(z,y)$. Further, let
  $\Pi_z\subseteq \binom{D(z)}{2}$ consist of all pairs
  $\{a,b\}\in \binom{D(z)}{2}$ that are connected in $J-S$. Then
  $\Pi_z$ is a semi-correct profile for $z$.
\end{lemma}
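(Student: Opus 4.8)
The plan is to mirror the proof of \cref{lem:warp}, but tracking the weaker invariant of semi-correctness instead of exact correctness. Recall that semi-correctness at a node $w$ consists of two implications: (i) every pair of vertices of $D(w)$ connected in $G[\cone(w)]-S$ lies in the profile, and (ii) every pair in the profile is connected in $G-S$. I will verify these two implications separately for $\Pi_z$, the first giving a lower bound on $\Pi_z$ and the second an upper bound.

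\textbf{Upper bound (implication (ii)).} Suppose $\{a,b\}\in\Pi_z$, i.e.\ $a$ and $b$ are connected in $J-S$. It suffices to show that every edge of $J-S$ connects two vertices that are connected in $G-S$; then connectivity in $J-S$ transfers to connectivity in $G-S$ by concatenating paths. An edge of $J$ comes either from $\Pi_y$ or from $\torso(z,y)$. If it comes from $\Pi_y$, then by semi-correctness of $\Pi_y$ (implication (ii) for $y$) its endpoints are connected in $G-S$. If it comes from $\torso(z,y)=\torso_{G[\cone(z)\setminus\cmp(y)]}(\adh(z)\cup\adh(y))$, then by definition of torso there is a path in $G[\cone(z)\setminus\cmp(y)]$ between its endpoints whose interior avoids $\adh(z)\cup\adh(y)$; since $\wh S$ is disjoint from $\cmp(z)\setminus\cone(y)\supseteq \cone(z)\setminus(\cmp(y)\cup\adh(z))$ — more carefully, the interior of such a path lies in $\cmp(z)\setminus\cone(y)$, which is disjoint from $S$ by hypothesis — this path avoids $S$ and hence witnesses connectivity in $G-S$. (Here I should double-check the set-theoretic identity $\cmp(z)\setminus\cone(y) = \cone(z)\setminus(\cone(y)\cup\adh(z))$ and that torso paths through $\cone(z)\setminus\cmp(y)$ with interior avoiding the two adhesions actually have their interior inside $\cmp(z)\setminus\cone(y)$; this is the routine bookkeeping.)

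\textbf{Lower bound (implication (i)).} Suppose $a,b\in D(z)$ are connected in $G[\cone(z)]-S$; I must show $\{a,b\}\in\Pi_z$, i.e.\ $a$ and $b$ are connected in $J-S$. Take a path $P$ from $a$ to $b$ in $G[\cone(z)]-S$. Decompose $P$ into maximal subpaths according to whether their interior lies inside $\cone(y)$ or inside $\cone(z)\setminus\cmp(y)$; the transition points lie in $\adh(y)$ (since $\adh(y)$ separates $\cmp(y)$ from the rest of $\cone(z)$). Each subpath with interior in $\cone(z)\setminus\cmp(y)$ and endpoints in $\adh(z)\cup\adh(y)$ is witnessed by an edge (or a trivial path) of $\torso(z,y)$, hence an edge of $J$ not touching $S$. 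Each subpath with interior (and endpoints) inside $\cone(y)$ connects two vertices of $D(y)$ — here I need that its endpoints lie in $\adh(y)$, or are $u$ or $v$ lying in $\cmp(y)$, which holds since $a,b\in D(z)$ and the decomposition points are in $\adh(y)$, and $u,v$ are the only vertices of $D(z)$ possibly outside $\adh(z)$ — and such endpoints are connected in $G[\cone(y)]-S$, so by semi-correctness of $\Pi_y$ (implication (i) for $y$) the corresponding pair lies in $\Pi_y\subseteq E(J)$ and avoids $S$. Concatenating, $a$ and $b$ are connected in $J-S$, so $\{a,b\}\in\Pi_z$.

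\textbf{Main obstacle.} The delicate point is the case analysis in the lower bound: correctly arguing that the endpoints of every subpath of $P$ land in the ``right'' set ($\adh(y)$, or $\{u,v\}\cap\cone(y)$, respectively $\adh(z)\cup\adh(y)$), so that the inductive hypotheses about $\Pi_y$ and about $\torso(z,y)$ genuinely apply. This is exactly the content of the claim labelled \cref{cl:warp-J} in the proof of \cref{lem:warp}; here the subtlety is that, unlike there, one direction of that claim is genuinely weakened, and one must be careful that the hypothesis ``$\wh S$ disjoint with $\cmp(z)\setminus\cone(y)$'' is used in precisely the spot where it is needed (namely to ensure torso-edges avoid $S$), and nowhere is exact containment in $\cone(z)$ secretly assumed. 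Once this bookkeeping is done cleanly, both implications follow by path concatenation with no further difficulty.
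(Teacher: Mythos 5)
Your proposal is correct and follows essentially the same two-part argument as the paper's own proof: both directions proceed by decomposing a path into maximal segments living either in the torso region $\cmp(z)\setminus\cone(y)$ or in $\cone(y)$, mapping segments of the first kind to edges of $\torso(z,y)$ (using the hypothesis that $\wh S$ is disjoint from $\cmp(z)\setminus\cone(y)$ exactly where you flag it) and segments of the second kind to pairs in $\Pi_y$ via the two halves of semi-correctness. The set-theoretic identities you flag for double-checking do hold (in particular $(\cone(z)\setminus\cmp(y))\setminus(\adh(z)\cup\adh(y))=\cmp(z)\setminus\cone(y)$), and the bookkeeping on where segment endpoints land works out as you anticipate.
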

\begin{proof}
  Suppose first that $\{a,b\}\in \binom{D(z)}{2}$ are connected in
  $G[\cone(z)]-S$. Let $P$ be a path witnessing this. Then $P$ can be
  uniquely written as concatenation of paths $P_1,P_2,\ldots,P_\ell$,
  where each path $P_i$ has endpoints in $V(J)$, but is otherwise
  vertex-disjoint with $V(J)$. Note that each path $P_i$ has all
  internal vertices either contained in $\cmp(z)\setminus \cone(y)$,
  or in $\cmp(y)$. Paths of the first kind correspond to edges with
  the same endpoints present in the edge set of $\torso(z,y)$, while
  paths of the second kind correspond to edges with the same endpoints
  present in $\Pi_y$, due to the semi-correctness of $\Pi_y$. It
  follows that $a$ and $b$ are connected by a path in $J$ that avoids
  $S$, so $\{a,b\}\in \Pi_z$ by the definition of~$\Pi_z$.

  Conversely, suppose $\{a,b\}\in \Pi_z$. Then in $J$ there is a path
  $P'$ that connects $a$ and $b$ and is disjoint with $S$. For every
  edge $e$ of $P'$, let $P_e$ be a path in $G-S$ defined as follows:
  \begin{itemize}[nosep]
  \item If $e$ originates from the edge set of $\torso(z,y)$, then let
    $P_e$ be the path in $G[\cone(z)\setminus \cmp(y)]$ with the same
    endpoints as $e$ that witnesses the existence of $e$ in
    $\torso(z,y)$. Note that $P_e$ is disjoint with $S$, because all
    its internal vertices belong to $\cmp(z)\setminus \cone(y)$.
  \item If $e\in \Pi_y$, then by semi-correctness of $\Pi_y$, there is
    a path $P_e$ in $G-S$ with the same endpoints as $e$.
  \end{itemize}
  By concatenating paths $\{P_e\colon e\in E(P')\}$ in order we obtain
  a walk connecting $a$ and $b$ in $G-S$.
\end{proof}

\cref{lem:warp-weak} proves that the procedure presented in the proof
of \cref{lem:warp} outputs a semi-correct profile for $z$, provided it
is supplied with a semi-correct profile for $y$.

\begin{lemma}\label{lem:aggregate-weak}
  Let $x\in Y$ and let $Z$ be the set of affected children of $x$.
  Suppose for each $z\in Z$ we have a semi-correct profile $\Pi_z$.
  Let $\bgraph'$ be a graph obtained from $G[\bag(x)]$ by adding the
  following edges:
  \begin{itemize}
  \item for each $z\in Z$, add all pairs of\, $\Pi_z$ as edges; and
  \item for each child $w$ of $x$ such that $w\notin Z$, add all pairs
    of $\binom{\adh(w)}{2}$ as edges.
  \end{itemize}
  Then for each pair of distinct vertices $a,b\in \bag(x)$, the
  following implications hold:
  \begin{itemize}
  \item If $a$ and $b$ are connected in $\bgraph'-S$, then $a$ and $b$
    are connected in $G-S$.
  \item If there exist $A,B\subseteq \bag(x)$, each of size $q+1$ and
    disjoint from $S$, such that $a\in A$, $b\in B$, $\bgraph'[A]$ is
    connected, and $\bgraph'[B]$ is connected, then $a$ and $b$ are
    connected in $G-S$.
 \end{itemize}
\end{lemma}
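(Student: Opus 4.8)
The first bullet point is the easy direction. I would argue that $\bgraph'$ is, up to the added edges, a coarsening of $G[\cone(x)]$, so every edge of $\bgraph'$ can be ``realized'' by a path in $G$ that avoids $S$. Concretely: an edge $ab$ of $\bgraph'$ that was already in $G[\bag(x)]$ is realized by itself; an edge $ab$ coming from $\Pi_z$ for an affected child $z\in Z$ is realized by a path in $G-S$ using semi-correctness of $\Pi_z$ (this is exactly the second clause of the definition of a semi-correct profile); and an edge $ab$ coming from $\binom{\adh(w)}{2}$ for an \emph{unaffected} child $w\notin Z$ is realized by a path in $G[\cone(w)]$ with interior in $\cmp(w)$, which exists by regularity of $\Tt$ (condition (R2)+(R3)), and which avoids $S$ because $w$ unaffected means $\cmp(w)\cap\wh S=\emptyset$, so in particular $\cmp(w)\cap S=\emptyset$. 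Concatenating these realizing paths along a path from $a$ to $b$ in $\bgraph'-S$ gives a walk, hence a path, from $a$ to $b$ in $G-S$. (One should be slightly careful that the realizing paths may share vertices outside $\bag(x)$, but a walk is enough to conclude connectivity.)

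The second bullet point is the place where weak unbreakability enters, and it is the main obstacle. Suppose $A,B\subseteq\bag(x)\setminus S$ of size $q+1$ each, with $a\in A$, $b\in B$, and $\bgraph'[A]$, $\bgraph'[B]$ both connected. By the first bullet, $A$ lies inside a single connected component of $G-S$, and likewise $B$; call these components $C_A\ni a$ and $C_B\ni b$. I want to show $C_A=C_B$, i.e.\ $a$ and $b$ are connected in $G-S$. Suppose not. Then $S$ separates $C_A$ from $C_B$ in $G$, so there is a separation $(P,Q)$ of $G$ with $V(C_A)\subseteq P\setminus Q$, $V(C_B)\subseteq Q\setminus P$, and separator $P\cap Q\subseteq S$, hence of order at most $|S|\le k$. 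But then $A\subseteq P\cap\bag(x)$ and $B\subseteq Q\cap\bag(x)$, so $|P\cap\bag(x)|\ge q+1$ and $|Q\cap\bag(x)|\ge q+1$, contradicting the weak $(q,k)$-unbreakability of $\bag(x)$ in $G$. Hence $C_A=C_B$ and $a,b$ are connected in $G-S$.

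The only subtlety I need to pin down for the second part is the construction of the separation $(P,Q)$ of order $\le|S|$ witnessing that $S$ separates $C_A$ from $C_B$: take $P$ to be $S$ together with all vertices reachable from $a$ in $G-S$, and $Q$ to be $S$ together with $V(G)\setminus P$. Then $P\cup Q=V(G)$, $P\cap Q=S$, there is no edge between $P\setminus Q$ and $Q\setminus P$ (any such edge would extend the reachability set from $a$), and since $C_A\ne C_B$ we have $V(C_A)\subseteq P\setminus S$ and $V(C_B)\subseteq Q\setminus S$. Since $A\cap S=B\cap S=\emptyset$ by assumption, we get $A\subseteq P\setminus S$ and $B\subseteq Q\setminus S$, which is what the unbreakability argument needs. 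This completes the proof; the two bullets of \cref{lem:aggregate-weak} are then exactly what is needed to push semi-correctness through the aggregation step of \cref{lem:aggregate}, in the same way \cref{lem:warp-weak} does for \cref{lem:warp}.
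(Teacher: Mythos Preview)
Your proof is correct and follows essentially the same approach as the paper: realize each edge of $\bgraph'-S$ by an $S$-avoiding path in $G$ (using semi-correctness for affected children and regularity for unaffected ones), then for the second implication use the first to place $A$ and $B$ each in a single component of $G-S$ and invoke weak $(q,k)$-unbreakability of $\bag(x)$ to force these components to coincide. The paper's proof is terser but identical in substance; your explicit construction of the separation $(P,Q)$ just spells out what the paper leaves implicit.
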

\begin{proof}
  For the first implication, observe that if $P'$ is a path connecting
  $a$ and $b$ in $\bgraph'-S$, then by the semi-correctness of
  profiles $\Pi_z$ and regularity of $\Tt$, every edge of $P'$ that is
  not present in $G[\bag(x)]-S$ can be replaced by a path in $G-S$
  connecting same endpoints. This yields a walk in $G-S$ connecting
  $a$ and $b$.

  For the second implication, the same argument as above shows that
  $A$ is contained in the vertex set of a single connected component
  of $G-S$, and the same goes for $B$. Since $A,B\subseteq \bag(x)$,
  $\bag(x)$ is $(q,k)$-unbreakable in $G$, and $|S|\leq k$, it follows
  that this must be the same connected component for both $A$ and $B$.
  So $a$ and $b$ belong to the same connected component of $G-S$.
\end{proof}

It is straightforward to verify that in the algorithm presented in
\cref{lem:aggregate}, the considered graph $\bgraph_S$ is exactly the
graph $\bgraph'-S$ from \cref{lem:aggregate-weak} (considered there
for $\Pi_z=\profile(z)$ for~$z\in Z$). Further, a pair
$\{a,b\}\in \binom{\bag(x)}{2}$ is deemed connected in $G[\cone(x)]-S$
if and only if one of the preconditions of the two implications
mentioned in \cref{lem:aggregate-weak} holds. Hence, if we run the
algorithm supplied with the semi-correct profiles
$\{\Pi_z\colon z\in Z\}$, then the outcome is
$\Pi_x\subseteq \binom{D(x)}{2}$ satisfying the following.
\begin{itemize}[nosep]
\item If distinct $a,b\in D(x)$ are connected in $G[\cone(x)]-S$, then
  $\{a,b\}\in \Pi_x$. This follows from the fact that the sets
  $\{\Pi_z\colon z\in Z\}$ are supersets of the profiles
  $\{\profile(z)\colon z\in Z\}$, so the edge set of the graph
  $\bgraph'-S$ is a superset of the edge set of the graph $\bgraph_S$
  considered in the proof of \cref{lem:aggregate}.
  Consequently, every pair $\{a,b\}\in \profile(x)$ is included in
  $\Pi_x$.
\item If $\{a,b\}\in \Pi_x$, then $a$ and $b$ are connected in $G-S$.
  This follows from the construction of $\Pi_x$ and
  \cref{lem:aggregate-weak}. (Semi-correctness of
  $\{\Pi_z\colon z\in Z\}$ needs also to be used for the corner cases
  when $\{a,b\}\nsubseteq \binom{\bag(x)}{2}$.)
\end{itemize}

In other words, the constructed set $\Pi_x$ is a semi-correct profile
for $x$. We conclude that if the algorithm presented in the proof of
\cref{lem:aggregate} is supplied with semi-correct profiles
$\{\Pi_z\colon z\in Z\}$, then it produces a semi-correct profile
$\Pi_x$ for $x$.

Now, \cref{cl:semi-correct-overall} follows by a straightforward
bottom-up induction using \cref{lem:warp-weak} and
\cref{lem:aggregate-weak}. This concludes the proof of
\cref{thm:tradeoff}.


\section{Automata on augmented trees}\label{sec:automata}
We consider \emph{augmented trees}, that are, rooted trees in which the set of children of any given
node carries the structure of a graph, or more generally, a relational
structure (see \cref{fig:augmented}).
\begin{figure}[h!]
  \centering
   \includegraphics[page=4,scale=1.2]{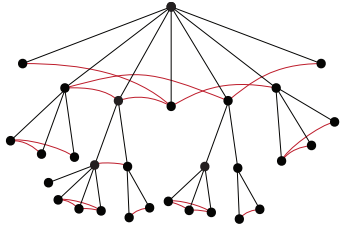}
   \caption{A tree augmented with graphs. The root is at the top, the edges of the tree
   are marked black, while the edges of the graphs are marked red,
   and they may only connect nodes that are siblings in the tree.
   As a first-order relational structure, the tree is equipped with
   the ancestor relation induced by the tree, as well as the red edge relation.
   }
   \label{fig:augmented}
 \end{figure}

Unranked, unordered trees are a special case of this, by
assuming the children form a set with no relations, while ordered
trees are a special case where the children form a total order. Here,
we allow the children to carry an arbitrary structure, but in our
application, this structure will be an edge- and vertex-colored
graph that excludes some fixed topological minor.  We also consider
automata that process such trees from the leaves towards the root in a deterministic fashion, by labeling each node by a state from a finite state space. To determine whether a node $v$ should be labeled with a state $q$,
 a first-order sentence $\delta_q$ is evaluated in the structure induced by the children of $v$.

Augmented trees can be viewed as relational structures equipped with
the ancestor relation $\preceq$ of the tree, as well as the relations $\Sigma$ relating the children of any given node, with which the tree is augmented.
Therefore, the logic $\FO(\preceq, \Sigma)$ is able to mix the two
structures, and express for example that every node whose children satisfy
a certain $\FO(\Sigma)$-definable property has an ancestor that satisfies another $\FO(\Sigma)$-definable property.

We use those automata as a tool for solving various computation problems related to the logic $\FO(\preceq, \Sigma)$.
Namely, the main results of this section are:
\begin{enumerate}
  \item every formula of $\FO(\preceq, \Sigma)$ can be converted into an equivalent automaton on augmented trees
  (\cref{thm:formulas to automata}),
  \item automata on augmented trees, and hence $\FO(\preceq, \Sigma)$ sentences, can be efficiently evaluated on trees augmented with structures from a class $\CC$, assuming that the model-checking of first-order logic can efficiently be solved on $\CC$
  (\cref{thm:formula evaluation,lem:automata evaluation}),
  \item automata on augmented trees, and hence $\FO(\preceq, \Sigma)$ formulas, can be efficiently queried on trees augmented with structures from a class $\CC$, assuming   first-order queries can be efficiently answered on $\CC$
  (\cref{thm:answering,lem:answering}),
  \item satisfying valuations to automata on augmented trees, and hence $\FO(\preceq, \Sigma)$ formulas, can be efficiently enumerated on trees augmented with structures from a class $\CC$, assuming first-order queries can be efficiently enumerated on $\CC$
   (\cref{thm:formula enumeration,lem:automata enumeration}).
\end{enumerate}
All the results proved in this section are generic and can be applied to trees augmented with various kinds of structures.
More generally, they are proved for the more
powerful logic $\FOMSO$ that
allows to access the tree structure using not only the relation $\preceq$, but indeed any $\MSO$-definable relation on trees labeled with labels from a finite alphabet $A$.
Our results generalize known results concerning the evaluation and enumeration of $\MSO$ queries on usual trees, e.g.~\cite{KazanaS13}.

In the next section, we use the results from this section
in the context of topological minor-free graph classes. Namely, we reduce the problem of evaluating/enumerating $\FOsep$ queries on such classes to the problem of evaluating/enumerating $\FOMSO$ queries on trees augmented with topological minor-free graph classes.
However,  the results obtained in this section may be of independent interest,
and potentially of broader applicability.

In the algorithmic statements proved in this section, we will assume that the class $\CC$ with which the trees are augmented
is a class for which $\FO$ model-checking,
query answering or query enumeration can be solved efficiently.
We remark here that all these assumptions are satisfied whenever $\CC$ is a topological minor-free graph class,
as will happen in our applications.
More generally, they are satisfied
for any class with bounded expansion, by results of~\cite{DBLP:journals/lmcs/KazanaS19} or~\cite{10.1145/3375395.3387660}.

\subsection{Augmented trees and their automata}

Fix a signature $\Sigma$ and an alphabet $A$. We assume that
$\Sigma\cap A=\emptyset$ and that $\Sigma$ does not contain the symbol
$\preceq$. Let $\CC$ be a class of $\Sigma$-structures. A
\emph{$\Cc$-augmented tree} over the alphabet $A$ is a rooted tree~$T$
where each node $v$ is labeled with a letter from $A$ and is associated with a
$\Sigma$-structure $\str A_v\in \CC$, called the \emph{whorl\footnote{
    In botany, a \emph{whorl} is an arrangement of similar parts,
e.g. branches, that radiate from a single point and surround the
    stem of a tree or plant.} at $v$}, whose elements are the children
of $v$ in $T$.
In some applications, it may be convenient to each node to be labeled with 
a \emph{set} of letters from $A$, rather than a single letter.
This scenario can be modeled using the above definition, 
by replacing the alphabet $A$ by $2^A$.

We view an augmented tree $T$ over the alphabet $A$ as relational
structure over the signature $\Sigma\cup\set\preceq\cup A$, as
follows:
\begin{itemize}
\item the domain consists of the nodes of $T$,
\item the unary relation $a$, for $a\in A$, contains a node $v$
  if and only if $v$ has label $a$,
\item $v\preceq w$ denotes that $v$ is an ancestor of $w$ in $T$
  (possibly $v=w$),
\item for each $R\in \Sigma$ of arity $t$, $R(v_1,\ldots,v_t)$ holds
  if and only if $v_1,\ldots,v_t$ have the same parent $v$, and
  $R(v_1,\ldots,v_t)$ holds in the whorl $\str A_v$ at $v$.
\end{itemize}

\paragraph{Automata.}
We now define automata that input augmented trees, and process them
from the leaves towards the root. As an automaton proceeds, it labels
each node by a state from a finite set $Q$ of states. To determine the
new state at a node $v$, assuming each child of $v$ already has an
assigned state, the automaton evaluates first-order sentences on the
whorl of $v$, which is now a $Q$-labeled $\Sigma$-structure (see \cref{fig:run}).

\begin{figure}[h!]
  \centering
   \includegraphics[page=1,scale=2]{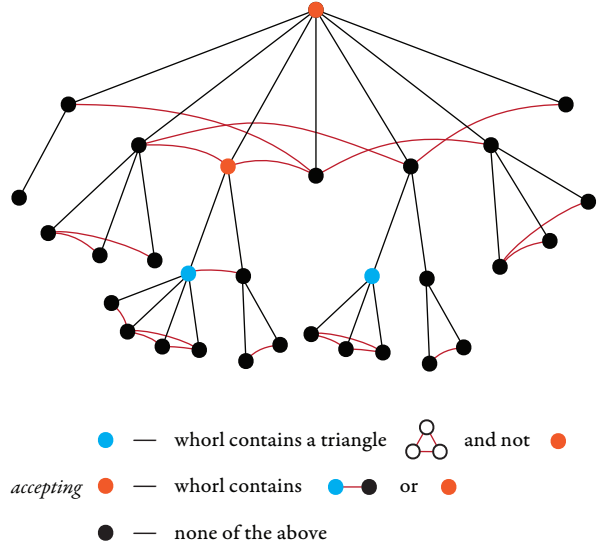}
   \caption{A run of an automaton on a tree augmented with graphs. The underlying tree is unlabeled, that is, over an alphabet with one element. The color of a node represents the state assigned in the run.
   The transition sentences are described above, and can be expressed using first-order sentences.
   The automaton accepts those augmented trees that contain two nodes $u$ and $v$ such that $u$ and $v$ are adjacent in the (red edge) graph relation, the whorl at $u$ contains a triangle, and the whorl at $v$ does not contain a triangle.
   }
   \label{fig:run}
 \end{figure}

\newcommand{\aut}[1]{{\mathcal #1}}

\pagebreak
Formally, an \emph{automaton $\aut A$ on augmented trees} consists of:
\begin{itemize}
\item an input \emph{alphabet} $A$: the automaton will process
  augmented trees over the alphabet $A$,
\item a finite set of \emph{states} $Q$,
\item a set of \emph{accepting states} $F\subset Q$,
\item for each state $q\in Q$ and letter $a\in A$, a first-order
  sentence $\delta_{q,a}$ in the signature $\Sigma\cup Q$, where each
  $q\in Q$ is viewed as a unary relation symbol, such that for every
  fixed $a\in A$, the sentences $(\delta_{q,a})_{q\in Q}$ are mutually
  inconsistent and their disjunction $\bigvee_{q\in Q}\delta_{q,a}$ is
  equivalent to true. The sentences $\delta_{q,a}$ are called the
  \emph{transition sentences} of $\aut A$.
\end{itemize}

A \emph{run} of $\aut A$ on an augmented tree $T$ is a labeling
$\rho\from V(T)\to Q$ such that for every node $v$ with letter~$a$ and
whorl $\str A_v$, the state $q=\rho(v)$ is the unique state such that
$\delta_{q,a}$ holds in the \mbox{$Q$-labeled} structure~$\str A_v$
with labeling $\rho$ restricted to $\str A_v$.  Formally, $\str A_v$
is viewed as a structure over the signature~\mbox{$\Sigma\cup Q$},
where a predicate $q\in Q$ holds on a vertex $v$ if and only if
$\rho(v)=q$.  The run is \emph{accepting} if it labels the root with
an accepting state. The automaton $\aut A$ \emph{accepts} an augmented
tree $T$ if it has an accepting run on it.

Note that there is exactly one run of $\aut A$ on any given augmented
tree $T$, by the assumption that the sentences
$(\delta_{q,a})_{q\in Q}$ are mutually inconsistent and their
disjunction is equivalent to true. It is undecidable if a given
collection of transition sentences satisfies the
requirements. However, this will not be a concern for us, as all the
automata constructed in the proofs below will satisfy the required
conditions by construction. Moreover, it is not difficult to come up
with an equivalent model with decidable syntax, by syntactically
enforcing that the transition sentences satisfy the required
conditions.


\subsection{From $\FOMSO$ to automata}
\medskip As we will see, automata will have the power to simulate first-order sentences over augmented trees,
that is, for every first-order sentence $\phi$ in the signature of augmented trees there is a automaton $\aut A_\phi$ on augmented trees that accepts an augmented tree $T$ if and only if $T$ satisfies $\phi$.
We also consider automata that process trees together
with valuations, in order to capture formulas $\phi(\tup x)$ with free variables. A tree $T$ over an alphabet $A$ together with a valuation $\tup a\from\tup x\to V(T)$ is modeled as a tree $T\otimes \tup a$ over an extended alphabet, defined as follows.

Let $\tup x$ be a set of variables.  Given an
alphabet $A$, construct a new alphabet
$A\otimes \tup x:=A\times 2^{\tup x}$.
That is, an element of this alphabet is a pair $(a,\tup y)$ consisting of a letter  $a\in A$ and a set of variables $\tup y\subset \tup x$.
Given a tree $T$ with labels
from~$A$ and a partial valuation (partial function)
$\tup a\from\tup x\partto V(T)$, define a new tree $T\otimes \tup a$
over the alphabet~$A\otimes \tup x$, in which a node $v$ has label
$(a,\tup y)$ if and only if $v$ has label $a$ in $T$ and
$\tup y=\setof{x\in\tup x}{\tup a(x)=v}$.  In particular, taking the
empty partial function $\emptyset\from \tup x\partto V(T)$, we obtain
the tree $T\otimes\emptyset$, in which every node $v$ that has label
$a$ in $T$ has label $(a,\emptyset)$ in $T\otimes\emptyset$.

Fix a finite alphabet $A$ and a set of variables $\tup x$.
An automaton \emph{with variables $\tup x$} on augmented trees over the alphabet $A$, denoted $\aut A(\tup x)$,
is an automaton $\aut A$ on augmented trees over the alphabet
$A\otimes \tup x$.  Given an augmented tree $T$ we may view $\aut A(\tup x)$ as a query that given a tuple $\tup a\in V(T)^{\tup x}$
determines whether or not $\aut A$ accepts the tree $T\otimes \tup a$. Let $\aut A(T)\subset V(T)^{\tup x}$ denote the set of valuations
$\tup a\from \tup x\to V(T)$ such that $\aut A$ accepts
$T\otimes \tup a$.

The following theorem states that this automata model is at least as
powerful as $\FOMSO$.

\begin{theorem}\label{thm:formulas to automata}
  For every formula $\phi(\tup x)$ of $\FOMSO$ there is an automaton
  $\aut A$ on augmented trees over the alphabet $A\otimes \tup x$ such
  that for every augmented tree $T$ over the alphabet $A$ we have
  \mbox{$\aut A(T)=\phi(T)$}.
  Moreover,
  $\aut A$ is computable from $\phi$.
\end{theorem}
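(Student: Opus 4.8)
The plan is to prove \cref{thm:formulas to automata} by structural induction on the formula $\phi$, interpreting the word ``formula'' in the grammar of $\FOMSO=\FO(\MSO(\preceq,A)\cup\Sigma)$. The base cases and the Boolean/quantifier cases must be handled separately, because $\FOMSO$ has two quite different kinds of atomic building blocks: the relational atoms of $\Sigma$ (which speak only about a single whorl) and the $\MSO(\preceq,A)$ atoms (which speak about the tree skeleton but see each node only through its $A$-label, not through its children's structure).

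\medskip

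\textbf{Step 1: Atomic formulas from $\Sigma$.} For an atom $R(x_{i_1},\dots,x_{i_t})$ with $R\in\Sigma$ of arity $t$, recall that in an augmented tree this holds iff $x_{i_1},\dots,x_{i_t}$ are siblings and $R$ holds among them in their common whorl. The automaton $\aut A$ with variables $\tup x$ works over $A\otimes\tup x$ and needs only two states: it propagates a ``rejecting'' state up from any node whose subtree already reveals a violation, and otherwise computes at each node $v$ whether the variables of $\tup x$ placed among the children of $v$ satisfy $R$ in the whorl $\str A_v$ — this is literally a fixed first-order sentence over $\Sigma\cup Q$ evaluated in $\str A_v$ (with the $A\otimes\tup x$-labels of the children telling us which child holds which variable), and it also needs to detect and remember whether the variables $x_{i_1},\dots,x_{i_t}$ all actually lie among the children of $v$. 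Equality atoms $x_i=x_j$ are even simpler: a node's label over $A\otimes\tup x$ directly records which variables sit at it, so a single-node check suffices and can be relayed upward.

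\medskip

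\textbf{Step 2: Atomic formulas from $\MSO(\preceq,A)$.} An $\MSO(\preceq,A)$-formula $\psi(\tup x)$ is to be evaluated on the tree skeleton of $T$, where each node carries its $A$-label and the free first-order variables $\tup x$ are marked; crucially, $\psi$ does \emph{not} refer to any $\Sigma$-relation. So this reduces to the classical fact that $\MSO$ on $A$-labeled trees with marked elements is captured by a bottom-up deterministic tree automaton — this is the Doner--Thatcher--Wright / Büchi-type theorem. I would invoke that to get a tree automaton $B$ over the alphabet $A\otimes\tup x$ (in the usual sense: transition function $Q^{\ast}\times(A\otimes\tup x)\to Q$ on unranked trees, or its ranked encoding). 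The point is then that such a tree automaton is trivially a special case of an automaton on augmented trees in our sense: the transition sentence $\delta_{q,a}$ only needs to speak about the \emph{multiset of states} appearing among the children (and possibly whether a state appears, and in what configuration) — a Presburger/threshold-counting condition on children, which is certainly expressible as a first-order sentence over the signature $Q$ (with no $\Sigma$-relations used at all). This makes the $\MSO(\preceq,A)$-atoms available to our automaton model.

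\medskip

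\textbf{Step 3: Closure under Boolean connectives and first-order quantifiers.} For negation and conjunction, use a product construction on state spaces: if $\aut A_1$ and $\aut A_2$ compute queries $\phi_1(\tup x)$ and $\phi_2(\tup x)$, the automaton over the product state space $Q_1\times Q_2$ runs both in parallel (its transition sentences are conjunctions of the two sets of transition sentences, which remain mutually inconsistent and jointly exhaustive within the product), and one reassigns the accepting set to encode $\neg$ or $\wedge$. For existential quantification $\exists x\,\phi(\tup x,y)$ — this is the genuinely interesting case and the main obstacle — one wants to project away a variable. The standard move is nondeterministic guessing of where $y$ sits, then determinization; but since our automata are required to be deterministic with first-order transition sentences, I would instead do a subset/powerset construction \emph{within} the automaton. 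Concretely, run the automaton for $\phi(\tup x,y)$ on the ``$y$-unmarked'' tree but have each node compute, as its state, the \emph{set} of states reachable under all possible placements of $y$ in its subtree; the new transition sentence at a node must compute this reachable set from the reachable-sets of the children together with the whorl structure — this requires an $\FO$ sentence over $\Sigma\cup Q'$ (where $Q'=2^Q$) that, for each candidate new state $S\subseteq Q$, expresses ``$S$ is exactly the set of $q$ such that either $y$ is placed at some child and $\delta_{q,\cdot}$ fires for that child's guessed value, or $y$ is placed in some subtree below and the child-states, as chosen from their reachable-sets, make $\delta_{q,\cdot}$ fire''. Writing this down carefully and checking it is $\FO$-expressible over the children (it is: it is a Boolean combination of statements ``there exists a child with property $P$'' and ``all children have property $P$'', where $P$ is itself $\FO$ over $\Sigma\cup Q$) is where the bulk of the work lies.

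\medskip

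\textbf{Step 4: Effectiveness.} At every step the construction is explicitly algorithmic — the tree-automaton from $\MSO$ is computable, products and powersets are computable, and the transition sentences are written down syntactically — so $\aut A$ is computable from $\phi$, as claimed. I expect the powerset-determinization for $\exists$ to be the main obstacle, both in getting the transition sentence syntactically right and in verifying that the $(\delta_{q,a})_{q}$ stay mutually inconsistent and exhaustive after projection (which forces the ``state = full reachable set'' formulation rather than a naive guess-and-check).
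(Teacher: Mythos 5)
Your proof follows the same overall route as the paper's: structural induction on $\phi$, a direct two-state automaton for $\Sigma$-atoms, the classical translation of $\MSO(\preceq,A)$ into bottom-up tree automata whose threshold-counting transitions are plainly $\FO$-expressible over the unary state predicates, product and complementation for the Boolean cases, and a powerset-style construction for existential quantification. Steps~1 and~2 and the Boolean part of Step~3 match the paper's argument and are correct.

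The existential case, however, has a genuine gap. You take the new state space to be $Q'=2^Q$, the state at a node $v$ being the set of inner-automaton states arising at $v$ under placements of the quantified variable in the subtree below $v$. That is not enough information to define the transition sentence at the parent: when the quantified variable sits inside one child's subtree, every \emph{other} child must contribute its \emph{default} state (the inner automaton's state at that child with the variable absent), and the default state is not identifiable from the reachable set alone (it may coincide with, or differ from, any element of that set). Your own description betrays this: you write that ``the child-states, as chosen from their reachable-sets, make $\delta_{q,\cdot}$ fire'', but the choice cannot be made independently per child --- exactly one child may deviate, and the rest must supply their default state. The fix is to enrich the state so that it records the default state in addition to the reachable set. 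This is precisely what the paper's \cref{lem:powerset construction} does: its state space is $Q\times Q\times 2^Q\times 2^Q$, a tuple $(q,q',N,R)$ recording the default state $q$, the state $q'$ when the variable is placed at the node itself, the set $N$ of resulting states when the variable is placed at a child, and the set $R$ when it is placed at any strict descendant. The components $q$ and $q'$ are exactly what your $2^Q$ is missing, and with them the transition sentences become well-defined (mutually exclusive, exhaustive, deterministic) and $\FO$-expressible over the whorl; without them the construction does not close.
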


  In the case when $\Sigma$ is empty, $\FOMSO$ boils down to
  $\MSO(\preceq,A)$. It is well known (see
  e.g.~\cite{loding2012basics}) that $\MSO$ on trees is captured by
  tree automata, where the automata may count the number of
  occurrences of any fixed state in the children, up to a certain
  threshold. This can be easily expressed using first-order sentences
  (in fact, boolean combinations of existential sentences) over the
  signature consisting only of unary predicates denoting the
  states. Hence, this yields a tree automaton~$\aut A$ on augmented
  trees for $\Sigma=\emptyset$. Therefore, the case when
  $\Sigma=\emptyset$ follows from known results.

\begin{proof}[Proof of Theorem~\ref{thm:formulas to automata}]
  We proceed by induction on the structure of the formula
  $\phi(\tup x)$.  Formally, in the induction we are considering not
  only the formula $\phi$ itself, but also a specified set $\tup x$ of
  variables containing the free variables of $\phi$. In particular, if
  $\tup x\subsetneq \tup z$, then $\phi(\tup x)$ and $\phi(\tup z)$
  are separate cases that we need to consider.  However, note that if
  we have an automaton $\aut A$ over the alphabet $A\otimes \tup x$
  for $\phi(\tup x)$ as in the statement and $\tup x\subset \tup z$,
  then the automaton $\aut A$ can be also viewed as an automaton over
  the alphabet $A\otimes \tup z$, and this automaton then satisfies
  the statement for $\phi(\tup z)$.

  In the inductive base, $\phi(\tup x)$ is an atomic formula: either
  $\mu(\tup x)$ for some formula $\mu(\tup x)\in\MSO(\preceq,A)$ or
  $R(x_1,\ldots,x_k)$ for some $R\in\Sigma$.  The first case is
  well-known, see the remark preceding the proof, and this yields an
  automaton on augmented trees in which the transition sentences do
  not use the relations from $\Sigma$.  We consider the second case.


  The automaton $\aut A$ for $R(x_1,\ldots,x_t)$, where $R\in \Sigma$,
  has states $q_{acc}$ and $q_{rej}$, where $q_{acc}$ is an accepting
  state and~$q_{rej}$ is a rejecting state. Observe that
  $R(a_1,\ldots, a_t)$ can only be true if all $a_i$ are elements of a
  fixed $\str A_v$, as the relation $R$ does not contain tuples with
  elements of different whorls.  Hence, the automaton can proceed from
  the leaves, which we initiate with $q_{rej}$, to the root as
  follows. For every node $v$ it checks if all the variables $x_i$ are
  interpreted in $\str A_v$ (recall that the alphabet is
  $A\otimes \tup x=A\times 2^{\bar x}$) and satisfy $R$. If this is
  the case, then the automaton changes state to $q_{acc}$ and
  propagates this state. Otherwise, it propagates the
  state~$q_{rej}$. This intuition is formalized by the following
  formulas
  \begin{align*}
  \delta_{q_{acc}, (a,U)} & =
  \exists y.\,q_{acc}(y)
  \vee
  \exists y_1,\ldots,y_t.\, \big(
  	R(y_1,\ldots,y_t)\wedge  \bigwedge\limits_{i\le t} \hspace{2mm}
  	\bigvee\limits_{\stackrel{W\subseteq 2^{\bar x}}{x_i\in W}} \hspace{2mm}
  	\bigvee_{a\in A}(a,W)\hspace{1mm}(y_i)
  \big).\\
  \delta_{q_{rej}, (a,U)} &= \neg \delta_{q_{acc}, (a,U)}.
  \end{align*}


  \medskip The case when $\phi(\tup x)$ is a disjunction of formulas
  $\alpha(\tup x)\lor\beta(\tup x)$ is handled by the usual product
  automaton construction. Namely, let $\aut A$ and $\aut B$ be
  automata obtained by inductive assumption applied to
  $\alpha(\tup x)$ and $\beta(\tup x)$. Construct the automaton
  $\aut A\times\aut B$ whose states are pairs $(p,q)$, where $p$ is a
  state of~$\aut A$ and $q$ is a state of~$\aut B$, and transition
  sentences are
  $\delta_{(p,q),a}\equiv\delta^{\aut A}_{p,a}\land \delta^{\aut
    B}_{q,a}$, where $\delta^{\aut A}_-$ and $\delta^{\aut B}_-$ are
  the transition sentences of $\aut A$ and $\aut B$, respectively. The
  accepting states of $\aut A\times \aut B$ are pairs $(p,q)$ such
  that~$p$ is accepting in $\aut A$ or $q$ is accepting in $\aut
  B$. Then the automaton $\aut A\times \aut B$ satisfies the condition
  in the statement, for the formula $\phi(\tup x)$.

  The case when $\phi$ is a negation of a formula is handled by the
  usual complementation of automata, since the automata are
  deterministic and have exactly one run on every given tree.  Hence,
  replacing the set $F$ of accepting states with its complement
  $Q\setminus F$, we obtain an automaton $\aut A'$ that accepts a tree
  $T\otimes \tup a$ if and only if $\aut A$ does not accept
  $T\otimes \tup a$.

  It remains to consider the case when $\phi$ is of the form
  $\exists x.\psi(x,\tup y)$, for some formula $\psi(x,\tup y)$. To
  simplify notation we assume that $\tup y$ is empty, so that $\phi$
  is of the form $\exists x.\psi(x)$. This is also without loss of
  generality, since a formula with free variables $\set{x}\cup \tup y$
  may be viewed as a formula with one free variable $x$, in the
  signature of augmented trees over the alphabet $A\otimes \tup y$.

  For an augmented tree $T$ and a node $v\in V(T)$, let $T\otimes v$
  denote $T\otimes \tup a$, where $\tup a$ is the valuation mapping
  $x$ to $v$.  By inductive assumption, there is an automaton $\aut A$
  that accepts an augmented tree~$T\otimes v$ if and only if $\psi(v)$
  holds in $T$.  Let $Q, F$ and $(\delta_q)_{q\in Q}$ be the
  components of $\aut A$.

  Given a tree $T$ and a node $v\in V(T)$ let $T_v$ denote the
  \emph{subtree} of $T$ rooted at $v$, consisting of $v$ and all the
  descendants of $v$, and for a valuation $\bar
  a$ 
  let $\state(\aut A, T\otimes \bar a)$ denote the state at the root of $T$
  in the unique run of $\aut A$ on~$T\otimes \bar a$.

  Given an augmented tree $T$ over the alphabet $A$, define a labeling
  $\rho'\from V(T)\to Q\times Q\times 2^Q\times 2^Q$, such that for
  every node $v\in V(T)$, the label $\rho'(v)$ is the tuple
  $(q,q',N,R)$ with:
  \begin{align*}
    q&=\state(\aut A,T_v\mathop{\otimes} \emptyset),&
                                                        q'&=\state(\aut A, T_v \mathop{\otimes} v),\\
    N&=\setof{\state(\aut A,T_v \mathop{\otimes} w)}{w\text{ is a child of $v$ in $T$}}&
                                                                                          R&=\setof{\state(\aut A, T_v \mathop{\otimes} w)}{w\in V(T), v\prec w}.
  \end{align*}

  \begin{lemma}\label{lem:powerset construction}
    There is an automaton $\aut A'$ with states
    $Q\times Q\times 2^Q\times 2^Q$ such that for every augmented
    tree~$T$ over the alphabet~$A$, the run of $\aut A'$ on $T$ is
    equal to $\rho'$.
  \end{lemma}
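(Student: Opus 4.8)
The plan is to carry out a powerset-style determinization, combined with first-order substitution, in the same spirit as the classical $\MSO$-to-tree-automata translation but with the transition sentences doing the bookkeeping. Recall that $\aut A$ is an automaton over the alphabet $A\otimes\set x=A\times 2^{\set x}$ with states $Q$, accepting states $F$, and transition sentences $\delta_{s,(a,U)}$ ($s\in Q$, $a\in A$, $U\in 2^{\set x}$) in the signature $\Sigma\cup Q$. I would let $\aut A'$ be the automaton over the alphabet $A$ with state set $Q':=Q\times Q\times 2^Q\times 2^Q$, so that a $Q'$-labeling of the children of a node carries, for each child $v_i$, the full tuple $\rho'(v_i)=(q_i,q_i',N_i,R_i)$. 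It then remains to design the transition sentences $\delta'_{(q,q',N,R),a}$ over $\Sigma\cup Q'$ so that, whenever the children of a node $v$ with letter $a$ are correctly labeled by their $\rho'$-values, the unique satisfied transition sentence at $v$ is the one indexed by $\rho'(v)$.

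The heart of the construction is a case analysis on the position of the distinguished vertex $x$ inside the subtree $T_v$. Writing $v_1,\dots,v_m$ for the children of $v$ (the elements of the whorl $\str A_v$), and using that $T_{v_i}=(T_v)_{v_i}$ and that every vertex of $T_v$ other than $v$ lies in exactly one $T_{v_i}$, there are exactly four regimes: (i) $x$ lies outside $T_v$, so each $v_i$ contributes the state $q_i$ and the letter at $v$ is $(a,\emptyset)$, yielding the first component $q$; (ii) $x=v$, so again each $v_i$ contributes $q_i$ but the letter is $(a,\set x)$, yielding the second component $q'$; (iii) $x$ is some child $v_j$, so $v_j$ contributes $q_j'$, every other $v_i$ contributes $q_i$, and the letter is $(a,\emptyset)$ --- ranging over $j$ gives the set $N$; (iv) $x$ is a strict descendant of some child $v_j$, so $v_j$ contributes some element of $R_j$, every other $v_i$ contributes $q_i$, and the letter is $(a,\emptyset)$ --- ranging over $j$ and over $R_j$, together with the states from (iii), gives $R$. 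Each ``evaluate $\delta_{s,(a,\emptyset)}$ or $\delta_{s,(a,\set x)}$ against a prescribed $Q$-labeling of the children'' step is realized by syntactic substitution into $\delta_{s,\cdot}$: for $p\in Q$ introduce the $\Sigma\cup Q'$-formula $\mathsf{fst}_p(y):=\bigvee_{q',N,R}(p,q',N,R)(y)$ (and analogously $\mathsf{snd}_p$, $\mathsf{Rmem}_p$ reading off the second, resp.\ fourth coordinate), and replace every atom $p(y)$ occurring in $\delta_{s,\cdot}$ by the corresponding formula. For regimes (iii) and (iv) one additionally quantifies over the distinguished child: introduce a fresh variable $z_0$ (renaming bound variables of $\delta$ to avoid clashes) and replace $p(y)$ by $(y\approx z_0\wedge\mathsf{snd}_p(y))\vee(y\not\approx z_0\wedge\mathsf{fst}_p(y))$ in regime (iii), resp.\ by $\bigvee_{r\in Q}\bigl(\mathsf{Rmem}_r(z_0)\wedge[\,(y\approx z_0\wedge p\equiv r)\vee(y\not\approx z_0\wedge\mathsf{fst}_p(y))\,]\bigr)$ in regime (iv), where $p\equiv r$ is the Boolean constant true iff $p=r$, and then prefix with $\exists z_0$. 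Call the resulting sentences $\alpha^1_q$, $\alpha^2_{q'}$, $\beta_s$ (regime (iii)) and $\gamma_s$ (disjunction of (iii) and (iv)), and set
\[
\delta'_{(q,q',N,R),a}\ :=\ \alpha^1_q\wedge\alpha^2_{q'}\wedge\bigwedge_{s\in N}\beta_s\wedge\bigwedge_{s\notin N}\neg\beta_s\wedge\bigwedge_{s\in R}\gamma_s\wedge\bigwedge_{s\notin R}\neg\gamma_s .
\]
Since for any whorl and any $Q'$-labeling of its elements exactly one $q$ satisfies $\alpha^1_q$, exactly one $q'$ satisfies $\alpha^2_{q'}$, and the sets $\set{s:\beta_s}$, $\set{s:\gamma_s}$ are determined, the family $(\delta'_{(q,q',N,R),a})$ is mutually inconsistent and its disjunction is true, so $\aut A'$ is a legitimate automaton on augmented trees.

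Finally I would show $\rho=\rho'$, where $\rho$ is the unique run of $\aut A'$ on $T$, by induction on the height of the node. At a leaf $v$ the whorl is empty, $\exists z_0$ is unsatisfiable so $N=R=\emptyset$, and $\alpha^1_q$, $\alpha^2_{q'}$ have the same truth value on the empty structure as $\delta_{q,(a,\emptyset)}$, $\delta_{q',(a,\set x)}$, matching $\rho'(v)=(\,\state(\aut A,T_v\otimes\emptyset),\state(\aut A,T_v\otimes v),\emptyset,\emptyset\,)$. For the inductive step, assuming $\rho(v_i)=\rho'(v_i)$ for all children, the semantics of the four substituted sentences at $v$ are exactly the four regimes above: this uses the definition of $\rho'(v_i)$ (its first, second, fourth coordinates being the states at the root of $T_{v_i}$ when $x$ is outside $T_{v_i}$, at $v_i$, or at a strict descendant of $v_i$, respectively) together with the decomposition of the run of $\aut A$ on $T_v\otimes w$ into its runs on the child subtrees; hence the unique transition sentence satisfied at $v$ is $\delta'_{\rho'(v),a}$ and $\rho(v)=\rho'(v)$. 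The only genuinely delicate point is the syntactic bookkeeping of the substitutions --- in particular getting regimes (iii) and (iv) right with a single fresh variable $z_0$ and a finite disjunction over $Q$ to range over the relabeling value, and verifying that the resulting sentence family still partitions; everything else is a routine unfolding. (Note the third coordinate $N$ of a child's state is never needed to compute $\rho'(v)$; it is recorded in $Q'$ only because it is part of $\rho'$ and will be useful later.)
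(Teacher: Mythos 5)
Your proof takes essentially the same route as the paper's: the paper gives the same four-bullet characterization of $\rho'(v)$ in terms of the $\rho'$-labels on the whorl at $v$ (evaluating $\delta_{q,(a,\emptyset)}$ and $\delta_{q',(a,\set{x})}$ against the first-coordinate labeling; relabeling a single child by its second coordinate for $N$; relabeling a single child by an element of its fourth coordinate for $R$, with $N\subseteq R$), and then asserts without spelling it out that these conditions are first-order expressible over $\Sigma\cup Q'$. You make the FO formula explicit via atom-level substitution, which is a legitimate expansion of the paper's argument.

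However, there is a genuine bug in your substitution for regime (iv). You place the finite disjunction $\bigvee_{r\in Q}$ \emph{inside} each substituted atom $p(y)$, so that distinct occurrences of atoms in $\delta_{s,(a,\emptyset)}$ may ``choose'' different values of $r$. Concretely, for $y\approx z_0$ your replacement simplifies to $\mathsf{Rmem}_p(z_0)$; so if $R_{z_0}$ contains two distinct states $r_1,r_2$, then after substitution both $r_1(z_0)$ and $r_2(z_0)$ come out ``true.'' This does not correspond to any valid $Q$-labeling of the whorl, and $\delta_{s,(a,\emptyset)}$ can then evaluate incorrectly, e.g.\ if it contains a subformula that depends on states being mutually exclusive. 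The semantics you actually need is ``there exists a single fixed $r\in R_{z_0}$ such that $\delta_{s,(a,\emptyset)}$ holds when $z_0$ is relabeled $r$ and every other child keeps its first coordinate.'' That requires the disjunction over $r$ to be taken \emph{outside} the substituted sentence, as a disjunction of $|Q|$ separate substitution instances:
\[
\gamma^{(iv)}_s\ :=\ \exists z_0.\ \bigvee_{r\in Q}\Bigl(\mathsf{Rmem}_r(z_0)\wedge\tilde\delta^{\,r,z_0}_{s,(a,\emptyset)}\Bigr),
\]
where $\tilde\delta^{\,r,z_0}_{s,(a,\emptyset)}$ is $\delta_{s,(a,\emptyset)}$ with each atom $p(y)$ replaced by $(y\approx z_0\wedge p\equiv r)\vee(y\not\approx z_0\wedge\mathsf{fst}_p(y))$ for that fixed $r$. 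With this correction the rest of your argument — mutual inconsistency and completeness of the $\delta'_{(q,q',N,R),a}$, the leaf base case via the vacuous $\exists z_0$, and the induction decomposing the run of $\aut A$ on $T_v\otimes w$ into runs on the child subtrees — goes through and matches the paper.
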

\begin{proof}
  Denote by $\rho'_i(v)$ the $i$th component of $\rho'(v)$, for
  $i=1,2,3,4$. Let $\delta_{q,(a,\emptyset)}$ and
  $\delta_{q,(a,\set x)}$, for $a\in A$, be the transition sentences
  of $\aut A$. Thus, for a node $v$ with letter $a$, the automaton
  $\aut A$ labels $v$ by $q$ if either~$v$ is not occupied by the
  variable $x$ and $\delta_{q,(a,\emptyset)}$ holds, or $v$ is
  occupied by $x$ and $\delta_{q,(a,\set x)}$ holds.

  Let $v$ be a node of $T$ with label $a$. Then $\rho'(v)$ is equal to
  the unique tuple $(q,q',N,R)$ such that the following conditions
  hold:

  \begin{itemize}
  \item $\delta_{q,(a,\emptyset)}$ holds in the whorl at $v$ labeled
    by $\rho'_1$,
  \item $\delta_{q',(a, \set x)}$ holds in the whorl at $v$ labeled
    by $\rho'_1$,
  \item $N$ is the set of those states $p\in Q$ such that there is
    some child $w$ of $v$ such that $\delta_{p,(a, \emptyset)}$ holds
    in the whorl at $v$ labeled by $\rho_1'$, except for $w$, which is
    labeled~$\rho_2'(w)$.
  \item $R$ is the set of those states $p\in Q$ such that $p\in N$ or
    there is some child $w$ of $v$ and state $r\in \rho_4'(w)$ such
    that $\delta_{p,(a, \emptyset)}$ holds in the whorl at $v$
    labeled by $\rho_1'$, except for $w$ which is labeled~$r$.
    \end{itemize}

    The above conditions can be expressed by a first-order sentence
    $\psi_{(q,q',N,R),a}$ that is evaluated in the whorl at $v$
    labeled by $\rho'$.

    Then the automaton $\aut A'$ over the alphabet $A\otimes \set x$
    with states $Q\times Q\times 2^Q\times 2^Q$ and
    sentences~$\psi_{(q,q',N,R),a}$, for ${q,q'\in Q}$ and $N,R\subset Q$
    and $a\in A$, satisfies the statement in the lemma.
  \end{proof}

  In particular, by letting $\aut A'$ be as in the lemma, and
  equipping it with accepting states of the form~$(q,q',N,R)$, where
  $q,q'\in Q$ and $N,R\subset Q$ are such that $R\cup \set{q'}$
  intersects $F$, we obtain an automaton $\aut A'$ such that $\aut A'$
  accepts $T$ if and only if there is some node $w$ in $T$ such that
  $T\otimes w$ is accepted by $\aut A$.  By definition of $\aut A$,
  this is equivalent to saying that $T$ satisfies $\exists x.\psi(x)$,
  or that $T$ satisfies $\phi$.

  \medskip This concludes the inductive proof of the theorem.
\end{proof}


In the following sections, we  show how to efficiently evaluate and enumerate $\FOMSO$-queries
on $\CC$-augmented trees.

\subsection{Evaluation}
First, we show how to evaluate automata on
augmented trees.  By \emph{evaluating} an automaton $\aut A$ on an
augmented tree $T$ we mean the problem of computing the (unique) run
$\rho\from V(T)\to Q$ of $\aut A$ on $T$, where $Q$ is the set of
states of $\aut A$.

Let $\mathcal L$ be a logic. We say that labeled model-checking for $\mathcal L$ on a class $\Cc$ of $\Sigma$-structures is
\emph{efficient} if there is a computable function $f\from\N\to\N$ and
an algorithm that, given a $\Sigma\cup A$-sentence $\delta$
(where the elements of $A$ are viewed as unary predicates)
and a structure
$\str A\in \Cc$ labeled with elements of $A$, determines whether it satisfies $\delta$ in time
$f(|\delta|)\cdot |\str A|$. We say that evaluation of automata on
$\Cc$-augmented trees is \emph{efficient} if there is a computable
function $g\from\N\to\N$ and an algorithm that, given an automaton on
augmented trees and a $\Cc$-augmented tree $T$ over an alphabet $A$,
evaluates~$\aut A$ on $T$ in time $f(|\aut A|)\cdot |T|$. Note that
for a $\Cc$-augmented tree $T$ we have
$|T|=\sum_{v\in V(T)} |\str A_v|+1$.

\begin{lemma}\label{lem:automata evaluation}
  Let $\CC$ be a class of\, $\Sigma$-structures such that labeled model-checking for first-order logic is
  efficient on $\CC$.
   Then automata evaluation on $\Cc$-augmented
  trees is efficient.
\end{lemma}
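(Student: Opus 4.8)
The plan is to process the $\CC$-augmented tree $T$ in a single bottom-up pass, computing the run $\rho\from V(T)\to Q$ node by node. The key structural observation is that, by the definition of a run, the state $\rho(v)$ at a node $v$ with letter $a$ depends only on the whorl $\str A_v$ together with the labeling of its children by the states already computed for them. So if we process $V(T)$ in any order that visits each node after all its children — say, a post-order traversal — then at the moment we handle $v$ all children of $v$ already carry their states, and it remains to determine the unique state $q$ such that the transition sentence $\delta_{q,a}$ holds in the $Q$-labeled structure $\str A_v$.

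First I would fix, for the given automaton $\aut A$ with states $Q$, accepting states $F$, and transition sentences $(\delta_{q,a})_{q\in Q, a\in A}$, the finite family of first-order sentences $\{\delta_{q,a}\}$ over the signature $\Sigma\cup Q$; here each state in $Q$ is treated as a unary predicate symbol, so $\Sigma\cup Q$ plays the role of the alphabet-extended signature $\Sigma\cup A'$ in the definition of efficient labeled model-checking, with $A' = Q$. Then I would run the post-order traversal of $T$. At each node $v$ with letter $a$, I form the $Q$-labeled $\Sigma$-structure $\str A_v$, where each child $w$ of $v$ is given the unary label $\rho(w)$ computed earlier, and I invoke the efficient labeled model-checking algorithm for $\Sigma$-structures in $\CC$ (available by hypothesis) on the inputs $\delta_{q,a}$ for each $q\in Q$; since the sentences $(\delta_{q,a})_{q\in Q}$ are mutually inconsistent and their disjunction is valid, exactly one of these queries returns true, and I set $\rho(v)$ to be that $q$. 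This defines $\rho$ on all of $V(T)$, and by construction it is precisely the unique run of $\aut A$ on $T$.

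For the running time: each model-checking call at node $v$ costs $f(|\delta_{q,a}|)\cdot|\str A_v|$ by efficiency of labeled model-checking on $\CC$, where $f$ is the computable function from that hypothesis. There are $|Q|\leq|\aut A|$ such calls at $v$, and the sizes $|\delta_{q,a}|$ are all bounded by $|\aut A|$, so the total cost at $v$ is at most $|\aut A|\cdot f(|\aut A|)\cdot|\str A_v|$. Summing over all nodes and using $\sum_{v\in V(T)}|\str A_v|+1 = |T|$, the total running time is $g(|\aut A|)\cdot|T|$ for the computable function $g(m) = m\cdot f(m) + O(1)$, with the additive constant absorbing the overhead of the traversal and of constructing each labeled whorl (which is linear in $|\str A_v|$ since we merely attach state labels to children). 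This is exactly the bound required for efficient evaluation on $\CC$-augmented trees.

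I do not expect any serious obstacle here; the statement is essentially a bookkeeping argument once the right viewpoint is set up. The only point that needs a little care is matching the interfaces: the definition of efficient labeled model-checking is phrased in terms of a structure labeled by a \emph{fixed} alphabet $A$, whereas here the labeling alphabet is the state set $Q$ of the automaton, which varies with $\aut A$; but since $\aut A$ is part of the input and $g$ is allowed to depend on $\aut A$, treating $Q$ as the labeling alphabet causes no difficulty — we simply absorb $|Q|$ and the dependence on the sentences $\delta_{q,a}$ into $g$. A second minor point is ensuring that forming the $Q$-labeled whorl $\str A_v$ from $\str A_v$ and the already-computed child states can be done in time linear in $|\str A_v|$, which is immediate since it only requires walking over the elements (children of $v$) and recording their states.
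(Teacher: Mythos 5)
Your proposal is correct and follows the same argument as the paper: a single bottom-up pass, at each node invoking the efficient labeled model-checking oracle from the hypothesis on the transition sentences over the state-labeled whorl, with the running time summed over nodes and absorbed into a computable function of $|\aut A|$. The extra remarks about treating $Q$ as the labeling alphabet and about the cost of attaching state labels are sound but are just clarifications of what the paper does implicitly.
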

\begin{proof}Let $Q$ be the set of states of $\aut A$, and  $f\from\N \to\N$ be as in the definition of efficiency of labeled model-checking on $\CC$.

  Given a $\CC$-augmented tree $T$,
    compute the run $\rho\from V(T)\to Q$ of
  $\aut A$ on $T$ by processing the nodes~$v$ of~$T$ from the leaves
  towards the root. Namely, if $v$ has label $a$, evaluate each of the
  transition sentences~$\delta_{q,a}$ (for $q\in Q$) in the whorl
  $\str A_v\in \Cc$ at $v$, labeled by $\rho$ restricted to
  $\str A_v$. Then exactly one of the sentences $\delta_{q,a}$ holds,
  and set~$\rho(v)$ to $q$.

  The total time spent on evaluating the sentences $\delta_{q,a}$ is
  bounded by
  $\sum_{q\in Q, a\in A, v \in V(T)} f(|\delta_{q,a}|)\cdot |\str
  A_v|$, which is bounded by $g(|\aut A|)\cdot |T|$ for a computable
  function $g$ derived from $f$.
\end{proof}

As a corollary, by \cref{thm:formulas to automata}, we get:
\begin{theorem}\label{thm:formula evaluation}
  Let $\CC$ be a class of\, $\Sigma$-structures such that labeled model-checking for first-order logic is
  efficient on $\CC$.
    Then model-checking for \mbox{$\FOMSO$} is
  efficient on $\CC$-augmented trees with labels from a finite alphabet $A$.
\end{theorem}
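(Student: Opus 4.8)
The plan is to deduce this as a direct corollary of the translation \cref{thm:formulas to automata} together with the efficient evaluation of automata provided by \cref{lem:automata evaluation}; no new combinatorial or logical ideas are needed.

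First I would handle the reduction from formulas to automata. Given an $\FOMSO$ sentence $\phi$ over the signature $\Sigma \cup \{\preceq\} \cup A$, apply \cref{thm:formulas to automata} with the empty tuple of free variables $\tup x = \emptyset$. Since $A \otimes \emptyset = A \times 2^{\emptyset}$ is canonically identified with $A$ via $(a,\emptyset)\mapsto a$, this yields (computably from $\phi$) an automaton $\aut A_\phi$ on augmented trees over the alphabet $A$ with state set $Q$ such that, for every $\CC$-augmented tree $T$ over $A$, we have $\aut A_\phi(T)=\phi(T)$. As $\phi$ is a sentence, $\phi(T)$ is either $\{()\}$ or $\emptyset$; unwinding the definition of $\aut A_\phi(T)$, this says exactly that $\aut A_\phi$ accepts $T\otimes\emptyset$ — the trivial relabeling of $T$ — if and only if $T\models\phi$.

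Next I would invoke the evaluation lemma. The transition sentences of $\aut A_\phi$ are sentences over the signature $\Sigma\cup Q$, where $Q$ is finite and each state is treated as a unary predicate, so they lie precisely within the scope of the hypothesis that labeled first-order model-checking is efficient on $\CC$. Hence \cref{lem:automata evaluation} applies and computes the unique run $\rho\from V(T)\to Q$ of $\aut A_\phi$ on $T\otimes\emptyset$ in time $g(|\aut A_\phi|)\cdot |T|$ for a computable $g$. Checking whether $\rho$ labels the root of $T$ with an accepting state — which, by the previous paragraph, is equivalent to $T\models\phi$ — then costs an additional $\Oh(1)$. The overall running time is $h(|\phi|)\cdot |T|$, where $h$ is the composition of $g$ with the computable map $\phi\mapsto\aut A_\phi$ from \cref{thm:formulas to automata}; this is exactly the bound required for efficient $\FOMSO$ model-checking on $\CC$-augmented trees, which completes the argument.

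There is no genuine obstacle here: the whole content is already carried by \cref{thm:formulas to automata} and \cref{lem:automata evaluation}. The only points that need a moment's care are bookkeeping ones — verifying that the alphabet manipulation $A\otimes\tup x$ degenerates to a mere relabeling when $\tup x=\emptyset$, so that $T\otimes\emptyset$ is $T$ itself up to renaming of letters, and confirming that the efficiency hypothesis on $\CC$, stated for an arbitrary finite alphabet of unary predicates, indeed covers the signature $\Sigma\cup Q$ used by the transition sentences. Both are immediate from the definitions.
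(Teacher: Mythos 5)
Your proposal is correct and matches the paper's proof exactly: the paper states \cref{thm:formula evaluation} as an immediate corollary of \cref{thm:formulas to automata} (to compile the sentence into an automaton) and \cref{lem:automata evaluation} (to run it in linear time). The only addition you make is spelling out the harmless bookkeeping that $A\otimes\emptyset$ collapses to $A$, which the paper leaves implicit.
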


\subsection{Query answering}
In the query-answering problem, a query $\phi(\tup x)$ and an augmented tree $T$ are given, and the task is to compute
in time linear in $|T|$ a data structure that allows to answer in constant time queries of the form: Does a given tuple $\tup a\in V(T)^{\tup x}$ satisfy $\phi(\tup x)$?
This boils down to the query evaluation problem in case when $\phi$ has no free variables. In this section, we show how to efficiently solve the query-answering problem on $\CC$-augmented trees.

Say that labeled query-answering is \emph{efficient}
for a logic $\mathcal L$
on a class of $\Sigma$-structures $\CC$
if there is a computable function $f\from \N\to\N$ and an algorithm
that, given a structure $\str A\in\CC$ vertex-labeled
by elements of a finite alphabet $A$ and a $\cal L$-formula $\phi(\tup x)$ computes in time $f(|\phi|)\cdot |\str A|$ a data structure that allows to answer in constant time queries of the form: Does a given tuple $\tup a\in V(\str A)^{\tup x}$ satisfy $\phi(\tup x)$?

Our goal now is to prove the following.
\begin{theorem}\label{thm:answering}
  Let $\CC$ be a class of $\Sigma$-structures for which labeled query-answering of first-order queries is efficient.
  Then labeled query-answering of $\FOMSO$-queries is efficient
  over $\CC$-augmented trees vertex-labeled with $A$.
\end{theorem}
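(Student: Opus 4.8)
The plan is to reduce Theorem~\ref{thm:answering} to Theorem~\ref{thm:formulas to automata} (compilation of $\FOMSO$-formulas into automata on augmented trees) together with an automaton-level query-answering result, which I will prove separately as a lemma (call it \cref{lem:answering}, already announced in the overview of this section). So the first step is: given an $\FOMSO$-query $\phi(\tup x)$, apply \cref{thm:formulas to automata} to obtain an automaton $\aut A$ on augmented trees over the alphabet $A\otimes \tup x$ with $\aut A(T)=\phi(T)$ for every $\Cc$-augmented tree $T$ over $A$. It then suffices to show that labeled query-answering for such automata is efficient on $\Cc$-augmented trees, under the hypothesis that labeled first-order query-answering is efficient on $\Cc$. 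Concretely, for a fixed automaton $\aut A(\tup x)$ and input $\Cc$-augmented tree $T$, we want an $f(|\aut A|)\cdot|T|$-time preprocessing after which, given $\tup a\in V(T)^{\tup x}$, we decide in constant time whether $\aut A$ accepts $T\otimes\tup a$.

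The key combinatorial idea for the automaton-level statement mirrors the powerset/state-bookkeeping construction already used in the proof of \cref{thm:formulas to automata} (see \cref{lem:powerset construction}), and the path-product machinery of \cref{thm:colcombet-ds}. First I would run $\aut A$ on $T\otimes\emptyset$ (the tree with no variables placed), using \cref{lem:automata evaluation}; this records at every node $v$ the state $\state(\aut A, T_v\otimes\emptyset)$ of the ``default'' run in time $f(|\aut A|)\cdot|T|$. Now when a query tuple $\tup a$ arrives, only the $|\tup x|$ nodes in the image of $\tup a$, together with their ancestors, can see a changed label; everything else keeps its default state. So the run on $T\otimes\tup a$ differs from the default run only along the union of the root-to-$\tup a(x_i)$ paths, which is a tree with at most $|\tup x|$ leaves and hence (after contracting degree-two stretches) $O(|\tup x|)$ branching structure. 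Propagating the corrected states up a single contracted path is exactly a path-product computation in a finite ``transition monoid'': for a path $v_0\prec v_1\prec\cdots\prec v_\ell$ of nodes each of which is an only-changed-child, the map sending the state at $v_\ell$ to the state at $v_0$ depends only on the sequence of whorl-transitions and is an element of the functional monoid $F_Q$ on the state set $Q$; by precomputing these maps along branches with the data structure of \cref{lem:functional-warp} (applied to the $F_Q$-labeled tree induced on $T$), any such composite can be retrieved in $|Q|^{O(1)}$ time. The only subtlety is that at a branching node one needs to aggregate the (possibly changed) states of several children with the default states of the siblings; here I would additionally precompute, at each node $v$ and each state $q\in Q$, the state that the transition sentence $\delta_{-,a}$ assigns when exactly one designated child's label is changed to $q$ — this is a bounded amount of first-order query-answering data on the whorl $\str A_v$, obtained from the efficient labeled query-answering hypothesis on $\Cc$ (one asks the first-order query ``what is the state, given that child $w$ has state $q$ and all others have their default states''; formalized as in the four bullet conditions in the proof of \cref{lem:powerset construction}).

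Putting it together: the preprocessing computes (i) the default run via \cref{lem:automata evaluation}, (ii) the $F_Q$-labeled path-product data structure via \cref{lem:functional-warp}, (iii) the navigation structure of \cref{lem:navigation} for $\lca$ and $\dir$ on $T$, and (iv) for each node, the ``one-child-changed'' transition data via the efficient first-order query-answering on its whorl; all of this is $f(|\aut A|)\cdot|T|$ time and space. A query on $\tup a$ then: forms the $O(|\tup x|)$-node contracted tree spanning the root and $\tup a(\tup x)$ using $\lca$/$\dir$ queries; processes it bottom-up, using the path-product data to jump along unbranched stretches and the one-child-changed data at the $O(|\tup x|)$ branching nodes; and reads off the state at the root, accepting iff it lies in $F$. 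Since $|\tup x|$ and $|Q|$ are bounded by $|\aut A|$, this is constant time. Finally Theorem~\ref{thm:answering} follows by chaining this with \cref{thm:formulas to automata}.

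I expect the main obstacle to be the bookkeeping at branching nodes of the spanning skeleton: one must be careful that when several of the children of a node $v$ lie on the skeleton (because different $x_i$'s descend into different subtrees), the transition sentence $\delta_{-,a}$ is evaluated on the whorl with \emph{all} of those children relabeled simultaneously, not one at a time — so the precomputed first-order data must answer, for a bounded number $r\le|\tup x|$ of designated children and an assignment of states to them, what state $v$ gets; this is still a first-order query on the whorl with $r$ free variables, hence still handled by efficient labeled query-answering, but the reduction needs this slightly stronger (yet routine) form. Everything else is a direct assembly of \cref{thm:formulas to automata}, \cref{lem:automata evaluation}, \cref{lem:functional-warp}, and \cref{lem:navigation}.
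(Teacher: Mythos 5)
Your proposal follows essentially the same route as the paper's own argument, and in fact matches the informal proof sketch given at the start of Section~6.4 almost word for word: precompute the run $\rho_0$ on $T\otimes\emptyset$, and on a query $\tup a$ recompute states only along the LCA-closure skeleton, using a path-product data structure to jump along degree-two stretches and the labeled first-order query-answering hypothesis to aggregate the relabeled children at the $\Oh(|\tup x|)$ branching nodes. The paper's formalization packages exactly this reasoning into the notion of a \emph{shape} (a bounded combinatorial description of the skeleton together with guessed states at its nodes and edges) and proves a shape-matching lemma (\cref{lem:shape answering}) by induction on the shape size, whereas you compute the states on the fly; these are the same argument reorganized, not two different proofs.

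One technical slip worth flagging: you cannot invoke \cref{lem:functional-warp} directly on an $F_Q$-labeled tree under the paper's convention $f;g=\lambda a.\,g(f(a))$, because that composes the per-edge functions from the root \emph{towards} the leaves, while bottom-up state propagation requires the opposite order; a top-down monotone coloring as in the proof of \cref{lem:functional-warp} does not reverse. The paper sidesteps this in \cref{lem:colcombet-rephrased} by invoking the general-semigroup statement (\cref{thm:colcombet-ds}) applied to $Q^Q$ under ordinary function composition, at the cost of a $|Q|^{\Oh(|Q|)}$ rather than $|Q|^{\Oh(1)}$ per-query bound — still constant in $|T|$, so the theorem is unaffected. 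Also, as you note, computing the per-edge functions (and more generally the branching-node transitions) in total time linear in $|T|$ genuinely requires the \emph{query-answering} hypothesis on $\CC$ rather than mere model-checking; with only model-checking the preprocessing would be $\Oh(\sum_v |\str A_v|^2)$, which can be quadratic.
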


As previously, we start with reformulating our task
in the language of automata. Fix a signature~$\Sigma$, a class $\CC$ of $\Sigma$-structures, a finite alphabet $A$ and a tuple of variables $\tup x$. Recall that given an automaton~$\aut A(\tup x)$
with variables $\tup x$  and a $\CC$-augmented tree $T$ labeled with $A$, we may consider the query:
Is a given tuple $\tup a\in V(T)^{\tup x}$ accepted by $\aut A(\tup x)$? Equivalently, does $\aut A$ accept $T\otimes \tup a$?
Thus we view $\aut A(\tup x)$ as a query.

\begin{lemma}\label{lem:answering}
  Let $\CC$ be a class of $\Sigma$-structures for which labeled query-answering is efficient.
  Then labeled query-answering of automata on $\CC$-augmented trees is efficient.
\end{lemma}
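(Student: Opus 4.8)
The plan is to mirror the structure of the evaluation algorithm (\cref{lem:automata evaluation}), but now account for a free-variable tuple $\tup x$. Fix an automaton with variables $\tup x$, i.e.\ an automaton $\aut A$ on $\CC$-augmented trees over the alphabet $A\otimes \tup x = A\times 2^{\tup x}$, with state set $Q$. Given a $\CC$-augmented tree $T$ over $A$, I want to precompute, in time $f(|\aut A|)\cdot|T|$, enough information to decide in constant time, for any query tuple $\tup a\in V(T)^{\tup x}$, whether $\aut A$ accepts $T\otimes\tup a$. The key conceptual point is the same ``fingerprint at a node'' idea already used in the proof of \cref{thm:formulas to automata} (see the labeling $\rho'$ and \cref{lem:powerset construction}): for a node $v$, the behaviour of the automaton above $v$ depends on the subtree $T_v$ only through a bounded amount of summary data, namely the states that the root of $T_v$ can take as the placed elements of $\tup x$ scattered below $v$ in all possible ways. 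Since $\tup x$ is a fixed finite tuple, the number of ways to split $\tup x$ among the proper descendants of $v$ versus the query node equalling $v$ is bounded; so the relevant summary at $v$ is a function from ``which variables of $\tup x$ are placed at-or-below $v$ and how they distribute among children'' to $Q$, but we only need the coarser information of which subsets of $Q$ are achievable, exactly as the $2^Q\times 2^Q$ coordinates in the proof of \cref{thm:formulas to automata}.

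Concretely, the first step is to handle the ``purely tree'' part of the computation. Run a bottom-up pass that, at each node $v$, computes a record summarizing, for every partition of $\tup x$ into ``goes to child $w$'' versus ``goes to a different subtree'' versus ``equals $v$ itself'', the resulting root-state of the subtree at $v$; because $\tup x$ is of fixed size this is a finite object and, crucially, for each node the number of distinct ``per-child'' contributions is bounded by $|Q|^{\Oh(|\tup x|)}$, which is a constant. Evaluating a transition sentence $\delta_{q,(a,U)}$ at a node $v$ amounts to a labeled first-order model-checking query on the whorl $\str A_v$; this is where the efficiency hypothesis on $\CC$ enters. But a subtlety is that we must support queries, so at a node $v$ the labels of the children depend on where the free variables land below each child — the per-child contribution is itself a parameter. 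The standard trick (as in \cite{KazanaS13}) is to treat these finitely many possible per-child labelings as an \emph{enriched alphabet on the children} and to precompute, via the labeled query-answering data structure on $\str A_v$ granted by hypothesis, answers to all transition sentences as one varies which (constantly many) children receive which enriched label. Summing the construction time over all nodes gives $\sum_{v} f'(|\aut A|)\cdot|\str A_v| = g(|\aut A|)\cdot|T|$, linear in $|T|$ as required.

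Once these records are precomputed at every node, together with the navigation data structures of \cref{lem:navigation} on the underlying tree $T$, a query $\tup a$ is answered as follows. Compute the $\Oh(|\tup x|^2)$-size lowest-common-ancestor closure of the image of $\tup a$, exactly as in the proof of \cref{thm:general-queries}; this yields a constant-size ``skeleton'' tree $T_Y$ of nodes, and every node of $T$ where the automaton's behaviour actually depends on $\tup a$ lies on a path of $T_Y$. Along each such skeletal path the automaton's composed behaviour is a single element of a fixed finite ``transition semigroup'' derived from $\aut A$, which we precompute for all ancestor/descendant path pairs using \cref{thm:colcombet-ds} (the $|S|^{\Oh(1)}\cdot|T|$-time path-product data structure); thus each path of $T_Y$ collapses to one semigroup element in constant time. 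It then remains to compose these constantly many elements up the skeleton $T_Y$ — using at each branching node of $T_Y$ the precomputed record of the corresponding node of $T$, fed with the states arriving from its at-most-$|\tup x|$ relevant children — to obtain the state at the root of $T$, and check membership in $F$. All of this is $\Oh(1)$ per query. Finally, \cref{thm:answering} follows from \cref{lem:answering} by applying \cref{thm:formulas to automata} to convert a given $\FOMSO$ formula $\phi(\tup x)$ into an equivalent automaton $\aut A(\tup x)$, noting that the conversion is computable and the resulting automaton has size bounded by a computable function of $|\phi|$.

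The step I expect to be the main obstacle is the bookkeeping at the query phase: making precise the claim that every node whose contribution depends on $\tup a$ lies on a skeletal path and that the ``rest of the tree'' (the unaffected subtrees hanging off skeletal paths) contributes a fixed state independent of $\tup a$, so that the $\torso$-style path products along $T_Y$ correctly summarize the automaton's behaviour. This is the automaton analogue of the correctness argument for \cref{lem:warp} and \cref{lem:aggregate}, and just as there, the crux is that an unaffected subtree at $v$ — one containing no element of $\tup a$ — is processed by $\aut A$ on $T_v\otimes\emptyset$, so its root-state is exactly the first coordinate of the precomputed record, regardless of the query. Assembling this into a clean semigroup of ``partial runs with a hole'' and verifying associativity is routine but needs care.
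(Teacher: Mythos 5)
Your plan matches the paper's proof in both architecture and mechanics: precompute the default run $\rho_0$, a path-product data structure for the function semigroup $Q^Q$ (our \cref{lem:colcombet-rephrased}, built on \cref{thm:colcombet-ds}), and FO query-answering structures on each whorl for a formula $\gamma_c(\tup w)$ whose free variables mark the affected children; then, given $\tup a$, recompute states only along the LCA-closed skeleton $Y\cup Y'$, alternating FO queries at branching nodes with path-product jumps along the unbranched segments. The paper packages the bookkeeping you flag as the main remaining obstacle into the notion of \emph{shapes} and proves correctness by induction on shape size (\cref{lem:shape answering}), but the underlying algorithm is the one you describe.
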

\cref{lem:answering} together with \cref{thm:formulas to automata} yields \cref{thm:answering}, so it remains to prove the lemma.

\paragraph{Proof sketch}
We first give a brief sketch of the proof.
The general idea is as follows.
Given a tree $T$, first compute the run $\rho_0$ of $\aut A$ over $T\otimes \emptyset$, in time linear in $|T|$.
Now, given a tuple $\tup a$, the run $\rho$ of $\aut A$ over $T\otimes \tup a$ will be obtained by updating $\rho_0$
in a few of nodes -- roughly the nodes
occurring in $\tup a$ and least common ancestors of any two such nodes. The number of nodes at which we recompute the new states is linear in $|\tup x|$, which is a constant with respect to $|T|$.

More precisely,
we want to compute $\state(\aut A,T\otimes \tup a)$ in constant time. Let $X$ be the set of nodes in that appear in the tuple $\tup a$;
then $|X|\le |\tup x|$. Let $Y$ be the closure of
$X$ under least common ancestors, and also add the root of $T$ to $Y$; then $|Y|\le 2|X|$.
Furthermore, let $Y'\supseteq Y$ be obtained from $Y$ by adding
to $Y$ all children of elements from $Y$ in the directions
of elements of $X$, that is, for every $y\in Y$ and its descendant $x\in X$, the child $\dir(y,x)$ of $y$ in the direction of $x$ is added to $Y'$ (unless it already belongs to $Y$).
Then $|Y'|\le 4|X|$.

\begin{figure}[h!]
  \centering
   \includegraphics[page=6,scale=0.7]{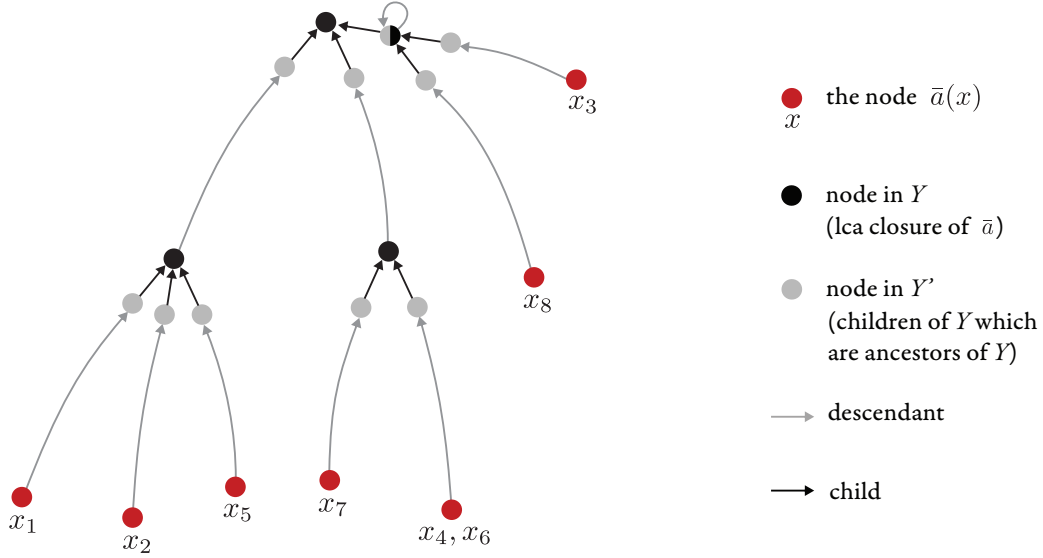}
   \caption{The states of the new run $\rho$ on $T\otimes \tup a$ are recomputed only at the nodes marked above, basing on the run $\rho_0\otimes \emptyset$. Here $\tup x=\set{x_1,\ldots, x_8}$ and $\tup a\in V(T)^{\tup x}$.
   Note that each node in the figure above may have many children that are not affected by the recomputation, and therefore are omitted in the figure.
   At each node $v\in Y$ (black and red nodes) the new state $\rho(v)$ is inferred from the updated states in the children of $v$ that have some descendant in $Y$, that is in the gray children of $v$.
   At each node $v'\in Y'$ (gray nodes) the new state $\rho(v')$ is inferred from the updated state at its closest descendant in~$Y$ (of $v'$ itself if $v'\in Y$).
   }
   \label{fig:jumps}
 \end{figure}

Our constant-time query-answering algorithm will
recompute the states of the new run $\rho$ of $\aut A$ on $T\otimes \tup a$ in each of the nodes
in $Y'$, in a bottom-up fashion (see \cref{fig:jumps}). For each lowest node $v$
in~$Y'$
(those nodes occur in $\tup a$),
the new state $\rho(v)$ in $v$ is easily computable in constant time from the
state~$\rho_0(v)$ and the set of variables of $\tup x$ that are
mapped to that node by the valuation $\tup a$.
More generally, for each node $v\in Y$, the new state $\rho(v)$ in $v$ is inferred from the following data:
\begin{itemize}
  \item the old state
  $\rho_0(v)$,
  \item the set of variables of $\tup x$ that are
  mapped to that node by the valuation $\tup a$ (possibly empty),
  \item the new state $\rho(v')$ at each child $v'$ of $v$ that belongs to $Y'$.
\end{itemize}
This inference is achieved by querying a first-order formula
in the whorl $\str A_v$ at $v$ that asks:
Which of the transition sentences holds in $\str A_v$
labeled by $\rho_0$, if we only modify the labels
in the elements of~$\str A_v$ that belong to $Y'$?
This can be done in constant time, assuming we have initialized an appropriate query-answering data structure for the structure $\str A_v\in\CC$.

Finally, for each node $v'\in Y'$, the new state in $v'$ is inferred from the state in the closest descendant of $v'$ that belongs to $Y$ (possibly $v'$ itself). The observation is that in the subtree of $T$ rooted at $v'$, the new run $\rho$  and $\rho_0$ only differ on those nodes $w$ that are ancestors of $v$. In other words,  for each node
that is a sibling of a node on the path, the runs $\rho$ and $\rho_0$ are identical. Therefore, recomputing the new state at $v'$ amounts to recomputing a number of transitions along the path from $v$ to $v'$, or equivalently, computing the composition of a sequence of functions mapping states to states along that path. This can be achieved in constant time using \cref{thm:colcombet-ds}.

\pagebreak
We start with introducing some auxiliary notions.

\paragraph{Entailment}
We now describe a subroutine of our algorithm used to propagate states along a single path. In the terminology of the description above, this subroutine will be used to compute the new state at a node $v'\in Y'$, having already computed the state at its closest descendant $v\in Y$. Referring to \cref{fig:jumps}, this will allows us to propagate the updated states along the gray edges.

Fix an augmented tree $T$ over the alphabet $A$ and let $\rho_0$ be the run of $\aut A$ on $T\otimes \emptyset$.
For a node $w\in V(T)$ and state $p\in Q$, let $\rho_0[w\mapsto p]$
be the labeling that assigns states from $Q$ to nodes of $T$ such that:
\begin{itemize}
\item $\rho_0[w\mapsto p]$ agrees with $\rho_0$ on nodes that are not ancestors of $w$,
\item $\rho_0[w\mapsto p]$ maps $w$ to $p$, and
\item $\rho_0[w\mapsto p]$ inductively maps a strict ancestor
$w'$ of $w$
to a state $q$ such that
$\delta_{q,(a,\emptyset)}$ holds in the whorl of $w'$ labeled by $\rho_0[w\mapsto p]$.
\end{itemize}

In other words, $\rho_0[w\mapsto p]$ is obtained from $\rho_0$ by enforcing the state at $w$ to be $p$,
and otherwise continuing with the bottom-up evaluation of $\aut A$ on $T\otimes \emptyset$. If $w'$ is a node of $T$, then write \mbox{$[w\mapsto p]\vdash [w'\mapsto p']$} if $\rho_0[w\mapsto p]$ labels $w'$ with $p'$.

By definition we get the following transitivity property of the entailment $\vdash$.
\begin{lemma}\label{lem:vdash transitivity}
Let $w,w',w''$ be nodes in $T$ such that $w$ is a descendant of $w'$ and $w'$ is a descendant of $w''$. Let $p,p',p''$ be states.
If $[w\mapsto p]\vdash [w'\mapsto p']$
and $[w'\mapsto p']\vdash [w''\mapsto p'']$, then
$[w\mapsto p]\vdash [w''\mapsto p'']$.
\end{lemma}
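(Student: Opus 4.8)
The statement is a transitivity property for the entailment relation $\vdash$, which is defined via the operation $\rho_0[w\mapsto p]$. The key observation is that $\rho_0[w\mapsto p]$ performs a bottom-up recomputation of states that only differs from $\rho_0$ along the ancestors of $w$. I would therefore argue that the recomputation is "memoryless'' in the following precise sense: once we know the state assigned to an ancestor $w'$ of $w$ by $\rho_0[w\mapsto p]$, the states assigned by this labeling to the strict ancestors of $w'$ depend only on that state at $w'$ and on the original run $\rho_0$ restricted to the subtrees hanging off the path from $w'$ to the root — not on $w$ or $p$ themselves. This is immediate from the inductive clause in the definition: the state at a strict ancestor $w''$ of $w'$ is determined by which transition sentence $\delta_{q,(a,\emptyset)}$ holds in the whorl of $w''$ labeled by the current labeling, and in that whorl every child other than $\dir(w'',w)$ is labeled exactly as in $\rho_0$ (since it is not an ancestor of $w$), while the child $\dir(w'',w)$ carries the already-computed state.

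Concretely, I would prove the following slightly more general claim by induction on the length of the path from $w'$ to $w''$: if $w$ is a descendant of $w'$, $w'$ is a descendant of $w''$, and $[w\mapsto p]\vdash[w'\mapsto p']$, then $\rho_0[w\mapsto p]$ and $\rho_0[w'\mapsto p']$ agree on $w''$ (indeed, on every ancestor of $w'$). The base case $w''=w'$ is exactly the hypothesis $[w\mapsto p]\vdash[w'\mapsto p']$, which says $\rho_0[w\mapsto p]$ labels $w'$ by $p'$, and $\rho_0[w'\mapsto p']$ labels $w'$ by $p'$ by definition. For the inductive step, let $w''$ be a strict ancestor of $w'$ and let $\wt w$ be the child of $w''$ in the direction of $w'$ (equivalently, in the direction of $w$). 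By induction, $\rho_0[w\mapsto p]$ and $\rho_0[w'\mapsto p']$ agree on $\wt w$; moreover both labelings agree with $\rho_0$ on every other child of $w''$, since those children are not ancestors of $w$ (nor of $w'$). Hence the whorl at $w''$ carries the same $Q$-labeling under $\rho_0[w\mapsto p]$ and under $\rho_0[w'\mapsto p']$, so the unique transition sentence $\delta_{q,(a,\emptyset)}$ that holds picks out the same state $q$ in both cases, which is precisely how each of the two labelings defines the state at $w''$. This completes the induction.

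Applying the claim with $w''$ as given, we get that $\rho_0[w\mapsto p]$ labels $w''$ with the same state that $\rho_0[w'\mapsto p']$ does. By the second hypothesis $[w'\mapsto p']\vdash[w''\mapsto p'']$, the latter state is $p''$. Therefore $\rho_0[w\mapsto p]$ labels $w''$ with $p''$, which is exactly the assertion $[w\mapsto p]\vdash[w''\mapsto p'']$.

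I do not expect any serious obstacle here; the only thing to be careful about is the bookkeeping of "which children of an internal node are ancestors of $w$'' and making sure the induction is set up along the path rather than on arbitrary nodes. In particular one should note that the path from $w''$ down to $w$ passes through $w'$ (since $w'$ lies between them by hypothesis), so "direction of $w'$'' and "direction of $w$'' coincide at every node $w''$ on the relevant segment — this is what lets the two recomputations stay synchronized.
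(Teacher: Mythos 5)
Your proposal is correct, and it is the careful unfolding of exactly what the paper means when it says the lemma holds ``by definition'' --- namely, the bottom-up recomputation in $\rho_0[w\mapsto p]$ is memoryless: at each strict ancestor $x$ of $w$, the new state is determined by the state already computed at the $w$-directed child of $x$ together with the $\rho_0$-labels on the remaining children (which are not ancestors of $w$). Your induction along the path from $w'$ up to $w''$, with the base case $w''=w'$ supplied by the hypothesis $[w\mapsto p]\vdash[w'\mapsto p']$, makes this precise and agrees with the paper's (omitted) argument.
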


This, together with \cref{thm:colcombet-ds}, yields the following algorithmic consequence.
\begin{lemma}\label{lem:colcombet-rephrased}
There is a computable function $f\from\N\to\N$ and an algorithm that,
given a $\CC$-augmented tree~$T$, compute in time $f(|Q|)\cdot |T|$
a data structure that can answer the following queries:
given two nodes~$w,w'$ of\, $T$,
where $w'$ is an ancestor of $w$,
and a state $p\in Q$,
return $p'\in Q$ such that
$[w\mapsto p]\vdash [w'\mapsto p']$.
\end{lemma}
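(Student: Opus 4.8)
The plan is to recognize the entailment relation $\vdash$, propagated along a path, as a product in a finite semigroup, and then to invoke \cref{thm:colcombet-ds}. Fix the automaton $\aut A$ with state set $Q$, and let $\rho_0$ be the run of $\aut A$ on $T\otimes\emptyset$. For an edge $e=\{u,v\}$ of $T$ with $v=\parent(u)$, say $v$ carrying the $A$-letter $a$, define the \emph{update function} $h_e\colon Q\to Q$ by letting $h_e(p)$ be the unique state $q$ such that $\delta_{q,(a,\emptyset)}$ holds in the whorl $\str A_v$ when $\str A_v$ is labeled by $\rho_0$ on all children of $v$ except that the child $u$ carries the state $p$ instead of $\rho_0(u)$. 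Equivalently, $h_e(p)=p'$ where $[u\mapsto p]\vdash[v\mapsto p']$: indeed $\rho_0[u\mapsto p]$ agrees with $\rho_0$ on every child of $v$ other than $u$, maps $u$ to $p$, and assigns to $v$ exactly the state forced by the transition sentences. Since $\vdash$ is functional, iterating \cref{lem:vdash transitivity} along the path $w'=a_0,a_1,\dots,a_\ell=w$ from a strict ancestor $w'$ down to a node $w$ (so $a_i=\parent(a_{i+1})$) shows that $[w\mapsto p]\vdash[w'\mapsto p']$ holds precisely for $p'=(h_{e_1}\circ h_{e_2}\circ\cdots\circ h_{e_\ell})(p)$, where $e_i=\{a_{i-1},a_i\}$ and $\circ$ is ordinary function composition (the innermost $h_{e_\ell}$ being applied first). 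So answering a query amounts to a path-product query.

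First I would set up the semigroup. Let $S$ be the set $Q^Q$ of all functions $Q\to Q$ with ordinary composition $\circ$ as the operation; this is a finite semigroup with $|S|=|Q|^{|Q|}$, and one composition (or one evaluation of a function on an argument) takes time $\Oh(|Q|)$. Label each edge $e=\{u,\parent(u)\}$ of $T$ by $\sigma(e):=h_e$. By the definition of $\sigma(\cdot,\cdot)$ in \cref{thm:colcombet-ds}, for a strict ancestor $w'$ of a node $w$ one has $\sigma(w',w)=h_{e_1}\circ\cdots\circ h_{e_\ell}$ with $e_1$ incident to $w'$ and $e_\ell$ incident to $w$, so $p'=\sigma(w',w)(p)$ by the previous paragraph. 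It therefore suffices to build, via \cref{thm:colcombet-ds}, the path-product data structure for $(T,\sigma)$ over $S$; a query then computes $\sigma(w',w)$ and applies it to $p$, in time $|S|^{\Oh(1)}=f(|Q|)$ for a computable $f$, after initialization in time $|S|^{\Oh(1)}\cdot|T|$. The degenerate case $w=w'$ is handled directly by returning $p$.

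The only real obstacle is to compute $\rho_0$ together with all edge labels $h_e$ in time $f(|Q|)\cdot|T|$. Since efficient labeled first-order query-answering on $\CC$ in particular yields efficient labeled model-checking on $\CC$, \cref{lem:automata evaluation} applies and computes $\rho_0$ in time $f(|\aut A|)\cdot|T|$. For the functions $h_e$, the naive approach of relabeling one child of $v$ at a time and re-evaluating the transition sentences in $\str A_v$ would cost $\sum_v|\str A_v|^2\cdot f(|\aut A|)$, which is quadratic; I would instead batch over the children of each node $v$ using query-answering inside the whorl. For each pair $(q,p)\in Q^2$, form the first-order formula $\chi_{q,p}(y)$ obtained from $\delta_{q,(a,\emptyset)}$ by replacing every atom $r(z)$ (for $r\in Q$) by $p(z)\lor z=y$ if $r=p$ and by $r(z)\land z\neq y$ otherwise; then, with $\str A_v$ labeled by $\rho_0$, we have $\str A_v\models\chi_{q,p}(u)$ if and only if $h_{\{u,v\}}(p)=q$, and the size of $\chi_{q,p}$ depends only on $\aut A$. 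Using the assumed efficient labeled query-answering on $\CC$, build for each of the $|Q|^2$ pairs $(q,p)$ a data structure on $\str A_v$ answering such membership queries in constant time, in total time $|Q|^2\cdot f(|\aut A|)\cdot|\str A_v|$; then, for every child $u$ of $v$ and every state $p$, recover $h_{\{u,\parent(u)\}}(p)$ as the unique $q$ with $\str A_v\models\chi_{q,p}(u)$ using at most $|Q|$ of these queries. Summing over all nodes and using $\sum_{v}|\str A_v|\le|T|$, the whole preprocessing runs in time $f'(|Q|)\cdot|T|$ for a computable $f'$ (the fixed automaton $\aut A$ being absorbed into $f'$), and the query time is $f'(|Q|)$ as above.
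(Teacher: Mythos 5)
Your proof takes the same high-level route as the paper: encode the entailment relation $[w\mapsto p]\vdash[w'\mapsto p']$ as a function $Q\to Q$ on each edge, observe via \cref{lem:vdash transitivity} that these functions compose along root-to-leaf paths, and then invoke \cref{thm:colcombet-ds} on the resulting $Q^Q$-labeled tree. The one place where you go beyond the paper's proof is worth pointing out: the paper simply writes ``apply \cref{thm:colcombet-ds} to the tree $T$ in which every edge $w'w$ is labeled by the function $f_{ww'}$'' and leaves implicit how one actually computes all these edge labels within the linear time budget. A naive computation --- relabel one child of $v$ at a time and re-evaluate the transition sentence in $\str A_v$ --- costs $\sum_v |\str A_v|^2\cdot f(|\aut A|)$, which can be quadratic in $|T|$. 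You correctly notice this and resolve it by translating ``what state does $v$ get if its child $y$ is relabeled $p$?'' into a first-order formula $\chi_{q,p}(y)$ and then leaning on the efficient labeled query-answering hypothesis on $\CC$ (which is indeed the standing hypothesis of that subsection, and which implies model-checking) to answer all such queries in time linear in $|\str A_v|$. That is the right fix, and your $\chi_{q,p}$ rewrite (swapping $r(z)$ for $p(z)\lor z=y$ or $r(z)\land z\neq y$ as appropriate) correctly simulates the single-child relabeling. You are also more careful than the paper about the composition order: since \cref{thm:colcombet-ds} multiplies along the path from the ancestor downward, the semigroup operation on $Q^Q$ must be ordinary composition $\circ$ (apply the deeper edge first) rather than the left-to-right ``$;$'' used elsewhere in the paper, and you make this choice explicit. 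In short, same plan, but you have filled in two pieces of bookkeeping that the paper's terse proof glosses over.
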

\begin{proof}
Fix an augmented tree $T$ over the alphabet $A$.

For each pair of nodes $w,w'$ of $T$, where $w'$ is an ancestor of $w$, let $f_{ww'}\from Q\to Q$ be the function
such that $f_{ww'}(q)=q'$ if $[w\mapsto q]\vdash [w'\mapsto q']$.
From \cref{lem:vdash transitivity} it follows that if
if $w_1,\ldots, w_n$ is the path from $w_1=w$ to its ancestor $w_n=w'$ then \[f_{ww'}=f_{w_1w_2};f_{w_2w_3};\ldots;f_{w_{n-1}w_n}.\]

The algorithm is as follows.
First, compute  $\rho_0$  from $T$ in time linear in $|T|$, using \cref{lem:automata evaluation}.
Then apply \cref{thm:colcombet-ds} to
the tree $T$ in which every edge $w'w$ is labeled by
the function $f_{ww'}$, treated as an element fo the semigroup $Q^Q$ of functions from $Q$ to $Q$.
This yields in time linear in $|T|$ a data structure that
allows to query the function $f_{ww'}\from Q\to Q$, for any nodes $w'$ and $w$ such that $w'$ is an ancestor of~$w$.
In particular, $f_{w'w}(p)$ is the sought state $p'$
such that $[w\mapsto p]\vdash[w'\mapsto p']$.
\end{proof}

\paragraph{Shapes}
  Fix an automaton $\aut A(\tup x)$ with variables $\tup x$,
  that is  an automaton $\aut A$ over the alphabet $A\otimes \tup x$. Let $Q$ be its set of states.
A \emph{shape} $\sigma$ consists of the following (see \cref{fig:shape}, left):

\begin{figure}[h!]
  \centering
   \includegraphics[page=7,scale=0.5]{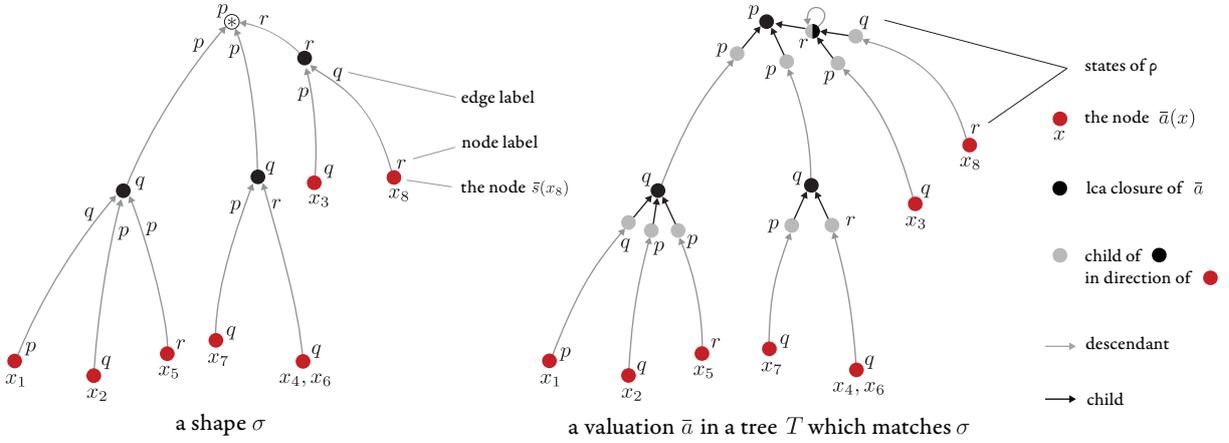}
   \caption{\emph{Left:} A shape with variables $\tup y=\set{x_1,\ldots,x_8}$ and states $\set{p,q,r}$. The nodes belonging to $\tup s$ are marked red, the remaining ones are black. \emph{Right:} A valuation $\tup a$ in a tree $T$ that matches the shape $\sigma$ to the left. The nodes belonging to $\tup a$ are marked red, their least common ancestors are marked black, and the children of the least common ancestors in the direction of nodes in $\tup a$ are marked gray. Each node is labeled by the run  $\rho$ of $\aut A(\tup x)$ on $T\otimes \tup a$.
   }
   \label{fig:shape}
 \end{figure}
\begin{itemize}
  \item a set of variables $\tup y$ contained in $\tup x$,
  \item a rooted tree $S$, with the root denoted $\ast$;
  \item a labeling $L\colon V(S)\cup E(S) \to Q$ assigning states from $Q$ to vertices and edges of $S$,
  \item a valuation $\tup s\from \tup y\to V(S)$
  such that every node in $S$ is either in the image of $\tup s$, or is the least common ancestor of some two nodes in the image of $\tup s$.
\end{itemize}
We may write $V(\sigma)$ for $V(S)$. We also say that a node $v$ of $S$ is \emph{occupied} by a variable $x\in \tup y$ if $\tup s(x)=v$.

\medskip

Thanks to the last condition above $S$ has at most $2|\tup y|-1$ nodes, and we obtain the following.

\begin{lemma}\label{lem:shapes}Fix $\tup y\subset \tup x$.
    The set of shapes with variables $\tup y$ is finite, and can be effectively computed, given~$\aut A$.
\end{lemma}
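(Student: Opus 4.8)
The plan is to view a shape as a finite combinatorial gadget whose total size is bounded in terms of $|\tup y|$ alone, and then to enumerate all candidate gadgets, keeping those that satisfy the defining conditions.

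First I would bound the number of nodes of the tree $S$ underlying a shape $\sigma=(\tup y, S, L, \tup s)$. By the last item in the definition of a shape, every node of $S$ either lies in $X:=\tup s(\tup y)$ or is the least common ancestor in $S$ of two nodes of $X$; that is, $V(S)$ equals the least-common-ancestor closure of $X$ in $S$. Since $|X|\le|\tup y|$ and the LCA-closure of an $m$-element subset of a rooted tree has at most $2m-1$ elements (the same elementary fact used when bounding the sets $Y$ in \cref{sec:general-queries}), we get $|V(S)|\le 2|\tup y|-1$, and hence $|E(S)|=|V(S)|-1\le 2|\tup y|-2$.

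It remains to count, and enumerate, the finitely many possibilities. Up to isomorphism of rooted trees there are only finitely many rooted trees on at most $2|\tup y|-1$ nodes; fix a representative $S$ for each. For a fixed $S$, there are at most $|Q|^{|V(S)|+|E(S)|}$ choices of the labeling $L\colon V(S)\cup E(S)\to Q$ and at most $|V(S)|^{|\tup y|}$ choices of the valuation $\tup s\colon \tup y\to V(S)$, all finite since $Q$, $V(S)$ and $\tup y$ are finite (and $Q$ is read off from $\aut A$). Taking the union over the finitely many trees $S$ yields a finite list of triples $(S,L,\tup s)$; from it we discard every triple for which some node of $S$ is neither in $\tup s(\tup y)$ nor a least common ancestor in $S$ of two elements of $\tup s(\tup y)$. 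The surviving triples are precisely the shapes with variables $\tup y$, so this set is finite, and the procedure just described computes it effectively, as it only involves enumerating small rooted trees, labelings and valuations and then computing least common ancestors inside the finite trees $S$. The argument has no genuine obstacle; its single nontrivial ingredient is the bound $|V(S)|\le 2|\tup y|-1$, which is the standard LCA-closure estimate already invoked elsewhere in the paper.
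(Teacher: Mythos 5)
Your proof is correct and follows the same route as the paper: the defining condition on shapes forces $V(S)$ to be the LCA closure of $\tup s(\tup y)$, giving $|V(S)|\le 2|\tup y|-1$, after which finiteness and effective enumerability are immediate. The paper states this bound in a one-line remark just before the lemma and leaves the (easy) enumeration implicit; you have simply spelled it out.
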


\pagebreak
Fix a tree $T$ over the alphabet $A$, a shape $\sigma$ as above, a valuation $\tup a\from \tup y\to V(T)$,
and let $\rho\from V(T)\to Q$ be the (unique) run of $\aut A$ on $T\otimes \tup a$.
The valuation $\tup a$ \emph{matches} $\sigma$ with \emph{starting node} $v\in V(T)$ if there is a function $F\from V(S)\to V(T)$ such that (see \cref{fig:shape}, right):
\begin{enumerate}
  \item $F$ preserves least common ancestors, that is, for all $a,b,c\in V(S)$, if $c$
  is the least common ancestor of $a$ and $b$ in $S$ then $F(c)$ is the least common ancestor of $F(a)$ and $F(b)$ in $T$. In particular, $F$ preserves $\preceq$.
  \item $F$ is injective,
  \item  $F(\tup s)=\tup a$, that is, $F(\tup s(x))=\tup a(x)$ for every $x\in \tup y$,
  \item $F(\ast)=v$, where $\ast$ is the root of $S$,

  \item for every vertex $w$ of $S$, $\rho(F(w))= L(w)$,
  \item for every edge $e=ww'$ of $S$ between a node $w$ and its child $w'$, the
  $F(w')$-directed child of $F(w)$ in $T$ satisfies $\rho(v')=L(e)$.

\end{enumerate}

Note that if $\bar a$ matches $\sigma$, then the function
$F\from V(\sigma)\to V(T)$ witnessing this is unique. Moreover, no valuation~$\tup a$ matches two distinct shapes,
since the rooted tree $S$ and the labeling $L$ of its nodes and edges is uniquely determined by the valuation $\tup a$ and the unique run of $\aut A$ on $T\otimes \tup a$.

\paragraph{Efficient shape-matching}
We will prove the following lemma stating that shape matching can be efficiently tested.
Below, $\CC$ is a class of $\Sigma$-structures for which labeled query-answering is efficient, and $\aut A(\tup x)$ is an automaton with variables $\tup x$ over $A$-labeled $\CC$-augmented trees.

\begin{lemma}\label{lem:shape answering}
  There is a computable function $f\from\N\to\N$ such that the following holds.
  For every shape~$\sigma(\tup y)$ there is an algorithm
  that given a $\CC$-augmented tree $T$ computes in time $f(|\aut A|)\cdot |T|$ a data structure that allows to answer the following queries in time $f(|\aut A|)$:
  Given a tuple $\tup a\in V(T)^{\tup y}$,
  does it match $\sigma$?
\end{lemma}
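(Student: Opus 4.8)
The plan is to reduce shape matching to a bounded number of local tests, each of which can be answered in constant time after linear preprocessing. Fix a shape $\sigma(\tup y)$ with tree $S$, root $\ast$, labeling $L$, and valuation $\tup s\from\tup y\to V(S)$. First I would compute, using \cref{lem:automata evaluation}, the run $\rho_0$ of $\aut A$ on $T\otimes\emptyset$ in time linear in $|T|$, and set up the navigation data structure of \cref{lem:navigation} for $\lca$ and $\dir$ queries on $T$. I would also set up the entailment data structure of \cref{lem:colcombet-rephrased}, which in linear time lets us answer, given $w\preceq w'$ and $p\in Q$, the unique $p'$ with $[w\mapsto p]\vdash[w'\mapsto p']$. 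Finally, for each node $v$ of $T$, I would set up the labeled query-answering data structure for the whorl $\str A_v\in\CC$ promised by efficiency of labeled query-answering on $\CC$; since $\sum_v|\str A_v|\le|T|$, this takes $f(|\aut A|)\cdot|T|$ time in total. These are all the ingredients.

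Now, given a query tuple $\tup a\in V(T)^{\tup y}$, the algorithm proceeds as follows. Let $X$ be the set of nodes appearing in $\tup a$, let $Y$ be the closure of $X\cup\{\ast\text{-image candidate}\}$ under least common ancestors, and let $Y'$ be $Y$ together with the $\dir$-children of nodes of $Y$ towards nodes of $X$; all of $X,Y,Y'$ have size $O(|\tup y|)$ and are computable in constant time (with respect to $|T|$) using \cref{lem:navigation}. The candidate witness $F\from V(S)\to V(T)$, if it exists, is forced: $F$ must send $\tup s(x)$ to $\tup a(x)$, it must preserve least common ancestors, and it must send $\ast$ to the least common ancestor of all of $\tup a$ (which must be the starting node $v$); so $F$ is the unique lca-preserving injection extending $\tup s\mapsto\tup a$, and it is computable in constant time. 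We check conditions (1)--(4) of the matching definition directly on this $F$ (injectivity, lca-preservation, $F(\tup s)=\tup a$, $F(\ast)=v$) in constant time. It remains to verify conditions (5) and (6), which are statements about the actual run $\rho$ of $\aut A$ on $T\otimes\tup a$ at the $O(|\tup y|)$ relevant nodes.

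The core of the argument is to compute $\rho$ restricted to the nodes of $Y'$ (and their relevant children) in constant time, following exactly the bottom-up recomputation scheme sketched before \cref{lem:colcombet-rephrased} and depicted in \cref{fig:jumps}: at each lowest node $w\in Y'$ the new state $\rho(w)$ is determined by $\rho_0(w)$ and the set of variables of $\tup y$ mapped to $w$; at each $v\in Y$ the state $\rho(v)$ is obtained by evaluating, in the whorl $\str A_v$ (labeled by $\rho_0$ except at the finitely many children in $Y'$, where we use the already-recomputed states), a first-order sentence asking which transition sentence holds --- this is one labeled query to the $\str A_v$ data structure, hence constant time; and along each gray edge from $v\in Y$ up to its parent in $Y'$ we propagate via one call to the entailment data structure of \cref{lem:colcombet-rephrased}. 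Since $|Y'|=O(|\tup y|)$, this recomputes $\rho$ at all relevant nodes in $f(|\aut A|)$ time, after which conditions (5) and (6) are checked directly by comparing $\rho(F(w))$ with $L(w)$ and $\rho(\dir(F(w),F(w')))$ with $L(ww')$. The tuple matches $\sigma$ iff all checks pass. The main obstacle is bookkeeping: one must argue carefully that outside the nodes of $Y'$ (and their subtrees off the relevant paths) the run $\rho$ on $T\otimes\tup a$ coincides with $\rho_0$, so that the local recomputations at $Y'$ genuinely reconstruct the true run at the queried nodes --- this is where the lca-closure property of $Y$ and the definition of $Y'$ are used, exactly as in the query-answering sketch, and it is the same combinatorial fact that makes \cref{lem:warp}/\cref{lem:aggregate} work; everything else is routine assembly of the four data structures listed above.
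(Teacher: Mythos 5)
Your proposal is correct and follows the informal sketch that precedes \cref{lem:shape answering} in the paper, but it is structured differently from the paper's actual proof. The paper proves the lemma by induction on the size of the shape tree $S$: the shape $\sigma$ is decomposed into sub-shapes $\sigma_w$ rooted at the children $W$ of $\ast$, the data structure for $\sigma$ recursively contains data structures for the $\sigma_w$, and the correctness of one query is reduced to the five conditions of Claim~\ref{claim:conditions} (which are checked via the sub-shape data structures, the navigation data structure, the entailment data structure of \cref{lem:colcombet-rephrased}, and a whorl first-order query). Your approach instead computes the forced witness $F$ (unique lca-preserving extension of $\tup s\mapsto\tup a$) and then recomputes $\rho$ at the nodes of $Y'$ bottom-up in a single non-recursive pass, comparing with $L$. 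These two routes use exactly the same four preprocessed data structures and are morally equivalent; the paper's induction mainly serves to localize the correctness argument, while your version commits to one global bottom-up pass.

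One inaccuracy worth flagging: you write that at each lowest node $w\in Y'$ ``the new state $\rho(w)$ is determined by $\rho_0(w)$ and the set of variables of $\tup y$ mapped to $w$.'' This is not true as stated---$\rho_0(w)$ is a lossy projection of $\rho_0$ restricted to the children of $w$, and two different child-state configurations giving the same $\rho_0(w)$ can yield different $\rho(w)$ once the letter at $w$ changes to $(a,\tup y_w)$. The correct statement is that $\rho(w)$ is determined by $\rho_0$ restricted to the whorl $\str A_w$ together with the new letter, and is therefore obtainable by a single labeled first-order query to $\str A_w$ (or, alternatively, by precomputing a richer product-automaton run as the paper does in its base case). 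This is an easy fix and does not affect the rest of the argument. A second, cosmetic point: since the updated states at children of a $Y$-node are computed dynamically rather than fixed in advance by $\sigma$, the whorl query formulas must be parameterized over the possible state assignments; this is a bounded family of formulas, so it is harmless, but it deserves to be made explicit.
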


First we show how \cref{lem:shape answering} implies \cref{lem:answering}.
\begin{proof}[Proof of \cref{lem:answering}]
  Let $T$ be a $\CC$-augmented tree over the alphabet $A$.
  We extend $\tup x$ to $\tup x\cup\set{z}$ by adding a dummy variable~$z$. The intention is that this variable will occupy the root of a shape, and on the other hand, we will want to match it with the root of a given tree $T$.

  We may view
  $\aut A$ as an automaton over the alphabet $A\otimes (\tup x\cup \{z\})$. Using~\cref{lem:shapes}, compute the set~$\Delta$ of all shapes $\sigma(\tup x\cup\set z)$ in which the root node of is occupied by the variable
  $z$ (that is, $\tup s(z)=\ast$) and
  is labeled with an accepting state of $\aut A$ (that is, $L(\ast)$ is an accepting state).

  In the initialization phase, our algorithm
  computes for each shape $\sigma\in \Delta$
  the data structure given by \cref{lem:shape answering},
    in time linear in $|T|$.

  In the query-answering phase, we proceed as follows.
  Let  $\tup a\in V(T)^{\tup x}$ be a tuple given on input.
Extend $\tup a$ to the tuple $\tup ar\in V(T)^{\tup x\cup \set z}$
that maps the variable $z$ to the root $r$ of $T$.
Then for each shape $\sigma(\tup x\cup \set z)\in \Delta$, query the appropriate data structure to determine whether
or not $\tup ar$ matches $\sigma(\tup x\cup \set z)$.
If this is the case for some shape $\sigma\in\Delta$,
then $\aut A$ accepts $T\otimes \tup a$,
otherwise it doesn't.
\end{proof}


\medskip

We now prove \cref{lem:shape answering}.
\begin{proof}[Proof of \cref{lem:shape answering}]
The proof proceeds by induction on the number of nodes $|S|$ of the tree $S$ underlying the shape $\sigma$.

In the base case, $S$ has only one node $\ast$, and all the variables $\tup y$ of $\sigma$ are associated with that node.
Suppose that $\aut A$ has states $Q$.
Construct an automaton $\aut A'$ over the alphabet $A$ with states $Q^2$ and with the following property:
  $\aut A'$ is in state $(p,q)$ at a node $v$ in its run on $T$ where $p=\state(\aut A,T\otimes \emptyset)$ and $q=\state(\aut A,T\otimes \tup a)$ where $\tup a\from\tup y\to V(T)$ is the valuation mapping all variables in $\tup y$ to $v$.

Such an automaton $\aut A'$ can be obtained by defining its  transition sentences $\delta_{(p,q),a}$ for $p,q\in Q$ and $a\in A$,
where $\delta_{(p,q),a}$ is the transition sentence $\delta_{p,(a,\emptyset)}\land
\delta_{q,(a,\tup y)}$  of $\aut A$, where $\tup y$ are the variables of the shape $\sigma$.

In the preprocessing phase, given $T$ compute the run of $\aut A'$ on $T\otimes \emptyset$
using \cref{lem:automata evaluation}.
In the answering phase, given a valuation $\tup a\in V(T)^{\tup y}$, proceed as follows.
If $\tup a$ does not map all variables of~$\tup y$ to the same
node of $T$, answer the query negatively.
Otherwise, let $v\in V(T)$ be such that $\tup a$ maps all variables of $\tup y$ to $v$.
If $\rho_0(v)=(p,q_\ast)$ where $q_\ast$ is the state labeling the root $\ast$ of the shape $\sigma$ then answer the query positively, otherwise answer negatively. This finishes the base case.

\medskip
We now consider the inductive step.
Let $\sigma$ be a shape with underlying tree $S$ and assume $|V(S)|>a$. Following the notation from
    above, we write $\ast$ for the root of $S$.
Let $W$ be the set of children of $\ast$ in~$S$.
    For each  $w\in W$
     define a shape
    $\sigma_w$ with the following components:
    \begin{itemize}
        \item the variables $\bar y_w:=\setof{y\in\tup y}{w\preceq \tup s(y)}$,
        \item the subtree $S_w$ of $S$
        rooted at $w$,
        \item the restriction $L_w$ of the labeling $L$ to $S_w$,
        \item the restriction $\tup s_w$ of the valuation $\tup s$ to $\tup y_w$.
    \end{itemize}
    Note that each shape $\sigma_w$ has fewer nodes than $\sigma$,
    so we may apply the inductive assumption to it.

The key observation is encompassed by the following claim, whose proof is immediate by the definitions. Let $q_\ast$ denote the state labeling the root $\ast$ of $\sigma$.

\begin{claim}\label{claim:conditions}
  A tuple $\tup a\in V(T)^{\tup y}$ matches $\sigma(\tup y)$
if and only if each of the following conditions hold:
\begin{enumerate}
  \item For each $w\in W$, the restriction of $\tup a$ to $\tup y_w$ matches the shape $\sigma_w(\tup y_w)$.
  Let $v_w$ denote the node of $T$ to which the root of $\sigma_w$ is mapped.

  \item There is a vertex $v$ of $T$ (this will be the vertex to which the root of $\sigma$ is mapped to) such that
  for any two distinct $w,w'\in W$, the least common ancestor of $v_w$ and $v_{w'}$ is equal to $v$, and furthermore $v$ is different than $v_w$ and $v_{w'}$.

  \item For each $w\in W$ let $u_{w}$ denote the child of $v$ in the direction of $v_w$, that is $u_w=\dir(v,v_w)$. Then the function $f_{v_wu_w}$ maps the root state of $\sigma_w$ to the state $q_w$ labeling the edge $\ast w$ in $S$ (that is, the edge connecting the root $\ast$ with $w$).

  \item Let $\tup z\subset \tup y$ be the set of variables
  that are occupying the root $\ast$ of $\sigma$, that is,
  $\tup z=\tup s^{-1}(\ast)$, where $\tup s$ is the valuation comprising $\sigma$. Then $\tup a$ maps each variable in $\tup z$ to $v$.

  \item Let $c\in A$ be the label of $v$ in $T$, and let $\tup z$ be as above. Then $(c,\tup z)$ is a letter of the alphabet $A\otimes \tup x$.
  The final condition is that the sentence $\delta_{q_\ast,(c,\tup z)}$ holds
   in the whorl $\str A_v$ at $v$ labeled by the function $\rho_0$,
   apart from each of the vertices $u_w$ for $w\in W$, which is instead labeled with the state $q_w$ labeling the edge $\ast w$ in $S$.
\end{enumerate}
\end{claim}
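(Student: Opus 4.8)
The plan is to prove both implications by directly unwinding the definition of ``matches'' and the definition of the run of $\aut A$, which is exactly why the claim is flagged as immediate. First I would fix notation: write $\rho$ for the run of $\aut A$ on $T\otimes \tup a$, write $\rho_0$ for the run of $\aut A$ on $T\otimes\emptyset$, and for each child $w$ of $\ast$ in $S$ write $\rho_w$ for the run of $\aut A$ on $T\otimes(\tup a|_{\tup y_w})$ and $u_w:=\dir(v,v_w)$. The single combinatorial fact that makes everything work is that every variable of $\tup s$ either occupies $\ast$ or lies in a unique subtree $S_w$; hence, once matching with starting node $v$ holds, every variable of $\tup a$ is mapped either to $v$ or into the subtree $T_{u_w}$. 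Consequently the bottom-up runs $\rho$, $\rho_w$, $\rho_0$ all coincide on any subtree of $T$ containing no node to which $\tup a$ assigns a variable, and $\rho$ coincides with $\rho_w$ on $T_{u_w}$. I would state and use this coincidence as a preliminary observation.

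For the forward direction I would assume $\tup a$ matches $\sigma$ with starting node $v$ and witness $F$. For each child $w$ of $\ast$ put $v_w:=F(w)$ and let $F_w$ be the restriction of $F$ to $V(S_w)$; using that $\rho$ and $\rho_w$ agree on $T_{u_w}$, one checks that $F_w$ witnesses that $\tup a|_{\tup y_w}$ matches $\sigma_w$ with starting node $v_w$, giving (1). Condition (2) is LCA-preservation of $F$ applied to $\ast$ and the pairs $w,w'$, together with injectivity of $F$. For (3): condition~(6) of matching for the edge $\ast w$ says $\rho(u_w)=q_w$; since $\tup a|_{\tup y_w}$ assigns no variable strictly above $v_w$ inside $T_{u_w}$, the path from $v_w$ to $u_w$ is evaluated by $\rho_w$ (hence by $\rho$) with empty labels, so $\rho(u_w)=f_{v_w u_w}(\rho(v_w))=f_{v_w u_w}(L(w))$, and $L(w)$ is the root state of $\sigma_w$. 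Condition~(4) is $F(\tup s)=\tup a$ read on $\tup z=\tup s^{-1}(\ast)$. For (5), condition~(5) of matching at $\ast$ gives $\rho(v)=L(\ast)=q_\ast$; unfolding the run at $v$ shows $\delta_{q_\ast,(c,\tup z)}$ holds in $\str A_v$ labeled by $\rho$ restricted to the children of $v$, and those children are either among the $u_w$ (where $\rho(u_w)=q_w$ by the previous point) or carry no variable in their subtree (where $\rho=\rho_0$), which is precisely the relabeling described in (5).

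For the backward direction I would define $F\colon V(S)\to V(T)$ by $F(\ast):=v$ and $F|_{V(S_w)}:=F_w$, with $F_w$ the witness obtained from (1); this is well defined because the sets $V(S_w)$ partition $V(S)\setminus\{\ast\}$. Injectivity follows since (2) forces the $u_w$ to be pairwise distinct children of $v$, so the images $F_w(V(S_w))\subseteq T_{u_w}$ lie in pairwise disjoint subtrees not containing $v$; LCA-preservation follows from LCA-preservation of each $F_w$ together with (2) for cross pairs; $F(\tup s)=\tup a$ combines the $F_w$ equalities with (4). The vertex- and edge-labeling conditions of matching hold inside each $S_w$ because $\rho$ and $\rho_w$ agree on $T_{u_w}$, and at $\ast$ because $\rho(u_w)=f_{v_w u_w}(L(w))=q_w$ by (3), whence by (5) the sentence $\delta_{q_\ast,(c,\tup z)}$ holds in $\str A_v$ labeled by $\rho$, i.e.\ $\rho(v)=q_\ast=L(\ast)$.

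The only mildly delicate point, and the one I expect to need the most care, is the bookkeeping of which run ($\rho$, $\rho_w$, or $\rho_0$) is in force on which subtree, together with the elementary induction along the path from $v_w$ to $u_w$ that yields $\rho(u_w)=f_{v_w u_w}(\rho(v_w))$; this is exactly where the definition of the entailment $\vdash$ and \cref{lem:vdash transitivity} enter. Everything else is a routine transcription of the definition of matching, so I would keep the write-up short and focus the detail on that one point.
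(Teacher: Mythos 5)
Your proof is correct and is exactly the careful unwinding of definitions that the paper has in mind when it asserts the claim is ``immediate by the definitions''; the paper provides no further argument, so there is nothing to compare against beyond the intended routine verification, which you carry out faithfully. Your central preliminary observation --- that $\rho$, $\rho_w$, and $\rho_0$ coincide on any subtree in which $\tup a$ assigns no variable, and that $\rho$ coincides with $\rho_w$ on $T_{u_w}$ --- is precisely the bookkeeping fact that makes both directions go through, and you identify it and use it correctly. One very minor inaccuracy: for the single-step identity $\rho(u_w)=f_{v_w u_w}(\rho(v_w))$ you need only the definition of $\vdash$ and of $f_{v_w u_w}$ together with the agreement of $\rho$ and $\rho_0$ on the sibling subtrees along the $v_w$--$u_w$ path; the transitivity lemma (\cref{lem:vdash transitivity}) is not actually needed here, though citing it does no harm.
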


We can now describe our query-answering algorithm.
In the initialization phase, given a \mbox{$\CC$-augmented} tree $T$, we initialize the following data structures, corresponding to the items 1, 2, and 3 above:
\begin{itemize}
  \item the data structures
  for the shapes $\sigma_w$, where $w\in W$, obtained by inductive assumption,

  \item the data structure given by
  \cref{lem:navigation} answering $\lca$ and $\dir$ queries in $T$,

  \item the data structure given by \cref{lem:colcombet-rephrased}, allowing to compute the function $f_{ww'}$, given a node $w$ and its ancestor $w'$.
\end{itemize}
We need one more data structure, corresponding to  item 5 above.
Let $\tup w$ be the set $W$, now treated as variable names.
For each $c\in A$ consider the first-order formula $\gamma_c(\tup w)$
such that for every $Q$-labeled $\Sigma$-structure $\str A$,  a given valuation $\tup a\in V(\str A)^{\tup w}$ satisfies
$\gamma_c(\tup w)$ if and only if
 the transition sentence~$\delta_{q_\ast,(c,\tup z)}$ holds in the $Q$-labeled $\Sigma$-structure $\str A$ in which the label of $\tup a(w)$ is replaced by the state $q_w$ (the state labeling the edge connecting $\ast$ with $w$ in $\sigma$). It is straightforward to construct such a first-order formula~$\gamma_c(\tup w)$.

 Now, given a tree $T$, for each node $v$ of $T$
 initialize the data structure for answering queries to~$\gamma_c(\tup w)$, where $c$ is the label of $v$. By the assumption about the class $\CC$,
 such a data structure can be computed in time linear in $|\str A_v|$,
 so jointly all of them can be computed in time linear in $|T|$.

 This finishes the description of the initialization phase.
 In the query-answering phase, given a tuple~$\tup a$,
 verify the conditions 1-5 listed in \cref{claim:conditions}.
 This can be done using the precomputed data structures
 in time which is bounded in terms of $|\aut A|$,
 and which is independent of $|T|$.
 This finishes the inductive proof of \cref{lem:shape answering}.
\end{proof}

Recall that \cref{lem:shape answering} yields \cref{lem:answering}, which together with \cref{thm:formulas to automata} proves \cref{thm:answering}.

\begin{remark}\label{rem:to-FO}
  By revisiting the proof from this section,
  one can prove that for any automaton $\aut A$ over augmented
trees there is a finite alphabet $B$ and sentence $\phi$ of $\FO(\preceq,B\cup \Sigma)$ such that given an augmented tree $T$ one can compute another augmented tree $T'$ over the alphabet $B$ such that $\aut A$ accepts $T$ if and only if\, $T'$ satisfies $\phi$.
Moreover, if model-checking first-order logic is efficient on $\CC$, the computation of\, $T'$ takes time linear in $|T|$. This argument ultimately relies on the observation that the proof of \cref{thm:colcombet-ds} yields (in linear time) a labeling of a given input tree $T$ using a finite alphabet $B$, such that the composition of functions labeling the edges from a path between two given nodes $v$ and $w$ can be expressed in $\FO(\preceq,B)$. This observation originates from~\cite{Colcombet07}.
\end{remark}

\subsection{Enumeration}
We now extend our results to obtain efficient enumeration of \FO-queries on $\CC$-augmented trees.

\medskip An \emph{enumerator} for a sequence of elements
$a_1,\ldots,a_k$ is a data structure that implements a function that, when invoked for the $i$th time,
outputs $a_i$ if $1\le i\le k$, or reports an error
if $i>k$.  The enumerator has \emph{delay} $T$ if after each
invocation of the function, the output is returned in time at most
$T$.  An enumerator for a set $A$ is an enumerator for any sequence
enumerating $A$ without repetitions.  More information on enumeration
algorithms, together with constant delay specifically can be found
in~\cite[Section~2 and Section~5.3]{DBLP:journals/eatcs/Strozecki19}.

 Say that first-order enumeration is \emph{efficient} on a class $\CC$ of
$\Sigma$-structures if there is a computable function $f\from\N\to\N$
and an algorithm that, given a structure $\str A\in\CC$ vertex-labeled
by a finite alphabet~$A$, and a first-order formula $\phi(\tup x)$ in the signature $\Sigma\cup A$ (where elements of $A$ are viewed as unary predicates),  computes in time $f(|\phi|)\cdot |\str A|$
an enumerator for~$\phi(\str A)$ with delay $f(|\phi|)$.

\begin{theorem}\label{thm:formula enumeration}
  Suppose that first-order enumeration is efficient on~$\CC$.  Then there is a computable function $f\from\N \to \N$ and an algorithm that, given a $\CC$-augmented tree $T$ over an alphabet $A$  and a $\FOMSO$-formula $\phi(\tup x)$ computes in time $f(|\phi|)\cdot |T|$
  an enumerator for~$\phi(\str A)$ with delay~$f(|\phi|)$.
\end{theorem}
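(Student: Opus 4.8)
The statement is the enumeration counterpart of the evaluation result (\cref{thm:formula evaluation}) and the query-answering result (\cref{thm:answering}), so the plan is to follow exactly the same two-layer strategy: first reduce $\FOMSO$ to automata on augmented trees via \cref{thm:formulas to automata}, and then prove an analogue of \cref{lem:answering} for enumeration, i.e.\ a lemma stating that \emph{automata enumeration} on $\CC$-augmented trees is efficient whenever first-order enumeration is efficient on $\CC$. Concretely, I would isolate the following lemma: given an automaton $\aut A(\tup x)$ with variables $\tup x$ over $A$-labeled $\CC$-augmented trees, one can compute in time $f(|\aut A|)\cdot|T|$ an enumerator for $\aut A(T)\subseteq V(T)^{\tup x}$ with delay $f(|\aut A|)$. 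Together with \cref{thm:formulas to automata} this immediately yields \cref{thm:formula enumeration}.

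\textbf{Key steps.} The proof of the automata-enumeration lemma reuses almost all machinery from the query-answering section. First, as in the proof of \cref{lem:answering}, add a dummy variable $z$ forced to sit at the root, view $\aut A$ as an automaton over $A\otimes(\tup x\cup\{z\})$, and compute the finite set $\Delta$ of shapes $\sigma(\tup x\cup\{z\})$ whose root is occupied by $z$ and labeled with an accepting state (\cref{lem:shapes}). Since no valuation matches two distinct shapes, the set $\aut A(T)$ is the disjoint union, over $\sigma\in\Delta$, of the set of valuations $\tup a$ with $\tup a r$ matching $\sigma$; hence it suffices to enumerate, for each fixed shape $\sigma$, all tuples matching it, and concatenate these enumerators (a constant number of them, since $|\Delta|$ depends only on $\aut A$). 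So the heart of the matter is an enumeration analogue of \cref{lem:shape answering}: for a fixed shape $\sigma(\tup y)$, compute in linear time an enumerator with constant delay for the set $\{\tup a\in V(T)^{\tup y}:\tup a \text{ matches }\sigma\}$. I would prove this by induction on $|V(S)|$, mirroring \cref{claim:conditions}. In the base case ($S$ a single node $\ast$), the matching tuples are exactly the constant valuations $\tup a\equiv v$ where $\rho_0(v)=(p,q_\ast)$ for some $p$ in the run of the product automaton $\aut A'$ on $T\otimes\emptyset$; enumerating these is enumerating a unary first-order query on the vertex-labeled structure $T$ (labels recording $\rho_0$), which we can do by a single sweep — or more robustly, by invoking first-order enumeration on each whorl, since each vertex $v$ of $T$ is a child of exactly one node. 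In the inductive step, by \cref{claim:conditions} a matching tuple $\tup a$ decomposes as: a choice of $v\in V(T)$ (the image of $\ast$), a choice, for each child $w\in W$ of $\ast$ in $S$, of a valuation matching $\sigma_w$ whose root-image $v_w$ lies strictly below $v$ with all the $v_w$ pairwise having least common ancestor exactly $v$, plus the transition and color conditions (items 3--5). To enumerate these efficiently one iterates over $v$; for a fixed $v$, the condition on the tuple $(u_w)_{w\in W}$ of directed children is a first-order query $\gamma_c(\tup w)$ on the whorl $\str A_v$ (exactly the formula constructed in \cref{lem:shape answering}), whose satisfying tuples can be enumerated with constant delay by the assumption on $\CC$; and for each such $(u_w)$ one must enumerate, for each $w$, the valuations matching $\sigma_w$ that start in the subtree of $u_w$ and whose root-state maps under $f_{v_w u_w}$ to the prescribed edge-state $q_w$. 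This last requires a refinement of the inductive data structure so that it enumerates matching valuations \emph{restricted to start in a given subtree and with a given root-state}; this is a routine strengthening since a shape already records the root-state, and the ``start in the subtree rooted at $u$'' restriction is handled by the navigation data structure of \cref{lem:navigation} plus the path-product data structure of \cref{lem:colcombet-rephrased}. Finally, to combine the per-$w$ enumerators into an enumerator for the tuple $\tup a$ one uses the standard product-of-enumerators trick (lexicographic nesting with look-ahead), which preserves constant delay because $|W|$ is a constant, and one nests this inside the enumeration over $v$ and over the $(u_w)$'s, again preserving constant delay since all outer loops have constant fan-out \emph{per relevant vertex} and the relevant vertices are produced with constant delay. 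Emptiness checks to avoid stalling on a $v$ with no completion are performed using the query-answering data structure of \cref{lem:shape answering} (or \cref{lem:answering}) applied to each sub-shape.

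\textbf{Main obstacle.} The genuinely delicate point is the \emph{delay} bookkeeping: one must ensure that iterating over candidate starting vertices $v$, and within that over tuples $(u_w)_{w\in W}$ of directed children, does not spend unbounded time on $v$'s or $(u_w)$'s that admit no completion. This is handled by precomputing, using the query-answering data structures of \cref{lem:answering}/\cref{lem:shape answering}, for each $v$ (and each $(u_w)$) a yes/no flag indicating whether at least one matching tuple with image of $\ast$ equal to $v$ (resp.\ with directed children $(u_w)$) exists, so that the enumerator only ever descends into branches guaranteed to produce output; and by using the standard technique (see~\cite[Section~5.3]{DBLP:journals/eatcs/Strozecki19}) of interleaving the production of the next answer with the search for the one after it, so that the delay between consecutive outputs stays bounded by a function of $|\aut A|$ only. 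The other, more mechanical, obstacle is that the path-product data structure of \cref{lem:colcombet-rephrased} gives the composed function $f_{v_w u_w}$ but we need, for a fixed target state $q_w$, to range over all admissible starting root-states — this is fine since $Q$ is a fixed finite set, so one simply unions over the (constantly many) root-states $p$ with $f_{v_w u_w}(p)=q_w$. Once these are organized, the argument is a direct, if somewhat bureaucratic, lift of the query-answering proof, in the same spirit in which~\cite{KazanaS13} lifts $\MSO$ query answering to $\MSO$ query enumeration on ordinary trees.
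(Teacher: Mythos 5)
Your high-level plan is exactly the paper's: reduce $\FOMSO$ to automata via \cref{thm:formulas to automata}, then prove an enumeration analogue of \cref{lem:answering} (the paper's \cref{lem:automata enumeration}), which is in turn reduced via the dummy-variable trick to an enumeration analogue of \cref{lem:shape answering} (the paper's \cref{lem:shape enumeration}), proved by induction on $|V(S)|$ using the characterization of \cref{claim:conditions}. The base case, the disjointness of shape-matching sets, the concatenation/product-of-enumerators bookkeeping, and the use of emptiness flags to avoid stalling on unproductive branches are all as in the paper.

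However, there is a concrete gap where you write that the restriction to ``start in the subtree rooted at $u_w$ and map under $f_{v_w u_w}$ to the prescribed edge-state'' is a ``routine strengthening'' handled by the navigation data structure of \cref{lem:navigation} and the path-product data structure of \cref{lem:colcombet-rephrased}. Those two data structures answer \emph{point queries}: given a single pair $(u', u)$ with $u\preceq u'$, they return $f_{u'u}(p)$. What the inductive step actually needs is, for a fixed $u=G(w)$, a \emph{constant-delay enumerator} for the set of descendants $u'$ of $u$ that lie in $X_w$ (i.e.\ are valid starting nodes for $\sigma_w$) and satisfy the entailment $[u'\mapsto L(w)]\vdash[u\mapsto L(\ast w)]$. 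You cannot get this by enumerating $X_w$ within the subtree of $u$ and filtering with point queries: there may be arbitrarily long runs of candidates in $X_w$ that fail the path-product condition, which destroys the delay bound. This is precisely the technical content that the paper develops from scratch in \cref{lem:minimal-descendants}, \cref{lem:enumerate}, \cref{lem:reverse enumerate}, and \cref{lem:state entailment}. The crucial new idea, absent from your proposal, is the color-coding argument of \cref{lem:enumerate}: one constructs injective colorings $c_v\colon Q\to[|Q|]$ along root-to-leaf paths so that applying $f_{vw}$ never \emph{increases} the color, so that the color of the tracked element changes only a bounded number of times; this yields a recursive decomposition into sets of the form $E_v^{xyd}$ together with a reduction to enumerating minimal descendants in a precomputed set (\cref{lem:minimal-descendants}). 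Without some analogue of this mechanism, the delay bound in the inductive step does not hold, so the ``routine strengthening'' claim is where your proof breaks. Once that lemma is in hand, the rest of your argument matches the paper.
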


To prove the theorem, we reduce its statement to a statement about
automata on augmented trees. Recall that for an automaton
$\aut A$ on augmented trees
  over an alphabet of the form $A\otimes \tup x$ and an augmented tree $T$ over the alphabet $A$, by $\aut A(T)$ we denote the set of all valuations $\tup a\from \tup x\to V(T)$ such that
  $\aut A$ accepts $T\otimes \tup a$.

\begin{restatable}{lemma}{lemenum}\label{lem:automata enumeration}
  Suppose that first-order enumeration is efficient on~$\CC$.    Then there is a computable function $f\from\N\to\N$ and
  an algorithm that, given an automaton $\aut A$ on augmented trees
  over an alphabet of the form~$A\otimes \tup x$ and a $\CC$-augmented
  tree $T$ over the alphabet $A$, computes in time
  $f(|\phi|)\cdot |T|$ an enumerator for~$\aut A(T)$ with delay
  $f(|\phi|)$.
\end{restatable}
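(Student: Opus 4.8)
The plan is to adapt the classical approach to constant-delay enumeration over trees (as in Kazana--Segoufin~\cite{KazanaS13}) to the setting of automata on augmented trees, leveraging the query-answering machinery that has already been built up, in particular \cref{lem:answering}, \cref{lem:shape answering}, \cref{lem:navigation}, \cref{lem:colcombet-rephrased}, and the shape formalism. As before, I would first reduce \cref{lem:automata enumeration} to a purely combinatorial statement about shapes. Recall that every valuation $\tup a$ matching the automaton uniquely determines a shape $\sigma$ (its tree $S$, node/edge labeling $L$ coming from the run of $\aut A$ on $T\otimes \tup a$, and the valuation $\tup s$), and no valuation matches two distinct shapes. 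Moreover, by \cref{lem:shapes} there are only finitely many shapes with variables $\tup x$, and this number is bounded by a function of $|\aut A|$. Hence $\aut A(T)$ is partitioned into the sets $\aut A_\sigma(T)$ of valuations matching a fixed shape $\sigma$, where $\sigma$ ranges over the (finitely many) shapes whose root is occupied by a designated ``root variable'' $z$ mapped to the root of $T$ and whose root label is accepting --- exactly the set $\Delta$ from the proof of \cref{lem:answering}. So it suffices to produce, for each such $\sigma$, an enumerator for $\aut A_\sigma(T)$ with constant delay after linear-time preprocessing; running these one after another (and being careful that distinct shapes give disjoint answer sets, so no deduplication is needed across shapes) yields the full enumerator.

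The core task is therefore: enumerate with constant delay all valuations $\tup a \colon \tup y \to V(T)$ matching a fixed shape $\sigma$. I would prove this by induction on $|V(S)|$, mirroring the structure of the proof of \cref{lem:shape answering}. In the base case $S$ has a single node $\ast$ occupied by all of $\tup y$, so a matching valuation is determined by choosing a single node $v\in V(T)$ with $\rho_0(v)=(p,q_\ast)$ (in the notation of that proof) for some $p$; enumerating such nodes with constant delay reduces, after computing the run $\rho_0$ of the product automaton $\aut A'$ in linear time via \cref{lem:automata evaluation}, to enumerating a unary first-order query over each whorl --- which is efficient by hypothesis --- together with a standard traversal of $T$ that skips to the next relevant whorl in time proportional to the number of nodes skipped. (Here one uses the usual trick: precompute, in linear time, for each node a pointer to the next node in, say, preorder whose subtree contains a relevant node, so that the enumeration never idles.) For the inductive step, let $W$ be the children of $\ast$ in $S$, and for each $w\in W$ let $\sigma_w$ be the sub-shape rooted at $w$, to which the inductive assumption applies, giving a constant-delay enumerator $\mathcal E_w$ for $\aut A_{\sigma_w}(T)$. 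By \cref{claim:conditions}, a valuation $\tup a$ matches $\sigma$ iff its restrictions to the variable blocks $\tup y_w$ match $\sigma_w$ with some starting node $v_w$, all the $v_w$ have a common strict ancestor $v$ that is their pairwise least common ancestor, the ``entailment'' functions $f_{v_w u_w}$ (with $u_w=\dir(v,v_w)$) send the root states of the $\sigma_w$ to the prescribed edge-states, the designated variables $\tup z=\tup s^{-1}(\ast)$ are mapped to $v$, and the local transition sentence $\delta_{q_\ast,(c,\tup z)}$ holds in the whorl at $v$ with the $u_w$ relabeled appropriately.

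To turn this characterization into a constant-delay enumerator, I would organize the enumeration as a nested loop over the ``local pattern'' at the root: iterate over the (constantly many) choices of how the blocks $\tup y_w$, $w\in W$, are grouped according to which of them share their starting node, and over the resulting ``skeleton'' of least-common-ancestor relations among the $v_w$ and $v$; for each such fixed local skeleton one must enumerate tuples $(v, (v_w)_{w\in W})$ realizing it, together with, for each group, a valuation of the corresponding $\sigma_w$'s whose starting nodes coincide with the chosen $v_w$. The delicate point --- and the step I expect to be the main obstacle --- is performing this outer enumeration with genuinely constant (not merely amortized over the whole output, but per-answer) delay: one must avoid, at the root node $v$, iterating over children $u_w$ or descendants that do not lead to a completion, and one must avoid re-enumerating a child-enumerator from scratch. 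The standard remedy, which I would follow, is the lexicographic/priority-queue technique of Kazana--Segoufin: fix an order on answers, maintain for each sub-enumerator $\mathcal E_w$ its current and ``peeked next'' output, and use the query-answering data structures --- \cref{lem:shape answering} to test in constant time whether a proposed partial assignment extends to a match of $\sigma_w$ with a given starting node, \cref{lem:navigation} for $\lca$ and $\dir$, \cref{lem:colcombet-rephrased} for the entailment functions $f_{ww'}$, and a per-whorl enumerator for the first-order formula $\gamma_c(\tup w)$ of \cref{claim:conditions}(5) that picks the admissible children $u_w$ --- to jump directly from one output to the next without ever exploring a dead branch. Each ``jump'' either advances one of the finitely many sub-enumerators by one step or advances within a single whorl's first-order enumerator by one step, plus $O_{|\aut A|}(1)$ bookkeeping (updating LCA's, entailment compositions, and the local relabeled transition check); since $|W|$ and the number of skeleton patterns are bounded by a function of $|\aut A|$, the total per-answer delay is $f(|\aut A|)$. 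The preprocessing is the union of the (inductively obtained, linear-time) preprocessings for the $\sigma_w$, the linear-time preprocessings of \cref{lem:navigation}, \cref{lem:colcombet-rephrased} and the per-whorl $\gamma_c$-enumerators, and the run $\rho_0$ of \cref{lem:automata evaluation}, all of which sum to $f(|\aut A|)\cdot|T|$. Finally, \cref{thm:formula enumeration} follows by combining \cref{lem:automata enumeration} with \cref{thm:formulas to automata}, exactly as \cref{thm:answering} was deduced from \cref{lem:answering}.
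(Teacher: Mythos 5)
Your high-level plan matches the paper's: reduce to shape enumeration, induct on the size of the shape tree $S$, dispatch the base case via a product automaton (with states $Q^2$) evaluated once in linear time, and in the inductive step decompose at the root of $S$ via the sub-shapes $\sigma_w$ and the characterization in \cref{claim:conditions}. That scaffolding is correct and you identified the right ingredients (\cref{lem:shapes}, \cref{lem:navigation}, the entailment relation). Two points of caution on the statement you induct on: what is actually needed is the \emph{family} $(\mathcal E_v)_{v\in V(T)}$ of per-starting-node enumerators (as in \cref{lem:shape enumeration}), not a single enumerator for all matching valuations irrespective of starting node, because in the inductive step the roots $G'(w)$ of the sub-shapes range over arbitrary nodes of $T$.

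The genuine gap is in the inductive step. You propose to drive the enumeration with the Kazana--Segoufin ``peeked-next'' priority-queue technique, using only the \emph{query-answering} machinery (\cref{lem:shape answering}, \cref{lem:colcombet-rephrased}, \cref{lem:navigation}) as constant-time oracles to skip dead branches. But the task those oracles solve is the wrong direction: \cref{lem:colcombet-rephrased} answers, for a fixed pair $(w,w')$ with $w'\preceq w$ and a state $p$, what state $p'$ satisfies $[w\mapsto p]\vdash[w'\mapsto p']$. What the enumeration needs is to \emph{enumerate}, for a fixed $u=G(w)$ and states $p=L(w)$, $q=L(\ast w)$, all descendants $u'\in X_w$ of $u$ with $[u'\mapsto p]\vdash[u\mapsto q]$, with constant delay. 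That is an enumeration analogue of \cref{lem:functional-warp}/\cref{thm:colcombet-ds}, and it is not derivable from the query oracle plus a priority queue: there may be linearly many candidate descendants, and an oracle that only tests a given candidate gives no way to jump to the next success in constant time. The paper develops this missing tool explicitly as \cref{lem:enumerate} (the functional-warp coloring reworked into a descendant enumerator), \cref{lem:reverse enumerate} (reversing the direction of composition, as needed here), and \cref{lem:state entailment} (specializing to the entailment relation), with \cref{lem:minimal-descendants} as a combinatorial auxiliary. Your proposal never constructs an equivalent of these, so the step ``jump directly from one output to the next without ever exploring a dead branch'' is not supported by anything you have on hand.

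Relatedly, the paper does not need the priority-queue/lexicographic machinery at all. The trick is to preprocess \emph{nonemptiness} before enumerating: compute the sets $X_w$ (starting nodes at which some valuation matches $\sigma_w$), use \cref{lem:state entailment} to label each child $u$ of a prospective root $v$ by which $w\in W$ admit a witnessing descendant, and only then enumerate the injective ``satisfiable'' functions $G\from W\to\str A_v$ via a first-order query over the (labeled) whorl; the bijection of \cref{claim:bijection} then guarantees that every partial tuple $(G,G')$ extends to at least one full answer, so the remaining enumeration is a plain product of nonempty sub-enumerators with no backtracking. If you insist on the priority-queue route you would still have to build the descendant enumerators; once you have them, the nonemptiness-preprocessing route is both simpler and cleaner. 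Finally, your base case aside about skipping to the next relevant node in ``time proportional to the number of nodes skipped'' is not constant delay as stated; it must be replaced by genuine precomputed skip pointers (or the whorl-level first-order enumerators plus \cref{lem:minimal-descendants}), which is what the paper's formulation does.
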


\noindent Clearly,~\cref{lem:automata enumeration}
and~\cref{thm:formulas to automata} imply~\cref{thm:formula
  enumeration}. It therefore remains to prove~\cref{lem:automata
  enumeration}.  This is done in
  the remainder of this section.

\medskip
We use the notion of shapes as defined in the previous section.
Fix a shape $\sigma$. Our next goal is to
compute for each possible starting node $v$ an enumerator
for the set of all valuations $\tup a$ that match the shape $\sigma$ with starting node~$v$.


\begin{lemma}\label{lem:shape enumeration}
    For every shape $\sigma$ there is a constant $c$ computable from $\aut A$ and $\sigma$, and an algorithm that, given a tree~$T$ over the alphabet $A$, computes
    in time  $c\cdot |T|$ a collection $(\mathcal E_v)_{v\in V(T)}$ of enumerators,
    where for each $v\in V(T)$,  $\mathcal E_v$ is
    an enumerator with delay $c$ for the set of all valuations $\tup a$ that match $\sigma$ with starting node $v$.
\end{lemma}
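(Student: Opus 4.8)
The plan is to prove \cref{lem:shape enumeration} by induction on the number of nodes of the tree $S$ underlying the shape $\sigma$, exactly mirroring the structure of the proof of \cref{lem:shape answering}, but upgrading the constant-time membership tests into constant-delay enumerators. First I would compute the run $\rho_0$ of $\aut A$ on $T\otimes \emptyset$ in linear time using \cref{lem:automata evaluation}, and set up the data structures of \cref{lem:navigation} (for $\lca$ and $\dir$) and \cref{lem:colcombet-rephrased} (for the entailment functions $f_{ww'}$); all of these are reused throughout the recursion and cost only linear time in $|T|$.

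For the base case, $S$ has a single node $\ast$ occupied by all variables $\tup y$ of $\sigma$. Here a valuation $\tup a$ matching $\sigma$ with starting node $v$ is simply the constant valuation sending all of $\tup y$ to $v$, and it matches if and only if $\rho_0(v)=(p,q_\ast)$ for some $p$, where $(p,q)$ encodes the pair of states of the auxiliary automaton $\aut A'$ from the proof of \cref{lem:shape answering}. So $\mathcal E_v$ either enumerates the single valuation $\{\,y\mapsto v : y\in\tup y\,\}$ or the empty set, decided in constant time per $v$.

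For the inductive step, let $\ast$ be the root of $S$, let $W$ be its children, and let $\sigma_w$ for $w\in W$ be the subshapes as in the proof of \cref{lem:shape answering}; by induction each $\sigma_w$ comes with a family $(\mathcal E^w_{v'})_{v'\in V(T)}$ of constant-delay enumerators. A valuation matches $\sigma$ with starting node $v$ precisely when, for each $w\in W$, there is a starting node $v_w$ such that the restriction of $\tup a$ to $\tup y_w$ lies in $\mathcal E^w_{v_w}$, the nodes $v_w$ pairwise have least common ancestor exactly $v$ (and all differ from $v$), the entailment functions $f_{v_w u_w}$ (with $u_w=\dir(v,v_w)$) carry the root state of $\sigma_w$ to the prescribed edge label $q_w$, the variables occupying $\ast$ are all sent to $v$, and finally the first-order sentence $\gamma_c(\tup w)$ (as built in the proof of \cref{lem:shape answering}, $c$ being the label of $v$) holds in the whorl $\str A_v$ under $\rho_0$ with the $u_w$'s relabeled by the $q_w$'s — this is \cref{claim:conditions}. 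The enumerator $\mathcal E_v$ works as follows: it first asks the first-order enumerator for $\gamma_c$ in the whorl $\str A_v$ — which is efficient by the hypothesis that first-order enumeration is efficient on $\CC$ — to produce the tuples $(u_w)_{w\in W}$ of children of $v$ satisfying the whorl-local condition; for each such tuple, for each $w$ it restricts to those $v_w$ in the subtree below $u_w$ for which $\mathcal E^w_{v_w}$ is nonempty and $f_{v_w u_w}$ has the correct effect; it then takes the product over $w\in W$ of the recursive enumerators $\mathcal E^w_{v_w}$ and outputs the combined valuation (together with the fixed assignment of the variables occupying $\ast$ to $v$). Since no valuation matches two distinct shapes and the witnessing map $F$ is unique, distinct outputs are produced, and the delay is a product/sum of finitely many constant delays, hence constant.

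The main obstacle is the bookkeeping needed to make the delay genuinely constant in the inductive step: naively iterating over all candidate nodes $v_w$ below $u_w$ costs time linear in the subtree size, not constant. To fix this I would precompute, for each node $u$ and each subshape $\sigma_w$, a list (or a suitable pointer structure) of those descendants $v_w$ of $u$ for which $\mathcal E^w_{v_w}$ is nonempty and $f_{v_w u}$ realizes the required state transition — but even this list can be long. The clean resolution, following the standard pattern for constant-delay enumeration over trees (as in \cite{KazanaS13}) together with \cref{rem:to-FO}, is to observe that the condition "$\mathcal E^w_{v'}$ nonempty and $f_{v'u}$ sends the root state of $\sigma_w$ to $q_w$" is itself recognizable by an automaton on augmented trees, hence by \cref{rem:to-FO} and \cref{thm:formula enumeration} (applied one level down, to subtrees), one can enumerate the admissible tuples $(v_w)_{w\in W}$ with constant delay given $v$; plugging in the recursive enumerators then yields $\mathcal E_v$. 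Thus the hard part is purely the amortization/indexing argument ensuring constant delay, while the combinatorial correctness is immediate from \cref{claim:conditions}.
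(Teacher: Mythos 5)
Your overall architecture --- induction on $|V(S)|$, the base case via the product automaton $\aut A'$ with states $Q^2$, and the inductive decomposition into subshapes $\sigma_w$ for $w\in W$ with the subsequent product of sub-enumerators --- is essentially the same as the paper's. You also correctly identify the real obstacle: after fixing the tuple $(u_w)_{w\in W}$ of children of $v$, one must enumerate, for each $w$, the descendants $v_w$ of $u_w$ that both lie in the starting-node set $X_w$ of $\sigma_w$ and realize the prescribed entailment $[v_w\mapsto L(w)]\vdash[u_w\mapsto L(\ast w)]$, and a naive scan of the subtree is not constant delay.

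However, the way you close that gap is circular. You appeal to \cref{rem:to-FO} and \cref{thm:formula enumeration} ``applied one level down,'' but \cref{thm:formula enumeration} is proved \emph{from} \cref{lem:automata enumeration}, which is proved \emph{from} the very lemma you are establishing. There is no ``one level down'' available: your induction is on the size of the shape $S$, not on the tree $T$, and the statement you invoke is a global theorem whose proof chain passes through the present lemma. The paper closes this gap with self-contained machinery that has no circularity: \cref{lem:minimal-descendants} (constant-delay enumeration of minimal descendants in a marked set), \cref{lem:enumerate} and \cref{lem:reverse enumerate} (constant-delay enumeration of descendants along which a path-composition of functions $Q\to Q$ has a prescribed effect, using the decreasing-color device already seen in \cref{lem:functional-warp}), and finally \cref{lem:state entailment}, which packages this into enumerators for the entailment relation. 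That development is the actual technical content of the proof and cannot be replaced by a pointer to the theorem being proved.

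There is also a subtler gap in the ordering of your enumeration. You propose first enumerating tuples $(u_w)_{w\in W}$ using the whorl formula $\gamma_c$ alone, and only afterwards looking for suitable $v_w$ below each $u_w$. But $\gamma_c$ does not know which children $u_w$ actually admit a descendant $v_w\in X_w$ with the correct entailment, so you could produce a tuple $(u_w)$ with a ``dead end'' at some $w$; filtering these out after the fact ruins constant delay. The paper avoids this by \emph{precomputing}, via \cref{lem:state entailment}, the sets $C_w(v)$ of children of $v$ that do have a suitable descendant, folding this information into a labeling $\lambda_v$ of the whorl, and then enumerating only the \emph{satisfiable} functions $G$ (this is the role of the formula $\psi(\tup w)$ and of \cref{claim:bijection}). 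Your \cref{claim:conditions}-based characterization is correct but does not by itself guarantee that the local enumeration on the whorl never outputs a tuple that leads nowhere; that guarantee is exactly what the precomputation buys.
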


We first show how~\cref{lem:shape enumeration} implies~\cref{lem:automata enumeration}. This is analogous to how \cref{lem:shape answering} implies \cref{lem:answering}.
\begin{proof}
Let $\CC$ be a class of structures for which enumeration of
$\FO(\Sigma\cup A)$ queries is efficient for every finite alphabet $A$.
Let $\aut A$ be an automaton on augmented trees over an  alphabet
of the form $A\otimes \tup x$ and let $T$ be a $\CC$-augmented tree over the alphabet $A$.
We extend $\tup x$ to $\tup x\cup\set{z}$ by adding a dummy variable~$z$.
We view
$\aut A$ as an automaton over the alphabet $A\otimes (\tup x\cup \{z\})$. Using~\cref{lem:shapes}, compute the set~$\Delta$ of all shapes $\sigma$ in which the root node of the shape carries the variable
$z$ (that is, $\tup s(z)=\ast$) and
is labeled with an accepting state of $\aut A$ (that is, $L(\ast)$ is an accepting state). For each such shape $\sigma$ compute
the enumerator $\aut E_r^\sigma$ for the set of all valuations $\tup a\from \tup x\cup \set{z}\to V(T)$ that
match $\sigma$ with starting node $r$, where $r$ is the root of $T$. This is possible in time linear
in $|T|$ by~\cref{lem:shape enumeration}.

An enumerator for $\aut A(T)$ as in the statement of~\cref{lem:automata enumeration}
is obtained
by concatenating the enumerators $\aut E_r^\sigma$ for all the shapes $\sigma\in \Delta$,
where each output $\tup a\from \tup x\cup \set{z}\to V(T)$ of each of those enumerators is post-processed by projecting it
to a valuation $\tup a'\from \tup x\to V(T)$, that is, by removing
the entry corresponding to the dummy variable $z$.

 As remarked above (see comment following the definition of matching a shape), for distinct shapes the sets of valuations that
match the shapes are disjoint.
Hence, all valuations are enumerated without
repetition.  By choosing $f$ appropriately this
yields an enumerator that is computable in time $f(|\phi|)\cdot |T|$
and has delay~$f(|\phi|)$.
\end{proof}

It remains to prove \cref{lem:shape enumeration}.
The idea is to reverse the analysis performed when considering the query-answering problem. In a sense, for many of the
basic queries used in the previous sections,
we will need to devise its enumeration analogue.
For instance, we will need to devise an
enumeration analogue of \cref{thm:colcombet-ds}, that is an algorithm that enumerates all descendants $w$ of a given node $v$ such that the composition of the functions labeling the edges along the path from $w$ to $v$ has certain properties.

\subsubsection*{Auxiliary enumerators for trees}
We will need the following lemmas regarding enumeration of specific sets on unranked trees.

\medskip
For a set $U$ of nodes of a tree $T$ and two nodes $v,w$, say that $w$ is a \emph{minimal} descendant of $v$ in $U$ if $w\in U$, $w$ is a descendant of $v$,
and no inner node of the path from $v$ to $w$ belongs to $U$.

\begin{lemma}\label{lem:minimal-descendants}
  There is a constant $c$ and an algorithm that, given a tree $T$ and a set $U$ of its nodes, computes in time $c\cdot |T|$ a collection $(\mathcal E_v)_{v\in V(T)}$
  of enumerators, where for each $v\in V(T)$,  $\mathcal E_v$ is
  an enumerator with delay $c$ for the set of minimal strict descendants $w$ of $v$ that belong to $U$.
\end{lemma}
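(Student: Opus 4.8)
The plan is to reduce the problem to a linear collection of independent enumerations, one per node, and to preprocess in a single depth-first pass so that each enumerator is essentially a pointer walk along a precomputed structure.

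First I would build an auxiliary forest $T_U$ on vertex set $U$ (optionally with an artificial root) where the parent of $w\in U$ is its closest strict ancestor in $U$, if any. This forest can be computed in time $\Oh(|T|)$ by a single depth-first traversal of $T$ maintaining a stack of the ancestors currently in $U$: when we enter a node $w$, the top of the stack is its closest ancestor in $U$, so we record it as $\parent_{T_U}(w)$, and we push $w$ onto the stack if $w\in U$, popping it when we leave $w$. By construction, for $w\in U$ the minimal strict descendants of $w$ that belong to $U$ are exactly the children of $w$ in $T_U$, which we store as a linked list at $w$; enumerating them then has delay $\Oh(1)$.

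The remaining case is a node $v\notin U$ (or, more uniformly, \emph{any} node $v$). Here I would observe that the minimal descendants of $v$ in $U$ are obtained as follows: let $a(v)$ be the closest strict \emph{descendant-or-self side} information; more precisely, during the same depth-first traversal, for every node $v$ we record a pointer $\mathsf{down}(v)$ to the ``first'' minimal descendant in $U$ below $v$, and we organize all minimal descendants of $v$ in $U$ into a linked list. The clean way to do this: process $v$ after all its children have been processed; for each child $c$ of $v$, if $c\in U$ then $c$ itself is the unique minimal descendant in $U$ in the subtree of $c$, otherwise the minimal descendants of $v$ in $U$ inside the subtree of $c$ are exactly those already computed for $c$. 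Concatenating these (constant-size) lists over all children of $v$ would in general take time proportional to the degree of $v$, but the total work summed over all $v$ is $\Oh(|T|)$, since each list node is appended once; we just need to represent $\mathcal E_v$ as the concatenation of the child-lists, which we can realize with $\Oh(1)$ delay by keeping, at $v$, the list of its children together with, for each child $c$, a pointer to $c$'s list, and walking this two-level structure. Thus $\mathcal E_v$ enumerates the desired set with delay $c$ for an absolute constant $c$.

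The only mild subtlety — and the step I expect to require the most care — is the bookkeeping that keeps the total preprocessing linear while still allowing $\Oh(1)$ delay when a node $v$ has many children each contributing to the list: one must not physically concatenate the lists (that could cost $\Theta(\mathrm{deg}(v))$ with no output), but instead traverse the nested structure lazily, skipping children whose contributed sublist is empty in amortized $\Oh(1)$ time. This is handled by precomputing, for each $v$, a linked list of exactly those children that contribute at least one element (again computable in total time $\Oh(|T|)$ in the same pass), so that $\mathcal E_v$ never examines a child with an empty contribution. With this in place, both the construction time $c\cdot|T|$ and the delay bound $c$ follow immediately, completing the proof.
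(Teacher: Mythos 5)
Your construction of the forest $T_U$ and the observation that, for $w\in U$, the set $M_w$ of minimal strict descendants of $w$ in $U$ is exactly the set of $T_U$-children of $w$ are both correct and match the paper's first step. The gap is in the case $v\notin U$: the ``two-level structure'' you describe is in fact a nested structure of unbounded depth, and your fix does not control that depth.

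Concretely, consider a path $v_0\mathord{-}v_1\mathord{-}\cdots\mathord{-}v_n$ with $n$ additional leaves $\ell_1,\ldots,\ell_n$ attached to $v_n$, where $U=\{\ell_1,\ldots,\ell_n\}$. Each $v_i$ has exactly one contributing child, namely $v_{i+1}$, and $v_{i+1}\notin U$ for $i<n-1$, so its contribution is again a pointer to $v_{i+1}$'s structure, not a list element. The enumerator $\mathcal{E}_{v_0}$ must descend through $n$ nested levels before it can emit $\ell_1$, so the delay is $\Theta(n)$, not $\Oh(1)$. Your precomputed ``list of children with non-empty contribution'' removes empty side branches but does not shorten this chain. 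The only way to make the structure genuinely two-level is to physically flatten $M_v$ at every $v$, but $\sum_v |M_v|$ can be $\Theta(|T|^2)$ (same example, with $|M_{v_i}|=n$ for all $i\le n$ and $|T|=2n+1$), so flattening blows the linear preprocessing budget.

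The paper's proof avoids both problems with a structural observation you are missing: for every $v$, the set $M_v$ is a \emph{contiguous} segment of $M_{g(v)}$ in DFS order, where $g(v)$ is the closest ancestor of $v$ in $U$ (with $g(v)=v$ if $v\in U$). This holds because the subtree rooted at $v$ is a DFS interval, and $M_v = M_{g(v)}\cap(\text{subtree of }v)$. Thus one stores a single linked list $M_u$ only for each $u\in U$ (these are pairwise disjoint, so the total size is at most $|T|$), and for every other $v$ one stores only two pointers, $\min(M_v)$ and $\max(M_v)$, into the shared list $M_{g(v)}$. Enumeration then simply walks the shared list from $\min(M_v)$ to $\max(M_v)$, giving $\Oh(1)$ delay with $\Oh(|T|)$ preprocessing. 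To repair your proof you would need to add this contiguity argument (or an equivalent device) rather than nest per-node iterators.
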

\begin{proof}
Assume without loss of generality that the root of $T$ belongs to $U$.
For each node $v$ of $T$ let~$M_v$ denote the set of all the minimal strict descendants of $v$ that belong to $U$,
and let $g(v)$ denote the $\preceq$-largest ancestor of $v$ that belongs to $U$ (with $g(v)=v$ if $v\in U$).
Also, let $g'(v)$ denote the $\prec$-largest strict ancestor of $v$ that belongs to $U$ (which is not defined for the root).
Note that $M_v\subseteq M_{g(v)}$ for every node $v$ of $T$.

By performing a depth-first search traversal, we can compute for each $u\in U$ and jointly in time~$\Oh(|T|)$,
a linked list enumerating all elements of $M_u$, ordered by the search.
Observe now that for every $v$ of $T$, the list $M_v$ is composed of
successive elements of $M_{g(v)}$.
Again by performing a depth-first search traversal, we can compute, for each $v\in T$ and jointly in time $\Oh(|T|)$,
the two elements $\min(M_v)$ and $\max(M_v)$, respectively the first and last element of $M_v$ according to the order given by the search. Finally, every element $u$ in $U$ (except for the root) points towards its position in the linked list $M_{g'(u)}$.

Now, to obtain an enumerator for $M_v$ given a node $v$ of $T$, first lookup $\min(M_v)$ and $\max(M_v)$.
Then, go to the position of $\min(M_v)$ in $M_{g(v)}=M_{g'(\min(M_v))}$. Finally, follow the list enumerating the elements of $M_{g(v)}$ until $\max(M_v)$ is reached. This enumerates $M_v$, with constant delay.
\end{proof}

The following is an enumeration variant of~\cref{lem:functional-warp},
and its proof follows the same ideas.
\begin{lemma}\label{lem:enumerate}
  Fix a finite set $Q$ of size $q$
  and two elements $x,y\in Q$.
  There is a constant $c$ computable from $q$, and an algorithm that,
  given a tree $T$ in which each edge $vw$ is labeled by a function $f_{vw}\colon Q\to Q$, computes in time $c\cdot |T|$ a collection
  $(\mathcal E_v)_{v\in V(T)}$
  of enumerators, where for each $v\in V(T)$,  $\mathcal E_v$ is
    an enumerator with delay $c$ that enumerates all descendants $w$ of $v$ such that the composition of the functions labeling the edges of the path from $v$ to $w$ maps $x$ to $y$.
\end{lemma}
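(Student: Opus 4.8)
The plan is to lift the query algorithm behind the direct proof of \cref{thm:colcombet-ds} (that is, \cref{lem:functional-warp}) to an enumeration algorithm. First I would recompute, in time $q^{\Oh(1)}\cdot|T|$, the family of bijective colorings $(c_v)_{v\in V(T)}$, $c_v\colon Q\to[q]$, of \cref{lem:functional-warp}, with $Q$ playing the role of the set $X$ there and the given edge labels $f_{vw}$ playing the role of $\sigma(vw)$; they satisfy \eqref{eq:beaver}: if $v$ is the parent of $w$ then $c_w(f_{vw}(z))\le c_v(z)$ for all $z\in Q$. For $i\in[q]$ let $D_i$ be the set of $i$-decreasing nodes (a node $w$ with parent $v$ such that $c_w(f_{vw}(c_v^{-1}(i)))<i$); note $(c_v)_v$ and $(D_i)_i$ do not depend on $x,y$. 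The core observation, identical to the one powering the query algorithm, is: if $w$ is a descendant of $v$ and the path from $v$ to $w$ meets $D_i$ only possibly in $v$ (write $w\in R(v,i)$), then $f_{vw}$ sends the colour-$i$ element at $v$ to the colour-$i$ element at $w$; in particular, for $w\in R(v,i)$, $f_{vw}(c_v^{-1}(i))=y$ iff $c_w(y)=i$.

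This gives a recursive description of the target set $\mathsf{Sol}(v)=\{w\succeq v: f_{vw}(x)=y\}$. Put $i_0=c_v(x)$; then
\[\mathsf{Sol}(v)=\Big(R(v,i_0)\cap\{w:\,c_w(y)=i_0\}\Big)\ \cup\ \bigcup_{u}\mathsf{Sol}'(u),\]
where $u$ ranges over the \emph{minimal} strict $D_{i_0}$-descendants of $v$ and $\mathsf{Sol}'(u)=\{w\succeq u:\,f_{uw}(x''_u)=y\}$ with $x''_u:=f_{\parent(u),u}(c_{\parent(u)}^{-1}(i_0))$; one checks $c_u(x''_u)<i_0$, so the recursion (recursing into $(u,x''_u)$ with new current colour $c_u(x''_u)$) strictly decreases the colour and has depth at most $q$. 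The pieces $R(v_t,i_t)\cap\{w:c_w(y)=i_t\}$ over the nodes $(v_t,i_t)$ of this recursion are pairwise disjoint and their union is exactly $\mathsf{Sol}(v)$, so stitching the recursion together yields $\mathsf{Sol}(v)$ without repetitions, and in particular the case $w=v$ (which occurs iff $x=y$) is produced exactly once, inside the level-$0$ region piece.

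It remains to realize this as a collection of constant-delay enumerators computable in time $q^{\Oh(1)}\cdot|T|$. I would precompute: (a) a reachability table $\mathsf{reach}(v,z)\in\{0,1\}$ telling whether some descendant $w$ of $v$ has $f_{vw}(z)=y$, via the bottom-up recursion $\mathsf{reach}(v,z)=[z{=}y]\vee\bigvee_{w\text{ child of }v}\mathsf{reach}(w,f_{vw}(z))$, in time $\Oh(q|T|)$; (b) for each $i\in[q]$ the ``cut forest'' $\widehat T_i$ obtained from $T$ by detaching every node of $D_i$ from its parent, so that the subtree of $v$ in $\widehat T_i$ equals $R(v,i)$ minus its strict $D_i$-descendants, together with DFS entry/exit times in $\widehat T_i$; (c) for each $i$, the list of nodes $w$ with $c_w(y)=i$ sorted by $\widehat T_i$-entry time, and the list of \emph{productive} candidates, namely the nodes $u\in D_i$ with $\mathsf{reach}(u,x''_u)=1$, sorted by the $\widehat T_i$-entry time of $\parent(u)$; and, by one DFS, for every node $v$ the two pointers delimiting the contiguous sublist of each of those sorted lists lying inside the subtree of $v$ in $\widehat T_i$. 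The enumerator $\mathcal E_v$ then runs the recursion to depth $\le q$: at a level $(v_t,i_t)$ it first walks the precomputed sublist of ``$c_w(y)=i_t$'' nodes inside $R(v_t,i_t)$ (each such node is in $\mathsf{Sol}$ and is emitted in $\Oh(1)$), then walks the precomputed sublist of productive minimal $D_{i_t}$-candidates inside $R(v_t,i_t)$, recursing into each with element $x''_u$ and colour $c_u(x''_u)<i_t$ (both read off in $\Oh(1)$ from the stored colorings and the edge label).

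The main obstacle is exactly the delay bound. A naive implementation that, at a node $v_t$, simply iterated over all minimal $D_{i_t}$-descendants (e.g.\ via \cref{lem:minimal-descendants}) and over the whole cut region before reaching the next output would have unbounded delay, since consecutive solutions can be separated by arbitrarily many irrelevant nodes. This is what forces the three precomputed ingredients above: the reachability DP discards recursion branches producing nothing, and the ``cut forest $+$ DFS-interval'' encoding — in the spirit of the linked-list construction of \cref{lem:minimal-descendants} — lets $\mathcal E_v$ \emph{fast-forward} directly to the relevant nodes of $R(v_t,i_t)$. With the recursion depth bounded by $q=\Oh(1)$ and $\Oh(1)$ work per step, only $\Oh(q)$ work is spent between two successive outputs, and the whole collection $(\mathcal E_v)_v$ is set up in time $q^{\Oh(1)}\cdot|T|$, as required. (This auxiliary statement is exactly the enumeration analogue of \cref{thm:colcombet-ds}, used later to obtain the enumeration analogues of the query-answering steps of \cref{sec:automata}.)
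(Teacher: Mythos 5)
Your proof is correct, and it rests on the same core ingredients as the paper's argument --- the monotone colorings $(c_v)_v$ of \cref{lem:functional-warp}, the cut forests obtained by detaching colour-decreasing nodes, DFS-interval ranges in the spirit of \cref{lem:minimal-descendants}, and a productivity pre-filter to skip empty branches --- but it organizes the recursion in a genuinely different way. The paper first writes $D_v^{xy}$ (strict descendants $w$ of $v$ with $f_{vw}(x)=y$) as the disjoint union, over the \emph{minimal} such descendants $w\in M_v^{xy}$, of $\{w\}\cup D_w^{yy}$, and then enumerates $M_v^{xy}$ by a separate induction on the colour gap $c-d$ (with $c=c_v(x)$, $d=c_w(y)$), the base case $c=d$ being exactly \cref{lem:minimal-descendants} applied to the cut forest. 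You skip the reduction to minimal solutions and instead partition the full solution set $\mathsf{Sol}(v)$ by the first node on the $v$-to-$w$ path at which the colour of the tracked element drops: a stable-region piece (no drop; handled by a single DFS-interval walk in $\widehat T_{i_0}$) together with recursive pieces rooted at the productive minimal $D_{i_0}$-descendants. The payoff is that your recursion has depth at most $q$, so the constant-delay bound is immediate; by contrast, the paper's outer $M\to D\to M\to D\cdots$ wrapping has nesting depth bounded only by the height of $T$, and keeping the delay constant there requires a look-ahead / tail-call discipline so that exhausting many nested frames at once does not cause a long pop-up chain --- a subtlety your organization sidesteps. Both versions give $q^{\Oh(1)}\cdot|T|$ preprocessing and $\Oh(q)$ delay. (Minor remark: with your definition, $R(v,i)$ already excludes all strict $D_i$-descendants of $v$, so the clause ``$R(v,i)$ minus its strict $D_i$-descendants'' is vacuous.)
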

\begin{proof}
  For two nodes $v,w$ with $v\preceq w$ let $f_{vw}$ denote the composition of the functions labeling the edges of the path from $v$ to $w$.
  Let $D_v^{xy}$ be the set of all strict descendants $w$ of $v$ such that $f_{vw}(x)=y$, and
  let $M_v^{xy}\subset D_v^{xy}$ be the set of \emph{minimal}
  such descendants, where $w$ is minimal if there is no $u\in D_v^{xy}$ with $v\prec u\prec w$.
  It suffices to compute, for each $v\in V(T)$, an enumerator
  for the set $M_v^{xy}$. Indeed, the enumerator for $D_v^{xy}$ can be then obtained by enumerating all nodes $w\in M_v^{xy}$
  and then, for each such~$w$,
  recursively enumerating the set
  $D_w^{yy}$.

  Proceeding from the root of $T$ towards the leaves
  (see \cref{fig:paths}), for
  each node $v$ of $T$ construct an injective coloring function
  $c_v\from Q\to [q]$, so that the following property holds:
  \[c_w(f_{vw}(x))\le c_v(x)\qquad\text{for every edge $vw$ ($w$ child of $v$) and all
  $x\in Q$}.\]

  This can be done as follows. At the root $v$,
  pick an arbitrary injection $c_v\from Q\to [q]$. If $c_v$ is
  defined for some node $v$ and $w$ is its child, then first define
  $c_{w}\from f_{vw}(Q)\to [q]$ by setting $c_w(y)$, for
  $y\in f_{vw}(Q)$, to the least color $c\in [q]$ such that
  $c_v(x)=c$ for some $x\in Q$ with $f_{vw}(x)=y$. Next, extend
  $c_w$ to an injection $c_w\from Q\to [q]$ arbitrarily.  The
  required property then holds by construction.  The computation of
  the colorings $c_v$ therefore takes time $\Oh(|T|\cdot r)$.


  For any pair of nodes $v$ and $w$ such that $v$ is an ancestor of $w$, let $f_{vw}\from Q\to Q$ denote the composition of the functions along the path from $v$ to $w$.
  Then
  \begin{align}\label{eq:monotone colors}
    c_w(f_{vw}(x))\le c_v(x)\qquad\text{for all $v,w$ with $v\preceq w$ and all $x\in Q$}.
  \end{align}

  This is proved easily by induction on the distance between $w$ and $v$. In other words, starting with an element $x\in Q$ and progressing downwards along a path in the tree, applying the functions labeling the edges in the path to the current element of $Q$,
  we obtain a sequence of elements whose colors are monotonously decreasing. As the number $q$ of colors is finite, those colors change only a bounded number of times. We exploit this observation in our enumeration algorithm.

  For every $y\in Q$ and color $d\in [q]$ let $C[y/d]$ denote the
  set of nodes $v$ such that $c_v(y)=d$. Then
  for all $x,y\in Q$, $d\in [q]$, the set
  \[E_v^{xyd}:=M_v^{xy}\cap C[y/d]\]
  consists of those minimal descendants $w$ of $v$ satisfying
  $f_{vw}(x)=y$, such that $c_w(y)=d$.

  We describe how to compute in time $\Oh_r(|V(T)|)$ a collection of constant-delay enumerators $\mathcal E_v^{xyd}$ for the set $E_v^{xyd}$, for all $x,y\in Q$, $d\in [q]$ and all nodes $v\in V(T)$. As $M_v^{xy}$ is the disjoint union of the sets
  $E_v^{xyd}$, for $d\in [q]$, an enumerator for~$M_v^{xy}$ can be obtained by concatenating the enumerators $\mathcal E_v^{xyd}$ for all $d\in [q]$.

  Fix elements $x,y\in Q$ and colors $c,d\in [q]$. We will compute (in time $\Oh_r(|V(T)|)$) constant-delay enumerators
  for each of the sets $E_v^{xyd}$, for all nodes $v\in C[x/c]$. Since the sets $(C[x/c])_{c\in [q]}$ partition~$V(T)$, this will jointly give all the required enumerators.
  Assume in what follows that $v\in C[x/c]$, that is, $c_v(x)=c$.
  Our goal now is to compute an enumerator for the set  $E_v^{xyd}$ of those minimal descendants $w$ of $v$ satisfying
  $f_{vw}(x)=y$, such that $c_w(y)=d$.

  Let $F_c\subset E(T)$ be the set of those edges $vw$ in $T$ such that
  $f_{vw}$ maps the element of color $c$ to an element of strictly smaller color.
  The set $F_c$ can be computed in time  $\Oh_r(|V(T)|)$.
  Write $v\sim_c w$ if the path between two nodes $v$ and $w$ does not contain any edge from $F_c$.
  Furthermore, compute in time $\Oh_r(|V(T)|)$ the set $C[y/d]$.

  Fix a node $v\in C[x/c]$, so that $c_v(x)=c$.
  Assume first that $c<d$. Then $E_v^{xyd}=\emptyset$,
  since by~\eqref{eq:monotone colors}
  $v$ has no descendant $w$ with $f_{vw}(c)>c$.
  If $c=d$, then for every descendant $w$ of $v$
  we have $w\in E_v^{xyd}$ if and only if
  $v\sim_cw$, $w\in C[y/d]$ and
  no inner node of the path from $v$ to $w$ belongs to $C[y/d]$.
  Consider the rooted forest $T-F_c$ obtained from $T$ by removing the edges in $F_c$, where the root of each tree in the forest is the its smallest node with respect to the ancestor relation in $T$.
  By considering each tree in this forest separately,
  the problem is solved by the algorithm from~\cref{lem:minimal-descendants} applied to that tree and the set $C[y/d]$ restricted to that tree.

  We are left with the case when $c>d$.
  We proceed by induction on $c-d$.
  Intuitively, to compute~$E_v^{xyd}$ we need to
  enumerate all minimal descendants $w$ of $v$ such that the color of $f_{vw}(x)$ at $w$ is some color~$c'$ that is smaller than $c$, and at each such node $w$ invoke the enumerator for the set $E_w^{x'yd}$ where $x'$ is the element with color $c'$ at $w$,
  unless this set is empty. We make this more precise below.

  \medskip
  Compute the set $U$ of all nodes $u$ of $T$ with the following properties:
  \begin{itemize}
    \item $u$ is the lower endpoint of an edge $u'u$ in $F_c$,
    \item $M_u^{x'y}\cap C[y/d]$ is non-empty,
    where
    $x'=f_{u'u}(z)$ and $z\in Z$ is such that $c_{u'}(z)=c$.
  \end{itemize}
  Note that as the edge $u'u$ is in $F_c$ it follows that $c_u(x')<c$,
  and we can apply the inductive assumption, thus assuming that
  each node $u$ is equipped with a constant-delay enumerator for $M_u^{x'y}\cap C[y/d]$, for all $x'\in Q$. In particular we can test whether $M_u^{x'y}\cap C[y/d]$ is non-empty in constant time, and so the set $U$ can be computed in linear time.

  The following conditions are equivalent for a node $v\in C[x/c]$
  and its descendant $w$:
  \begin{itemize}
    \item $w\in E_v^{xyd}=M_v^{xy}\cap C[y/d]$,
    \item $v$ has some minimal descendant $u\in U$ such that $w\in E_u^{x'yd}=M_u^{x'y}\cap C[y/d]$, where $x'$ is as above.
  \end{itemize}
  It follows that $E_v^{xyd}$
  is the disjoint union, over all minimal descendants $u\in U$ of $v$,
  of the (non-empty) set
  $E_u^{x'yd}$, where $x'$ is as above.
  Hence, an enumerator for $E_v^{xyd}$
  can be obtained by first enumerating all minimal descendants $u\in U$ of $v$ using the data structure from~\cref{lem:minimal-descendants}, and then, for each such~$u$,
  using the enumerator for $E_u^{x'yd}$ obtained by inductive assumption.
\end{proof}

In the lemma above, the functions are composed along the paths leading from an ancestor to a descendant.
In our application, we will need to compose the functions in the reverse order.
Hence the following.

\begin{lemma}\label{lem:reverse enumerate}
  Fix a finite set $Q$ of size $q$
  and a set of functions ${\cal F}\subset Q^Q$.
  There is a constant $c$ computable from $q$, and an algorithm that,
  given a tree $T$ in which each edge $vw$ ($w$ child of $v$)
  is labeled by a function $f_{wv}\colon Q\to Q$, computes in time $c\cdot |T|$ a collection
  $(\mathcal E_v)_{v\in V(T)}$
  of enumerators, where for each $v\in V(T)$,  $\mathcal E_v$ is
    an enumerator with delay $c$ that enumerates all descendants $w$ of $v$ such that the composition of the functions labeling the edges of the path from $w$ to $v$ belongs to ${\cal F}$.
\end{lemma}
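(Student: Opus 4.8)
The plan is to reduce this statement to \cref{lem:enumerate} by a standard trick for converting "product read from leaf to root" into "product read from root to leaf": transpose each function into its graph relation and reverse the order of composition. Concretely, given a function $g\colon Q\to Q$, let $\wo{g}$ denote the function obtained by composing in the reverse direction — but since composition of functions is not symmetric, a single function $g$ does not determine a "reverse" function. The right move is instead to work with the \emph{monoid of binary relations} on $Q$, $\mathrm{Rel}(Q)=2^{Q\times Q}$, which is closed under taking converses. For a function $f_{wv}\colon Q\to Q$ labeling the edge from a child $w$ to its parent $v$, define the relabeling of the same edge (now viewed in the root-to-descendant direction) by the \emph{converse relation} $R_{vw}:=\{(b,a)\,:\,f_{wv}(a)=b\}\subseteq Q\times Q$. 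Then if $v=w_1,w_2,\ldots,w_n=w$ is the path from $v$ down to its descendant $w$, the composition (in the root-to-leaf order) $R_{w_1w_2}\,;\,R_{w_2w_3}\,;\cdots;\,R_{w_{n-1}w_n}$ equals exactly the converse of the relation induced by the function $f_{wv}\circ\cdots\circ f_{w_2w_1}$ read along the path from $w$ up to $v$. In particular, $w$ satisfies the condition required in \cref{lem:reverse enumerate} (the upward composition lies in $\mathcal F$) if and only if the downward relation-composition equals $\wo{g}$ (the converse of $g$) for some $g\in\mathcal F$.

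The subtlety is that \cref{lem:enumerate} is stated for a \emph{finite set} $Q$ with the semigroup being the functional semigroup $F_Q$, and here the relevant semigroup is $\mathrm{Rel}(Q)$, which is not a functional semigroup on $Q$. So I would not apply \cref{lem:enumerate} verbatim; instead I would observe that the proof of \cref{lem:enumerate} only used the following two facts about the labels and the quantity being tracked: (i) the labels form a finite semigroup $S$ acting on a finite set, with a well-behaved notion of "color" decreasing along downward paths; and (ii) one tracks the image of a fixed element under the composition. For relations, one replaces "the image of $x$" by "the pair $(x,y)$ is in the composite relation" — but this is exactly the kind of property that fits the general machinery of \cref{thm:colcombet-ds} / \cref{lem:functional-warp} via the embedding of any finite semigroup $S$ into the functional semigroup $F_S$ (as done in the proof of \cref{thm:colcombet-ds}). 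Thus the cleanest route is: embed $\mathrm{Rel}(Q)$ into $F_{\mathrm{Rel}(Q)}$ by right multiplication, apply the \emph{enumeration} analogue of \cref{lem:functional-warp} to the resulting $F_{\mathrm{Rel}(Q)}$-labeled tree, where the target set $\mathcal F$ of \cref{lem:reverse enumerate} translates into the set $\{\,h\in\mathrm{Rel}(Q): h=\wo g,\ g\in\mathcal F\,\}$ of admissible composite relations, and the query "does the downward composition land in this target set" is phrased as "does the right-multiplication function, applied to the identity relation, lie in that set." Since \cref{lem:enumerate} is precisely the enumeration analogue of \cref{lem:functional-warp} for functional semigroups, and its proof goes through for \emph{any} finite semigroup after the functional-semigroup embedding (exactly as \cref{thm:colcombet-ds} was derived from \cref{lem:functional-warp}), this yields the claim with a constant $c$ depending only on $|Q|=q$ (since $|\mathrm{Rel}(Q)|=2^{q^2}$ is itself a function of $q$).

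Concretely I would carry out the following steps. First, fix $Q$, $q=|Q|$, and $\mathcal F\subseteq Q^Q$, and set $S:=\mathrm{Rel}(Q)$, a finite monoid of size $2^{q^2}$; let $\mathcal G:=\{\,\text{converse of }(\,\text{graph of }g\,)\ :\ g\in\mathcal F\,\}\subseteq S$ be the set of "accepting" composite relations. Second, given the input tree $T$ with edge $vw$ ($w$ a child of $v$) labeled by $f_{wv}\colon Q\to Q$, relabel that edge by the element $R_{vw}\in S$ defined as the converse of the graph of $f_{wv}$; this takes time $\Oh(q^2)$ per edge, hence $c\cdot|T|$ in total. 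Third, verify the key identity: for $v\preceq w$ with downward path $v=w_1,\ldots,w_n=w$, the $S$-product $R_{w_1w_2};\cdots;R_{w_{n-1}w_n}$ equals the converse of the graph of $f_{wv}\circ f_{w_{n-1}w_{n-2}}\circ\cdots\circ f_{w_2w_1}$; this is a one-line induction on $n$ using $(\,\text{graph of }\alpha\beta\,)^{\!\top} = (\,\text{graph of }\beta\,)^{\!\top};(\,\text{graph of }\alpha\,)^{\!\top}$ together with the fact that composition of relations in the $;$-order corresponds to reversed function composition. Consequently $w$ is to be enumerated for $v$ iff this $S$-product lies in $\mathcal G$. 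Fourth, invoke the enumeration analogue of \cref{lem:functional-warp} — equivalently, run the proof of \cref{lem:enumerate} with the semigroup $F_S$ in place of $F_Q$, using the injective homomorphism $\eta\colon S\to F_S$, $\eta(a)=\lambda b.(b\cdot a)$, and with the target condition "the composite function applied to the identity relation $\mathbf 1$ lands in $\mathcal G$"; this produces, in time $c\cdot|T|$, the collection $(\mathcal E_v)_{v\in V(T)}$ of constant-delay enumerators. Fifth, since $\eta$ is injective and $\eta(\text{product})(\mathbf 1)=\text{product}$, a node $w$ is enumerated by $\mathcal E_v$ exactly when the original upward composition along the $w$-to-$v$ path belongs to $\mathcal F$, which is what \cref{lem:reverse enumerate} demands.

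The main obstacle I anticipate is purely bookkeeping rather than conceptual: making sure the direction conventions are consistent throughout (the "$;$" composition order used in \cref{thm:colcombet-ds} and \cref{lem:enumerate}, versus the root-to-leaf versus leaf-to-root edge labelings, versus the converse-of-relation operation), so that the accepting set $\mathcal G$ is defined with the right transpositions and the key identity in step three comes out correctly. A secondary point requiring a remark is that \cref{lem:enumerate} is literally stated for functional semigroups $F_Q$ and for tracking the image of a single element $x$; I would add a sentence pointing out that its proof only uses that the labels lie in a finite semigroup with a decreasing-color invariant along downward paths, so the generalization to an arbitrary finite semigroup $S$ (here $S=\mathrm{Rel}(Q)$, accessed through its embedding into $F_S$) is immediate, in exact parallel to how \cref{thm:colcombet-ds} was obtained from \cref{lem:functional-warp}. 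No new ideas beyond those already in \cref{sec:torso-queries} are needed.
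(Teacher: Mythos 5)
Your argument is correct and converges to the same endpoint as the paper's: relabel the edges so that the downward composition of the new labels encodes the upward composition of the old ones, then invoke \cref{lem:enumerate} with a fixed source element and targets determined by~$\mathcal{F}$. The difference lies in how the order-reversing functional relabeling is built. The paper does it in a single step: the map $\rho_f\colon Q^Q\to Q^Q$ defined by $\rho_f(g)=f;g$ is simultaneously order-reversing, i.e.\ $\rho_{f_1;f_2}=\rho_{f_2};\rho_{f_1}$, and already lands in the functional monoid $F_{Q^Q}$, so one relabels each edge $vw$ by $\rho_{f_{wv}}$ and applies \cref{lem:enumerate} with base set $Q^Q$, source $x=\mathrm{id}\in Q^Q$, and targets $y\in\mathcal{F}$. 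You instead decouple the two concerns: the converse-of-graph operation reverses the composition order but lands in the monoid of binary relations on $Q$, which is not a functional semigroup on~$Q$; you then appeal to the right-multiplication embedding of that monoid into the monoid of functions on its own underlying set, which restores functionality while preserving order. The net effect is the same and the proof is correct, but the detour forces a larger base set ($2^{q^2}$ relations rather than $q^q$ elements of $Q^Q$) and introduces transposition bookkeeping that the paper's route avoids; indeed your own write-up flags the direction conventions as the main hazard, and there is a small slip in the indexing of the key identity in your Step~3 that stems precisely from this. The observation that collapses the detour is that left multiplication by $f$ in the $;$-order already does double duty --- it is an anti-homomorphism and it lands directly in a functional monoid --- so no passage through relations is needed.
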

\begin{proof}
     A function $f\from Q\to Q$ induces
    a function $\rho_f\from Q^Q\to Q^Q$,
    defined by $\rho_f(g)=f;g$ for $g\from Q\to Q$.
This reverses the order of compositions:
     $\rho_{f_1;f_2}=\rho_{f_2};\rho_{f_1}$.
     Given a tree $T$ as above,
     define a new label for each edge $vw$, namely
     $F_{vw}:=\rho_{f}$ where $f=f_{wv}$ is the original label.
     Then for any two nodes $v, w$ where $v\preceq w$,
    the composition of the new labels along the path from $v$ to $w$ is a function $F\from Q^Q\to Q^Q$ which is equal to $\rho_{f}$, where $f$ is the composition of the old labels along the reverse path, from $w$ to $v$.
    In particular, $F(\text{id})=\rho_f(\text{id})=f$,
    where $\text{id}$ is the identity on $Q$.

    Reassuming, for every function $f\from Q\to Q$, the composition of the new labels along the path from~$v$ to~$w$ is a function $F$ mapping $\text{id}$ to $f$ if and only if the composition of the old labels along the path from~$w$ to~$v$ is equal to $f$.

     We now apply~\cref{lem:enumerate} to the tree $T$ with the new labels, and to every pair of elements $x,y\in Q^Q$ with $x=\text{id}\in Q^Q$ and $y\in \cal F$.
\end{proof}

\subsubsection*{Enumerator for the entailment relation}

Fix an augmented tree $T$ and let $\rho_0$ be the run of $\aut A$ on $T\otimes \emptyset$.
Recall the entailment relation \mbox{$[w\mapsto p]\vdash [w'\mapsto p']$}
which intuitively means that enforcing the state in $\rho_0$
at node $w$ to be equal to $p$ results in the automaton $\aut A$ reaching state $p'$ at the node $w'$.
    We first translate \cref{lem:reverse enumerate} to the language of entailment, obtaining an enumeration analogue of \cref{lem:colcombet-rephrased}.

\begin{lemma}\label{lem:state entailment}
    Fix states $q_1,q_2\in Q$.
  There is a constant $c$ computable from $\aut A$ and an algorithm that,
  given a tree $T$ and a set $X\subseteq V(T)$ computes in time $c\cdot |T|$ two collections
  $(\mathcal E_v)_{v\in V(T)}$ and $(\mathcal E'_v)_{v\in V(T)}$
  of enumerators with delay~$c$, doing the following: For each $v\in V(T)$,  $\mathcal E_v$ is
    an enumerator that enumerates all children $u_1$ of~$v$ that have some descendant $u_2\in X$ such that $[u_2\mapsto q_2]\vdash [u_1\mapsto q_1]$.
    For each $v\in V(T)$,  $\mathcal E'_v$ is
    an enumerator that enumerates all descendants $u_2\in X$ of $v$ such that $[u_2\mapsto q_2]\vdash [v\mapsto q_1]$.
\end{lemma}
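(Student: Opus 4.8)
The plan is to reduce \cref{lem:state entailment} to \cref{lem:reverse enumerate} in exactly the same spirit as \cref{lem:colcombet-rephrased} reduces the querying version to \cref{thm:colcombet-ds}. Recall the functions $f_{ww'}\from Q\to Q$ defined in the proof of \cref{lem:colcombet-rephrased}: for $w'$ an ancestor of $w$, $f_{ww'}(p)=p'$ iff $[w\mapsto p]\vdash [w'\mapsto p']$, and by \cref{lem:vdash transitivity} these functions compose along paths, $f_{ww''}=f_{ww'};f_{w'w''}$. First I would compute, in time linear in $|T|$, the run $\rho_0$ of $\aut A$ on $T\otimes\emptyset$ (via \cref{lem:automata evaluation}) together with the edge labeling assigning to each tree edge $vw$ (with $w$ a child of $v$) the function $f_{wv}\from Q\to Q$; each such $f_{wv}$ is obtained by evaluating the transition sentences $\delta_{p,(a,\emptyset)}$ in the whorl $\str A_v$ with the label of $w$ overwritten, which is a constant (in $|T|$) amount of work per node given the $\CC$-structure at $v$.

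The second enumerator $\mathcal E'_v$ is then immediate: we want all descendants $u_2\in X$ of $v$ with $[u_2\mapsto q_2]\vdash [v\mapsto q_1]$, i.e. $f_{u_2 v}(q_2)=q_1$, and $f_{u_2 v}$ is precisely the composition of the edge labels $f_{wv}$ along the path from $u_2$ up to $v$ — the reverse direction of the tree paths. This is exactly what \cref{lem:reverse enumerate} enumerates, applied to $T$ with these edge labels, to $Q$, and to the set $\mathcal F=\setof{f\in Q^Q}{f(q_2)=q_1}$, except that \cref{lem:reverse enumerate} enumerates \emph{all} descendants $w$ of $v$ rather than only those in $X$. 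To restrict to $X$ I would either intersect the enumerated stream with a precomputed bitvector membership test for $X$ (this risks breaking the constant-delay guarantee if long runs of non-$X$ nodes are skipped) or, more robustly, adapt the proof of \cref{lem:reverse enumerate} by intersecting the sets $C[y/d]$ with $X$ throughout, exactly as \cref{lem:enumerate} already does with an arbitrary target color set; I expect the latter route to go through verbatim. For the first enumerator $\mathcal E_v$, I want all children $u_1$ of $v$ possessing some descendant $u_2\in X$ with $[u_2\mapsto q_2]\vdash[u_1\mapsto q_1]$, i.e. $f_{u_2 u_1}(q_2)=q_1$. Having already built $\mathcal E'_{u_1}$ for every node $u_1$, I can test in constant time whether $\mathcal E'_{u_1}$ is nonempty (invoke it once), so $\mathcal E_v$ simply walks the children list of $v$ and outputs those $u_1$ for which $\mathcal E'_{u_1}$ produces at least one element; but a naive walk has delay proportional to the number of skipped children, so to obtain constant delay I would instead mark, in a single bottom-up pass, each node $u_1$ with the flag ``$\mathcal E'_{u_1}\neq\emptyset$'' and then apply \cref{lem:minimal-descendants}-style preprocessing (or simply a linked list of flagged children per parent) to get a constant-delay enumerator of the flagged children of any given node.

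The main obstacle, and the only place needing real care, is maintaining the \emph{constant delay} rather than merely linear-time total enumeration: every time the algorithm filters a stream by an external predicate it must avoid scanning unboundedly many rejected candidates between two accepted outputs. The clean way around this is not to filter after the fact but to thread the restriction (to $X$, and to flagged children) directly into the preprocessing of \cref{lem:reverse enumerate} / \cref{lem:minimal-descendants}, which are already built to enumerate membership in an arbitrary precomputed node set with constant delay; I would verify that the recursion in the proof of \cref{lem:enumerate} (on the color gap $c-d$) still bottoms out correctly when the target sets are intersected with $X$, which it does since intersecting $C[y/d]$ with a fixed set changes none of the structural properties used. All preprocessing steps above run in time $c\cdot|T|$ for a constant $c$ computable from $\aut A$, and all query delays are bounded by such a constant, which is exactly the statement of the lemma.
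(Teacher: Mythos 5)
Your high-level strategy matches the paper's: label each tree edge $vw$ with the function $f_{wv}\from Q\to Q$ coming from the entailment relation, observe that these functions compose along reverse paths by \cref{lem:vdash transitivity}, and then invoke \cref{lem:reverse enumerate} for $\mathcal E'_v$ and a marking-plus-children-enumeration step for $\mathcal E_v$. The treatment of $\mathcal E_v$ is fine: computing, for each node, whether $\mathcal E'_v$ is nonempty (one constant-time invocation) and then enumerating the marked children is essentially what the paper does, and the flagged-children enumerator is easy (a linked list per parent, as the paper indicates by pointing back to \cref{lem:minimal-descendants}).

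The gap is in your handling of the restriction to $X$ for $\mathcal E'_v$. You correctly observe that the naive filter breaks constant delay, and then propose to reprove \cref{lem:reverse enumerate} (hence \cref{lem:enumerate}) with $C[y/d]$ replaced by $C[y/d]\cap X$, claiming this ``goes through verbatim.'' That is too hasty: the recursion of \cref{lem:enumerate} is anchored on the sets $M_v^{xy}$ of \emph{minimal} descendants of $v$ in $D_v^{xy}$, and intersecting $C[y/d]$ with $X$ yields $M_v^{xy}\cap X$, which is \emph{not} the set of minimal elements of $D_v^{xy}\cap X$ --- a minimal element of $D_v^{xy}\cap X$ may lie strictly below a minimal element of $D_v^{xy}$ that happens to be outside $X$. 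Consequently both the disjoint-union decomposition used to glue the enumerators and the nonemptiness pruning that underwrites the constant delay would need to be re-derived, not inherited. The paper's proof sidesteps all of this with one clean move: extend the carrier set to $Q'\coloneqq Q\mathop{\dot\cup}\set{m,0,1}$ and define the edge labels $f_{wv}\from Q'\to Q'$ so that (i) on $Q$ they encode entailment as before, (ii) $f_{wv}(m)=1$ if $w\in X$ and $f_{wv}(m)=0$ otherwise, and (iii) $0,1$ are fixed points. Then along any path from a descendant $u_2$ up to $v$, the composed function maps $m$ to $1$ precisely when $u_2\in X$, so $X$-membership is read off from the composed semigroup element itself. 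One now applies \cref{lem:reverse enumerate} over $Q'$ with the target set $\mathcal F=\setof{f\in (Q')^{Q'}}{f(q_2)=q_1\text{ and }f(m)=1}$, and no post-hoc filtering, and no modification of the internals of \cref{lem:enumerate}, is needed. This is the idea your proposal is missing.
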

\begin{proof}
We let $Q'\coloneqq Q\mathop{\dot\cup} \{m,0,1\}$. We label each edge $vw$ of $T$ by the
function $f_{wv}\from Q'\rightarrow Q'$ such that
\begin{itemize}
\item
$f_{wv}(p)=q$ if $[w\mapsto p]\vdash [v\mapsto q]$, for all $p,q\in Q$,
\item $f_{wv}(m)=1$ if
$w\in X$ and $f_{wv}(m)=0$ if $w\not\in X$, and
\item   $f_{wv}(0)=0$, and $f_{wv}(1)=1$.
\end{itemize}

Observe that if $w$ is any descendant of $v$, then $[w\mapsto p]\vdash [v\mapsto q]$  if and only if the composition of the functions labeling the edges along the path from $w$ to $v$ is a function $f$ such that $f(p)=q$. Furthermore, this function
maps $m$ to $1$ if and only if $w\in X$.

We now apply \cref{lem:reverse enumerate} to apply in the claimed time
the second collection $(\mathcal E'_v)_{v\in V(T)}$ of enumerators with the
desired properties.

We now compute the set $Y$ of
all vertices $v$ that have a descendant $u\in X$ such that
$[u\mapsto q_2]\vdash [v\mapsto q_1]$. For this, we
iterate over all $v\in V(T)$ and let $\mathcal{E}'_v$ enumerate
its first output. If $\mathcal{E}'_v$ does not report an error,
there exists a descendant $u\in X$ of $v$ such that
$[u\mapsto q_2]\vdash [v\mapsto q_1]$. Assuming that each
enumeration takes a constant time $c'$, we can compute
this set in time $c'\cdot |T|$. Now it is easy to construct
the first collection $(\mathcal E_v)_{v\in V(T)}$
of enumerators with the desired properties (compare with
the more complicated construction in the proof of
\cref{lem:minimal-descendants}).
%
%
%
%
%
\end{proof}

\subsubsection*{Proof of \cref{lem:shape enumeration}}

We now prove~\cref{lem:shape enumeration}.
\begin{proof}
    Fix a shape $\sigma$. The proof proceeds by induction on the size of the tree $S$ underlying $\sigma$.

    In the base case $|S|=1$
there may be at most one valuation $\tup a$ that matches $\sigma$ with a starting node~$v$, namely the valuation $\tup a\from\tup y\to V(T)$  that maps all the variables $\tup y$ of $\sigma$ to $v$.
This valuation matches~$\sigma$ precisely whenever $\rho_v(v)=p$, where $\rho_v\from V(T)\to Q$ is the run of $\aut A$ on $T\otimes \tup a$
and $p=L(\ast)$.
We may therefore use the same reasoning as in the base case of the inductive proof of \cref{lem:shape answering}.
Namely,
we construct an automaton $\aut A'$ over the alphabet $A$ with states $Q^2$ (where $Q$ are the states of $\aut A$) and with the following property:
  $\aut A'$ is in state $(q,q')$ at a node $v$ in its run on $T$ where $q=\state(\aut A,T\otimes \emptyset)$ and $q'=\state(\aut A,T\otimes \tup a)$ where $\tup a\from\tup y\to V(T)$ is the valuation mapping all variables in $\tup y$ to $v$.

  Given a tree $T$, we compute the run $\rho'$ of $\aut A'$ on $T\otimes \emptyset$,
  in time linear in $|T|$ using \cref{lem:automata evaluation}.
We now output the following enumerators
$(\mathcal E_v)_{v\in V(T)}$.
For all nodes $v\in V(T)$, if $\rho'$ labels
$v$ by a pair in $Q\times \set p$, then let
$\mathcal E_v$ be the enumerator that outputs the
unique valuation mapping all variables in $\tup y$ to~$v$,
otherwise, $\mathcal E_v$ is an enumerator for the empty set.
This finishes the base case.

\smallskip
Now assume $|S|>1$. Following the notation from
    above, we write $\ast$ for the root of $S$.
Let $W$ be the set of children of $\ast$ in~$S$.
    As in the case of query answering, for each  $w\in W$
     define a shape
    $\sigma_w$ with the following components:
    \begin{itemize}
        \item the variables $\bar y_w:=\setof{y\in\tup y}{w\preceq \tup s(y)}$,
        \item the subtree $S_w$ of $S$
        rooted at $w$,
        \item the restriction $L_w$ of the labeling $L$ to $S_w$,
        \item the restriction $\tup s_w$ of the valuation $\tup s$ to $\tup y_w$.
    \end{itemize}

Note that $S_w$ has fewer nodes than $S$, so we can apply the inductive assumption to $\sigma_w$,
for each~${w\in W}$.

Let $q$ be the label of the root in $S$ and let $\tup z\subset \tup y$ be the set of those variables $y\in \tup y$ such that $\tup s(y)$ is the root $\ast$ in $\sigma$.
Let $\rho_0$ be the run of $\aut A$ on $T\otimes \emptyset$.

For a node $v\in V(T)$ with label $a$ and injective function $G\from W\to \str A_v$,
say that  $G$ is \emph{satisfiable} if there is a function $G'\from W\to\str A_v$ such that the  following conditions hold (recall that
$L$ also labels edges of $S$ and that $\ast w$ denotes the edge between the
root $\ast$ of $S$ and $w$):
\begin{itemize}
    \item
    $\delta_{q,(a,\tup z)}$ holds in $\str A_v$ labeled by $\rho_0$, except for the nodes $G(w)$ for $w\in W$ that are instead labeled by~$L(\ast w)$,
    \item for each $w\in W$ we have
     $G'(w)\succeq G(w)$
    and $[G'(w) \mapsto L(w)]\vdash [G(w)\mapsto L(\ast w)]$, and
\item for each $w\in W$
 there is some valuation of $\bar y_w$ that matches $\sigma_w$ with
  starting node $G'(w)$.
\end{itemize}

\begin{claim}\label{claim:bijection}
    Valuations $\tup a$ that match $\sigma$ with starting node $v$
    are in bijective correspondence with tuples $(G,G',(\tup a_w)_{w\in W})$, such that $G,G'\from W\to \str A_v$ and:
    \begin{enumerate}
        \item $G$ is satisfiable,
        \item $G'(w)\succeq G(w)$ and $[G'(w)\mapsto L(w)]\vdash [G(w)\mapsto L(\ast w)]$ for all $w\in W$, and
        \item  $\tup a_w$ matches $\sigma_w$ with starting node $G'(w)$
        for all $w\in W$.
    \end{enumerate}
\end{claim}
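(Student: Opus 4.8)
The plan is to prove the claim by exhibiting the bijection explicitly and then verifying that the induced correspondence preserves the relevant properties, following exactly the structure established in the query-answering analogue (\cref{claim:conditions}) but now tracking the satisfying valuation rather than merely testing existence.

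First I would set up the forward direction. Given a valuation $\tup a$ that matches $\sigma$ with starting node $v$, let $\rho\from V(T)\to Q$ be the unique run of $\aut A$ on $T\otimes \tup a$, and let $F\from V(S)\to V(T)$ be the (unique) witnessing function from the definition of matching. For each $w\in W$, define $G'(w):=F(w)$, define $G(w):=\dir(v,F(w))$ (the child of $v$ in the direction of $F(w)$), and define $\tup a_w$ to be the restriction of $\tup a$ to $\tup y_w$. I would then check: (1) $\tup a_w$ matches $\sigma_w$ with starting node $G'(w)$, using that the restriction of $F$ to $V(S_w)$ is the witnessing function for $\sigma_w$ and that the relevant state labels agree by conditions 5 and 6 in the definition of matching; (2) $G'(w)\succeq G(w)$ is immediate since $G(w)=\dir(v,F(w))$ lies on the path from $v$ to $F(w)=G'(w)$; (3) $[G'(w)\mapsto L(w)]\vdash [G(w)\mapsto L(\ast w)]$, which follows because in the run $\rho$ of $\aut A$ on $T\otimes \tup a$, the subtree rooted at $G(w)$ contains no variable of $\tup a$ except inside the subtree rooted at $G'(w)$ — here I use that in $\sigma$ the only variables in the subtree at $w$ are $\tup y_w$ — so $\rho$ restricted to that subtree coincides with $\rho_0[G'(w)\mapsto L(w)]$ by the definition of the entailment relation and condition 6; and (4) $G$ is satisfiable, taking the function $G'$ in the definition of satisfiability to be the $G'$ just constructed: the first bullet of satisfiability is exactly the condition that $\delta_{q,(a,\tup z)}$ holds at $v$ in the configuration where the children $G(w)$ carry the edge-states $L(\ast w)$, which holds because $\rho(v)=L(\ast)=q$ and condition 6 says each $\dir(v,F(w))$ is labeled $L(\ast w)$ by $\rho$ while the other children of $v$ carry their $\rho_0$-states (no variable of $\tup a$ lies below any such child).

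Next I would set up the inverse direction: given a tuple $(G,G',(\tup a_w)_{w\in W})$ satisfying conditions 1--3, define $\tup a\from\tup y\to V(T)$ by setting $\tup a(y):=v$ for $y\in\tup z$ and $\tup a(y):=\tup a_w(y)$ for $y\in\tup y_w$, $w\in W$. I would verify this is well-defined (the sets $\tup z$ and $\tup y_w$ partition $\tup y$, by definition of the shape), construct the witnessing function $F\from V(S)\to V(T)$ by $F(\ast):=v$ and $F|_{V(S_w)}:=F_w$ where $F_w$ witnesses that $\tup a_w$ matches $\sigma_w$ with starting node $G'(w)$, and check the six conditions of matching. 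Injectivity and lca-preservation of $F$ follow from injectivity of $G'$ (hence the $G'(w)$ lie in distinct subtrees of $\str A_v$, so their subtrees in $T$ are disjoint and their pairwise lca is $v$) together with injectivity of each $F_w$; conditions 3 and 4 are by construction; condition 5 at $\ast$ is $\rho(v)=L(\ast)=q$, which I would derive from the satisfiability of $G$ by showing that the run $\rho$ of $\aut A$ on $T\otimes\tup a$ restricted to each subtree rooted at $G(w)$ coincides with $\rho_0[G'(w)\mapsto L(w)]$ (using condition 3 and the inductive hypothesis that $F_w$ is a valid witness, so $\rho$ agrees with the run for $\sigma_w$ on that subtree), hence the children of $v$ are labeled exactly as in the first bullet of satisfiability, hence $\delta_{q,(a,\tup z)}$ holds at $v$; conditions 5 (at non-root nodes) and 6 then follow from the corresponding conditions for each $\sigma_w$ and from the entailment condition 3 propagating the edge-state $L(\ast w)$ up to $G(w)=\dir(v,G'(w))$.

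The main obstacle I expect is bookkeeping around the entailment relation and making precise the statement ``$\rho$ restricted to the subtree at $G(w)$ equals $\rho_0[G'(w)\mapsto L(w)]$.'' This requires observing that $T\otimes\tup a$ and $T\otimes\emptyset$ differ, within the subtree rooted at $G(w)$, only at nodes inside the subtree rooted at $G'(w)$ (because all variables of $\tup a$ landing in that region belong to $\tup y_w$ and hence are placed by $\tup a_w$ at nodes $\succeq G'(w)$), so the run of $\aut A$ on $T\otimes\tup a$ in that subtree is obtained by running $\aut A$ bottom-up with the state at $G'(w)$ pinned to $\state(\aut A, (T\otimes\tup a)_{G'(w)})=L(w)$ — which is precisely $\rho_0[G'(w)\mapsto L(w)]$ restricted to that subtree, by \cref{lem:vdash transitivity} and the definition of $\rho_0[w\mapsto p]$. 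Once this local agreement is established, all six matching conditions reduce to the conditions in the definition of satisfiability plus the inductive data, and the bijection is clear since both maps are defined by the same data ($F$ determines $G'$ via $G'(w)=F(w)$ and $G$ via $G(w)=\dir(F(\ast),F(w))$, and conversely). This proves the claim.
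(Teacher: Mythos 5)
Your proposal is correct and follows essentially the same route as the paper's own proof: you construct the forward map by setting $G'(w)=F(w)$, $G(w)=\dir(v,F(w))$, and $\tup a_w = \tup a|_{\tup y_w}$, and the inverse by gluing the $\tup a_w$ together with the valuation sending $\tup z$ to $v$. The paper's proof is considerably terser (especially in the reverse direction, which it dispatches in one sentence), whereas you spell out the verification of the six matching conditions and the entailment argument in more detail; the one small imprecision is your phrasing that $\rho$ ``coincides with'' $\rho_0[G'(w)\mapsto L(w)]$ on the whole subtree at $G(w)$ -- this holds only at $G'(w)$ and its ancestors within that subtree (below $G'(w)$ the two may differ, since variables of $\tup a$ land there while $\rho_0[G'(w)\mapsto L(w)]$ agrees with $\rho_0$ there), but that is exactly what is needed to conclude the entailment at $G(w)$, so the argument goes through.
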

\begin{proof}
    Fix a valuation $\tup a\from \tup x\to V(T)$ that matches $\sigma$ with starting node $v$ and let $\rho$ be the run of $\aut A$ on~$T\otimes \tup a$. As remarked above, the function $F\from V(S)\to V(T)$ witnessing that $\tup a$ matches
    $\sigma$ is unique.
    Define  $(G,G',(\tup a_w)_{w\in W}))$ as follows, by setting for $w\in W$:
    \begin{itemize}
        \item $G(w)$ is the $F(w)$-directed child of $F(\ast)$, 
        \item $G'(w)=F(w)$,
        \item $\tup a_w$ is the restriction of $\tup a$ to $\tup y_w\subset \tup y$.
    \end{itemize}

    As $\tup a$ matches $\sigma$ with starting node $v$, we have that
    $\rho(v)=q$ and
    the image of $\tup a$ is contained in the union of $\set v$ and the set of descendants of either of the nodes $G(w)$, for $w\in W$.
    Since the run $\rho$ is computed deterministically in a bottom-up fashion, it follows that the runs $\rho$ and  $\rho_0$ differ only on those vertices of $\str A_v$
    that are in the image of $G$.
Moreover, $\rho(G(w))=L(\ast w)$ by definition of $S$. As $\rho(v)=q$ it follows that the formula $\delta_{q,(a,\tup z)}$ holds in $\str A_v$ labeled by $\rho$.

Fix $w\in W$. We show that
  $\tup a_w$ matches $\sigma_w$ with starting node $G'(w)$.
To this end, it is enough to
observe that the restriction $F_w$ of $F$ to $V(S_w)$ witnesses this.
In particular, this
shows that  $G$ is satisfiable.

Vice versa, from a tuple $(G,G',(\tup a_w)_{w\in W})$ satisfying
the properties of the claim we uniquely obtain the valuation $\bar a$
matching $\sigma$ with starting node $v$: $\bar a=\bigcup_{w\in W}\tup a_w\cup \bar b$, where $\bar b$ is the valuation that maps~$\bar z$
(the set of variables $y\in \bar y$ such that $\bar s(y)=\ast$) to $v$.

Hence, the assignment $\tup a\mapsto (G,G',(\tup a_w)_{w\in W})$ defined above yields the bijective correspondence as stated in the claim.
\end{proof}

The enumerator for the set of all valuations $\tup a$ that match $\sigma$ with starting node $v$
is obtained by enumerating all triples $(G,G',(\tup a_w)_{w\in W})$ that satisfy
the properties stated in the claim.
We now argue that all components
can be enumerated with constant delay.

For $w\in W$,
let $X_w$ denote the set of nodes $u'$ such that there is a valuation that matches $\sigma_w$ with starting node $u'$.
By inductive assumption, in particular we can compute $X_w$ in time linear in $|T|$.
Apply \cref{lem:state entailment} to
$T$ and $X_w$ and states $q_2=L(w)$ and $q_1=L(\ast w)$, to obtain, for each $v\in V(T)$, an enumerator for the set $C_w(v)$ of all children $u$ of $v$ such that there exists $u'\succeq u$ in $X_w$ satisfying \[[u'\mapsto L(w)]\vdash
[u\mapsto L(\ast w)].\]

For each child $u$ of $v$,
label $u$ by the set of all $w\in W$ such that $u\in C_w(v)$. This yields a labeling $\lambda_v$ of the whorl $\str A_v$.

Let $\tup w$ be the set $W$, where now each $w\in \tup w$ is treated as a variable.
Then there is a first-order formula~$\psi(\tup w)$ with free variables $\tup w$
that holds in $\str A_v$ with the  labeling $\lambda_v$ for a valuation $\tup b\from \tup w\to \str A_v$ if~$\tup b$
induces a function $G\from W\to \str A_v$ that is injective and satisfiable.
By assumption on the class~$\CC$,
we can compute an enumerator for the formula $\psi(\str A_v)$, for each $v\in V(T)$. 
Each enumerator has delay bounded by some constant, and all those enumerators can be computed in time linear in $|T|$.
In particular, in this way we obtain for each $v\in V(T)$ an enumerator with constant delay that enumerates all satisfiable functions $G\from W\to\str A_v$.

Fix $v\in V(T)$ and a satisfiable function $G\from W\to \str A_v$. Our next task is to enumerate all functions $G'\from W\to\str A_v$ that certify satisfiability of $G$.
The key observation is that this can be done by enumerating, for each $w\in W$ independently, all nodes  $u'$ of $V(T)$ with $u'\succeq G(w)$ and \mbox{$[u'\mapsto L(w)]\vdash[G(w)\mapsto L(\ast w)]$},
and such that there is some valuation of $\tup y_w$ that matches $\sigma_w$ with starting node $u'$.
Note that this last condition is equivalent to $u'\in X_w$. Hence, for a fixed $w\in W$, it suffices to enumerate all descendants $u'\in X_w$
of $G(w)$ with  $[u'\mapsto L(w)]\vdash[G(w)\mapsto L(\ast w)]$.
This is done using the enumerator obtained from~\cref{lem:state entailment}.

More precisely,
for each $w\in W$, apply \cref{lem:state entailment} to
$T$ and $X_w\subset V(T)$ and states $q_2=L(w)$ and $q_1=L(\ast w)$, obtaining, for each $u\in V(T)$
an enumerator for the set $Y_w(u)$ all descendants  $u'\in X_w$ of $u$ that satisfy \[[u'\mapsto L(w)]\vdash
[u\mapsto L(\ast w)].\]

In particular, for $u=G(w)$ we obtain an enumerator for the set $Y_w(G(w))$ of all descendants~$u'$ of~$G(w)$ that belong to $X_w$ and satisfy the above. Now, functions $G'\from W\to V(T)$ that certify satisfiability of $G$ are exactly all functions $G'\from W\to V(T)$
such that $G'(w)\in Y_w(G(w))$, for all $w\in W$. Therefore, all such functions can be enumerated with constant delay, by taking the product of the enumerators of the sets $Y_w(G(w))$, for $w\in W$.

Hence, we may compute in time linear in $|T|$ a data structure that, given $v\in V(T)$, enumerates all pairs $(G,G')$ such that $G\from W\to \str A_v$ is satisfiable, and
$G'\from W \to V(T)$ certifies that.
Fix such a pair $(G,G')$.
It remains to construct a data structure that allows to enumerate all tuples $(\tup a_w)_{w\in W}$ such that $\tup a_w$ matches $\sigma_w$ with starting node $G'(w)$, for all $w\in W$. Again, this can be done by enumerating, independently for each $w\in W$, all valuations $\tup a_w$ that match $\sigma_w$ with starting node $G'(w)$.
Apply the inductive assumption to $\sigma_w$ for each $w\in W$,
yielding in time linear in $|T|$ for every node $u\in V(T)$
an enumerator with constant delay for the set $Z_w(u)$ of the set of all valuations $\tup a_w$ that match $\sigma_w$ with starting node $u$.
Then tuples $(\tup a_w)_{w\in W}$ as above are exactly the tuples in $\prod_{w\in W}Z_w(G'(w))$. Therefore, all such tuples may be enumerated with constant delay, by taking the product of the enumerators for the sets $Z_w(G'(w))$ for $w\in W$.

Reassuming, we may compute in time linear in $|T|$ a data structure that, given $v\in V(T)$, enumerates all triples $(G,G',(\tup a_w)_{w\in W})$ that satisfy the condition of~\cref{claim:bijection}. Those correspond precisely to tuples~$\tup a$ that match $\sigma$ with starting node $v$,
as described in the claim. This yields the conclusion of the inductive step, and finishes the proof of~\cref{lem:shape enumeration}.
\end{proof}


\section{Separator logic on topological-minor-free classes}\label{sec:model-checking}

In this section, we combine all the pieces together, and show that the
model-checking, querying, and enumeration problems for the logic $\FOsep$ can be
solved efficiently on graphs excluding a fixed topological minor.
Precisely, we prove the following.

\mainub*

The following is a generalization to formulas with free variables.

\begin{theorem}\label{thm:main-query-answering}
  Let $\Cc$ be a class of graphs that exclude a fixed graph as a
  topological minor. Then given $G\in \Cc$ and an $\FOsep$ formula
  $\varphi(\tup x)$, one can compute in time $f(\phi)\cdot \|G\|^3$ a data structure that allows to answer whether a given tuple $\tup a$ satisfies $\phi(\tup x)$ in $G$ in time
  $f(\phi)$, where $f$ is a computable function depending on $\Cc$.
\end{theorem}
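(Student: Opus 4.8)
The plan is to reduce model checking and query answering for $\FOsep$ on $\Cc$ to query answering for $\FOMSO$ on a tree augmented with graphs from a fixed topological-minor-free class, and then invoke \cref{thm:answering}. First, given $\varphi(\tup x)$, let $k+2$ be the maximal arity of a $\conn$-predicate in $\varphi$ and put $q\coloneqq q(k)\in 2^{\Oh(k)}$. Using \cref{thm:strong-unbreakability} I would compute a regular strongly $(q,k)$-unbreakable tree decomposition $\Tt=(T,\bag)$ of $G$ of adhesion at most $q$, in time $2^{\Oh(k^2)}\cdot|G|^2\|G\|\le f(\varphi)\cdot\|G\|^3$; this is the dominant term.

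Next I would turn $\Tt$ into an augmented tree $\wh T$. Its underlying tree has, below each decomposition node $x$, the decomposition-children of $x$ together with one fresh leaf $[x,v]$ for every $v\in\bag(x)$; the whorl at $x$ is the bag graph $\bgraph(x)$ read on this child set (a decomposition-child $y$ being identified with the corresponding vertex of $\bgraph(x)$, whose $\bgraph(x)$-neighbourhood is exactly $\adh(y)$ by regularity). On top of this I would store finitely many labels: whether a bag vertex $[x,v]$ lies in $\adh(x)$ or $\mrg(x)$; the colourings $\lambda_x$ of \cref{lem:translation} on $\adh(x)$, together with $\lambda_y$ restricted to $\adh(y)$ for each child $y$, so that copies of a single adhesion vertex in two adjacent bags carry a common colour; the semigroup element $\sigma(\parent(x),x)\in B_{2q}$ of \cref{lem:translation}, so that $(x,y)\mapsto\torso(x,y)$ --- a path product in the fixed finite semigroup $B_{2q}$ --- becomes $\MSO(\preceq,\cdot)$-definable on $\wh T$; and a pointer from each vertex $v\in V(G)$ to the leaf $[\hm(v),v]$, which is well defined since margins partition $V(G)$. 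By \cref{lem:bag graph}, $K_t$-topological-minor-freeness of $G$ gives that every $\bgraph(x)$, hence the Gaifman graph of every labelled whorl (the extra labels adding no Gaifman edges), is $K_{t'}$-topological-minor-free with $t'=\max(t,q+2)$; so $\wh T$ is augmented with graphs from a fixed topological-minor-free class $\Cc'$ depending only on $\Cc$ and $\varphi$. By \cref{lem:total-size}, $|\wh T|=\Oh(q^2\|G\|)$, and --- computing the colourings and $\sigma$ via \cref{lem:translation}, and the rest directly --- $\wh T$ with all labels is produced in time $f(\varphi)\cdot\|G\|^{\Oh(1)}$, dominated by the previous step.

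The core of the argument is then a translation lemma: for every subformula $\psi(\tup y)$ of $\varphi$ there is an $\FOMSO$-formula $\psi'(\tup y)$ over the signature of $\wh T$ with $G\models\psi(\tup a)$ iff $\wh T\models\psi'(\tup a)$ for every valuation $\tup a$ of $\tup y$ to $V(G)$, where $a\in V(G)$ is represented in $\wh T$ by $[\hm(a),a]$. This would be proved by induction on $\psi$. Boolean connectives are trivial; first-order quantifiers are relativised to the (definable) set of leaves of the form $[x,v]$; the atoms $x=y$ and $E(x,y)$ are read off the whorl labelling, any edge of $G$ lying in some bag. The substantial case is $\conn_k(x,y,z_1,\dots,z_k)$, which I would handle by transcribing, as an $\FOMSO$-formula, the constant-time querying algorithm of \cref{sec:implementation of queries}; since its correctness is already established in \cref{sec:general-queries}, it suffices to check that each atomic ingredient is definable in $\wh T$. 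Concretely: the $\Oh(k)$-element node set $Y\cup Y'$ is $\FO(\preceq)$-definable from $x,y,z_1,\dots,z_k$ via $\lca$ and $\dir$ (both $\FO(\preceq)$); a profile $\profile(w)\subseteq\binom{D(w)}{2}$, with $|D(w)|\le q+2$, is named by boundedly many Boolean ``slot'' variables, its adhesion entries quantified as $\adh$-labelled children of $w$ and matched across adjacent bags through their shared $\lambda$-colour; \cref{lem:aggregate} computes $\profile(x)$ by deciding connectivity in the $S$-restricted bag graph $\bgraph_S(x)$, which is $\FO$ on the whorl at $x$ because $\bag(x)$ is $(q,k)$-unbreakable in $G[\cone(x)]$ and $q$ is a constant (the ``$q$ existential quantifiers then one universal'' formula of \cref{sec:overview-mc}, read off \cref{lem:short-BFS-combinatorial}); and \cref{lem:warp} glues profiles along a tree path via one torso query, $\MSO(\preceq,\cdot)$-definable by the above. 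Assembling these boundedly many pieces yields $\psi'$, and for $\psi=\varphi$ the formula $\varphi'(\tup x)$.

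Finally, labelled FO model checking and query answering are efficient on every topological-minor-free class, hence on $\Cc'$, so by \cref{thm:answering} labelled query answering for $\FOMSO$ is efficient on $\Cc'$-augmented trees; applying this to $\varphi'(\tup x)$ and $\wh T$ produces, in time $f(\varphi)\cdot|\wh T|\le f'(\varphi)\cdot\|G\|$, a data structure answering membership queries for $\varphi'$ in $\wh T$ in time $f(\varphi)$, and a query $\tup a$ for $\varphi$ in $G$ is answered by a constant-time translation $\tup a\mapsto([\hm(a),a])_a$ followed by a query. The total time is dominated by the computation of $\Tt$, i.e. $f(\varphi)\cdot\|G\|^3$; taking $\tup x=\emptyset$ recovers \cref{thm:main-ub}, and using \cref{thm:formula enumeration} in place of \cref{thm:answering} yields the enumeration variant. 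The hard part will be the translation step: designing the encoding so that every atomic operation of the querying algorithm of \cref{sec:general-queries} (bag membership, the two adjacency relations, $\lca$/$\dir$ navigation, bounded-length searches within a bag, torso path products) is $\FO$- or $\MSO$-on-the-tree-definable, and flattening the profile bookkeeping --- carried in \cref{sec:implementation of queries} by explicit lists and a bottom-up loop over $Y\cup Y'$ --- into a single $\FOMSO$-formula of bounded quantifier rank, which is possible precisely because $|Y\cup Y'|=\Oh(k)$ and each profile holds only $\Oh(q^2)$ bits.
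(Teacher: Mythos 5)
Your proposal is correct and follows essentially the same route as the paper's own proof of this theorem (via \cref{lem:fo-sep-translation} and \cref{thm:answering}): you compute the strongly $(q,k)$-unbreakable decomposition of \cref{thm:strong-unbreakability}, build an augmented tree whose whorls are the bag graphs (topological-minor-free by \cref{lem:bag graph}), decorate it with the colourings and $B_{2q}$-semigroup labels from \cref{lem:translation} so torso queries become $\MSO$-definable, translate each $\conn_k$ atom into an $\FOMSO$-formula by transcribing the bounded-profile dynamic programming of \cref{sec:implementation of queries} (exploiting unbreakability to make within-bag connectivity $\FO$-definable), and then invoke \cref{thm:answering}. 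The only cosmetic deviations from the paper are bookkeeping choices (pointers vs.\ identification of $v$ with its representative leaf, and the precise constant $t'$), neither of which affects the argument.
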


Finally, the following result computes an enumerator for all the answers to a given query.

\begin{theorem}\label{thm:main-enumeration}
  Let $\Cc$ be a class of graphs that exclude a fixed graph as a
  topological minor. Then given $G\in \Cc$ and an $\FOsep$ formula
  $\varphi(\tup x)$, one can enumerate all answers to $\phi(\tup x)$
  in $G$ with delay $f(\phi)$ after preprocessing in time
  $f(\phi)\cdot \|G\|^3$, where $f$ is a computable function depending on $\Cc$.
\end{theorem}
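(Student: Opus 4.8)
\textbf{Proof plan for \cref{thm:main-enumeration}.}
The plan is to follow exactly the same pipeline as in the proof of \cref{thm:main-ub} and \cref{thm:main-query-answering}, but to invoke the enumeration machinery from \cref{sec:automata} at the final step instead of the evaluation or query-answering machinery. Concretely, fix $G\in\Cc$, say $G$ is $K_t$-topological-minor-free, and an $\FOsep$ formula $\varphi(\tup x)$ with $k+2$ the maximum arity of the connectivity predicates appearing in $\varphi$. First I would apply \cref{thm:strong-unbreakability} to compute, in time $2^{\Oh(k^2)}\cdot|G|^2\|G\|\le f(\varphi)\cdot\|G\|^3$, a regular strongly $(q,k)$-unbreakable tree decomposition $\Tt=(T,\bag)$ of $G$ with $q=q(k)=2^{\Oh(k)}$ and adhesion at most $q$. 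Then I would turn $\Tt$ into a $\Cc'$-augmented tree $\Tt'$ over a suitable finite alphabet, where $\Cc'$ is the class of $K_{t'}$-topological-minor-free graphs with $t'=\max(t,q+2)$; the augmenting graphs are (labelled variants of) the bag graphs $\bgraph(x)$, which lie in $\Cc'$ by \cref{lem:bag graph}, and the node labels record the local navigation data (membership of query vertices in margins, adhesion structure, the colorings $\lambda_x$ and semigroup labels $\sigma$ from \cref{lem:translation}, etc.). This is exactly the augmented tree used in \cref{sec:model-checking} for \cref{thm:main-ub}; by \cref{lem:total-size} it has total size $\Oh(q^2|G|)\le f(\varphi)\cdot\|G\|$ and is computable within the claimed time bound.

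The second step is to translate the $\FOsep$ formula $\varphi(\tup x)$ over $G$ into an $\FOMSO$ formula $\varphi'(\tup x)$ over $\Tt'$ that is satisfied by a valuation $\tup a$ of $\tup x$ to nodes/elements of $\Tt'$ precisely when $G\models\varphi(\tup a)$ (after the obvious identification of $V(G)$ with a definable subset of $V(\Tt')$, namely the margins). The heart of this is the same observation used for \cref{thm:main-ub}: a connectivity predicate $\conn_k(u,v,x_1,\ldots,x_k)$ can be expressed in $\FOMSO$ on $\Tt'$ by formalizing the constant-time querying algorithm from \cref{sec:general-queries} — the $\lca$ and $\dir$ queries become $\MSO(\preceq,A)$ atoms on the tree, the bag-graph adjacency and short-BFS reasoning of \cref{lem:aggregate} (relying on \cref{lem:short-BFS-combinatorial}) become $\FO$ formulas on the whorls, and the torso queries become $\MSO(\preceq,A)$ formulas thanks to \cref{rem:to-FO}, which says the path-product labeling of \cref{thm:colcombet-ds} can be chosen so that the relevant semigroup products along a path are $\FO(\preceq,B)$-definable over a finite alphabet $B$. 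Guarding quantifiers by the "is a margin vertex" predicate and replacing each $\conn$ atom by its translation yields $\varphi'(\tup x)$, computable from $\varphi$. This is the step I expect to be the main obstacle, though only in bookkeeping rather than in ideas: one must carefully thread the free variables $\tup x$ through the augmented-tree structure, handle the corner cases where a query vertex equals $u$ or $v$ or lies in an adhesion versus a margin, and confirm that every atomic operation of the \cref{lem:warp}/\cref{lem:aggregate} algorithm really is expressible in $\FO$ on a single whorl or in $\MSO$ along a tree path — but all of this was already carried out for \cref{thm:main-ub}, so here it can simply be cited.

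Finally, having $\varphi'(\tup x)$, I would invoke \cref{thm:formula enumeration}: since $\Cc'$ is a topological-minor-free class, first-order enumeration is efficient on $\Cc'$ (by the results of \cite{DBLP:journals/lmcs/KazanaS19} or \cite{10.1145/3375395.3387660} on classes of bounded expansion, which subsume topological-minor-free classes), so \cref{thm:formula enumeration} provides, in time $f(\varphi')\cdot|\Tt'|\le f(\varphi)\cdot\|G\|^3$, an enumerator for $\varphi'(\Tt')$ with delay $f(\varphi')$. Each output of this enumerator is a valuation into $V(\Tt')$; post-processing it by restricting to the margin-vertex component (a constant-time operation since $|\tup x|$ is fixed) and discarding any output whose coordinates are not all margin vertices — equivalently, building the enumerator from $\varphi'$ with the margin-guard already baked in so that every output is valid — yields an enumerator for $\varphi(G)$ with delay $f(\varphi)$, preserving the no-repetition property. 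Since the initialization subsumes the decomposition time $2^{\Oh(k^2)}\cdot|G|^2\|G\|$ and the linear-in-$|\Tt'|$ enumeration preprocessing, the total preprocessing time is $f(\varphi)\cdot\|G\|^3$, as required. The analogous statements \cref{thm:main-query-answering} and \cref{thm:main-ub} follow by the same route using \cref{thm:answering} and \cref{thm:formula evaluation} respectively in place of \cref{thm:formula enumeration}.
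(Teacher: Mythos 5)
Your proposal is correct and follows essentially the same route as the paper: translate $(G,\varphi)$ into an augmented tree $T'$ and an $\FOMSO$-formula $\psi$ with $\psi(T')=\varphi(G)$ (this is exactly \cref{lem:fo-sep-translation}), then invoke \cref{thm:formula enumeration} on the class $\Cc'$ of $H'$-topological-minor-free graphs, which has bounded expansion and hence admits efficient first-order enumeration. The one caveat worth underlining is the point you dismiss in passing: discarding spurious (non-margin) tuples after the fact would \emph{not} preserve constant delay, so it is essential — and is what \cref{lem:fo-sep-translation} guarantees — that the guards are baked into $\psi$ so that $\psi(T')$ equals $\varphi(G)$ exactly, with no outputs to filter.
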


The two results above follow from \Cref{thm:formula
  evaluation,thm:answering,thm:formula enumeration}, respectively, and the following lemma. 
  The \emph{Gaifman graph} of a $\Sigma$-structure $\str A$ is a graph with
vertex set $V(\str A)$ where two distinct vertices $u,v$ are connected
by an edge if $u$ and $v$ appear together in a relation of $\Sigma$.
We say that a $\Sigma$-structure excludes a topological minor $H$ if
its Gaifman graph excludes $H$ as a topological minor.

\begin{lemma}\label{lem:fo-sep-translation}
  For every graph $H$ and \FOsep~formula $\phi(\tup x)$ there exists a
  graph $H'$, an alphabet $A$, a signature $\Sigma$ consisting of
  binary relation symbols including the edge relation symbol $E$, and
  a formula $\psi(\tup x)$ of $\FO(\MSO(\preccurlyeq, A), \Sigma)$
  such that the following holds. Given a graph $G$ that excludes $H$
  as a topological minor, one can in time $f(\phi,H)\cdot \|G\|^3$
  compute a tree $T$ of size $f(\phi,H)\cdot \|G\|$, vertex-labeled with $A$,
  and augmented with
  $H'$-topological-minor-free $\Sigma$-structures, such that
  $V(G)\subseteq V(T)$ and $\phi(G)=\psi(T)$; here, $f$ is a
  computable function. Moreover, $H',A,\Sigma,\psi(\tup x)$
    can be computed from $H$ and $\phi(\tup x)$.
\end{lemma}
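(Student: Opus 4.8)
The strategy is to make precise the overview given in Section~\ref{sec:overview-mc}: build the tree decomposition, turn it into an augmented tree carrying all the algorithmic data needed to evaluate connectivity queries, and then argue that connectivity predicates (and hence $\phi$) become expressible over this structure. Fix $H$ and the $\FOsep$ formula $\phi(\tup x)$; let $k+2$ be the maximum arity of a connectivity predicate occurring in $\phi$, and let $q=q(k)\in 2^{\Oh(k)}$ be the constant from \cref{thm:strong-unbreakability}. First I would run the algorithm of \cref{thm:strong-unbreakability} on $G$ to obtain, in time $2^{\Oh(k^2)}\cdot|G|^2\|G\|\le f(\phi,H)\cdot\|G\|^3$, a regular strongly $(q,k)$-unbreakable tree decomposition $\Tt=(T_0,\bag)$ of adhesion at most $q$; since $G$ is $K_t$-topological-minor-free for the appropriate $t$ (as $H$-topological-minor-freeness implies $K_t$-topological-minor-freeness), \cref{lem:bag graph} guarantees that every bag graph $\bgraph(x)$, and hence every $\emptyset$-restricted bag graph $\bgraph_\emptyset(x)$, is $K_{t'}$-topological-minor-free for $t'=\max(t,q+2)$, a constant. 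By \cref{lem:total-size} the total size of all bag graphs is $\Oh(q^2|G|)$, so the construction stays within the claimed size and time bounds.

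Next I would define the augmented tree $T$. Its underlying tree is a modest blow-up of $T_0$: I take $T_0$ itself (adding the root with empty bag as usual), and as whorl $\str A_x$ at a node $x$ I use the $\emptyset$-restricted bag graph $\bgraph_\emptyset(x)$, whose vertices are the children of $x$ --- wait, this needs care, because $\bgraph_\emptyset(x)$ has vertex set $\bag(x)$, not $\children(x)$. The clean fix, which is exactly what the overview hints at, is to let the universe of $T$ contain, besides the nodes of $T_0$, also a dedicated element for each pair (node $x$, vertex $v\in\bag(x)$); equivalently, subdivide/expand $T_0$ so that below each node $x$ there sit the $|\bag(x)|$ ``slot'' nodes carrying the vertices of $\bag(x)$, and make the whorl at $x$ the graph $\bgraph_\emptyset(x)$ on exactly these slot nodes, with binary relations recording (i) original edges of $G$ versus adhesion edges, and (ii) for adhesion edges, which children of $x$ support them. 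One also needs unary predicates marking the top-most slot of each vertex $v$ (to realize the map $\hm$), a binary relation linking a slot at a node to the child of $x$ whose component contains the corresponding vertex (to realize $\dir$), and predicates/relations encoding the semigroup labelling of $T_0$ from \cref{lem:translation} together with the linear-space labelling alluded to in \cref{rem:to-FO}, so that path-product (torso) queries become $\FO(\preceq,B)$-expressible. All of this is computable in linear time in $\|G\|$ (after the $\|G\|^3$ cost of building $\Tt$ and, if we want the labelling of \cref{lem:torso-ds}, of the semigroup machinery of \cref{sec:torso-queries}), the alphabet $A=B\times(\text{vertex-role markers})$ and signature $\Sigma$ are finite and depend only on $q$ (hence on $\phi$), and the resulting $\Sigma$-structures embed into the bag graphs up to a constant-size gadget, so they remain $H'$-topological-minor-free for a suitable $H'$ computable from $H$ and $k$ (the Gaifman graph of each whorl is $\bgraph_\emptyset(x)$ with $\Oh(1)$ extra markers, so \cref{lem:bag graph} applies). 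Finally, $V(G)\subseteq V(T)$ by identifying each $v\in V(G)$ with its top-most slot node $\hm(v)$.

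The heart of the argument is then to produce $\psi(\tup x)\in\FO(\MSO(\preceq,A),\Sigma)$ with $\phi(G)=\psi(T)$. I would proceed by structural induction on $\phi$: Boolean connectives and quantifiers over vertices of $G$ are translated directly, relativizing quantifiers to the $\FO(\Sigma,A)$-definable set of top-most slot nodes and translating $E(x,y)$ into a formula that, given the two slot nodes, walks up to their least common ancestor in $T_0$ (an $\MSO(\preceq)$-definable operation) and inspects the original-edge relation there --- this is the standard fact that adjacency in $G$ is decidable from $\Tt$ by a bounded-depth $\FO$ computation. The only genuinely new atoms are $\conn_k(u,v,z_1,\dots,z_k)$, and here I would simply transcribe the querying algorithm of \cref{sec:implementation of queries} --- which computes $Y$, $Y'$, the profiles along $Y\cup Y'$ via \cref{lem:warp} and \cref{lem:aggregate}, and reads the answer at the root --- into an $\FO(\MSO(\preceq,A),\Sigma)$ formula: the least-common-ancestor closure $Y$ and the maps $f,f^{-1},\dir,\hm$ are $\MSO(\preceq)$-definable (they involve only $\Oh(k)$ tracked nodes, a constant), the torso/path-product queries $\torso(c,d)$ are $\FO(\preceq,B)$-definable by \cref{rem:to-FO}, and the aggregation step of \cref{lem:aggregate} --- bounded-size bounded-depth BFS in $\bgraph_S(x)$ using \cref{cl:emulation} and \cref{cl:short-bfs}, which only ever explores $q+1$ vertices --- is expressible by a bounded-quantifier-depth $\FO(\Sigma)$ formula evaluated in the whorl at $x$, precisely because the unbreakability guarantee (\cref{lem:short-BFS-combinatorial}) caps the search at $q=\Oh_k(1)$ vertices. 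Chaining these constantly many constant-complexity pieces yields the desired $\psi$. \textbf{The main obstacle} is bookkeeping rather than conceptual: one must set up the augmented-tree signature carefully enough that (a) the profiles attached to the $\Oh(k)$ relevant nodes, each a relation on $\Oh(q^2)$ pairs, can be named and manipulated by a single fixed formula with $\phi$-bounded quantifier depth --- this works because $k$, and hence the number of tracked nodes and the profile size, is fixed by $\phi$ --- and (b) the torso-computation labelling of \cref{rem:to-FO} is threaded through correctly so that path products are first-order over $\preceq$; verifying that every atomic operation of the data structure of \cref{sec:general-queries} indeed lands inside $\FO(\MSO(\preceq,A),\Sigma)$, and that the resulting whorls stay $H'$-topological-minor-free, is where the care is needed, but no new ideas beyond those already developed in Sections~\ref{sec:general-queries}--\ref{sec:torso-queries} are required.
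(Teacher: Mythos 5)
Your plan follows the same overall route as the paper: build a strongly $(q,k)$-unbreakable tree decomposition $\Tt=(T_0,\bag)$ via \cref{thm:strong-unbreakability}, turn it into an augmented tree carrying the adhesion colorings $\lambda_x$ and the bi-interface semigroup labeling $\sigma$ of \cref{lem:translation}, and then show that $\conn_k$ (and hence $\phi$) is expressible by transcribing the algorithm of \cref{sec:implementation of queries}. However, there is a genuine gap in the choice of whorl.

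You propose to take the whorl at $x$ to be $\bgraph_\emptyset(x)$ on the ``slot'' nodes (copies of $\bag(x)$), and you justify the required topological-minor-freeness of the whorls by writing that \cref{lem:bag graph} ``guarantees that every bag graph $\bgraph(x)$, and hence every $\emptyset$-restricted bag graph $\bgraph_\emptyset(x)$, is $K_{t'}$-topological-minor-free.'' That ``hence'' is a non sequitur: $\bgraph_\emptyset(x)$ is \emph{not} a subgraph or topological minor of $\bgraph(x)$ --- it is obtained from $\bgraph(x)$ by deleting each child $y$ and then \emph{adding} the clique on $\adh(y)$, so it has strictly more edges on $\bag(x)$. \cref{lem:bag graph} says nothing about $\bgraph_\emptyset(x)$, and in fact the torso of a bag can be far denser than the original graph, so an entirely separate argument would be needed (and it is unclear whether one exists at the quantitative level you want). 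Beyond this, the slot-only whorl also loses the ``which child supports which adhesion edge'' information that the aggregation step of \cref{lem:aggregate} / \cref{cl:emulation} relies on; encoding it as ``binary relations on the whorl'' does not work because it is a relation between edges and children, and the $T_0$-children of $x$ would in any case have to be elements of the whorl too (the definition of augmented tree requires the whorl's universe to be the full set of children of $x$ in $T$).

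The paper avoids all of this by a more careful choice: it attaches the leaf-copies $u_x$ of $\bag(x)$ as children of $x$ (so that $\children_T(x)=\children_{T_0}(x)\cup\{u_x:u\in\bag(x)\}$), and takes the whorl at $x$ to be $\bgraph(x)$ itself, whose vertex set is exactly that union. Then \cref{lem:bag graph} applies verbatim to give $K_{\max(|H|,2q+2)}$-topological-minor-freeness, and the child-support relation is simply the adjacency $y\sim u$ for $y\in\children_{T_0}(x)$, $u\in\adh(y)$, in the whorl. The fix is local: replace $\bgraph_\emptyset(x)$ on slot nodes with $\bgraph(x)$ on all of $\children_T(x)$, decorate nodes and whorl edges with the predicates $W$, $C_i$, $R_{\Bb}$, $D_i$ from \cref{lem:translation}, and everything else you sketch goes through. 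One further remark: you route the torso/path-product queries through \cref{rem:to-FO} to stay inside $\FO(\preceq,B)$, whereas the paper encodes them directly in $\MSO(\preceq,A)$ (\cref{cl:torso}) by existentially guessing the partition $\{P_{\Bb}\}_{\Bb\in\Basic}$; your route is viable (the paper itself notes it as an a-posteriori strengthening) but requires explicitly importing the Colcombet-split labeling into the alphabet, which the paper deliberately postpones.
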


We remark that the logic $\FO(\MSO(\preccurlyeq, A), \Sigma)$
in the statement could be replaced by the weaker logic $\FO(\preceq,A,\Sigma)$.
This can be deduced a posteriori from the above statement,
using
\cref{rem:to-FO}. The labels from $A$ can be further eliminated by
encoding them as gadgets in the tree, so the logic $\FO(\preceq,\Sigma)$ would suffice, too.
However, we do not need this stronger version of the lemma here.

\begin{proof}[Proof of \Cref{thm:main-ub,thm:main-query-answering,thm:main-enumeration}]
  Let $\CC$ be a class of graphs that exclude a graph $H$ as a topological minor
  and let  $\phi(\tup x)$ be an {\FOsep} formula.
Let $H'$, $A$, and $\psi(\tup x)$ be as in~\cref{lem:fo-sep-translation}.
Let $\CC'$ be the class of all graphs that exclude $H'$ as a topological minor.
In particular, $\CC'$ has bounded expansion, and hence has efficient model-checking (by the main result of~\cite{DvorakKT13}), query-answering, and query enumeration (by the main result of~\cite{DBLP:journals/lmcs/KazanaS19}, see also~\cite{10.1145/3375395.3387660}). 

Applying \Cref{thm:formula
evaluation,thm:answering,thm:formula enumeration} respectively to the class $\CC'$,
we conclude that model-checking, query-answering and query enumeration is efficient
for $\FOMSO$ on $\CC'$-augmented trees that are vertex-labeled with $A$.

Now,
given a graph $G\in\CC$ compute in time $f(\phi,H)\cdot \|G\|^3$ a tree $T$, augmented with graphs from $\CC'$ and labeled with $A$, as in~\cref{lem:fo-sep-translation}.
Apply the appropriate (model-checking/query-answering/enumeration) algorithm to the tree $T$ and the formula $\psi(\tup x)$.
This runs in time linear in the size of $|T|$. As $|T|$ has size at most  $f(\phi,H)\cdot \|G\|$, this time is linear in $\|G\|$.
The algorithm yields the required data structure for $\psi(\tup x)$
(that is, gives a boolean answer if $\psi$ is a sentence, or a query-answering data structure, or an enumerator). As $\psi(T)=\phi(G)$, this is also the required data structure for $\phi$.
\end{proof}

\medskip
The remainder of this section is devoted to the proof of
\cref{lem:fo-sep-translation}.

\bigskip

Let $k$ be such that every $\conn$ predicate used in $\phi(\tup x)$
has arity at most $k+2$, and let $q\coloneqq q(k)$ be the parameter
provided by \cref{thm:strong-unbreakability}. Define
$$H'\coloneqq K_{\max(|H|,2q+2)}.$$

\paragraph*{Construction of $T$.}
We now show how, given a graph $G$ that excludes $H$ as a topological
minor, to construct a suitable augmented tree $T$. The alphabet $A$
will be defined within this construction.

First, apply the algorithm of \cref{thm:strong-unbreakability} to the
graph $G$, obtaining a strongly $(q,k)$-unbreakable tree
decomposition $\Tt=(T_0,\bag)$ of adhesion at most $q$. Construct $T$
from $T_0$ as follows: for each $x\in V(T_0)$ and $u\in \bag(x)$,
create a copy $u_x$ of $u$ and attach $u_x$ as a (leaf) child of $x$.
When $x$ is the unique node satisfying $u\in \mrg(x)$, we identify the
copy $u_x$ with $u$ itself, so that we have $V(G)\subseteq V(T)$. Next,
for every node $x\in V(T_0)$, we construct the whorl at $x$ to be the
graph $\bgraph(x)$; for this we use the binary edge relation $E$ on
the children of $x$. By~\cref{lem:bag graph}, each graph $\bgraph(x)$
excludes $H'$ as a topological minor, hence $T$ is augmented with
$H'$-topological-minor-free graphs.

It remains to decorate the nodes of $T$ using a finite alphabet $A$
and to decorate the edges of whorls $\{\bgraph(x)\colon x\in V(T_0)\}$
using a finite signature $\Sigma'$ consisting of binary symbols so
that the original structure of the tree decomposition $\Tt$ can be
recovered from this decoration. This way, the whorls become
$\Sigma$-structures rather than just graphs, where
$\Sigma=\Sigma'\cup \{E\}$.

Apply \cref{lem:translation} to the tree decomposition $\Tt$. This we, in
time $q^{\Oh(1)}\cdot |T|\|G\|$ we construct suitable injective
colorings $\lambda_x\colon \adh(x)\to [2q]$ for $x\in V(T_0)\}$ and a
labeling $\sigma\colon E(T_0)\to \Basic$, where $\Basic=\Basic_{2q}$
is the set of all basic bi-interface graphs of arity $2q$, with
properties asserted by \cref{lem:translation}. We now define the alphabet
$A$ and signature $\Sigma'$ as follows: \begin{itemize}
 \item $A$ consists of unary predicates $W$, $\{C_i\colon i\in [2q]\}$, and $\{R_{\Bb}\colon \Bb \in \Basic\}$.
 \item $\Sigma'$ consists of binary predicates $\{D_i\colon i\in [2q]\}$.
\end{itemize}
We decorate $T$ with those predicates as follows:
\begin{itemize}
\item Predicate $W$ selects all nodes of $T_0$.
\item For $i\in [2q]$, predicate $C_i$ selects all nodes $x\in V(T_0)$
  such that $i\in \lambda_x(\adh(x))$. Furthermore, for each such node
  $x$, $C_i$ also selects the unique copy of the form $u_x$, where $u$
  is such that $\lambda_x(u)=i$.
\item For every $\Bb\in \Basic$, $R_{\Bb}$ selects all nodes
  $y\in V(T_0)$ that have a parent, say $x$, and such that
 $$\sigma(xy)=\Bb.$$
\item For $i\in [2q]$ and each $x\in V(T_0)$, predicate $D_i$ selects
  all edges of the form $yu_x$, where $y$ is a child of $x$ in $T_0$
  and $u\in \adh(y)$ is such that $\lambda_y(u)=i$.
\end{itemize}

Note that in the last point, edges decorated with predicates $D_i$ are
already present in $\bgraph(x)$, hence adding those predicates does
not alter the Gaifman graph of the whorl. We now show that the above
information is enough to recover the structure of $G$ using the logic
$\FO(\MSO(\preccurlyeq, A), \Sigma)$. Given colorings $\lambda_x$ and
labeling $\sigma$, it is straightforward to compute those decorations,
and therefore the final augmented tree $T$, in time
$\Oh(|B|\cdot |T|)=2^{\Oh(q^2)}\cdot |T|$.

A node $s$ of $T$ is called a {\em{representative}} of a vertex
$u\in V(G)$ if $s=u_x$ for the unique node $x\in V(T_0)$ satisfying
$u\in \mrg(x)$. Note that every vertex $u$ of $G$ has exactly one
representative, which is exactly the node of $T$ identified with $u$.
We note that representatives can be conveniently distinguished and
manipulated.

\newcommand{\rep}{\rho}
\newcommand{\adj}{\alpha}
\newcommand{\cpy}{\iota}

\begin{claim}\label{cl:reps}
  There is an $\MSO(\preccurlyeq, A)$ formula $\rep(s)$ that is
  satisfied exactly by the representatives. \end{claim}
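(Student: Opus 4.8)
The claim asks for an $\MSO(\preccurlyeq,A)$ formula $\rep(s)$ that holds exactly on the representatives, i.e.\ on those nodes $s=u_x$ where $x$ is the unique node of $T_0$ with $u\in\mrg(x)$. The plan is to characterize representatives purely in terms of the tree structure $\preccurlyeq$ and the unary predicates in $A$, without reference to $G$. First I would observe that the nodes of $T$ come in two kinds: the nodes of $T_0$ (marked by the predicate $W$) and the freshly added leaves $u_x$ (the copies), which are precisely the nodes not satisfying $W$. So a representative is in particular a non-$W$ leaf; call such nodes \emph{copy-nodes}. A copy-node $s$ is attached as a child of exactly one node $x\in V(T_0)$, namely $s$'s parent in $T$, and this $x$ is $\MSO$-definable as ``the unique $\preccurlyeq$-predecessor of $s$ satisfying $W$ that is $\preccurlyeq$-minimal among such''—or more simply, since $s$ is a leaf whose parent is in $T_0$, just ``the parent of $s$'', which is $\MSO$-definable using $\preccurlyeq$.

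The crux is then to express, for a copy-node $s=u_x$, that $u\in\mrg(x)$, equivalently that $u\notin\adh(x)$, equivalently (since $u\in\bag(x)$ always for $s=u_x$) that $x$ is the topmost node of $T_0$ whose bag contains $u$. Here I would use the structure already decorated onto $T$. Each copy-node $s=u_x$ with $u\in\adh(x)$ carries, via the predicates $C_i$ and $D_i$, a ``color'' $i=\lambda_x(u)\in[2q]$: indeed by the construction, $C_i$ selects $u_x$ whenever $\lambda_x(u)=i$, and $D_i$ marks the tree-edge from a child $y$ of $x$ to the copy $u_y$ when $\lambda_y(u)=i$. The first property in \cref{lem:translation} guarantees that the colorings are consistent between a parent adhesion and a child adhesion: $\lambda_x(u)=\lambda_y(v)$ iff $u=v$ whenever $x$ is the parent of $y$. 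Consequently, a node $u$ of $G$ that lies in several bags $\bag(x_0)\supseteq$ along a root-path gives rise to a chain of copy-nodes $u_{x_0}, u_{x_1},\dots$ all carrying the same color $i$ until we reach the topmost bag containing $u$; there, $u\in\mrg(x_{\mathrm{top}})$ and $u_{x_{\mathrm{top}}}$ is the representative. The key combinatorial observation is therefore: $s=u_x$ is a representative iff $s$ is a copy-node and $x=\parent(s)$ is \emph{not} itself a child-copy of the corresponding colored vertex one level up—i.e.\ there is no copy-node $s'$ that is ``the same vertex one level higher'', which can be detected because the adhesion of $x$ towards $\parent(x)$ does not contain $u$. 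But ``$u\in\adh(x)$'' for $s=u_x$ is exactly expressible: it holds iff $x$ satisfies some $C_i$ \emph{and} $s$ satisfies the same $C_i$ \emph{and} moreover the parent $w=\parent(x)$ in $T_0$ has a child-copy of color $i$, which by the $D_i$-predicate on the edge $x\,u_w$... — so I would phrase it as: $s=u_x$ is a representative iff $s$ is a copy-node, $\parent(s)=x\in W$, and there is no color $i$ with $C_i(s)$ such that the tree-edge from $x$ to its parent $w$ in $T_0$ ``transmits color $i$'', the latter being witnessed by a $D_i$-labeled edge incident to a copy of $w$. All of these are first-order conditions over $\preccurlyeq$ and the predicates of $A$ (plus $\Sigma'$), hence certainly $\MSO$.

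Assembling: $\rep(s)$ is the $\MSO(\preccurlyeq,A)$ sentence (with free variable $s$) asserting
(i) $\neg W(s)$ and $s$ is a leaf (no $\preccurlyeq$-successor other than $s$);
(ii) letting $x$ be the $\preccurlyeq$-predecessor of $s$ with $W(x)$ that is $\preccurlyeq$-maximal, i.e.\ $x=\parent(s)$: $W(x)$ holds;
(iii) for every $i\in[2q]$, if $C_i(s)$ holds then it is \emph{not} the case that $x$ has a parent $w$ in $T_0$ together with a copy-node $t$ of $w$ such that $C_i(t)$ and the tree-edge $xt$ satisfies $D_i$.
Condition (iii) says exactly that the colored vertex represented by $s$ is not inherited from $\adh(x)$, i.e.\ $u\in\mrg(x)$. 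The main obstacle I anticipate is purely bookkeeping: pinning down precisely how the decorations $C_i$, $D_i$ encode the adhesion-membership of a copy-node, and double-checking the edge cases (the root of $T_0$, where $\adh$ is empty; vertices whose topmost bag is the root). Once the encoding is read off carefully from the construction of $T$ and from \cref{lem:translation}, writing the formula is mechanical, and one verifies by a short induction along root-to-leaf paths in $T_0$ that $\rep(s)$ holds iff $s$ is the unique copy of its vertex lying in the margin of its bag.
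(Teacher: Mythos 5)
The paper's proof is a one-liner: it sets $\rep(s)\coloneqq \neg W(s)\wedge \bigwedge_{i=1}^{2q}\neg C_i(s)$. The crucial observation, which you circled around but did not exploit, is that the domain of $\lambda_x$ is exactly $\adh(x)$, so for a copy-node $s=u_x$ the disjunction $\bigvee_i C_i(s)$ holds \emph{if and only if} $u\in\adh(x)$. Hence ``$s$ is a copy-node and no $C_i(s)$ holds'' is already equivalent to ``$s=u_x$ with $u\in\mrg(x)$,'' which is precisely the definition of a representative. All the work you expend on tracing colors up through the parent via $D_i$-decorated edges is unnecessary.

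Worse, your condition (iii) has two concrete problems. First, it uses the binary predicate $D_i$ from $\Sigma'$ on a whorl edge, which is not available in $\MSO(\preccurlyeq,A)$: that logic only sees the tree order $\preccurlyeq$ and the unary predicates of $A$, not $\Sigma$. You acknowledge this yourself (``plus $\Sigma'$''), which means the formula you write is not an $\MSO(\preccurlyeq,A)$ formula as the claim requires. Second, even inside the larger logic, the conjunct $C_i(t)$ for $t=u_w$ (with $w=\parent(x)$) makes the test incorrect. If $u\in\adh(x)$ but $u\in\mrg(w)$ — that is, $w$ is the topmost bag containing $u$ — then $\lambda_w(u)$ is undefined, so $C_i(u_w)$ does not hold for any $i$, the inner existential fails, and your condition (iii) is vacuously satisfied; you would then wrongly classify $s=u_x$ as a representative, even though the actual representative of $u$ is $u_w$. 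Dropping $C_i(t)$ and keeping only $D_i(x,t)$ would repair this particular bug, but would still leave you outside $\MSO(\preccurlyeq,A)$; the clean fix is to abandon the parent-chasing entirely and use the paper's direct formula.
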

\begin{claimproof}
  It suffices to write
 $$\rep(s)\coloneqq \neg\, W(s)\wedge \bigwedge_{i=1}^{2q} \neg\, C_i(s).\qedhere$$
\end{claimproof}

\begin{claim}\label{cl:copy}
  There is an $\FO(\MSO(\preccurlyeq, A), \Sigma)$ formula $\cpy(x,y)$
  such that $\cpy(u,t)$ holds if and only if $u$ is a representative
  and $t$ is a copy of $u$.
\end{claim}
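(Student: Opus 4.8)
\textbf{Proof plan for \cref{cl:copy}.}
The goal is to express in $\FO(\MSO(\preccurlyeq, A),\Sigma)$ the binary relation ``$u$ is a representative and $t$ is a copy of $u$''. Recall how copies were introduced in the construction of $T$: for each node $x\in V(T_0)$ and each $u\in\bag(x)$, a copy $u_x$ was attached as a leaf child of $x$, with $u_{x}$ identified with $u$ itself precisely when $x$ is the unique node with $u\in\mrg(x)$ (so that the representative of $u$ is $u$ itself). Two copies $u_x$ and $u_y$ correspond to the same vertex of $G$ exactly when $u\in\bag(x)\cap\bag(y)$, and by property (T1) of tree decompositions the set of nodes whose bag contains $u$ is a connected subtree of $T_0$; consequently the copies of a fixed vertex of $G$ appear precisely along a connected path of nodes in $T_0$, linked to each other through the adhesions. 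The key to threading this path is that adhesion membership is colour-coded: if $x$ is the parent of $y$, then by the first property of \cref{lem:translation} a vertex $u$ lies in $\adh(x)\cap\adh(y)$ with $\lambda_x(u)=\lambda_y(u)=i$ iff both $x$ and $y$ carry the predicate $C_i$ and the edge $y\,u_x$ carries $D_i$ (equivalently $u_x$ at $x$ and $u_y$ at $y$ ``match'' via colour $i$).

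First I would build, using the colour predicates $C_i$ and the $D_i$-decoration of whorl edges, an auxiliary $\FO(\MSO(\preccurlyeq,A),\Sigma)$-definable relation $\mathsf{match}(s,s')$ that holds between a copy $s=u_x$ and a copy $s'=u_y$ exactly when $x$ is the parent of $y$ (or vice versa) and $s,s'$ are copies of the \emph{same} vertex of $G$. Concretely, $\mathsf{match}$ asks for the existence of a colour $i\in[2q]$ such that $C_i$ holds at both $s$ and $s'$, and the $D_i$-edge inside the appropriate whorl links one copy to the parent of the node hosting the other; all of this is first-order over $\preccurlyeq$ and $\Sigma$ since the parent relation and ``sibling with common parent'' are $\FO(\preccurlyeq)$-expressible. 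Then the relation ``$t$ is a copy of the vertex represented by $s$'' is the reachability relation generated by $\mathsf{match}$ along the tree; since the copies of one vertex form a connected subtree, this is exactly the transitive closure of $\mathsf{match}\cup\{(s,s)\}$ restricted to the branch between $s$ and $t$. Transitive closure along a path in a tree is $\MSO$-definable over $\preccurlyeq$, and more: since $\mathsf{match}$ is itself $\FO$-definable (not merely an opaque predicate), we can feed it into an $\MSO(\preccurlyeq,A)$ formula by the standard trick of quantifying over the set $S$ of nodes on the $s$–$t$ path and demanding that $S$ is a connected chain, contains $s$ and $t$, and every consecutive pair in $S$ satisfies $\mathsf{match}$. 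Finally I would conjoin $\rep(s)$ from \cref{cl:reps} to enforce that $s$ is the representative, and the identification convention guarantees that the representative lies on this branch (it is the copy $u_x$ with $u\in\mrg(x)$, which is indeed one of the copies of $u$).

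The plan is therefore: (i) define $\mathsf{match}(s,s')$ by an $\FO(\preccurlyeq,\Sigma,A)$ formula using the $C_i,D_i$ decorations and the $\FO(\preccurlyeq)$-definable parent/sibling relations; (ii) express ``copy of the same vertex'' as the $\mathsf{match}$-closure along the tree, using an $\MSO(\preccurlyeq,A)$ set-quantifier over the connecting chain — here the subtlety is that $\mathsf{match}$ involves $\Sigma$, so strictly speaking we use the outer $\FO$-layer to supply the pairwise $\mathsf{match}$ checks while the $\MSO$-layer only sees the tree skeleton, which is why the logic $\FO(\MSO(\preccurlyeq,A),\Sigma)$ is exactly the right ambient logic; (iii) add $\rep(s)$. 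The main obstacle I anticipate is the careful bookkeeping in step (i): getting the orientation of $D_i$-edges right (they point from a child $y$ to the copy $u_x$ sitting under the \emph{parent} $x$, per the decoration rule ``$D_i$ selects all edges of the form $y\,u_x$''), and making sure that the colour reused for $u$ along a long branch is consistent — but this consistency is precisely what the first conclusion of \cref{lem:translation} grants, so once that is invoked the formula writes itself. A minor additional point to check is that a copy $u_x$ never accidentally matches a different vertex sharing a colour, which cannot happen because within a single adhesion $\lambda_x$ is injective, so colour $i$ at $x$ pins down a unique vertex of $\adh(x)$.
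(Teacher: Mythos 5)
You have the right high-level idea --- thread the colour along the branch of copies using the consistency of the colourings $\lambda_x$ from \cref{lem:translation} --- but the proposed implementation breaks in two places. First, your $\mathsf{match}(s,s')$ relation requires $C_i$ to hold at \emph{both} $s$ and $s'$, yet the representative of a vertex $v$ is exactly the copy $v_{x_0}$ with $v\in\mrg(x_0)$, so $v\notin\adh(x_0)$ and $C_i$ is false at the representative for every $i$; that is precisely what the formula $\rep$ from \cref{cl:reps} tests. Consequently $\mathsf{match}$ as you define it never links the representative to anything, and the chain never even starts. The fix at the first step is to drop the $C_i(s)$ conjunct and instead use the $\Sigma$-decoration $D_i(s,z)$, where $z$ is the sibling of $s$ (a child of $x_0$ in $T_0$) that is an ancestor of $t$: this $D_i$ atom records $\lambda_z(v)=i$ and does not require $v\in\adh(x_0)$.

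Second, and more fundamentally, your plan to express the transitive closure of $\mathsf{match}$ via an $\MSO(\preccurlyeq,A)$ set quantifier cannot be carried out as stated, because $\mathsf{match}$ uses the binary $\Sigma$-symbol $D_i$, and $\MSO(\preccurlyeq,A)$ formulas have no access to $\Sigma$ at all. You do flag this tension, but the workaround you propose --- letting ``the outer $\FO$-layer supply the pairwise $\mathsf{match}$ checks'' --- does not close the gap: the branch from $s$ down to $t$ has unbounded length, and the outer $\FO$-layer can only quantify over a bounded number of intermediate nodes. The observation you are missing is that the $\Sigma$-atom $D_i$ needs to be used exactly once. After $D_i(s,z)$ locks in the colour $i$ at $z$, the first property of \cref{lem:translation} guarantees that, going down the branch, the colour-$i$ vertex in each adhesion is the same vertex as long as colour $i$ remains present; so the remainder of the chain can be threaded by the single, purely $\MSO(\preccurlyeq,A)$ condition ``$C_i(w)$ holds for all $w$ with $z\preccurlyeq w\preccurlyeq t$'', which uses only the unary $A$-predicate $C_i$ and the ancestor relation. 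That one $\MSO(\preccurlyeq,A)$ atom is what replaces your unbounded transitive closure and is the reason the final formula fits inside $\FO(\MSO(\preccurlyeq,A),\Sigma)$.
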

\begin{claimproof}
  We first check that $\neg W(u)$, $\neg W(t)$, and that $u$ is a
  representative using formula~$\rep(u)$ provided by \cref{cl:reps}.
  Next, from the properties of colorings $\lambda_x$ provided by
  \cref{lem:translation} it follows that~$t$ is a copy of $u$ if and
  only if the following condition holds: either $u=t$, or $u$ has a
  sibling $z$ that is an ancestor of $t$ and there is $i\in [2q]$ such
  that $D_i(u,z)$ holds and $C_i(w)$ holds for all
  $z\preccurlyeq w\preccurlyeq t$. It is straightforward to express
  this check in $\FO(\MSO(\preccurlyeq, A), \Sigma)$.
\end{claimproof}

Next, we observe that the original adjacency relation in $G$ can be
defined between the representatives.

\begin{claim}\label{cl:adj}
  There is an $\FO(\MSO(\preccurlyeq, A), \Sigma)$ formula $\adj(x,y)$
  such that $\adj(u,v)$ holds in $T$ if and only if $u$ and $v$ are
  representatives and $uv\in E(G)$.
\end{claim}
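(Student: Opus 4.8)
The claim is that adjacency in $G$ between two representatives $u,v$ can be expressed by an $\FO(\MSO(\preccurlyeq,A),\Sigma)$ formula $\adj(x,y)$ evaluated in the augmented tree $T$. The first step is to recall exactly where the edges of $G$ are recorded in $T$. If $uv\in E(G)$, then by property (T2) of the tree decomposition $\Tt$ there is a node $x\in V(T_0)$ with $\{u,v\}\subseteq\bag(x)$, and then both $u$ and $v$ have copies $u_x,v_x$ which are children of $x$ in $T$, and these copies are adjacent in the whorl $\bgraph(x)$ (since $u,v\in\bag(x)$ and they are adjacent in $G$). Conversely, if $u_x,v_x$ are adjacent in some whorl $\bgraph(x)$ via an \emph{original} edge (i.e. both $u,v\in\bag(x)$), then $uv\in E(G)$. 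The subtlety is that $\bgraph(x)$ also contains \emph{adhesion edges} — edges $u_xv_x$ that arise because $\{u,v\}\subseteq\adh(y)$ for some child $y$ of $x$ — and such edges do not witness adjacency in $G$. However, adhesion edges connect two vertices that lie in a common adhesion, and by property (T1) such a pair is also together in a bag on the path between $x$ and $y$; so the \emph{existence} of a node $x$ with $u_x v_x$ an edge of $\bgraph(x)$ is equivalent to $uv\in E(G)$ \emph{or} $u,v$ lying in a common adhesion. We will handle this by not trying to distinguish original from adhesion edges, but instead by the following cleaner observation.

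\textbf{Key step.} The plan is to express $\adj(u,v)$ as: $u$ and $v$ are representatives, and there exists a node $x$ of $T_0$ such that $x$ has two children $s,t$ with $s$ a copy of $u$ (using the formula $\cpy$ from \cref{cl:copy}), $t$ a copy of $v$, and $E(s,t)$ holds in $T$ (i.e. $st$ is an edge of the whorl $\bgraph(x)$). I claim this is correct. If $uv\in E(G)$, take $x$ with $\{u,v\}\subseteq\bag(x)$; then the copies $u_x,v_x$ are children of $x$, adjacent in $\bgraph(x)$ by the definition of the bag graph (second bullet: for $u,v\in\bag(x)$, adjacency in $\bgraph(x)$ agrees with adjacency in $G$), and $\cpy(u,u_x)$, $\cpy(v,v_x)$ hold. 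Conversely, suppose $x$ has children $s,t$ with $E(s,t)$, $\cpy(u,s)$, $\cpy(v,t)$. By construction the children of $x$ that are leaves attached to $x$ are exactly the copies $\{w_x : w\in\bag(x)\}$, and the other children are the tree-nodes $y\in\children(x)$ in $T_0$; but by the fourth bullet of the whorl definition, no two tree-node children are adjacent in $\bgraph(x)$, and since $s,t$ are copies (so $\neg W(s),\neg W(t)$), both $s,t$ are leaf-children, i.e. $s=u'_x$, $t=v'_x$ for some $u',v'\in\bag(x)$. By the third bullet, adjacency $u'_x v'_x$ in $\bgraph(x)$ holds iff a representative-copy pair situation forces $u'v'\in E(G)$ — more precisely, an edge of $\bgraph(x)$ between two vertices of $\bag(x)$ is either an original edge (whence $u'v'\in E(G)$) or an adhesion edge $D_i$-type; but the $D_i$ predicates in our signature only decorate edges from a child $y\in\children(x)$ to a leaf-copy $u_x$, \emph{not} edges between two leaf-copies, and adhesion edges between two leaf-copies $u_x,v_x$ are precisely the edges we should exclude. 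So in fact $\bgraph(x)$ as we have defined it does contain adhesion edges between leaf-copies, and we must exclude them. The fix: replace "$E(s,t)$" in the formula by "$E(s,t)$ and there is no child $y$ of $x$ and colors $i,j$ with $D_i(y,s)\wedge D_j(y,t)$", i.e. $s$ and $t$ are not simultaneously in the adhesion towards a common child $y$. By \cref{lem:translation} (first bullet) the colorings $\lambda$ detect adhesion membership, so "$s\in\adh(y)$" for a leaf-copy $s=u_x$ and $y\in\children(x)$ is expressible as $\exists i.\,D_i(y,s)$; thus "$s,t$ lie in a common child-adhesion" is expressible, and we exclude exactly the adhesion edges.

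\textbf{Writing the formula.} Putting it together, $\adj(x,y)$ asserts: $\rep(x)\wedge\rep(y)$, and there exist nodes $p,s,t$ such that $W(p)$, $s$ and $t$ are children of $p$ (expressible via $\preccurlyeq$: $p\prec s$ with nothing strictly between, or simply $\dir$-style, and this is $\MSO(\preccurlyeq,A)$-expressible, hence available), $\cpy(x,s)$, $\cpy(y,t)$, $E(s,t)$, and $\neg\exists z\,(z$ a child of $p$, $z\neq s$, $z\neq t$, $\neg W(z)$ is \emph{not} required$)\ \exists i,j\,(D_i(z,s)\wedge D_j(z,t))$ — i.e. no common-adhesion witness. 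All quantifiers range over nodes of $T$, the navigation ("child of") is in $\MSO(\preccurlyeq,A)$, and $\cpy$ is already an $\FO(\MSO(\preccurlyeq,A),\Sigma)$ formula by \cref{cl:copy}, so the whole thing is in $\FO(\MSO(\preccurlyeq,A),\Sigma)$. The main obstacle, and the only genuinely delicate point, is precisely the distinction between original edges and adhesion edges of the bag graph; everything else is routine bookkeeping of how copies, whorls, and the coloring predicates were set up. Correctness of the adhesion-exclusion clause rests on the fact (noted in the preliminaries, consequence of regularity and (T1)) that a pair of vertices lying together in $\bag(x)$ with $xy\in E(T_0)$ and $\{u,v\}\subseteq\adh(y)$ means the edge between their copies in $\bgraph(x)$ was \emph{not} inherited from $G[\bag(x)]$ unless $uv\in E(G)$ independently — and if $uv\in E(G)$ then some \emph{other} bag (or the same one) witnesses it with an original edge, so no real edge of $G$ is lost by the exclusion.
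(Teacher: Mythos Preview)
Your proposal is correct, but it rests on a misreading that leads you to add an unnecessary ``fix.'' You write that in $\bgraph(x)$ an edge between two leaf-copies $u_x,v_x$ ``is either an original edge \ldots\ or an adhesion edge.'' This is false: look again at the definition of $\bgraph(x)$ in the preliminaries. For $u,v\in\bag(x)$, the second bullet says $u$ and $v$ are adjacent in $\bgraph(x)$ \emph{if and only if} they are adjacent in $G$. Adhesions are encoded via the \emph{third} bullet (edges between a bag vertex and a tree-node child $y$), not as extra edges between bag vertices. You are confusing $\bgraph(x)$ with $\bgraph_\emptyset(x)$; the latter is obtained by deleting the tree-node children and turning their neighbourhoods into cliques, and it is \emph{not} the whorl used here. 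So your entire ``key step'' detour about excluding common-adhesion witnesses via the $D_i$ predicates is unnecessary: the simple formula ``there exist copies $s$ of $u$ and $t$ of $v$ with $E(s,t)$'' already works. (Your fixed formula is still correct, since by walking to a bottom-most bag containing both $u$ and $v$ one can always avoid the common-adhesion situation; but the detour is noise.)

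The paper's proof is shorter still. Rather than quantifying over an arbitrary bag containing both vertices, it uses the standard tree-decomposition fact that if $uv\in E(G)$ then one of $u,v$ lies in the bag where the other is in the margin. Since a representative $u$ \emph{is} the copy $u_{x_u}$ at its margin node $x_u$, one of the two disjuncts $\exists t.\,(E(u,t)\wedge\cpy(v,t))$ or $\exists t.\,(E(v,t)\wedge\cpy(u,t))$ must hold, eliminating the need to quantify over a bag node $p$ or a second copy $s$. Your approach is a valid alternative once the adhesion-edge confusion is removed, but the paper's is more direct.
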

\begin{claimproof}
  By the properties of tree decompositions, if $u$ and $v$ are
  adjacent, then either $u$ belongs to the unique bag where $v$
  belongs to the margin, or vice versa. Therefore,
 it suffices to write
 $$\adj(u,v)\coloneqq \rep(u)\wedge \rep(v)\wedge \left(\exists t.(E(u,t)\wedge \cpy(v,t))\vee \exists t.(E(v,t)\wedge \cpy(u,t))\right).$$
 Here, $\cpy(x,y)$ is the formula provided by \cref{cl:copy}.
\end{claimproof}

Finally, the information encoded through predicates $R_{\Bb}$ can be
used to express torso queries.

\begin{claim}\label{cl:torso}
  For every $\Ab\in B$ there is an $\MSO(\preccurlyeq,A)$ formula
  $\tau_\Bb(x,y)$ that selects all pairs of nodes $(x,y)$ of\, $T$
  such that $x,y\in V(T_0)$, $x$ is a strict ancestor of $y$ in $T_0$,
  and $\sigma(x,y)=\Ab$.
\end{claim}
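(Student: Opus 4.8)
\textbf{Proof plan for \cref{cl:torso}.}

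The plan is to encode the path-product structure from \cref{lem:translation} into $\MSO(\preccurlyeq,A)$. Recall that by construction each non-root node $y\in V(T_0)$ with parent $x$ carries the label $R_{\sigma(xy)}$, so the value $\sigma(xy)\in\Basic$ of a single tree edge is directly readable from the unary predicates decorating $T$. By \cref{lem:translation}, for a strict ancestor $x$ of $y$ in $T_0$ with path $x=z_0-z_1-\cdots-z_\ell=y$, we have $\sigma(x,y)=\sigma(z_0z_1)\odot\sigma(z_1z_2)\odot\cdots\odot\sigma(z_{\ell-1}z_\ell)$, where $\odot$ is the semigroup operation on $\Basic=\Basic_{2q}$, a \emph{finite} semigroup of size $2^{\Oh(q^2)}$. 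So computing $\sigma(x,y)$ amounts to evaluating a product in a fixed finite semigroup along the (downward) path between two nodes of the tree, and it is a classical fact that such ``path-product'' predicates are $\MSO$-definable on trees.

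Concretely, first I would define auxiliary $\MSO$ formulas that identify, for a fixed ancestor $x$, the set of nodes on the $x$--$y$ path: $\mathsf{path}_x(z)\coloneqq x\preccurlyeq z\wedge z\preccurlyeq y$ — except this has a free variable $y$, so more precisely I work with a formula $\mathsf{between}(x,y,z)\coloneqq x\preccurlyeq z\wedge z\preccurlyeq y$. The standard trick is then a Büchi/Shelah-style argument: partition the nodes of $T_0$ strictly below $x$ and at-or-below $y$ according to the partial product accumulated from $x$; formally, for each $\Bb\in\Basic$, let a monadic second-order variable $S_\Bb$ collect the nodes $z$ with $x\prec z\preccurlyeq y$ such that $\sigma(x,z)=\Bb$. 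One writes an $\MSO$ sentence asserting that $(S_\Bb)_{\Bb\in\Basic}$ is a \emph{consistent labeling}: the topmost node on the path (namely $\dir(x,y)$, the $y$-directed child of $x$, which is $\MSO$-definable) lies in $S_{\sigma(x\,\dir(x,y))}$, and whenever a node $w$ on the path lies in $S_\Bb$ and its child $w'$ on the path carries label $R_{\Cb}$, then $w'$ lies in $S_{\Bb\odot\Cb}$. Since $\Basic$ is finite and the transition function $(\Bb,\Cb)\mapsto\Bb\odot\Cb$ is a fixed finite table, all of this is a finite conjunction of first-order constraints relativized to the path, hence expressible in $\MSO(\preccurlyeq,A)$. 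The labeling is unique if it exists, so I can then define
\[\tau_\Bb(x,y)\coloneqq W(x)\wedge W(y)\wedge x\prec y\wedge \exists (S_\Cb)_{\Cb\in\Basic}\,\bigl(\text{$(S_\Cb)$ is the consistent labeling}\wedge y\in S_\Bb\bigr).\]
The conjuncts $W(x),W(y)$ and $x\prec y$ restrict to nodes of $T_0$ with $x$ a strict ancestor of $y$ in $T_0$ (nodes of $T\setminus T_0$ are leaves and satisfy $\neg W$).

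I do not expect a serious obstacle here — this is a textbook $\MSO$-on-trees encoding of products in a finite semigroup, and the only mild care needed is (i) to correctly handle the ``first edge'' of the path using the $\MSO$-definable $\dir(x,y)$ relation (note $x$ is a strict ancestor, so the path has length $\geq 1$ and $\dir(x,y)$ exists), and (ii) to make sure the relativization to $\{z:x\prec z\preccurlyeq y\}$ is done consistently so that nodes off the path are unconstrained. If one prefers to avoid an explicit existential over a tuple of set variables, an alternative is to invoke the classical equivalence of $\MSO$ on trees with tree automata directly: the map $y\mapsto\sigma(x,y)$ for fixed topmost ancestor is computed by a deterministic bottom-up (here, top-down along a branch) automaton over the alphabet $\{R_\Bb:\Bb\in\Basic\}$ with state set $\Basic$, and automaton runs on labeled trees are $\MSO$-definable; this gives $\tau_\Bb$ immediately. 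Either route produces the required $\MSO(\preccurlyeq,A)$ formula, completing the claim.
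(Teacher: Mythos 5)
Your proposal is correct and follows essentially the same approach as the paper: encode the path in $T_0$ between $x$ and $y$, existentially quantify over a tuple $(S_\Bb)_{\Bb\in\Basic}$ of set variables partitioning the path nodes according to partial $\odot$-products, impose first-order consistency constraints using the unary predicates $R_\Bb$ (which record $\sigma$ of the edge to the parent), and read off the answer at one end of the path. The only (cosmetic) difference is the orientation: the paper accumulates suffix products $\sigma(w,y)$ going upward from $y$ and reads the answer at $x$, whereas you accumulate prefix products $\sigma(x,z)$ going downward from $x$ and read the answer at $y$; the transition rule is mirrored accordingly ($\Cb\odot\Bb$ in the paper vs.\ $\Bb\odot\Cb$ in yours, consistent with the order of composition), and both are correct.
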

\begin{claimproof}
  Checking the first two assertions boils down to verifying that
  $W(x)$ and $W(y)$ hold and that $x\prec y$. For the third assertion,
  the idea is to compose the edge labels under $\sigma$, encoded
  through predicates $R_{\Bb}$, along the $x$-$y$ path using monadic
  existential quantification.

  Let $P$ be the set of all nodes on the $x$-$y$ path in $T$,
  including $x$ but excluding $y$.
  For $\Bb\in \Basic$, let $P_\Bb$ be the set of those $w\in P$ for
  which
 $$\sigma(w,y)=\Bb.$$
 Clearly, $\{P_{\Bb}\colon \Bb\in \Basic\}$ is a partition of $P$ into
 $2^{\Oh(q^2)}$ sets.
 We guess this partition using existential quantification and verify
 the correctness of the guess. This verification boils down to
 checking the following assertions:
 \begin{itemize}[nosep]
 \item Sets $P_{\Bb}$ form a partition of $P$.
 \item If $y'$ is the parent of $y$, then $y'\in P_{\Bb}$ for $\Bb$
   such that $R_\Bb(y)$ holds.
  \item For every $w\in P\setminus \{x\}$, if $w'$ is the parent of $w$, $w\in P_{\Bb}$, $w'\in P_{\Bb'}$, and $R_{\Cb}(w)$ holds, then
 $$\Cb\odot \Bb=\Bb'.$$
 \end{itemize}
 It is straightforward to express these assertions in
 $\MSO(\preccurlyeq,A)$.

 Now $\tau_{\Ab}(x,y)$ should hold if and only if $x\in P_{\Ab}$.
\end{claimproof}

\paragraph*{Rewriting the query.} The remainder of the argument is
devoted to the proof of the following claim, which boils down to the
statement of the theorem restricted to a single $\conn$ query.

\begin{claim}\label{cl:conn}
  There is an $\FO(\MSO(\preccurlyeq, A), \Sigma)$ formula
  $\chi(x,y,z_1,\ldots,z_k)$ such that
 $$\conn_k(G)=\chi(T).$$
\end{claim}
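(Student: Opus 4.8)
\textbf{Plan of the proof of \cref{cl:conn}.} The idea is to mimic, inside the augmented tree $T$, the query-answering algorithm from \cref{sec:implementation of queries}. Recall that for a query $\conn_k(s,t,u_1,\ldots,u_k)$ that algorithm computes the profiles of the $\Oh(k)$ nodes in the lowest-common-ancestor closure $Y$ of the affected nodes, using two operations: the \emph{aggregation} step of \cref{lem:aggregate} (performed at nodes $x\in Y$), which runs bounded bounded-size breadth-first searches inside the $S$-restricted bag graph $\bgraph_S(x)$; and the \emph{warp} step of \cref{lem:warp} (performed along the gray edges), which combines $\profile(y)$ with $\torso(z,y)$. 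First I would argue that each of these operations is $\FO(\MSO(\preccurlyeq,A),\Sigma)$-definable \emph{in the augmented tree}. For the aggregation step, $\bgraph_S(x)$ is obtained from the whorl $\bgraph(x)$ at $x$ by deleting the vertices of $S\cap\bag(x)$ and, for each child $y\in\children(x)$, contracting $y$ but keeping only those edges $\{a,b\}\in\binom{\adh(y)}{2}$ certified by $\profile(y)$ (with $\profile(y)$ over-approximated by $\binom{\adh(y)\setminus S}{2}$ for unaffected $y$, since $\Tt$ is regular). All of this — membership in $\bag(x)$, membership in $\adh(y)$, adjacency in $\bgraph(x)$ — is visible through the edge relation $E$ and the predicates $D_i$ of $\Sigma$, and the information ``$\{a,b\}\in\profile(y)$'' for \emph{affected} children $y$ is bounded data that can be passed around using a bounded number of first-order variables. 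Since $\bgraph_S(x)$ has bounded breakability ($q$ depends only on $k$), \cref{lem:short-BFS-combinatorial} says that connectivity in $G[\cone(x)]-S$ between two bag-vertices is decided by a bounded-depth, bounded-width search in $\bgraph_S(x)$, which is expressible by a fixed first-order formula of quantifier rank $\Oh(q)$ evaluated in the whorl.

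Second, the warp step. Here the key is \cref{cl:torso}: for every basic bi-interface graph $\Bb$ we have an $\MSO(\preccurlyeq,A)$ formula $\tau_\Bb(x,y)$ that holds exactly when $x$ is a strict ancestor of $y$ in $T_0$ and $\sigma(x,y)=\Bb$. Since $\torso(z,y)$ is recovered from $\sigma(z,y)$ by inverting the colorings $\lambda_z,\lambda_y$ (which are recorded through the predicates $C_i$ and $D_i$), the graph $J$ built from $\profile(y)\cup\torso(z,y)$ in \cref{lem:warp} has bounded size and its edge set is first-order-definable from a finite amount of guessed data (namely the profile of $y$ and the identity of the bi-interface graph $\sigma(z,y)$). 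Then $\profile(z)$ is just reachability in the bounded graph $J-S$, which is a bounded quantifier-rank first-order check. Navigation in the tree — computing least common ancestors of the affected nodes, the $y$-directed children $\dir(x,y_i)$, the closure $Y$, the auxiliary tree $T_Y$ — is all $\FO(\preccurlyeq)$-definable because $Y$ has bounded size; one existentially quantifies the (at most $2k+5$) nodes of $Y$ together with the (at most $2k+5$) relevant ``gray'' children, asserts the ancestor/directed-child relations among them using $\preccurlyeq$ and the $D_i$-guarded copy formula, and asserts that each of them is indeed affected, i.e. that some $z_j\in\wh S$ has its representative below it — using $\rep$ from \cref{cl:reps} and $\cpy$ from \cref{cl:copy}.

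Putting it together, the plan is: existentially quantify the bounded set $Y\cup Y'$ of nodes, existentially quantify the (constantly many, bounded-size) profiles attached to them, encoded by first-order variables ranging over $\bag$-vertices and augmented by a choice of basic bi-interface graph for each warp step (this choice is from a finite set, so it can be encoded by a big disjunction rather than a quantifier); then assert (i) that the guessed nodes form the LCA-closure and directed-children structure determined by $x,y,z_1,\ldots,z_k$ via $\preccurlyeq$, $\rep$, $\cpy$; (ii) that the guessed profiles are consistent with the aggregation and warp recurrences, each recurrence being a fixed first-order sentence of the kind described above, evaluated in the relevant whorl or using $\tau_\Bb$; and (iii) that the profile guessed at the root contains the pair $\{x,y\}$. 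Correctness then follows from the correctness of the algorithm of \cref{sec:implementation of queries} — which is exactly the statement that the profile at the root records $\conn_k(s,t,u_1,\ldots,u_k)$.

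\textbf{Main obstacle.} The technical heart is verifying that the ``bounded-size data'' at each of the $\Oh(k)$ nodes of $Y\cup Y'$ — the profiles, which live on $D(x)\cup\adh(z)$ and thus on up to $2q+2$ vertices — can genuinely be threaded through a \emph{single} first-order formula over $T$. Each profile is a subset of pairs of a bounded vertex set, so it is ``bounded information'', but to refer to a specific vertex of a bag by a first-order variable one must be able to pick it out; the copies $u_x$ and the formula $\cpy$ handle this, but one must be careful that the quantifier rank and the number of variables stay bounded in terms of $k$ and do not blow up with the tree, and that the aggregation formula is evaluated in the correct whorl (the whorl at the relevant node of $Y$, not at its gray descendant). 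I also expect some bookkeeping friction in matching the regularity conventions (so that unaffected children contribute full cliques on their adhesions) and in the corner cases of \cref{lem:aggregate} where $u$ or $v$ lies in $D(x)\setminus\bag(x)$, which require one extra directed-child step; but these are exactly the cases already handled in the proof of \cref{lem:aggregate}, and translating them into $\FO(\MSO(\preccurlyeq,A),\Sigma)$ is routine once the encoding of profiles and the use of $\tau_\Bb$ is set up.
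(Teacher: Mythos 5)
Your proposal is correct and matches the paper's proof in essentially all respects: both encode the bottom-up profile computation over the bounded-size set $Y\cup Y'$, guess the (bounded) profiles and bi-interface labels via a finite disjunction, verify the warp step using the $\tau_\Bb$ formulas from \cref{cl:torso} together with $\cpy$, verify the aggregation step by defining $\bgraph_S(x)$ inside the whorl and invoking the bounded-BFS criterion of \cref{lem:short-BFS-combinatorial}, and finally check that $\{s,t\}$ lies in the root profile. The only cosmetic difference is that the paper defines $X,Y,Y'$ directly as $\MSO(\preccurlyeq,A)$-definable sets rather than existentially quantifying their $\Oh(k)$ members and verifying the LCA structure by hand, but this is an interchangeable implementation choice and does not affect the argument.
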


Before we proceed to the proof of \cref{cl:conn}, let us see why the
theorem statement follows from it. Indeed, given a formula
$\phi(\tup x)$ of $\FOsep$, we can rewrite it to a formula
$\psi(\tup x)$ of $\FO(\MSO(\preccurlyeq, A), \Sigma)$ satisfying
$\phi(G)=\psi(T)$ as follows:
\begin{itemize}
\item Quantification is relativized to the representatives by guarding
  each quantifier with formula $\rep$ provided by \cref{cl:reps}:
  $\exists x.\delta$ becomes $\exists x.(\rep(x)\wedge \delta)$ and
  $\forall x.\delta$ becomes $\forall x.(\rep(x)\rightarrow \delta)$.
\item Every atomic edge relation check $E(x,y)$ is replaced by the
  formula $\adj(x,y)$ provided by \cref{cl:adj}.
 \item Every occurrence of the predicate $\conn_\ell(x,y,z_1,\ldots,z_\ell)$, where $\ell\leq k$, is replaced with the formula $\chi(x,y,z_1,\ldots,z_\ell,z_\ell,\ldots,z_\ell)$ provided by \cref{cl:conn}. Here $z_\ell$ is repeated $k-\ell+1$ times.
\end{itemize}
Hence, it remains to prove \cref{cl:conn}. The argument boils down to
turning the algorithmic procedure presented in
\cref{sec:general-queries} into a formula of
$\FO(\MSO(\preccurlyeq, A), \Sigma)$.

\bigskip

Suppose then that we are given $s,t,u_1,\ldots,u_k\in V(G)$ and we
would like to verify whether $G\models \conn(s,t,u_1,\ldots,u_k)$. We
describe a procedure that performs this verification using the data
provided in $T$, and along the way we describe how the procedure is
turned into a suitable formula $\chi$ of
$\FO(\MSO(\preccurlyeq, A), \Sigma)$. We first check that
$s,t,u_1,\ldots,u_k$ are representatives, so we may proceed under this
assumption.

Similarly as in \cref{sec:general-queries}, let $X$ consist of the
root of $T_0$ and all nodes of $T_0$ whose margins intersect
$\wh{S}\coloneqq \{s,t,u_1,\ldots,u_k\}$. Further, let $Y$ be the
least common ancestor closure of $X$ in $T$. Note that $X$ and $Y$ can
be easily defined in $\MSO(\preccurlyeq, A)$ and recall that
$|Y|\leq 2k+5$. Recall that a node $z$ of $T$ is {\em{affected}} if
$\cmp(z)$ intersects $\wh{S}$. Construct $Y'$ by taking $Y$ and
adding, for each $y\in Y$, all affected children of $Y$. It is
straightforward to see that $|Y'|\leq 2|Y|-1\leq 4k+9$. Note that $X$,
$Y$, and $Y'$ can be easily defined from $s,t,u_1,\ldots,u_k$ in
$\MSO(\preccurlyeq,A)$.

As in \cref{sec:general-queries}, for $x\in V(T_0)$ we define
$$D(x)\coloneqq \adh(x)\cup (\{s,t\}\cap \cone(x))$$
and
$$\profile(x)\coloneqq \left\{\ \{a,b\}\in \binom{D(x)}{2}\quad \Big|\quad \textrm{in }G[\cone(x)]\textrm{ there is an }a\textrm{-}b\textrm{ path avoiding }S\ \right\}.$$

Recall that the procedure described in \cref{sec:general-queries}
computed sets $\profile(y)$ for all $y\in Y'$ in a bottom-up manner.
Each set $\profile(y)$ can be encoded using
$\binom{|D(y)|}{2}\leq \binom{q+2}{2}$ bits of information, denoting
which pairs are included in $\profile(y)$. Observe that for a given
$y\in Y'$, the set $D(y)$ (encoded through a sequence of
representatives) can be defined in
$\FO(\MSO(\preccurlyeq, A), \Sigma)$ with the help of
formula~$\cpy(x,y)$ provided by \cref{cl:copy}. Therefore, we
construct the formula $\chi$ by making a disjunction over all
$\prod_{y\in Y'} 2^{\binom{|D(y)|}{2}}\leq 2^{(4k+9)\binom{q+2}{2}}$
possible guesses on how the profiles of nodes of $Y'$ exactly look
like. For each guess we verify that it encodes the actual profiles of
nodes of $Y'$, and that $\{s,t\}$ belongs to the profile of the root
of $T$.

Therefore, it remains to show, for a given guess of the profiles, that
this guess is correct. For $y\in Y'$, let
$\Pi_y\subseteq \binom{D(y)}{2}$ be the guess on the profile of $y$.
We need to verify the following two assertions:
\begin{enumerate}[(C1)]
\item\label{a:warp} For every $z\in Y'\setminus Y$, if $y$ is the
  unique descendant of $z$ that belongs to $Y$ and such that there is
  no $w\in Y'$ with $z\prec w\prec y$, then the profile $\Pi_z$ is
  obtained from the profile $\Pi_y$ as described in \cref{lem:warp}.
 \item\label{a:aggregate} For every $x\in Y$, if $Z\subseteq Y'$ is the set of affected children of $x$, then the profile $\Pi_x$ is obtained from profiles $\{\Pi_z\colon z\in Z\}$ as described in \cref{lem:aggregate}.
\end{enumerate}
We show how to implement those checks in
$\FO(\MSO(\preccurlyeq, A), \Sigma)$ separately. These implementations
correspond to the proofs of \cref{lem:warp} and of
\cref{lem:aggregate}, respectively.

\paragraph*{Implementing \ref{a:warp}.} Note that for a given
$z\in Y'\setminus Y$, it can be distinguished in
$\MSO(\preccurlyeq,A)$ which element of $Y$ is the corresponding node
$y$ as in the statement of assertion \ref{a:warp}. As in the proof of
\cref{lem:warp}, let $J$ be the graph on vertex set $D(y)\cup \adh(z)$
whose edge set is $\Pi_y\cup E(\torso(z,y))$. As argued in
\cref{cl:warp-J}, we need to verify that $\Pi_y$ is exactly the
connectivity relation in the graph $J-S$, restricted to $D(z)$. Since
$J$ is a graph on at most $2q+2$ vertices, it suffices to define the
edge relation in $J$, since from it the connectivity relation of $J-S$
can be deduced in first-order logic. The set $\Pi_y$ is assumed to be
given through the guess. On the other hand, using formulas
$\tau_\Bb(z,y)$ for $\Bb\in \Basic$ we can deduce $\sigma(z,y)$, which
is the graph $\torso(z,y)$ labeled using colorings $\lambda_z$ and
$\lambda_y$. We may now use predicates $C_i$ on the children of $z$
and $y$, and formula $\cpy(\cdot,\cdot)$ provided by \cref{cl:copy},
to infer the edge relation of $\torso(z,y)$, thereby completing the
construction of the edge set of $J$.

\paragraph*{Implementing \ref{a:aggregate}.} Now we are given
$x\in Y$, the guess $\Pi_x$ for the profile of $x$, and, for each
affected child $z$ of $y$, the guess $\Pi_z$ for the profile of $z$.
We first observe that we can define the graph $\bgraph_S(x)$ in
$\FO(\MSO(\preccurlyeq, A), \Sigma)$ as follows:
\begin{itemize}
\item The vertex set of $\bgraph_S(x)$ comprises all children of $x$
  in $T$ that do not satisfy predicate $W$ nor are contained in $S$.
\item Two such vertices $u$ and $v$ are adjacent in $\bgraph_S(x)$ if
  and only if they are adjacent in $\bgraph(x)$, or there is a child
  $z$ of $y$ that satisfies $W$, is adjacent to both $u$ and $v$, and
  either:
 \begin{itemize}
 \item $z$ is not affected; or
 \item $z$ is affected and $\{u,v\}\in \Pi_z$.
\end{itemize}
\end{itemize}
It it straightforward to express these checks in
$\FO(\MSO(\preccurlyeq, A), \Sigma)$.

Once the graph $\bgraph_S(x)$ is defined, we can use
\cref{lem:short-BFS-combinatorial} to observe that the following
formula correctly decides whether two vertices
$a,b\in \bag(x)\setminus S$ are connected in $G[\cone(x)]-S$: check
whether $a$ and $b$ can be connected in $\bgraph_S(x)$ by a path of
length at most $q+1$, or that in $\bgraph_S(x)$ there exist connected
sets of vertices $K_a$ and $K_b$, each of size $q+1$ and containing
$a$ and $b$, respectively. This can be clearly expressed in
first-order logic.

Using the formula constructed above we can verify the correctness of
the guess on the profile~$\Pi_x$ on pairs
$\{a,b\}\in \binom{D(x)\cap \bag(x)}{2}$. It remains to verify it also
on pairs where either $a$ or $b$ belong to
$D(x)\setminus \bag(x)\subseteq \{u,v\}$. This can be done analogously
as in the proof of \cref{lem:aggregate} using the formula~$\tau$
provided by \cref{cl:torso}; we leave the details to the reader.

This concludes the proof of \cref{lem:fo-sep-translation}. $\hfill\square$

\section{Lower bounds}\label{sec:lbs}

We proved that the model-checking problem for separator logic is
fixed-parameter tractable on classes of graphs that exclude a fixed
topological minor. The next natural question is whether this result
can be improved in terms of generality. In this section we prove
\cref{thm:main-lb} that shows that, under technical assumptions, this
is not the case.


The reduction is based on the following lemma.

\begin{lemma}\label{lem:transduce-all-graphs}
  Let $\Cc$ be a class of graphs that is closed under taking subgraphs
  and is not topological-minor-free. Then for every sentence $\phi$ of
  $\FO$ over graphs, there is an $\FOsep$ sentence $\psi$ such that
  for every graph $G$, there is a graph $H\in \Cc$ such that
  $G\models\phi$ iff $H\models \psi$.
\end{lemma}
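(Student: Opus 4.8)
The key idea is already sketched in \cref{sec:overview-mc}: since $\Cc$ is not topological-minor-free and is closed under taking subgraphs, for every graph $G$ we can find a member $H\in\Cc$ which is a subdivision of $G$, and then $\FOsep$ is powerful enough to ``undo'' the subdivision and recover $G$. First I would make the subdivision step precise. Because $\Cc$ is not topological-minor-free, for every graph $R$ there is some $H_R\in\Cc$ and a topological minor model $\eta$ of $R$ in $H_R$; by subgraph-closedness, we may discard from $H_R$ everything outside the image of $\eta$, so we may assume $H_R$ is exactly a subdivision of $R$, i.e.\ $H_R$ is obtained from $R$ by replacing each edge with an internally vertex-disjoint path of some length $\geq 1$. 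Given the input graph $G$ (on at least, say, $3$ vertices and with minimum degree at least $3$ --- see below for how to reduce to this case), set $R\coloneqq G$ and let $H\coloneqq H_G\in\Cc$ be such a subdivision.

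Next I would exhibit the $\FOsep$ formulas that recover $G$ from $H$, following the excerpt's outline. The branch vertices of the subdivision --- those corresponding to original vertices of $G$ --- are exactly the vertices of $H$ of degree $\neq 2$; call the $\FO$ formula defining them $\phi_V(x)$. To recover adjacency: for branch vertices $u,v$, define $\phi_E(u,v)$ to hold iff there exists a vertex $z$ with $\deg(z)=2$ such that for every vertex $w$, $\conn_2(z,w,u,v)$ implies $w$ has degree $2$; this holds precisely when $z$ lies on the subdivision path between $u$ and $v$, hence precisely when $uv\in E(G)$. (One checks directly: if $z$ is on the $u$--$v$ subdivision path, deleting $u,v$ isolates that path, so $z$ only reaches degree-$2$ vertices; conversely if $z$ is on any other path, deleting $u$ and $v$ leaves $z$ connected to some branch vertex.) Now obtain $\psi$ from $\phi$ by the standard relativization: replace every atom $E(x,y)$ by $\phi_E(x,y)$, and guard every quantifier by $\phi_V$, i.e.\ $\exists x\,\alpha \rightsquigarrow \exists x\,(\phi_V(x)\wedge\alpha)$ and $\forall x\,\alpha \rightsquigarrow \forall x\,(\phi_V(x)\rightarrow\alpha)$. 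A routine induction on $\phi$ shows $G\models\phi$ iff $H\models\psi$, because the substructure of $H$ induced (via $\phi_V,\phi_E$) on the branch vertices is isomorphic to $G$.

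The one genuine wrinkle --- and the step I expect to need the most care --- is the degree hypothesis: the construction of $\phi_V,\phi_E$ only works when $G$ has no vertices of degree $2$ (otherwise degree-$2$ vertices of $G$ are indistinguishable from subdivision vertices) and the ``$z$ exists'' trick needs each edge to be subdivided at least once, which is not guaranteed. Both issues are handled by a preliminary gadget transformation on $G$ that does not affect $\FO$-definable properties in a controlled way: given an arbitrary $G$ and sentence $\phi$, first attach to every vertex of $G$ a private pendant triangle (raising all degrees to $\geq 3$ and, in particular, killing degree-$2$ vertices) to obtain $G'$, and rewrite $\phi$ into $\phi'$ so that $G\models\phi\iff G'\models\phi'$ --- this is a standard $\FO$ interpretation, the original vertices being those incident to a triangle all of whose other vertices have degree exactly $3$, say. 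Then apply the subdivision argument to $G'$, noting we may freely insist the chosen $H\in\Cc$ subdivides \emph{every} edge of $G'$ at least once: if some $H_{G'}\in\Cc$ realizes $G'$ topologically, then so does a further subdivision of it, and one more subdivision of each edge of $G'$ can only help; more carefully, since $\Cc$ is not topological-minor-free, for every $n$ it contains a graph admitting $K_n$, hence (after pruning) an arbitrary subdivision of $K_n$, so in particular it contains a subdivision of $G'$ in which each edge is subdivided at least once, for $G'$ any graph whatsoever. Composing the two transductions $G\rightsquigarrow G'\rightsquigarrow H$ and the two rewritings $\phi\rightsquigarrow\phi'\rightsquigarrow\psi$ yields the statement. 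Note the lemma only asserts existence of $H$ (not its computability), so we do not need to track how to find $H_{G'}$ in $\Cc$ --- this is why \cref{thm:main-lb} additionally assumes \emph{efficient encoding of topological minors}, but the present lemma does not.
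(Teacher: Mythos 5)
Your plan --- raise all degrees above $2$ via a gadget, find a subdivision of the result inside $\Cc$, then undo the subdivision with an $\FOsep$-definable interpretation --- is the same strategy the paper uses. The gap is the gadget, for which no reading of ``pendant triangle'' actually works. If the triangle contributes one or two new vertices to each $v$ (the usual meaning of ``pendant''), those new vertices have degree $2$ in $G'$ and the ``$\geq 3$ everywhere'' claim already fails; worse, once $v$ is one of the two deleted vertices in the $\conn_2$-check inside $\phi_E$, the surviving pendant piece becomes a purely degree-$2$ pocket, and your $\phi_E$ --- which, unlike the paper's version, omits the conjuncts $\conn(x,z)\wedge\conn(y,z)$ --- then returns true for every pair of branch vertices regardless of adjacency. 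If instead you glue an entire $K_4$ at $v$ (the only variant that really raises all degrees to $\geq 3$), then for an isolated $v$ of $G$ the block $\{v,a,b,c\}$ is a symmetric isolated $K_4$ in $G'$, all four vertices of degree exactly $3$; your criterion ``incident to a triangle all of whose other vertices have degree exactly $3$'' then holds of $a,b,c$ just as well as of $v$, and no pointwise $\FO$ formula can break the symmetry to pick out $v$. The paper's gadget --- two pendant \emph{leaves} per vertex --- avoids both traps: leaves have degree $1$, so they are never confused with subdivision vertices, and they place a degree-$1$ endpoint at the tail of every pendant path, which is exactly what makes the ``everything $z$ reaches has degree $2$'' test fail in the unwanted cases.

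A second issue: your argument that $\Cc$ contains a subdivision of $G'$ in which every edge is subdivided at least once does not follow from ``$\Cc$ contains some subdivision of $K_n$.'' The subdivision you extract by pruning a topological-minor witness may leave precisely the edges you need unsubdivided, and subgraph-closure gives no control over this. The claim itself is salvageable (apply non-topological-minor-freeness to the $1$-subdivision ${G'}^{(1)}$ rather than to $K_n$), but the paper sidesteps the need altogether by placing $E(x,y)$ as the leading disjunct of $\phi_E$, so an edge of $G'$ that happens to be unsubdivided in $H$ is handled directly.
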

\begin{proof}
  The sentence $\psi$ is obtained from $\phi$ by the following syntactic
  modifications. First, each quantifier is restricted to the set of
  vertices of degree at least $3$.  Second, we replace each atom of
  the form $E(x,y)$ by the following $\FOsep$ formula, which states
  that $x$ and $y$ are connected by a path of vertices of degree $2$.
  $$\phi_E(x,y) \coloneqq E(x,y) \vee \exists z. \Big( d^{=2}(z) \wedge \conn(x,z) \wedge \conn(y,z) \wedge \forall w.  \big( \conn(z,w,x,y) \rightarrow d^{=2}(w) \big)\Big),$$
  Here $d^{=2}(x)$ is a formula checking that $x$ has degree two. Note
  that $\psi$ is derived from $\phi$ only, it does not depend on
  $\Cc$.

  Now let $G$ be any graph, we define $H\in \Cc$ as follows. We first
  add two private neighbors to each vertex of $G$. This yields a graph
  $G'$. Since $\Cc$ is not topological-minor-free, $G'$ is a
  topological minor of some graph $H\in \Cc$. As $\Cc$ is closed under
  taking subgraphs, we can assume that $H$ is a subdivision of~$G'$.

  We finally have $G\models\phi$ iff $H\models \psi$. To see this,
  note that all original vertices of $G$ have degree at least~$3$ is
  $G'$; and that, in $H$, two vertices of degree at least three
  satisfy $\phi_E$ if and only if they are adjacent or connected by a
  path of nodes of degree two.
\end{proof}

Unfortunately, \cref{lem:transduce-all-graphs} in general does not
provide an fpt reduction from $\FO$ model-checking on general graphs
to $\FOsep$ model-checking on classes $\Cc$ as in the lemma
statement. This is because the graph $H$ whose existence is guaranteed
by the assumptions might not be of size polynomial in the size of $G$,
making the reduction non-fpt.

We get around this obstacle by adding an appropriate
assumption. Recall that a class $\Cc$ admits \emph{efficient encoding
  of topological minors} if for every graph $G$ there exists
$H\in \Cc$ such that $G$ is a topological minor of $H$, and, given
$G$, such $H$ together with a suitable topological minor model can be
computed in time polynomial in $|G|$. Note that in particular such $H$
can be only polynomially larger than $G$. We may now
prove~\cref{thm:main-lb}.

\mainlb*
\begin{proof}
  Recalling that $\FO$ model-checking on general graphs is
  $\AWstar$-complete~\cite{FlumG06}, it suffices to apply the
  reduction presented in the proof of \cref{lem:transduce-all-graphs}
  and note that it can be performed in polynomial time due to the
  assumption of polynomial embedding of topological minors.
\end{proof}

It is well-known that the subgraph closure of the class of
$n$-subdivisions of $n$-vertex complete graphs has bounded
expansion. As this class admits efficient encoding of topological
minors, we obtain the following.

\begin{corollary}\label{cor:hard-be}
  There is a class $\Cc$ with bounded expansion such that
  $\FOsep$ model-checking on $\Cc$ is $\AWstar$-hard.
\end{corollary}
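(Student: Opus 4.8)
The plan is to exhibit a concrete class witnessing the statement and verify that it satisfies the hypotheses of \cref{thm:main-lb}. Let $K_n^{(n)}$ denote the $n$-subdivision of the complete graph $K_n$, that is, the graph obtained from $K_n$ by replacing every edge with a path of length $n+1$, and let $\Cc$ be the closure under taking subgraphs of the family $\{K_n^{(n)}\colon n\in\N\}$. By construction $\Cc$ is subgraph-closed. I would then check that $\Cc$ admits efficient encoding of topological minors, recall that $\Cc$ has bounded expansion, and conclude by \cref{thm:main-lb} that model-checking $\FOsep$ on $\Cc$ is $\AWstar$-hard, which is precisely the assertion. (Note that this stands in sharp contrast to plain $\FO$, whose model-checking is fixed-parameter tractable on every class of bounded expansion.)

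For the efficient encoding of topological minors, given an input graph $H$ I would set $m\coloneqq|V(H)|$. Up to isomorphism $H$ is a subgraph of $K_m$, and $K_m^{(m)}$ is a subdivision of $K_m$, so $H$ is a topological minor of $K_m^{(m)}\in\Cc$: a model maps $V(H)$ injectively onto the corresponding branch vertices of $K_m^{(m)}$ and sends each edge $uv\in E(H)\subseteq E(K_m)$ to the length-$(m+1)$ path of $K_m^{(m)}$ subdividing the edge $uv$ of $K_m$, these paths being pairwise internally vertex-disjoint. Since $K_m^{(m)}$ has $\Oh(m^3)$ vertices and edges, both $K_m^{(m)}$ and the model above are computable in time polynomial in $|H|$, meeting the definition.

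The last ingredient is that $\Cc$ has bounded expansion. The intuition is that in $K_n^{(n)}$ any two original (degree-$\geq 3$) vertices lie at distance $n+1$, so for a minor of depth $r$ with $n>2r$ no branch set can contain two original vertices and no contraction can span a whole subdivided edge; this bounds the edge density of depth-$r$ minors of $K_n^{(n)}$ by a constant depending on $r$ only, uniformly in $n$. This is a classical property of subdivisions whose length grows at least logarithmically in the order of the subdivided graph (here it even grows linearly), and I would cite it rather than reproduce the density estimate. Combining the three facts and applying \cref{thm:main-lb} completes the argument. There is no real obstacle: the hardness is already fully supplied by \cref{thm:main-lb} (via \cref{lem:transduce-all-graphs}), and the only points needing care are the routine polynomial-time bookkeeping for the topological-minor encoding and the invocation of the standard bounded-expansion bound, in particular noting that the $n$-fold subdivision is long enough relative to $|V(K_n)|=n$.
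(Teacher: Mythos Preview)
Your proposal is correct and follows essentially the same approach as the paper: the paper likewise takes $\Cc$ to be the subgraph closure of $\{K_n^{(n)}:n\in\N\}$, invokes the well-known fact that this class has bounded expansion, observes that it admits efficient encoding of topological minors, and applies \cref{thm:main-lb}. You actually supply more detail than the paper, which simply asserts the bounded-expansion and efficient-encoding properties without justification.
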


\Cref{cor:hard-be} provides a separation between the parameterized
complexity of $\FO$ and $\FOsep$ model-checking, as the former problem
is fixed-parameter tractable on every class of bounded
expansion~\cite{DvorakKT13}.

\section{Discussion}\label{sec:conclusions}

In this work we have added computational problems related to $\conn$
queries in graphs and the logic $\FOsep$ to the growing list of
applications of the decomposition theorem of Cygan et
al.~\cite{CyganLPPS19}. Several questions can be asked about possible
improvements of the obtained complexity bounds, especially regarding
the data structure of \cref{thm:general-queries}.

First, in both the model-checking algorithm of \cref{thm:main-ub} and
the construction algorithm of the data structure of
\cref{thm:general-queries}, the main bottleneck for the polynomial
factor is the running time of the algorithm to compute the tree
decomposition of Cygan et al. (see \cref{thm:strong-unbreakability}).
This running time is cubic in the size of the graph, while in both
applications presented in this work, the remainder of the construction
takes fpt time with at most quadratic dependence on the graph size.
Therefore, it seems imperative to revisit
\cref{thm:strong-unbreakability} with the purpose of improving the
time complexity of the construction algorithm. We note that polynomial
factor of the running time of the algorithm of
\cref{thm:weak-unbreakability} is not specified
in~\cite{CyganKLPPSW21}, but it is also at least cubic.

Second, in \cref{sec:tradeoff} we showed how to improve the time
complexity of queries to polynomial in $k$ by using the weakly
unbreakable tree decomposition provided by
\cref{thm:weak-unbreakability}. The drawback is that the space usage
of the data structure becomes $k^{\Oh(1)}\cdot |G|^2$. However, the
quadratic dependence on $|G|$ is caused only by the brute-force
implementation of the data structure for torso queries provided by
\cref{lem:torso-ds-trivial}. So now we have two implementations of
this data structure: \begin{itemize}[nosep]
\item \cref{lem:torso-ds} offers query time $2^{\Oh(q^2)}$ and space
  usage $2^{\Oh(q^2)}\cdot |T|$.
 \item \cref{lem:torso-ds-trivial} offers query time $q^{\Oh(1)}$ and space usage $q^{\Oh(1)}\cdot |T|^2$.
\end{itemize}
Is it possible to design a data structure that would offer query time
$q^{\Oh(1)}$ while keeping the space usage at $q^{\Oh(1)}\cdot |T|$?
In our proof of \cref{lem:torso-ds}, we apply the more general
\cref{thm:colcombet-ds} to the semigroup of bi-interface graphs of
arity $2q$. It is possible that a better understanding of the
structure of this semigroup might lead to an improvement proposed
above.

\paragraph*{Acknowledgements.} The authors thank Mikołaj Bojańczyk for helpful discussions on separator logic, as well as Anna Zych-Pawlewicz for pointers to the literature on connectivity oracles under vertex failures.

\bibliography{ref}

\appendix

\end{document}
